\Crefname{remark}{Remark}{Remarks}
\Crefname{observation}{Observation}{Observations}
\theoremstyle{plain}
\newtheorem{theorem}{Theorem}[section]
\newtheorem{lemma}[theorem]{Lemma}
\newtheorem{corollary}[theorem]{Corollary}
\newtheorem{claim}[theorem]{Claim}
\newtheorem{observation}[theorem]{Observation}
\theoremstyle{definition}
\newtheorem{definition}[theorem]{Definition}
\theoremstyle{plain}
\newcounter{open}
\newtheorem{oq}[open]{Open Problem}
\theoremstyle{remark}
\newcommand{\namedref}[2]{\hyperref[#2]{#1~\ref*{#2}}}
\newcommand{\eps}{\varepsilon}
\newcommand{\s}{\mspace{1mu}}
\newcommand{\mybox}[1]{\mspace{2mu}{\setlength{\fboxsep}{1.5pt}\color{lightgray}\boxed{\color{black}\scriptstyle #1}}\mspace{2mu}}
\newcommand{\A}{\mathsf{A}}
\newcommand{\B}{\mathsf{B}}
\newcommand{\C}{\mathsf{C}}
\renewcommand{\L}{\mathsf{L}}
\newcommand{\U}{\mathsf{U}}
\newcommand{\M}{\mathsf{M}}
\renewcommand{\P}{\mathsf{P}}
\renewcommand{\O}{\mathsf{O}}
\newcommand{\D}{\mathsf{D}}
\newcommand{\Q}{\mathsf{Q}}
\newcommand{\X}{\mathsf{X}}
\newcommand{\bM}{\mybox{\M}}
\newcommand{\bP}{\mybox{\P}}
\newcommand{\bO}{\mybox{\O}}
\newcommand{\bX}{\mybox{\X}}
\newcommand{\fA}{\mathcal{A}}
\newcommand{\fB}{\mathcal{B}}
\newcommand{\fQ}{\mathcal{Q}}
\newcommand{\colour}{\mathsf{color}}
\newcommand{\rank}{\mathrm{rank}}
\DeclareMathOperator{\re}{\mathcal R}
\DeclareMathOperator{\rere}{\overline{\mathcal R}}
\newcommand{\nodeconst}{\ensuremath{\mathcal{N}}}
\newcommand{\edgeconst}{\ensuremath{\mathcal{E}}}
\newcommand{\gen}[1]{\langle #1 \rangle}
\newcommand{\repi}{\ensuremath{\re(\Pi)}}
\newcommand{\rerepi}{\ensuremath{\rere(\re(\Pi))}}
\newcommand{\epi}{\ensuremath{\edgeconst_{\Pi}}}
\newcommand{\npi}{\ensuremath{\nodeconst_{\Pi}}}
\newcommand{\spi}{\ensuremath{\Sigma_{\Pi}}}
\newcommand{\ezs}{\ensuremath{\edgeconst(z,s)}}
\newcommand{\nzs}{\ensuremath{\nodeconst(z,s)}}
\newcommand{\szs}{\ensuremath{\Sigma(z,s)}}
\newcommand{\ere}{\ensuremath{\edgeconst_{\re}}}
\newcommand{\nre}{\ensuremath{\nodeconst_{\re}}}
\newcommand{\sre}{\ensuremath{\Sigma_{\re}}}
\newcommand{\erere}{\ensuremath{\edgeconst_{\rere}}}
\newcommand{\nrere}{\ensuremath{\nodeconst_{\rere}}}
\newcommand{\srere}{\ensuremath{\Sigma_{\rere}}}
\newcommand{\repizs}{\ensuremath{\re(\Pi(z,s))}}
\newcommand{\edda}{\ensuremath{\edgeconst}}
\newcommand{\neda}{\ensuremath{\nodeconst}}
\newcommand{\rerepizs}{\ensuremath{\rere(\re(\Pi(z,s)))}}
\newcommand{\pistar}{\ensuremath{\Pi^*_q}}
\newcommand{\estar}{\ensuremath{\edgeconst^*_q}}
\newcommand{\nstar}{\ensuremath{\nodeconst^*_q}}
\newcommand{\sstar}{\ensuremath{\Sigma^*_q}}
\newcommand{\fminus}{\mathcal{F}^-_q}
\newcommand{\fplus}{\mathcal{F}^+_q}
\newcommand{\markedi}{\mathsf{m_i}}
\DeclareMathOperator{\poly}{poly}
\newcommand{\LOCAL}{\ensuremath{\mathsf{LOCAL}}\xspace}
\newcommand{\CONGEST}{\ensuremath{\mathsf{CONGEST}}\xspace}
\newcommand{\set}[1]{\left\{#1\right\}}
\newcommand{\ccs}{\mathfrak{C}}
\newcommand{\ccc}{\mathcal{C}}
\newcommand{\len}{\mathrm{len}}
\newenvironment{myabstract}
 {\list{}{\listparindent 1.5em%
         \itemindent    \listparindent
         \leftmargin    1cm
         \rightmargin   1cm
         \parsep        0pt}%
     \item\relax}
 {\endlist}
 \newenvironment{mycover}
 {\list{}{\listparindent 0pt
         \itemindent    \listparindent
         \leftmargin    1cm
         \rightmargin   1cm
         \parsep        0pt}%
     \raggedright
     \item\relax}
 {\endlist}
\newcommand{\myemail}[1]{\,$\cdot$\, {\small #1}}
\newcommand{\myaff}[1]{\,$\cdot$\, {\small #1}\par\smallskip}
\definecolor{darkgreen}{rgb}{0,0.5,0}
\definecolor{darkred}{rgb}{0.4,0,0}
\begin{document}

 \setcounter{page}{0}
 \thispagestyle{empty}

 \begin{mycover}
     {\huge\bfseries\boldmath Distributed Maximal Matching and Maximal Independent Set on Hypergraphs \par}
     \bigskip
     \bigskip
     \bigskip
     \textbf{Alkida Balliu}
     \myemail{alkida.balliu@gssi.it}
     \myaff{Gran Sasso Science Institute}

     \textbf{Sebastian Brandt}
     \myemail{brandt@cispa.de}
     \myaff{CISPA Helmholtz Center for Information Security}
    
     \textbf{Fabian Kuhn}
     \myemail{kuhn@cs.uni-freiburg.de}
     \myaff{University of Freiburg}

     \textbf{Dennis Olivetti}
     \myemail{dennis.olivetti@gssi.it}
     \myaff{Gran Sasso Science Institute}
 \end{mycover}
 \bigskip

\begin{myabstract}
  We investigate the distributed complexity of maximal matching and maximal independent set (MIS) in hypergraphs in the \LOCAL model. A maximal matching of a hypergraph $H=(V_H,E_H)$ is a maximal disjoint set $M\subseteq E_H$ of hyperedges and an MIS $S\subseteq V_H$ is a maximal set of nodes such that no hyperedge is fully contained in $S$. Both problems can be solved by a simple sequential greedy algorithm, which can be implemented na\"ively in $O(\Delta r + \log^* n)$ rounds, where $\Delta$ is the maximum degree, $r$ is the rank, and $n$ is the number of nodes of the hypergraph.

  We show that for maximal matching, this  na\"{i}ve algorithm is optimal in the following sense. Any deterministic algorithm for solving the problem requires $\Omega(\min\set{\Delta r, \log_{\Delta r} n})$ rounds, and any randomized one requires $\Omega(\min\set{\Delta r, \log_{\Delta r} \log n})$ rounds. Hence, for any algorithm with a complexity of the form $O(f(\Delta, r) + g(n))$, we have $f(\Delta, r) \in \Omega(\Delta r)$ if $g(n)$ is not too large, and in particular if $g(n) = \log^* n$ (which is the optimal asymptotic dependency on $n$ due to Linial's lower bound [FOCS'87]). Our lower bound proof is based on the round elimination framework, and its structure is inspired by a new round elimination fixed point that we give for the $\Delta$-vertex coloring problem in hypergraphs, where nodes need to be colored such that there are no monochromatic hyperedges.
  
  
  For the MIS problem on hypergraphs, we show that for $\Delta\ll r$, there are significant improvements over the na\"ive $O(\Delta r + \log^* n)$-round algorithm. We give two deterministic algorithms for the problem. We show that a hypergraph MIS can be computed in $O(\Delta^2\cdot\log r + \Delta\cdot\log r\cdot \log^* r + \log^* n)$ rounds. We further show that at the cost of a much worse dependency on $\Delta$, the dependency on $r$ can be removed almost entirely, by giving an algorithm with round complexity $\Delta^{O(\Delta)}\cdot\log^* r + O(\log^* n)$.  
\end{myabstract}

\section{Introduction and Related Work}
\label{sec:intro}

In the area of distributed graph algorithms, we have a network that is represented by a graph $G = (V,E)$, where nodes represent machines and edges represent communication links, and the goal is to solve some graph problem on the graph representing the network. The nodes of $G$ can communicate if they are neighbors, and in the \LOCAL model of distributed computing the computation proceeds in synchronous rounds and messages are allowed to be arbitrarily large. The complexity of a problem is the minimum number of rounds required, in the worst case, to solve it.

Given a graph $G=(V,E)$, a \emph{maximal independent set} (MIS) of $G$ is an inclusion-wise maximal set $S\subseteq V$ of nodes such that no two neighbors are in $S$ and a \emph{maximal matching} of $G$ is an inclusion-wise maximal set $M\subseteq E$ of edges such that no two adjacent edges are in $M$. Maximal matching and MIS are two of the classic local symmetry breaking problems considered in the area of distributed graph algorithms.

The distributed complexity of the two problems has been studied in a long line of research, see, e.g., \cite{Alon1986,Luby1986,Israeli1986,Linial1987,Awerbuch89,panconesi96decomposition,Hanckowiak1998,Hanckowiak2001,panconesi01simple,KuhnMW04,Kuhn2009,LenzenW11,BEPS,barenboim14distributed,ghaffari16improved,KuhnMW16,fischer17improved,Balliu2019,Rozhon2020,trulytight,balliurules,GGR2020,BBKOmis,hideandseek}. Researchers have investigated the computational time complexity of these two problems in two different ways: (i) as a function of the maximum degree $\Delta$ of the graph, studying complexities of the form $f(\Delta) + g(n)$, where $g$ is a much more slowly growing function than $f$; (ii) as a function of the total number of nodes $n$ in the graph, studying complexities of the form $f(n)$. The best known randomized distributed algorithms for MIS and maximal matching run in $O(\log\Delta + \log^5\log n)$ rounds~\cite{ghaffari16improved,GGR2020} and $O(\log\Delta + \log^3\log n)$ rounds~\cite{BEPS,fischer17improved}, respectively.
As a function of $n$, the fastest known deterministic algorithms for MIS and maximal matching have time complexities $O(\log^5 n)$~\cite{GGR2020} and $O(\log^2\Delta\cdot\log n) = O(\log^3 n)$~\cite{fischer17improved}, respectively. For both problems, there is a deterministic algorithm with a time complexity of $O(\Delta+\log^* n)$~\cite{panconesi01simple,barenboim14distributed}. This algorithm is optimal as a function of $n$, as we know that it is not possible to solve any of these problems in $f(\Delta) + o(\log^* n)$ rounds for any function $f$~\cite{Linial1992, Naor1991}. Further, for both problems, there are randomized time lower bounds of $\Omega\Big(\min\Big\{\frac{\log\Delta}{\log\log\Delta},\sqrt{\frac{\log n}{\log\log n}}\Big\}\Big)$~\cite{KuhnMW04, BGKO2022} and $\Omega\big(\min\big\{\Delta,\frac{\log\log n}{\log\log\log n}\big\}\big)$~\cite{Balliu2019,hideandseek}, which hold even on tree topologies. Hence, although there are certainly some interesting remaining questions, the distributed complexity of MIS and maximal matching in graphs is relatively well understood.

\paragraph{Generalization to Hypergraphs.} In this paper, we consider the distributed complexity of the natural generalization of MIS and maximal matchings to hypergraphs. A hypergraph $H=(V_H,E_H)$ consists of a set of nodes $V_H$ and a set of hyperedges $E_H\subseteq 2^{V_H}$. 
We say that a hyperedge $e \in E_H$ is incident to a node $v \in V_H$ if $v \in e$. Similarly, a node $v$ is incident to a hyperedge $e$ if $v\in e$. 
We will omit the subscript ``$H$'' from the notation if $H$ is clear from the context.
The degree $\deg_H(v)$ of a node $v$ in $H$ is its number of incident hyperedges, while the rank $\rank_{H}(e)$ of an edge $e\in E_H$ is the number of its incident nodes (i.e., $\rank_{H}(e)=|e|$). 
The rank $r_H$ of $H$ is the maximum hyperedge cardinality, i.e., $r_H=\max_{e\in E_H}|e|$. We denote with $\Delta_H$ the maximum degree in $H$.
An independent set $S\subseteq V_H$ of a hypergraph is a set of nodes such that no hyperedge is fully contained in $S$, i.e., $\forall e\in E_H\,:\,e\cap S\neq e$. A matching $M\subseteq E_H$ is a set of edges such that every node $v\in V_H$ is contained in at most one of the sets in $M$. Maximal independent sets and maximal matchings are then again inclusion-wise maximal sets with the respective property.

Generally, while graphs model pairwise dependencies or interactions, hypergraphs model dependencies or interactions between more than $2$ nodes or entities. Graphs are therefore natural to model the pairwise communication in traditional wired networks. However, in general, interactions between the participants of a distributed system can certainly be more complex. Hypergraphs have a  richer structure and are generally less understood than graphs (not only in the distributed context). Various problems that can naturally be formulated as problems on hypergraphs have been studied and found applications in the context of distributed graph algorithms (e.g., \cite{CzygrinowHSW12,KuttenNPR14,lotker15,GhaffariKM17-slocal,FischerGK17,spaa21_tokendropping}).
In this context, one can also mention the beautiful line of work that uses methods from algebraic topology to prove distributed impossibility results and also develop distributed algorithms, e.g., \cite{HerlihyS99,SaksZ00,CastanedaR12,topology_book,CastanedaFPRRT21}. There, simplicial complexes (which from a combinatorial point of view are hypergraphs with the property that the set of hyperedges forms a downward-closed family of subsets of nodes) are used to express the development of the state of a distributed system throughout the execution of a distributed algorithm. We hope that a better understanding of the fundamental limitations and possibilities of distributed algorithms in hypergraphs will lead to insights that prove useful in the design of future distributed systems. In the following, we discuss some concrete reasons why hypergraph MIS and maximal matching specifically are interesting and worthwhile objects to being studied from a distributed algorithms point of view.

\paragraph{Distributed MIS in Hypergraphs.} Assume that we have a shared resource so that in each local neighborhood in the network, only a limited number of nodes can concurrently access the resource. The shared resource could for example be the common communication channel in a wireless setting that allows some network-level coding. Maybe a node can still decode a received linear combination of messages, as long as the signal consists at most $\kappa$ messages for some $\kappa>1$. The possible sets of nodes that can access the channel concurrently can then be expressed by the set of independent sets of some hypergraph. A further direct application of the MIS problem in hypergraphs is the problem of computing a minimal set cover and the problems of computing minimal dominating sets with certain properties~\cite{HarrisMPRS16}. Given a set cover instance, we can define a hypergraph $H$ in the natural way by creating a node for each set and hyperedge for each element consisting of the sets that contain this element. The vertex covers of $H$ correspond to the set covers of our set cover instance and the set of minimal vertex covers (and thus the set of minimal set covers) are exactly the complements of the set of maximal independent sets of $H$. To see this, note that clearly the component of a vertex cover (at least one node per hyperedge) is an independent set (not all nodes per hyperedge) and vice versa. Clearly minimality of a vertex cover implies maximality of the independent set and vice versa. In \cite{HarrisMPRS16,KuttenNPR14}, it is further shown that the minimal set cover problem can be used as a subroutine to compute sparse minimal dominating sets\footnote{A minimal dominating set is called sparse if the average degree of the nodes in the dominating set is close to the average degree of the graph.} and minimal connected dominating sets.

From a theoretical point of view, the distributed MIS problem in hypergraphs is also interesting because many of the techniques that work efficiently in the graph setting seem to fail or become much less efficient. It is for example not even clear how to obtain a randomized hypergraph MIS algorithm that is similarly efficient and simple as Luby's algorithm~\cite{Alon1986,Luby1986} for graphs. To the best of our knowledge, the first paper to explicitly study the problem of computing an MIS in a hypergraph is by Kutten, Nanongkai, Pandurangan, and Robinson~\cite{KuttenNPR14}. In the \LOCAL model (i.e., with arbitrarily large messages), it is relatively straightforward to compute an MIS by using a network decomposition. By using the network decomposition algorithm of \cite{LinialS93}, one in this way obtains  a randomized $O(\log^2 n)$-round algorithm and by using the recent deterministic network decomposition of \cite{Rozhon2020,GGR2020}, one obtains a deterministic $O(\log^5 n)$-round algorithm. The focus of \cite{KuttenNPR14} was therefore to obtain efficient algorithms in the \CONGEST model, i.e., algorithms that only use small messages.\footnote{Note that there is more than one way in which one can define the \CONGEST model for hypergraphs. As we do not focus on \CONGEST algorithms in this paper, we refer to \cite{KuttenNPR14} for a discussion of this issue.} In \cite{KuttenNPR14}, it is shown that an MIS in a hypergraph of maximum degree $\Delta$ and rank $r$ can be computed in the \CONGEST model in time $O(\log^{(r+4)!+4} n)$ if $r$ is small and in time $\Delta^{\eps}\cdot\poly\log n$ for any constant $\eps>0$. This upper bound has been later improved to $(\log n)^{2^{r+3} + O(1)}$ in \cite{phmis2}. Distributed \CONGEST algorithm for a closely related problem have also been studied in \cite{KuhnZheng}. All the mentioned bounds are obtained by randomized algorithms. In the present paper, we focus on deterministic algorithms and we focus on the dependency on $\Delta$ and $r$, while keeping the dependency on $n$ to $O(\log^* n)$ and thus as small as possible (due to Linial's $\Omega(\log^* n)$ lower bound~\cite{Linial1992}).
While there is not much work on distributed algorithms for computing an MIS in a hypergraph, the problem has been studied from a parallel algorithms point of view, see, e.g.~\cite{KarpUW88,BeameL90,Kelsen92,LuczakS97,phmis,phmis2}.

\paragraph{Distributed Maximal Matching in Hypergraphs.} First note that the set of matchings on a hypergraph $H=(V_H,E_H)$ is equal to the set of independent sets on the line graph of $H$, that is, on the graph containing a node for every hyperedge in $E_H$ and an edge for every intersecting pair of hyperedges in $E_H$. The maximal matching problem on hypergraphs of maximum degree $\Delta$ and rank $r$ is therefore a special case of the MIS problem on graphs of maximum degree at most $r(\Delta-1)$. Maximal matchings in hypergraphs have several applications as subroutines for solving standard problems in graphs, as we discuss next.

The maximum matching problem in graphs can be approximated arbitrarily well by using a classic approach of Hopcroft and Karp~\cite{HopcroftK73}. When starting with some matching and augmenting along a maximal set of disjoint shortest augmenting paths, one obtains a matching for which the shortest augmenting path length is strictly larger. As soon as the shortest augmenting path length is at least $2/\eps$, the matching is guaranteed to be within a factor $(1-\eps)$ of an optimal matching. The step of finding a maximal set of disjoint shortest augmenting paths can be directly interpreted as a maximal matching problem in the hypergraph defined by the set of shortest augmenting paths. In the distributed context, this idea has been used, e.g., in \cite{czygrinow03,lotker15,FischerGK17,FOCS18-derand,Harris20}. In \cite{HougardyV06,GhaffariKMU18}, it is further shown how maximal hypergraph matching can also be used in a similar way to obtain fast distributed approximations for the weighted maximum matching problem. In \cite{GhaffariKMU18}, this was then used to deterministically compute an edge coloring with only $(1+\eps)\Delta$ colors. In \cite{FischerGK17}, it was further shown that the problem of computing a $(2\Delta-1)$-edge coloring in graphs can be directly reduced to the problem of computing a maximal matching in a hypergraph of rank $3$. Additionally, \cite{FischerGK17} also shows how to compute an edge orientation of out-degree $(1+\eps)$ times the arboricity of a given graph by reducing to maximal matching in low-rank hypergraphs. Finally, when viewing a hypergaph as a bipartite graph between nodes and hyperedges, the maximal hypergraph matching problem is related to the MIS problem on the square of this bipartite graph and understanding the complexity of maximal matching in hypergraphs can be a step towards understanding the complexity of MIS in $G^2$.

Since the maximal hypergraph matching problem is a special case of the MIS problem in graphs, there are quite efficient randomized algorithms for the problem and the focus in the literature therefore so far has been on developing deterministic distributed algorithms for maximal matchings in hypergraphs~\cite{FischerGK17,FOCS18-derand,Harris20}. Prior to the new efficient deterministic network decomposition algorithm of Rozho\v{n} and Ghaffari~\cite{Rozhon2020}, those papers lead to the first deterministic distributed polylogarithmic-time $(2\Delta-1)$-edge coloring algorithms. The best known deterministic distributed algorithm for computing a maximal matching in a hypergraph of maximum degree $\Delta$ and rank $r$ is due to Harris~\cite{Harris20} and it has a time complexity of $\tilde{O}\big((r^2\log\Delta + r\log^2\Delta)\cdot\log n\big)$, where $\tilde{O}(x)$ hides polylogarithmic factors in $x$.


\paragraph{The Trivial Algorithm.}
Both an MIS and a maximal matching of an $n$-node hypergraph $H=(V_H,E_H)$ with maximum degree $\Delta$ and rank $r$ can be computed in time $O(\Delta r + \log^* n)$ in a straightforward way.
The trivial algorithm for hypergraph MIS proceeds as follows.
We first color the nodes in $V_H$ with $\Delta(r-1)+1$ colors such that no two nodes that share a hyperedge are colored with the same color. Such a coloring can be computed in $O(\Delta r + \log^* n)$ rounds, e.g., by using the algorithm\footnote{We could achieve a smaller runtime of $O(\sqrt{\Delta r \log(\Delta r)} + \log^* n)$ rounds for computing the coloring by using a state-of-the-art distributed $(\mathit{degree}+1)$-coloring algorithm~\cite{fraigniaud16local,BEG18,MausTonoyan20}; however, as, subsequently, we iterate through the obtained color classes, this would not change the overall asymptotic runtime.} of \cite{barenboim14distributed}.
Then, we iterate through the $O(\Delta r)$ color classes and greedily add nodes to the solution if the addition does not violate the maximality condition.
The trivial algorithm for hypergraph maximal matching proceeds analogously, where we color hyperedges instead of nodes, and there is a different maximality condition.

\paragraph{Some Notation.}
Before stating our contributions, we briefly discuss some terminology regarding hypergraphs.
A hypergraph is called \emph{$r$-uniform} if all its hyperedges are of cardinality exactly $r$ and it is called \emph{$\Delta$-regular} if all nodes have degree $\Delta$. A hypergraph is called \emph{linear} if any two hyperedges intersect in at most one node. Further, a \emph{hypertree} is a connected hypergraph $H=(V_H,E_H)$ such that there exists an underlying tree $T$ on the nodes $V_H$ for which every hyperedge consists of the nodes of a connected subtree of $T$. Finally, there is a natural representation of a hypergraph $H=(V_H,E_H)$ as a \emph{bipartite graph} consisting of the nodes $V_H \cup E_H$ and an edge between $v\in V_H$ and $e\in E_H$ if and only if $v\in e$. We refer to this bipartite graph as the \emph{bipartite representation of $H$}. Note that $H$ is a linear hypertree if and only if its bipartite representation is a tree. We say that a linear hypertree is $\Delta$-regular if every node is either of degree $\Delta$ or of degree $1$.

\subsection{Our Contributions}
While there is a clear relation between \emph{graph} MIS and \emph{hypergraph} maximal matching, or between \emph{graph} MIS and \emph{graph} maximal matching (it is possible to use an algorithm for the first problem to solve the second ones, as discussed earlier), observe that hypergraph maximal matching and hypergraph MIS cannot be easily compared: in contrast to the situation that we have in graphs, where it is possible to use an algorithm for MIS to solve maximal matching in the same asymptotic runtime, on hypergraphs no reduction of this kind is known.
In fact, a priori it is not clear at all which problem is the easier of the two.

In order to better understand our results, it is useful to compare the complexities of hypergraph MIS and hypergraph maximal matching with the complexity of the trivial algorithm that solves these two problems.
As discussed above, the trivial algorithm has a runtime of $O(\Delta r + \log^* n)$ rounds, i.e., it has a very low dependency on $n$, but is quite slow in terms of $\Delta$ and $r$.
The question that we study in our work is whether the trivial algorithm is optimal or whether it is possible to improve on it.
\vspace{2pt}
\begin{mdframed}[backgroundcolor=gray!20, topline=false, rightline=false, leftline=false, bottomline=false] 
	\textbf{Question 1.}
	\noindent Is it possible to solve hypergraph maximal matching and/or hypergraph MIS in $O(\log^* n)+ o(\Delta r)$ rounds?
\end{mdframed}
\vspace{2pt}

Informally, we show the following.
\begin{itemize}
	\item For hypergraph maximal matching, the trivial algorithm is best possible, unless we spend a much higher dependency on $n$.
	\item For hypergraph MIS, it is possible to improve on the trivial algorithm when $r \gg \Delta$.
\end{itemize}
In the following, we will discuss our results and their relation to the trivial algorithm more formally.

\paragraph{Maximal Matching Lower Bound.}
As our main result, we show that for the hypergraph maximal matching problem, the trivial algorithm---which is simply a na\"{i}ve implementation of the sequential greedy algorithm---is best possible.

\begin{restatable}{theorem}{matchinglower}\label{thm:matchinglower}
	Assume that $\Delta\geq 2$ and $r\geq 2$.
	Then any deterministic distributed algorithm in the \LOCAL model for computing a maximal matching in hypergraphs with maximum degree $\Delta$, rank $r$, and $n$ nodes requires $\Omega\big(\min\big\{\Delta r, \log_{\Delta r} n\big\}\big)$ rounds.\footnote{As lower bounds in several parameters are sometimes difficult to understand, we would also like to provide, as an example, the exact quantification of this statement, which is as follows: There is a constant $c$ such that, for any $\Delta \ge 2$ and $r \ge 2$, and any deterministic algorithm $\fA$, there are infinitely many hypergraphs with maximum degree $\Delta$, rank $r$, and $n$ nodes on which $\fA$ has a runtime of at least $c \big(\min\big\{\Delta r, \log_{\Delta r} n\big\}\big)$ rounds.} Any randomized such algorithm requires at least $\Omega\big(\min\big\{\Delta r, \log_{\Delta r} \log n\big\}\big)$ rounds. Moreover, our lower bounds hold already on $\Delta$-regular $r$-uniform linear hypertrees.
\end{restatable}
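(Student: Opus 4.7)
The plan is to prove the lower bound via the round elimination framework. I would work in the bipartite representation of a $\Delta$-regular $r$-uniform linear hypertree, which is a tree whose nodes come in two classes --- \emph{vertex-nodes} of degree $\Delta$ and \emph{hyperedge-nodes} of degree $r$ --- and formulate maximal matching as an edge-labeling problem on that tree, with each tree-edge carrying a label from $\{\M,\U\}$ encoding whether its hyperedge-node endpoint is in the matching. A $T$-round algorithm for this problem $\Pi$ yields a $(T{-}1)$-round algorithm for $\re(\Pi)$, so the goal is to exhibit a sequence $\Pi_0, \Pi_1, \Pi_2, \ldots$ with $\Pi_0 = \Pi$ and $\Pi_{i+1} = \re(\Pi_i)$ such that $\Pi_k$ is not solvable in zero rounds for some $k = \Omega(\Delta r)$.

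The main challenge is tracking how the label universe of $\Pi_i$ evolves. In a bipartite tree, two consecutive $\re$-applications alternate between the two node types and together form one full ``speedup''. For ordinary graph maximal matching this yields $\Omega(\Delta)$ steps, because the relevant label sets shrink by roughly one element per speedup; the extra factor of $r$ here should come from the fact that the hyperedge-node constraint is a disjunction over $r$ positions, so each application of $\re$ can only contribute a $1/r$ fraction of an elementary shrink. The new $\Delta$-vertex coloring fixed point announced in the paper is exactly the structural template I would reuse: parametrize the intermediate problems by a pair $(i,j)$ with $i \le \Delta$ and $j \le r$, show that $\re$ typically maps the $(i,j)$-problem to an $(i,j{-}1)$-problem and decrements $i$ while resetting $j$ in the remaining steps, and conclude that the family needs $\Omega(\Delta r)$ iterations before it trivializes.

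The most delicate point will be handling the maximality constraint under round elimination. Matching-disjointness is a clean one-hop condition at vertex-nodes (``at most one incident $\M$''), but maximality is a genuinely two-hop condition on hyperedge-nodes (``if none of my edges is $\M$, some vertex-neighbor must have another $\M$-edge'') and round elimination interacts poorly with such conditions. I would encode it by enlarging the edge-label alphabet with a ``coverage token'' that a vertex-node places on at most one of its edges, certifying to a $\U$-hyperedge-node that the vertex-node is covered by an $\M$-hyperedge; the maximality obligation then becomes the local requirement at each $\U$-hyperedge-node that at least one incident edge carries the coverage token, and this local form meshes with $\re$.

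Once the sequence of problems is in hand, the deterministic $\Omega(\min\{\Delta r, \log_{\Delta r} n\})$ bound follows from the standard argument that the number of round elimination iterations on an $n$-node instance is at most $\log_{\Delta r} n$, since each application inflates the label universe by a factor polynomial in $\Delta r$; the cap at $\Delta r$ reflects that the na\"{i}ve algorithm already solves the problem in $O(\Delta r + \log^* n)$ rounds. The randomized bound follows from the probability-amplification version of round elimination, which loses an extra logarithm per layer and yields $\log_{\Delta r} \log n$ in place of $\log_{\Delta r} n$.
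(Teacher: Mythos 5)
Your high-level framing is the right one: work in the bipartite representation, encode MM as a node/hyperedge-constraint labeling problem, design a parametrized family of intermediate problems and show it takes $\Omega(\Delta r)$ speedup steps to trivialize, then lift to the \LOCAL model. The ``coverage token'' you propose to localize the maximality condition is essentially the label $\P$ in the paper's encoding. But beyond this framing there are several genuine gaps that would prevent this outline from becoming a proof.

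First, you write $\Pi_{i+1} = \re(\Pi_i)$ as if the exact round-elimination image could be described and kept within your parametrized family. It cannot: the paper explicitly observes that the label universe of $\re^k(\Pi)$ has an ``unnatural part'' whose description is of size roughly a power tower of height $k$. The whole crux of the argument is to \emph{relax} each $\rere(\re(\Pi_i))$ to a member $\Pi_{i+1}$ of a small explicitly-described family, and then prove the relaxations are not so lossy that one reaches a $0$-round-solvable problem prematurely. Your proposal never mentions relaxation at all, and without it there is no way to keep the label alphabet tractable past a handful of steps.

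Second, your account of where the factor $r$ comes from is incorrect. You suggest the label sets ``shrink by one element per speedup'' and that the hyperedge constraint being a disjunction over $r$ positions contributes a $1/r$ fraction of a shrink, yielding an $(i,j)$ parametrization that decrements $j$ and then $i$. What actually happens is the opposite of a shrink: each speedup step introduces a \emph{new} color, and the naive coloring-based relaxation would let the color count grow unboundedly, breaking the analogy to a $\Delta$-coloring fixed point after only $\Delta{+}1$ steps. The key new ingredient of the paper --- not present in your outline --- is that at each step the new color must be \emph{merged} with an existing one so that the color count stays at $\Delta-1$, at the cost of incrementing a tolerance parameter. The family is then parametrized by a vector $z = (z_1, \dots, z_{\Delta-1})$ of tolerances $z_i \in \{1, \dots, r-1\}$ (plus a distinguished index $s$), and one speedup step increases a single coordinate by one. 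Going from the all-ones vector to the all-$(r-1)$'s vector takes $(\Delta-1)(r-2)$ steps, which is where $\Omega(\Delta r)$ comes from. Your proposed $(i,j)$ scalar pair has the right magnitude but the wrong mechanism, and without the merging idea the coloring part grows past $\Delta$ colors and the argument collapses.

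Third, the paper needs a substantially richer alphabet than your $\{\M, \U\}$ plus one coverage token: it uses $\{\D, \M, \P, \U, \X\}$ together with a label $\ell(\ccc)$ for each nonempty color set $\ccc$, and a subtle distinction (encoded by the index $s$, which must differ from the index incremented at the previous step) between one color and the rest. Your outline gives no indication of where such labels would come from or why the node configuration $\D^{\Delta-1}\,\X$ is needed, and without these ingredients the intermediate hyperedge constraint cannot be characterized and the crucial one-step lemma (that $\rere(\re(\Pi(z,s)))$ relaxes to $\Pi(z',q)$) cannot be established. The parts of your proposal concerning lifting from the port-numbering model to the deterministic and randomized \LOCAL model are standard and correct, but the heart of the argument --- the family and the one-step relaxation --- is not captured by what you have written.
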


We remark that, while in general hypergraphs the number of hyperedges can be much larger than the number of nodes, in linear hypertrees (which is the case where our lower bound applies) the sum of the nodes and the hyperedges is linear in $n$.

\Cref{thm:matchinglower} shows that, for hypergraph maximal matching, we answer Question 1 negatively: the theorem essentially states that we need to spend $\Omega(\Delta r)$ rounds, unless we spend a much higher dependency on $n$ than the one of the trivial algorithm, i.e., $\Omega(\log_{\Delta r} n)$ for deterministic algorithms and $\Omega(\log_{\Delta r} \log n)$ for randomized ones. In other words, the trivial greedy algorithm is optimal if we insist on having a small dependency on $n$. Observe that there exist algorithms for hypergraph maximal matching that can beat the $\Omega(\Delta r)$ lower bound that we provide, but they all spend substantially more than $\log^*$ as a function of $n$, and our results show that this is indeed necessary. For example, the algorithm by Harris \cite{Harris20} has a dependency on $\Delta$ that is only polylogarithmic, but it also has a multiplicative $\log n$ dependency.  

In fact, as a corollary, we obtain that the complexity of any algorithm for hypergraph maximal matching, when expressed solely as a function of $n$, must be large. This is known by prior work for $r = 2$, and we get that the same holds even when $r = \Delta$. This can be observed by setting $ r = \Delta = \sqrt{\log n / \log \log n}$ in the deterministic lower bound and $ r = \Delta = \sqrt{\log \log n / \log \log \log n}$ in the randomized one.
\begin{corollary}
	Any deterministic distributed algorithm in the \LOCAL model for computing a maximal matching in $n$-node hypergraphs requires at least $\Omega\big(\frac{\log n}{\log\log n}\big)$ rounds. Any randomized such algorithm requires at least $\Omega\big( \frac{\log \log n}{\log\log\log n}\big)$ rounds. Moreover, our lower bounds hold already on regular uniform linear $n$-node hypertrees that satisfy $r = \Delta$.
\end{corollary}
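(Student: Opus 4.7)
The plan is to derive the corollary as a direct consequence of \Cref{thm:matchinglower} by choosing, for each $n$, a value $\Delta = r$ as a function of $n$ that balances the two arguments of the $\min$ in the theorem. All additional properties demanded by the corollary (regular, uniform, linear hypertree with $r=\Delta$) then come for free from the theorem.

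For the deterministic bound, I would set $k := \lfloor \sqrt{\log n / \log\log n}\rfloor$ and take $\Delta = r = k$ (for $n$ large enough that $k\geq 2$). Then $\Delta r = k^2 = \Theta(\log n/\log\log n)$, and since $\log(\Delta r) = 2\log k = \log\log n - \log\log\log n = \Theta(\log\log n)$, we also get $\log_{\Delta r} n = \log n/\log(\Delta r) = \Theta(\log n/\log\log n)$. The two arguments of the $\min$ are then of the same asymptotic order, so \Cref{thm:matchinglower} yields the claimed $\Omega(\log n/\log\log n)$ lower bound. For the randomized bound I would do exactly the same thing with $k' := \lfloor\sqrt{\log\log n / \log\log\log n}\rfloor$: then $\Delta r = \Theta(\log\log n/\log\log\log n)$, $\log(\Delta r) = \Theta(\log\log\log n)$, and hence $\log_{\Delta r}\log n = \Theta(\log\log n/\log\log\log n)$, again balancing the $\min$ and giving the desired bound.

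The only mild subtlety, which I expect to be the main (but minor) obstacle, concerns the order of quantifiers: \Cref{thm:matchinglower}, in its precise ``infinitely many $n$'' form (see the footnote there), fixes $\Delta$ and $r$ \emph{first} and then furnishes infinitely many bad sizes $n$, whereas here I want to let $\Delta = r$ depend on $n$. To handle this, I would, for each sufficiently large integer $k$, pick one bad size $n_k$ from the infinite set guaranteed by the theorem with $\Delta = r = k$, chosen large enough that $k$ lies in the interval on which my balancing choice evaluates to $k$ (i.e.\ $\lfloor\sqrt{\log n_k/\log\log n_k}\rfloor = k$ in the deterministic case, and analogously for the randomized case). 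Such a size $n_k$ exists because, for each fixed $k$, the set of bad sizes is infinite and the relevant interval is unbounded above. Ranging over all sufficiently large $k$ yields an infinite sequence of sizes $n_k\to\infty$ together with associated $n_k$-node regular uniform linear hypertrees with $r=\Delta$ on which the given algorithm requires $\Omega(\log n_k/\log\log n_k)$ rounds (respectively $\Omega(\log\log n_k/\log\log\log n_k)$ in the randomized case), completing the proof.
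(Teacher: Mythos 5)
Your core argument --- substitute $\Delta = r = \Theta\bigl(\sqrt{\log n/\log\log n}\bigr)$ (resp.\ $\Theta\bigl(\sqrt{\log\log n/\log\log\log n}\bigr)$ in the randomized case) into \Cref{thm:matchinglower} and observe that both arguments of the $\min$ then agree up to constant factors --- is exactly the calculation the paper uses, and the paper states it in a single sentence just before the corollary. The extra structural claims (regular, uniform, linear hypertree, $r=\Delta$) do indeed come for free from the theorem.

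The one place your write-up goes wrong is the justification of the quantifier-swapping step. You assert that for fixed $k$ ``the relevant interval is unbounded above,'' referring to the set of $n$ with $\lfloor\sqrt{\log n/\log\log n}\rfloor = k$. That set is in fact a \emph{bounded} interval: once $\log n/\log\log n \geq (k+1)^2$, the floor exceeds $k$. Consequently, from the literal footnote phrasing of \Cref{thm:matchinglower} (``for fixed $\Delta, r$ there are infinitely many bad sizes $n$'') it does not follow that one of those bad sizes lies in your bounded window $I_k$; a priori the infinitely many bad sizes could all exceed $\sup I_k$. What actually closes the gap is that the round-elimination machinery behind \Cref{thm:matchinglower} (via \Cref{thm:lifting}) certifies the bound at \emph{every} $n$ for which a $\Delta$-regular $r$-uniform linear hypertree on $n$ nodes exists, and such sizes are dense enough (up to bounded gaps they cover all large enough $n$) to intersect each window $I_k$. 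So the conclusion is correct, the balancing is the paper's, but the specific reason you gave for the existence of a suitable $n_k$ is incorrect as stated and should instead appeal to this density of bad sizes.
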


\paragraph{Hypergraph Coloring.}
 On a high level, the proof of \Cref{thm:matchinglower} is based on a lower bound proof that we provide for a seemingly unrelated problem, namely, hypergraph coloring. In fact, as a byproduct, we also obtain lower bounds for variants of this problem. There are two natural ways to generalize the standard coloring problem to hypergraphs:
 \begin{itemize}
 	\item Color the nodes of the hypergraph with $c$ colors such that, for every hyperedge of rank at least $2$, it holds that at least $2$ incident nodes have different colors. This is the standard definition of hypergraph coloring.
 	\item Color the nodes of the hypergraph with $c$ colors such that, for every hyperedge, all incident nodes have different colors. This variant is sometimes called \emph{strong coloring} \cite{strongcoloring}, \emph{colorful coloring}, or \emph{rainbow coloring}\footnote{In the literature, the term rainbow coloring is also used to refer to a different, harder variant of coloring.} \cite{uniquemaximum}. We will refer to this variant as \emph{colorful}.
 \end{itemize}
 For the first problem, we show that it cannot be solved fast when $c \le \Delta$, and for the second one, we show that it cannot be solved fast when $c \le \Delta(r-1)$. While these results can also be shown with non-trivial reductions from the known hardness of a different problem called sinkless orientation, the proof that we present for these lower bounds is essential as a building block for the proof of our hypergraph maximal matching lower bound (and, in fact, can be seen as a simplified version of that proof).

 \begin{theorem}
 	In the \LOCAL model, the $\Delta$-hypergraph coloring and the $(r-1)\Delta$-hypergraph colorful coloring require $\Omega(\log_{\Delta r} n)$ rounds for deterministic algorithms and $\Omega(\log_{\Delta r} \log n)$ rounds for randomized ones.
 \end{theorem}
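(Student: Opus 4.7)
The plan is to prove both lower bounds by the round elimination framework applied to the bipartite representation of $\Delta$-regular $r$-uniform linear hypertrees, which is an $(\Delta,r)$-biregular tree. Since both problems are \emph{node-labeling} local problems (the output is a color at each hypergraph vertex, and each hyperedge enforces a constraint on its $r$ incident colors), they can be encoded naturally as LCLs on this bipartite tree, with a node-constraint $\npi$ on the vertex side (any color in the palette is allowed) and a hyperedge-constraint $\epi$ (for $\Delta$-coloring: not all equal; for colorful: all distinct). The overall idea is to exhibit an explicit \emph{fixed point} under round elimination: a slightly relaxed version $\Pi$ of the coloring problem such that $\rere(\re(\Pi))$ is at least as hard as $\Pi$, while the label alphabet of the intermediate problems grows by a factor of at most $\poly(\Delta r)$ per speedup step. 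Once this fixed point is in hand, the standard round-elimination lower-bound transfer yields $\Omega(\log_{\Delta r} n)$ deterministically and $\Omega(\log_{\Delta r} \log n)$ randomly via the KMW-style lifting (repeated halving of failure probability through each speedup).

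First, I would formally encode $\Delta$-hypergraph coloring: each vertex ``claims'' a nonempty subset $S \subseteq [\Delta]$ of allowed colors, the node-side constraint permits any nonempty $S$, and the hyperedge-side constraint requires that there is no single color $c$ such that every incident vertex claims $\{c\}$ alone (which is exactly the negation of a monochromatic edge once the sets collapse to singletons at the base level). Then I would compute $\re(\Pi)$ mechanically using the standard recipe: the new edge labels are nonempty subsets of the old node labels consistent with the node constraint, and the new node constraint requires that, for every way the adversary picks one old label from each offered set, the resulting tuple is accepted by $\epi$. The expected outcome of this calculation is that $\re(\Pi)$ looks like a ``one-sided'' generalization of hypergraph coloring, and $\rere(\re(\Pi))$ returns to a problem with the same combinatorial skeleton as $\Pi$, but on a (combinatorially) similar palette inflated by only a $\poly(\Delta r)$ factor. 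Establishing this isomorphism (up to relabeling and palette inflation) is the concrete fixed point.

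Second, I would verify the size-tracking: after $t$ speedup rounds, the label alphabet has size at most $(\Delta r)^{O(t)}$, so as long as $t = o(\log_{\Delta r} n)$, the speedup remains valid on the tree (where the diameter ensures that round-elimination hypotheses are not violated by bounded-alphabet concerns). A $0$-round solvable hypergraph coloring problem is impossible on regular uniform hypertrees of sufficient girth because the problem requires breaking symmetry across $\Delta \cdot r$ local choices, so iterating the speedup to $t = \Theta(\log_{\Delta r} n)$ contradicts the supposed algorithm. The randomized bound follows by the well-known tightening in which failure probabilities $\tfrac{1}{n^{O(1)}}$ are inflated to constants after $O(\log_{\Delta r} \log n)$ applications of round elimination, yielding the stated randomized bound. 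Finally, for the colorful variant with $(r-1)\Delta$ colors, I would run essentially the same fixed point argument with the all-distinct constraint: because $(r-1)\Delta$ is tight against the degree-rank product, the colorful constraint propagates through round elimination in a structurally analogous way, and the same fixed point (with different constants) produces identical asymptotic bounds.

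The principal obstacle will be the explicit verification of the fixed point: correctly computing $\re(\Pi)$ and $\rere(\re(\Pi))$ for the encoded hypergraph coloring problem and showing that the result can be mapped back into (a problem at least as hard as) $\Pi$, while keeping the alphabet blowup at $\poly(\Delta r)$ rather than exponential in $\Delta r$. A subtle point is that the bipartite tree alternates node-side and hyperedge-side constraints, so one speedup ``step'' of the lower bound requires composing $\re$ and $\rere$, and both halves must independently admit the fixed-point structure. Getting the colorful variant to share exactly the same fixed point up to relabeling is where most of the combinatorial work will lie.
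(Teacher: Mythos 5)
Your high-level plan --- exhibit a fixed point of round elimination for a relaxed version of $\Delta$-hypergraph coloring, check $0$-round unsolvability, and lift via the standard transfer theorem --- matches the paper's strategy. However, the concrete candidate you sketch is not a fixed point, and a few of the surrounding claims are off in ways that matter.

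The central missing ingredient is the \emph{marking} mechanism. You encode the relaxation as ``each vertex claims a nonempty subset $S \subseteq [\Delta]$ and the hyperedge forbids everyone claiming the same singleton.'' That is far weaker than what a fixed point needs. The paper's relaxation allows the \emph{empty} color set $\ell(\emptyset)$ as a label, and its node constraint is not ``any nonempty set'' but the precise counting constraint $\ell(\ccc)^{\Delta - |\ccc| + 1} \s \ell(\emptyset)^{|\ccc| - 1}$ for $\emptyset \neq \ccc \subseteq \{1,\dots,\Delta\}$: a node that uses $|\ccc|$ colors is \emph{rewarded} by being allowed to ``mark'' (output $\ell(\emptyset)$ on) exactly $|\ccc|-1$ of its incident hyperedges, and the hyperedge constraint becomes ``for every color $i$ there is at least one incident port whose color set avoids $i$.'' Without this trade-off, the round-elimination operator does not close up; it is precisely this structure that makes $\rere(\re(\Pi)) = \Pi$, and the Hall's-theorem-style argument that establishes the node side of the fixed point relies on it. Your simpler hyperedge constraint (``not all the same singleton'') would be too permissive to reproduce itself, and your node constraint (any nonempty set) is both too permissive and lacking the structure one needs to even state the right-closed-set analysis.

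Your size tracking also clashes with the fixed-point framing. If $\rere(\re(\Pi))=\Pi$ up to renaming, the alphabet does not inflate by a $\poly(\Delta r)$ factor per step --- it is \emph{constant} across steps, of size $2^{\Theta(\Delta)}$ (all subsets of the $\Delta$ colors), and the intermediate problem $\re(\Pi)$ has the same order of magnitude. This is exactly what the lifting theorem needs, since the $\log f(\Delta r)$ correction is then $O(\Delta) = O(\Delta r)$ and vanishes against $\log_{\Delta r} n$. A genuinely growing alphabet of size $(\Delta r)^{O(t)}$ would not be a fixed point and would cost you in the transfer.

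Finally, your treatment of the colorful variant is a genuinely different (and, per the paper, apparently infeasible) route. You propose to ``run essentially the same fixed point argument with the all-distinct constraint,'' but the paper explicitly reports that a direct fixed-point relaxation of $\Delta(r-1)$-colorful coloring did not seem to work. Instead, the paper derives the colorful lower bound by a $0$-round reduction: group the $\Delta(r-1)$ colors into $\Delta$ buckets of size $r-1$, replace each color by its bucket index, and observe that because each hyperedge has $r$ incident nodes with pairwise distinct colors, no bucket can cover all $r$ of them; the result is a valid $\Delta$-hypergraph coloring. This one-line reduction transfers the already-proved $\Delta$-coloring lower bound to the colorful problem. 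Attempting a standalone fixed point for the colorful problem is the harder path, and there is no evidence it leads anywhere.

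In short: the scaffolding (fixed point plus lifting) is right, but the candidate fixed point is wrong (no empty set, no counting/marking node constraint, wrong hyperedge constraint), the alphabet accounting is inconsistent with the fixed-point claim, and the colorful variant should be handled by a $0$-round reduction rather than a separate fixed-point analysis.
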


\paragraph{MIS Upper Bounds.}
In graphs, it is known that both maximal matching and MIS require a number of rounds that is linear in $\Delta$ \cite{Balliu2019,hideandseek}, unless we spend a high dependency on $n$. Hence, a natural question to ask is whether, like in the case of hypergraph maximal matching, also hypergraph MIS requires $\Omega(\Delta r)$ rounds if we keep the dependency on $n$ as small as possible.

We show that hypergraph MIS behaves differently than hypergraph maximal matching: in fact, Question 1 can be answered affirmatively for a certain parameter range.
In particular, we consider the case where $r\gg\Delta$ and design algorithms with a round complexity of the form $O(\log^* n) + f(\Delta, r)$ where $f(\Delta,r)=o(\Delta r)$ if $\Delta$ is sufficiently small compared to $r$. More concretely, we prove the following two results.

\begin{restatable}{theorem}{MISalgI}\label{thm:mis-slowInDelta}
  In the LOCAL model, the MIS problem on $n$-node hypergraphs of maximum degree $\Delta$ and rank $r$ can be solved in $O(\Delta^2\log r + \Delta \log r \log^* r+ \log^* n)$ deterministic rounds.
\end{restatable}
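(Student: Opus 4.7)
The plan is to break the trivial $O(\Delta r)$ bound by running an iterative rank reduction: $O(\log r)$ phases each of cost $O(\Delta^2 + \Delta \log^* r)$, plus a one-shot $O(\log^* n)$ preprocessing. The phase count and per-phase cost multiply out exactly to the target $O(\Delta^2 \log r + \Delta \log r \log^* r + \log^* n)$, so the conceptual task is to design a phase that halves the ``effective rank'' of every still-active hyperedge using only $O(\Delta^2 + \Delta \log^* r)$ rounds.

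First I would set up the bookkeeping. Maintain a tentative MIS $S \subseteq V_H$ and an excluded set $X \subseteq V_H$, where every $v \in X$ carries a witness hyperedge $e_v \ni v$ with $e_v \setminus \{v\} \subseteq S$. For a hyperedge $e$, call the set $e \setminus (S \cup X)$ its \emph{residual}; call $e$ \emph{satisfied} if $e \cap X \neq \emptyset$ and otherwise \emph{active}. Satisfied hyperedges can be ignored by independence, and $X$-nodes are automatically certified for maximality. The invariant to preserve after phase $i$ is that every active hyperedge has residual of size at most $r / 2^i$; after $O(\log r)$ phases the residual rank is $O(1)$, and a final $O(\Delta)$-round sweep of the trivial greedy algorithm (now on a rank-$O(1)$ hypergraph) finishes the MIS.

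Each phase consists of two sub-steps. In the first sub-step I recolor an auxiliary conflict graph on the active nodes; because the relevant local neighborhood has size $\operatorname{poly}(\Delta, r)$, and we start from the initial $O(\log^* n)$ Linial coloring, a Linial-style refresh reduces this to an $O(\Delta)$-sized palette in $O(\Delta \log^* r)$ rounds on the reduced universe. In the second sub-step I iterate through $O(\Delta^2)$ color classes, in each round tentatively adding the current class to $S$ wherever this does not fill a residual hyperedge, and simultaneously promoting to $X$ any node whose incident residual has just shrunk to a single element (so that a valid witness exists). This sub-step costs $O(\Delta^2)$ rounds.

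The main obstacle will be establishing the \emph{rank-halving lemma}: in each phase, every active hyperedge loses at least half of its residual nodes either to $S$ or to $X$. The delicate point is that the two mechanisms compete: a node is blocked from entering $S$ precisely when some of its hyperedges are close to being filled, but that near-fullness is also what we need in order to exclude neighboring nodes into $X$ with a valid witness. My plan for the argument is a charging scheme that pairs each ``blocked'' attempt on an active hyperedge $e$ with a residual node of $e$ which, by the end of the phase, is forced into $S$ or into $X$; since each node sits in at most $\Delta$ hyperedges and the auxiliary coloring has $O(\Delta)$ classes, each node absorbs only $O(\Delta)$ charges, so the $O(\Delta^2)$ color classes suffice to account for all charges and yield the halving. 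Combining the halving lemma, the $O(\log r)$ phase count, and the $O(\log^* n)$ initial coloring produces the claimed round complexity.
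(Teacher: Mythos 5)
The strategic outline --- iteratively reduce the effective rank through $O(\log r)$ halvings, color a residual structure, iterate through color classes, pay $O(\log^* n)$ once up front --- is in the right spirit, but the decomposition into phases is genuinely different from the paper's, and the crucial per-phase claim is both unproved and, I believe, substantially harder than what is actually needed.

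The paper partitions the $O(\Delta^2 \log r + \Delta \log r \log^* r)$ budget into $O(\Delta\log r)$ \emph{cheap} phases of cost $O(\Delta + \log^*(\Delta r))$ each, and the per-phase guarantee is deliberately weak and \emph{per-node}: each still-active node has at least one incident hyperedge whose active rank halves. This follows almost for free from a neat trick you do not have: split each remaining hyperedge into $\lceil \rank/2\rceil$ rank-$2$ virtual hyperedges, properly color the resulting virtual \emph{graph} with $O(\Delta)$ colors, and observe that this is a defective coloring of the hypergraph in which at most $\lceil \rank(e)/2\rceil$ nodes of any hyperedge share a color. A node that remains active must be ``blocked'' by some hyperedge all of whose remaining active nodes share its color, and that hyperedge therefore halves. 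Because the blocking hyperedge can change from phase to phase, one needs $O(\Delta\log r)$ phases, not $O(\log r)$.

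You instead propose $O(\log r)$ \emph{expensive} phases of cost $O(\Delta^2 + \Delta\log^* r)$, underpinned by a much stronger \emph{per-hyperedge} halving lemma: every hyperedge that stays active loses at least half of its residual in a single phase. This is precisely the kind of global simultaneous-progress statement that the paper's design avoids having to prove. Your sketch for it --- a charging scheme pairing blocked attempts to residual nodes --- is not fleshed out, and there is a concrete obstacle: whether a residual node of $e$ gets decided in a given phase depends on its \emph{other} incident hyperedges, so an adversarial layout can keep a hyperedge's residual nodes blocked by side hyperedges without any of them entering $S$ or $X$, and the halving of $e$ does not follow from $O(\Delta)$-boundedness of the charges alone. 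You would have to build, within the phase, some coordination mechanism analogous to the paper's virtual-graph coloring that \emph{guarantees} progress on every hyperedge simultaneously, and that is not present. There is also a smaller arithmetic inconsistency: you claim an $O(\Delta)$-sized palette from the Linial-style refresh, but then iterate through $O(\Delta^2)$ color classes; these two numbers cannot both be right, and the choice matters, since if the palette truly is $O(\Delta)$ then your per-phase cost drops to $O(\Delta\log^* r)$, your claimed total becomes $O(\Delta\log r\log^* r + \log^* n)$, and the rank-halving lemma would have to be even stronger than the theorem requires.

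In short: the missing ingredient is the virtual-graph/defective-coloring step that lets one settle for a weak per-node halving guarantee at $O(\Delta + \log^*)$ per phase, accepting $O(\Delta\log r)$ phases. Your route shifts the work into a per-hyperedge halving lemma that remains unproved and looks harder than the theorem itself.
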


\begin{restatable}{theorem}{MISalgII}\label{thm:MIS-independent-of-r}
  In the LOCAL model, the MIS problem on $n$-node hypergraphs of maximum degree $\Delta$ and rank $r$ can be solved in $2^{O(\Delta\log\Delta)}\log^* r + O(\log^* n)$ deterministic rounds.
\end{restatable}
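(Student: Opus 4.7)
The target runtime $\Delta^{O(\Delta)}\log^* r + O(\log^* n)$ naturally decomposes into (i) an $O(\log^* n)$-round phase to break symmetry on the global node IDs, (ii) a phase producing a coloring of the hypergraph nodes with $\Delta^{O(\Delta)}$ classes, and (iii) an iteration over those classes, each costing $O(\log^* r)$ rounds. My plan is to build such a three-phase algorithm, exploiting the key asymmetry that each node participates in only $\Delta$ hyperedges: all ``global'' combinatorial work can then be controlled by $\Delta$, while $r$ only appears when we do in-hyperedge symmetry breaking.

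First, I would compute a coloring of the nodes using $\Delta^{O(\Delta)}$ colors in $O(\log^* n)$ rounds. The idea is to work with an auxiliary graph on the nodes of the hypergraph that captures only $\Delta$-bounded interactions---concretely, one encodes, for each node, a ``local signature'' recording the pattern of its $\Delta$ incident hyperedges (e.g., a tuple of local identifiers together with a summary of the role of the node within each hyperedge). A Linial-style iterated palette reduction on this auxiliary graph, whose effective degree is $\Delta^{O(\Delta)}$, yields the desired coloring independently of $r$.

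Second, I would iterate through the $\Delta^{O(\Delta)}$ color classes in a fixed order. In each class, the nodes of that class attempt to join the MIS in parallel; the only conflicts arise from pairs of same-colored nodes that share a common hyperedge. Within each hyperedge $e$, the same-colored nodes must coordinate so that not all of them join the MIS, which is a symmetry-breaking task on up to $r$ nodes lying in the clique-like structure of $e$. This can be performed in $O(\log^* r)$ rounds via an in-hyperedge palette reduction: using the ID space restricted to $e$, the $r$ candidates within $e$ reduce their palette to $O(1)$ in $O(\log^* r)$ rounds, after which a trivial rule selects which same-colored nodes may join. Summing over all color classes gives the main term $\Delta^{O(\Delta)}\log^* r$; adding the coloring cost yields the claimed runtime, and a final $O(1)$-round check certifies maximality.

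The hardest part is the coloring step. The natural ``share-a-hyperedge'' conflict graph has degree $\Theta(\Delta r)$, so a direct Linial coloring gives $r$-dependent palettes, which we cannot afford. The crux is to construct the auxiliary graph so that its coloring simultaneously (a) is computable without any $r$-dependent step, and (b) is informative enough that per-class conflicts reduce to independent in-hyperedge symmetry-breaking problems, with no further cross-hyperedge coordination. An additional subtlety is ensuring that MIS decisions made in earlier classes are correctly respected by later ones; this likely requires maintaining, for each hyperedge, a compact $\Delta$-sized shared state visible to all its incident nodes, and arguing correctness by induction on the processing order. If after a bounded number of classes the residual instances have rank $O(\Delta)$, one could also hope to invoke \Cref{thm:mis-slowInDelta} on them---its complexity reduces to $\Delta^{O(1)} + O(\log^* n)$ when $r=O(\Delta)$---to close out the argument cleanly.
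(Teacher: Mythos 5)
Your proposal takes a fundamentally different route from the paper, and unfortunately it leaves the central difficulty unresolved rather than solving it.

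The critical step in your plan is Phase (ii): a node coloring with $\Delta^{O(\Delta)}$ classes that is simultaneously computable in $O(\log^* n)$ rounds without any $r$-dependence, \emph{and} strong enough that processing the classes one by one reduces all MIS conflicts to independent in-hyperedge symmetry-breaking tasks. You note yourself that ``the crux is to construct the auxiliary graph so that its coloring simultaneously (a) is computable without any $r$-dependent step, and (b) is informative enough that per-class conflicts reduce to independent in-hyperedge symmetry-breaking problems, with no further cross-hyperedge coordination''---but this is precisely the theorem you are trying to prove, not a lemma you can assume. The natural conflict graph (two nodes adjacent iff they share a hyperedge) has degree $\Theta(\Delta r)$, so any proper coloring of it has $r$-dependent palette size; your ``local signature'' graph of degree $\Delta^{O(\Delta)}$ is never defined, and it is unclear what invariant its coloring could guarantee that would prevent two same-colored nodes from needing cross-hyperedge coordination. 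Even granting such a coloring, Step (iii) has further gaps: a node lies in up to $\Delta$ hyperedges, and the ``trivial rule'' you invoke after the in-hyperedge palette reduction must simultaneously satisfy the independence constraint in \emph{all} $\Delta$ incident hyperedges (taking into account earlier classes' decisions) \emph{and} certify maximality, i.e., every excluded node must witness a fully-selected hyperedge. None of this is worked out, and the fallback via \Cref{thm:mis-slowInDelta} does not help because nothing in the construction forces the residual rank to drop to $O(\Delta)$.

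The paper's proof does not compute any $\Delta^{O(\Delta)}$-coloring. Instead it sets up a recursion on $\Delta$: compute a $(2,\Delta+4)$-ruling set of the hypergraph nodes in $O(\Delta + \log^*(\Delta r))$ rounds (using the Kuhn--Maus--Weidner ruling-set algorithm on bounded-degree hypergraphs, after a one-time $O(\log^* n)$-round Linial coloring); mark each hyperedge with its distance to the ruling set (a number in $\{0,\dots,\Delta+4\}$); then process marks from largest to smallest, observing that the subhypergraph of active nodes incident to $i$-marked hyperedges (after peeling off nodes whose incident hyperedges are \emph{all} $i$-marked) has maximum degree $\Delta-1$, so it can be solved by the degree-$(\Delta-1)$ algorithm. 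The base case $\Delta=2$ is reduced to a graph problem on a virtual graph $G$ (hyperedges become vertices, hypergraph nodes become edges) and solved via a $(2,2)$-ruling edge set in $O(\log^*(\Delta r))$ rounds. The recurrence $T_{\Delta} = O(\Delta (1 + T_{\Delta-1}) + \log^*(\Delta r))$ with $T_2 = O(\log^*(\Delta r))$ unrolls to $2^{O(\Delta\log\Delta)}\log^*(\Delta r)$. The two ideas you would need to import from the paper are (1) the degree-reduction trick via ruling sets and hyperedge markings, and (2) the reduction of the $\Delta=2$ case to a pure graph problem---neither of which appears in your proposal.
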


Note that the bound in the second theorem is almost independent of the rank $r$.
In particular, Theorem~\ref{thm:MIS-independent-of-r} implies that for bounded-degree hypergraphs, an MIS can be computed in time $O(\log^* n)$, even if the rank $r$ is not bounded. Observe that, while the dependency on $\Delta$ is high, by prior work we know that it has to be at least linear \cite{Balliu2019,hideandseek}.

We further note that the bounds that we provide are incomparable to the upper bounds that were obtained in \cite{KuttenNPR14}, where the authors tried to optimize the round complexity as a function of $n$ alone. The runtimes of the algorithms in \cite{KuttenNPR14} are all at least polylogarithmic in $n$. In our algorithms, we keep the dependency on $n$ to $O(\log^* n)$ and thus as small as it can be. Our algorithms are faster than the $\poly\log n$-time algorithms based on network decomposition in the realistic scenario where $\Delta$ and $r$ are much smaller than $n$.

\paragraph{Open Questions.}
Our work raises the following natural open question.
Is it possible to solve hypergraph MIS in just $O(\Delta + \log^* n)$ rounds, independently of $r$?
Given the mentioned lower bounds on graphs, such a complexity would be optimal.
In this regard, we show that there is a variant of hypergraph coloring that, if solved fast, would provide such an algorithm. This variant is typically called $c$-unique-maximum coloring \cite{uniquemaximum}, and requires to color the nodes of a hypergraph with $c$ colors such that, for each hyperedge $e$, the maximum color appearing at the nodes incident to $e$ occurs exactly once among the incident nodes. We show that finding an $O(\Delta)$-unique-maximum coloring in $O(\Delta + \log^* n)$ rounds would allow us to solve hypergraph MIS optimally. While our lower bounds hold for different variants of hypergraph coloring, they do not hold for this variant.

\paragraph{Technical Remarks.}
In order to prove our lower bounds, we make use of the round elimination framework, which provides a general outline for proving lower bounds in the distributed setting.
Round elimination has first been applied in \cite{Brandt2016} to obtain lower bounds on computing a sinkless orientation and a $\Delta$-coloring of a graph.
In 2019, \cite{Brandt2019} showed that the round elimination framework can be applied to almost any locally checkable problem\footnote{Roughly speaking, a locally checkable problem is a problem that can be defined via local constraints. A standard example of locally checkable problems are proper coloring problems, which can be described by the local constraint that the two endpoints of any edge must receive different colors.}, at least \emph{in principle}; however to obtain a good lower bound for some concrete locally checkable problem, it is necessary to overcome a certain set of challenges that is individual for each problem or problem family.
While for some problems---such as $(\Delta + 1)$-vertex coloring or $(2 \Delta - 1)$-edge coloring---these challenges seem far beyond the reach of current techniques, they have been overcome for a number of other important problems~\cite{Balliu2019,trulytight,balliurules,BBKOmis,hideandseek,binary}.
Each of these results has been accompanied by a new key idea that succeeded in making the round elimination technique applicable also \emph{in practice} for the considered problem(s); often the key idea unlocked the applicability of round elimination not only for a single problem, but for several problems at once, a whole problem family, or a new setting.
In a similar vein, we prove the first substantial lower bounds for important hypergraph problems (that do not simply follow from lower bounds on graphs) by showing how to make these problems susceptible to a round elimination type of approach.
We hope that our techniques can serve as a starting point for unlocking the round elimination technique more generally for hypergraph problems.

\subsection{Road Map}

We start, in \Cref{sec:approach}, by providing a high-level overview of our lower bound proof, highlighting the challenges and the new ingredients that we bring to the table.


We then proceed, in \Cref{sec:defs}, by first defining the model of computing, and then by formally describing the round elimination technique. In this section, we also provide an example of the application of this technique.

In \Cref{sec:fixedpoint}, we prove lower bounds for hypergraph coloring and some of its variants. Hypergraph coloring seems to be, at first sight, unrelated to hypergraph maximal matching. We show the connections between these two problems in \Cref{sec:approach}, where we also explain why the proof that we present for the hardness of hypergraph coloring can be seen as a simplified version of the proof for hypergraph maximal matching.  

The hypergraph maximal matching lower bound is presented in \Cref{sec:lbmm}. The complexity of this proof is higher than the one for coloring, and requires some heavy notation. We will start by describing how it is connected to the proof for hypergraph coloring, by informally describing how it can be seen as an extension of the one for hypergraph coloring.

In \Cref{sec:ub}, we present upper bounds for the hypergraph MIS problem. We show that, while the trivial algorithm is able to solve the problem in $O(\Delta r + \log^* n)$ rounds, if we allow a larger dependency on $\Delta$ we can almost entirely remove the dependency on $r$. In fact, in hypergraphs of maximum degree $\Delta = O(1)$ and very large rank $r$, we show that it is possible to solve the hypergraph MIS problem in just $O(\log^* n)$ rounds. 

We conclude, in \Cref{sec:open}, by stating some open questions. We show that hypergraph MIS could be solved optimally if we are provided with an $O(\Delta)$-unique-maximum coloring, and we leave as an open question determining the complexity of this variant of coloring.

\section{High-Level Overview of the Lower Bound}\label{sec:approach}
Due to the fact that round elimination lower bound proofs are typically technically complex and notation-heavy, which hides the underlying conceptual ideas, we will give a detailed explanation of the round elimination framework, its fundamental issue, approaches of previous work to take care of this issue and our conceptual contributions in the following.
While we have to resort to painting a very rough picture in places (as giving all details of the previous approaches would be far beyond the scope of this work), we hope that our high-level overview will make the recent developments regarding distributed lower bounds accessible to a broader audience while highlighting how our approach extends the lower bound tool box.

\paragraph{The Round Elimination Technique.}
On a high level, round elimination is a technique that can be used to prove lower bounds on the time required to solve (on trees or hypertrees) locally checkable problems, i.e., problems that can be defined by specifying some constraints that each node and hyperedge must satisfy. In the round elimination framework, a problem $\Pi$ is defined by listing what labelings, or \emph{configurations}, are allowed around the nodes and on the hyperedges.
More precisely, each node $v$ has to output some output label for each pair $(v,e)$ such that $e$ is a hyperedge containing $v$; the output is considered to be correct if for each node, the configuration of output labels on the pairs with first entry $v$ is contained in an explicitly given list of allowed ``node configurations'', and for each hyperedge $e$, the configuration of output labels on the pairs with second entry $e$ is contained in an explicitly given list of allowed ``hyperedge configurations''.

The round elimination technique is a mechanical procedure $\re$ that can be applied to $\Pi$ to obtain a new problem $\Pi' = \re(\Pi)$ that, under mild assumptions regarding the considered graph class, is exactly one round easier than $\Pi$. Suppose we prove that $\Pi' = \re(\Pi)$ cannot be solved in $0$ rounds of communication: we obtain that solving $\Pi'$ requires at least $1$ round, and hence that solving $\Pi$ requires at least $2$ rounds. The obtained problem $\Pi'$ is also always locally checkable, and hence we can apply this technique again to obtain some problem $\Pi''$ guaranteed to be exactly one round easier than $\Pi'$. If we prove that also this problem cannot be solved in $0$ rounds, we obtain a lower bound of $3$ rounds for $\Pi$.

\paragraph{The Issue with Round Elimination.}
While the high-level outline given above suggests that obtaining lower bounds via round elimination is an entirely mechanical task, this is unfortunately not the case due to a fundamental issue in the framework.
A simple example to explain this issue is the following. Let $\Pi$ be the problem of $3$-coloring a cycle, which is clearly locally checkable.
The constraints of this problem can be specified by a list of allowed configurations using $3$ output labels (the colors). If we apply round elimination to this problem, we automatically obtain a new problem $\Pi' = \re(\Pi)$ that can be described by using $16$ labels. If we apply round elimination again, we get a problem with thousands of labels. While in theory we could continue forever, in practice we cannot, because it is not feasible to compute the next problem (and even less to determine whether it can be solved in $0$ rounds or not).
In general, the downside of this techique is that the description of $\Pi'$ can be doubly exponentially larger than the one of $\Pi$.

\paragraph{Decomposing a Problem.}
In \cite{hideandseek}, a technique has been presented that sometimes allows us to take care of this issue.
In the following, we describe this technique for the case of the MIS problem (on graphs). Let $\re^0(\Pi) = \Pi$ and $\re^{i+1}(\Pi) = \re(\re^i(\Pi))$.

It has been observed already in \cite{balliurules} that if we start from $\Pi$ equal to the MIS problem and apply the round elimination technique iteratively $k$ times, the result $\re^k(\Pi)$ can be essentially decomposed into three parts:
\begin{enumerate}
	\item the \emph{original} problem, i.e., MIS,
	\item a \emph{natural} part of small size, corresponding to the $k$-coloring problem, and
	\item an \emph{unnatural} part whose description is of size roughly equal to a power tower of height $k$ and that cannot be easily understood.
\end{enumerate}
MIS is not the only problem that behaves in this way under round elimination; in fact, many problems admit, after applying round elimination for $k$ steps, a decomposition into the original problem, some natural part that is easy to understand (and often unrelated to the original problem), and some unnatural part of much larger size (where the latter two depend on $k$).

The main idea of the previous works that use the round elimination technique is to relax the problems obtained at each step with the goal of replacing the unnatural part with something much smaller. In this context, relaxing a problem means adding further configurations to the lists collecting the allowed configurations for the nodes or the hyperedges, at the cost of making the problem potentially easier to solve. Observe that, if a problem is relaxed too much, then it is not possible to obtain strong lower bounds, because we would soon reach a problem that is $0$-round solvable. Hence, the typical goal is to find relaxations that make the description of the problem much smaller while still not making the problem much easier. Note that, effectively, the description of a problem can indeed become smaller by \emph{adding} allowed configurations: by allowing more (in a suitable way), a substantial number of allowed configurations can be shown to become useless and hence can be ignored.

\paragraph{Previous Approaches.}
In order to prove tight lower bounds for ruling sets (and other interesting problems), in \cite{hideandseek}, the third part in the decomposition of the problems has been handled as follows.

Consider the $k$-coloring problem, for $k \le \Delta$. In some sense, this problem is known to be much harder than MIS, since, unlike MIS, it cannot be solved in $O(f(\Delta) + \log^* n)$ rounds for any function $f$. We call problems that cannot be solved in $O(f(\Delta) + \log^* n)$ rounds for any function $f$ \emph{hard}. If we apply round elimination to the $k$-coloring problem, we essentially get something very similar to the third part presented above.
But for some of the hard problems, we have techniques that allow us to deal with them, that is, finding relaxations of the problem that result in a so-called \emph{fixed point}. Fixed point problems are special: by applying the round elimination technique to a fixed point problem $\Pi$ we obtain the problem itself, i.e., $R(\Pi) = \Pi$. This may look like a contradictory statement: it is not possible for a problem to be exactly one round easier than itself. The reason why this is not an actual contradiction is that the condition that makes the round elimination statement hold stops working when the complexity is $\Omega(\log_\Delta n)$. In fact, it is possible to prove that a fixed point directly implies a lower bound of $\Omega(\log_\Delta n)$ for the problem. 

The point of relaxing the $k$-coloring problem to a fixed point before applying the round elimination technique is the following: if we apply round elimination to the $k$-coloring problem itself, we obtain a lot of configurations that make the obtained problem hard to understand, while if we apply round elimination to a fixed point relaxation of the $k$-coloring problem, we obtain the problem itself, and no additional allowed configurations. Observe that, for problems solvable in $O(f(\Delta) + \log^* n)$ rounds, like MIS, fixed point relaxations cannot exist, since they would give a lower bound higher than the upper bound for small values of $\Delta$.

The idea of \cite{hideandseek} is to \emph{embed} the fixed point of the $k$-coloring problem into the sequence obtained from MIS, in the following sense. We start from MIS, and we apply round elimination iteratively. At each step $k$, we relax the obtained problem by adding all the configurations allowed by the fixed point relaxation of the $k$-coloring problem. It has been shown in \cite{hideandseek} that thereby we obtain a problem sequence in which the number of allowed configurations is always bounded by $O(2^k)$, and hence much better than a power tower of height $k$. In other words, the third part of the problem (the unnatural part) is replaced by the configurations that we need to add to the $k$-coloring problem to make it a fixed point, of which there are just $O(2^k)$. The reason why this approach works for MIS seems to be related to the fact that all the unnatural configurations present at step $k+1$ are generated (by $\re$) by starting from the $k$-coloring configurations present at step $k$.

The take-home message of the approach of \cite{hideandseek} is the following: try to decompose the problem into the three parts explained above, and try to find a fixed point relaxation for the natural part; it may then happen that the unnatural part just disappears.

By following this approach, the authors of \cite{hideandseek} managed to prove tight lower bounds for MIS, ruling sets, and many other interesting natural problems. Also, they asked if there are other interesting problems for which this technique is applicable (see Open Problem 2 in \cite{hideandseek}), and we answer this question affirmatively.

\paragraph{A Colorful Matching.}
We extend the applicability of the technique to hypergraphs by solving a number of intricate challenges introduced by the more complex setting.
In particular, we show that while the explained technique cannot be applied directly, by combining it with a set of new ingredients we obtain substantial lower bounds for the hypergraph maximal matching problem.
The good news from the perspective of the general outline is that we can still decompose $\re^k(\Pi)$, where $\Pi$ is the hypergraph maximal matching problem, into three parts, by making use of $k$-hypergraph colorful coloring:
\begin{enumerate}
	\item The \emph{original} problem,
	\item a \emph{natural} part: the $k$-hypergraph colorful coloring problem, and
	\item an \emph{unnatural} part, of size roughly equal to a power tower of height $k$, without any easily discernible useful structure.
\end{enumerate}
By following the approach of \cite{hideandseek}, in order to prove a lower bound of $\Omega(\Delta r)$, it is necessary to first find a fixed point relaxation for the $k$-hypergraph colorful coloring problem, for $k$ that can be as large as $\Omega(\Delta r)$. 

\paragraph{Challenges and New Ingredients.}
Two labels $\ell_1$ and $\ell_2$ are called \emph{equivalent} when, for any allowed configuration containing $\ell_1$, if we replace $\ell_1$ with $\ell_2$, we obtain a configuration that is allowed, and for any configuration containing $\ell_2$, if we replace $\ell_2$ with $\ell_1$, we also obtain a configuration that is allowed. If two labels are equivalent, we can actually discard all configurations containing one of the two labels, still obtaining a problem that is not harder than the original one.  Relaxing a problem in order to make $\ell_1$ equivalent to $\ell_2$ is called \emph{label merging}, and it is a powerful technique that can be used to reduce the number of labels. 
In previous works, relaxations are typically done in two ways:
\begin{itemize}
	\item When proving lower bounds for problems solvable in $O(f(\Delta) + \log^* n)$ rounds, it is typically the case that, by applying round elimination, many new labels are obtained, and to keep the problem small, a lot of label merging is performed.
	\item When proving lower bounds for hard problems, configurations (possibly containing new labels) are added in order to transform the problem into a fixed point. Label mergings are not performed.
\end{itemize}

Unfortunately, the latter outline for finding a fixed point relaxation does not seem to work for $\Delta(r-1)$-hypergraph colorful coloring. Via a new approach, we nevertheless show how to find a relaxation of the $\Delta(r-1)$-hypergraph colorful coloring problem that is a fixed point, but it comes with two issues: the fixed point does not resemble the $\Delta(r-1)$-hypergraph colorful coloring problem at all, and the approach of \cite{hideandseek}, applied to it, does not work. 

We obtain the fixed point as follows. We first find a fixed point relaxation for the $\Delta$-hypergraph (non-colorful) coloring problem, and then show that there exists a $0$-round algorithm that converts a $\Delta(r-1)$-hypergraph colorful coloring into a proper $\Delta$-hypergraph coloring. This algorithm can be interpreted in an interesting way: what it does is mapping different colors into a single one, essentially showing that the $\Delta$-hypergraph (non-colorful) coloring problem can be obtained from $\Delta(r-1)$-hypergraph colorful coloring by performing label merging. 
This observation allows us to obtain a fixed point for the $\Delta(r-1)$-hypergraph colorful coloring problem: start from $\Delta(r-1)$-hypergraph colorful coloring, merge some labels to obtain $\Delta$-hypergraph coloring, then add configurations to this new problem in order to make it a fixed point.
We provide a more detailed overview of our approach for obtaining the fixed point in Section~\ref{sec:fixedpoint}.

We now turn to explaining the challenges and our solutions for obtaining the hypergraph maximal matching lower bound with the help of the aforementioned fixed point.

Recall the aforementioned decomposition of problems into the original problem, a natural colorful coloring part, and an unnatural part of large description size.
In the hypergraph maximal matching problem, by applying round elimination, we essentially get an additional new color in the coloring part per step while the size of the unnatural part grows like a power tower. We want to be able to perform $\Omega(\Delta r)$ steps of round elimination before reaching a $0$-round solvable problem, but we cannot let the number of labels grow like a power tower because it would then be infeasible to actually understand the problem sequence. In order to keep the number of labels reasonably small throughout the problem sequence without compromising the lower bound quality, we have to perform relaxations that do not make the problems too easy, as otherwise we would obtain a $0$-round solvable problem after $o(\Delta r)$ steps, which would give a worse lower bound.


We would like to use the same approach as \cite{hideandseek}, i.e., after round elimination step $k$, we want to add the allowed configurations of the fixed point of the $k$-hypergraph colorful coloring problem to the obtained problem. 
It is actually possible to extend the approach of \Cref{sec:fixedpoint} to show a fixed point relaxation for the $k$-hypergraph colorful coloring problem. This fixed point is also a relaxation of the $\lceil k / (r-1) \rceil$-hypergraph coloring problem.
The problem is that this fixed point is not exactly designed with colorful coloring in mind, and this makes it hard to adapt it to the case where colors increase by one at each step. 
In particular, consider the case where we try to use, as the $k$-th problem in our relaxed sequence, the problem $\Pi(k)$ containing two parts: the original problem, and the fixed point relaxation of the $k$-hypergraph colorful coloring problem (as would be the case with the approach from~\cite{hideandseek}). Note that the second part would contain $\lceil k / (r-1) \rceil$ colors, and these colors would satisfy the constraints of the hypergraph non-colorful coloring problem. By performing round elimination, we obtain a new color, and it does not seem possible to relax the obtained problem to $\Pi(k + 1)$. In particular, we would like to relax the coloring part of the obtained problem to the fixed point for $\lceil (k+1) / (r-1) \rceil$-hypergraph coloring, but it seems that the coloring part of the obtained problem is already easier than that, and that it can only be relaxed to the fixed point for $(\lceil k / (r-1) \rceil + 1)$-hypergraph coloring. If instead we use the latter approach (i.e., the approach of relaxing the coloring part of the obtained problem to the fixed point with the smallest number of colors that the coloring part can be relaxed to), the number of colors in the coloring part grows by $1$ at each step, meaning that  after only $\Delta+1$ steps of round elimination, we would have $\Delta+1$ colors.
However, the construction of the fixed point is only possible up to $\Delta$ colors, which renders also the modified approach infeasible (as already after $\Delta + 1$ steps we would obtain a problem that is $0$-round solvable).

In order to make things work, our idea is to not use directly a fixed point for hypergraph coloring, but to build a problem sequence in which, if we take the coloring part of each problem, we obtain a problem sequence that (except for the few first steps) has always $\Delta$ colors, and in which the hyperedge constraint gets more relaxed at each step. This coloring part at some point will become actually equal to the $\Delta$-coloring fixed point, but only after $\Omega(\Delta r)$ steps. All this is achieved by merging, at each step, the new color with an existing one, in such a way that we relax the hyperedge constraint as little as possible.

Unfortunately, the challenges mentioned above are just the conceptual issues that we have to take care of---there are a number of technical issues on top of the conceptual ones that make our lower bound proof fairly complicated.
One of these technical issues is that even when following our new approach, we obtain a problem sequence that is too relaxed, and hence we cannot obtain a good lower bound in this way. However, if during the relaxation procedure we select one color and treat it slightly different than all the other colors (i.e., amongst all the configurations that we add during the relaxation procedure the ones that contain the selected color have a slightly different structure than those containing other colors), then the relaxations turn out to be sufficiently tight.
The choice of the selected color is irrelevant, as long as we do not select the same color in two subsequent relaxation procedures.
In general, the hypergraph setting increases the complexity of various crucial ingredients in our proof (such as constructing a suitable problem family, proving that our relaxations indeed lead to the stated problem family by setting up a suitable instance on which we can apply Hall's marriage theorem, etc.) and therefore of the proof itself---we hope that our contributions (both the conceptual and the technical ones) will help navigating the difficulties of proving further lower bounds for important hypergraph problems.
We provide a more detailed overview of our lower bound approach for hypergraph maximal matching in Section~\ref{sec:lbmm}.

\paragraph{An Easier Proof.}
We would like to point out that in order to prove a lower bound for hypergraph maximal matching, we could have entirely skipped \Cref{sec:fixedpoint}, i.e., the proof of the hypergraph coloring lower bound: it is not necessary to prove that the relaxation of the $\Delta$-hypergraph coloring that we provide is indeed a fixed point, because it does not seem possible to make the approach of \cite{hideandseek} modular.
In other words, we cannot use the lower bound for $\Delta$-hypergraph coloring as a building block for the lower bound proof for hypergraph maximal matching, but we have to prove the latter lower bound from scratch; however, we can make use of some of the structures used in the proof for $\Delta$-hypergraph coloring.
In fact, essentially, the lower bound proof for coloring is a much simplified version of the one for hypergraph maximal matching, which has to handle many additional challenges. For example, the number of colors in the aforementioned ``natural part'' of the problems in the sequence obtained by starting from hypergraph maximal matching grows by $1$ at each step, while in the case of $\Delta$-hypergraph coloring it never changes.


\section{Definitions and Notation}\label{sec:defs}

The upper and lower bounds presented in this work hold on hypergraphs, which have been defined in \Cref{sec:intro}. We now provide some additional notation.

We denote with $N_H(v)$ the neighbors of a node $v$, that is, all those nodes in $H$, different from $v$, contained in the hyperedges incident to $v$. 

Each node of a hypergraph is at distance $0$ from itself. Let $w_0,\cdots,w_d$ be the smallest sequence of nodes of $H$ such that:  $w_0=u$, $w_d=v$, and $w_{i+1}\in N_H(w_i)$ for all $0\le i< d$. Then we say that nodes $u$ and $v$ are at distance $d$ in $H$.
We call the \emph{$r$-hop neighborhood} of a node $u$ the subhypergraph induced by nodes in $H$ that are at distance at most $r$ from $u$, where we remove the hyperedges that contain only nodes at distance exactly $r$ from $u$. 

We will omit the subscript ``$H$'' from the notation if $H$ is clear from the context.

\subsection{The \LOCAL Model}
Our lower and upper bounds hold in the classic \LOCAL model of distributed computing, introduced by Linial \cite{Linial1992}. Given an $n$-node hypergraph $H$, each node is equipped with a unique identifier from $\{1,\ldots, n^c\}$, for some constant $c\ge1$. This is a message passing model, where the computation proceeds in synchronous rounds: at each round, nodes of a hypergraph exchange messages with the neighbors and perform some local computation. In this model, both the local computational power of each node, and the size of the messages, are not bounded. A distributed algorithm that solves a graph problem in the \LOCAL model runs at each node in parallel, and the goal is for nodes to produce their local outputs that together form a valid global solution of the desired problem. We say that an algorithm has round complexity $T$ if each node decides its local output and terminates within $T$ communication rounds. 

In the randomized version of the \LOCAL model, each node is equipped with a private random bit string, and in this context we consider Monte Carlo algorithms, that is, we require to produce a correct solution to the desired problem with high probability, that is, with probability at least $1-1/n$.

As typically done in this context, throughout this paper, we assume that each node $v$ of a hypergraph $H$ knows initially its own degree $\deg(v)$, the maximum degree $\Delta$, the maximum rank $r$, and the total number $n$ of nodes.

Notice that, since in the \LOCAL model we do not restrict the bandwidth, a $T$-round \LOCAL algorithm is equivalent to the following: first each node spends $T$ rounds to collect its $T$-hop neighborhood in $H$, and then maps each $T$-hop neighborhood into an output. Sometimes, it is convenient to work on the incidence graph (or bipartite representation) of a hypergraph $H=(V_H, E_H)$, which is a bipartite graph $B=(U_B\cup W_B, E_B)$ where $U_B=V_H$, $W_B=E_H$, and there is an edge between $u\in U_B$ and $v\in W_B$ if and only if the hyperedge of $H$ corresponding to $v$ contains the node of $H$ corresponding to $u$. Notice that, in the \LOCAL model, any $T$-rounds algorithm that solves a problem in $H$ can clearly be simulated in $B$ in at most $2T$ communication rounds.

\paragraph{The PN Model.}
While both our upper bound and lower bound results hold in the \LOCAL model, for technical reasons, the lower bounds are first shown on the \emph{port numbering} (PN) model, which is weaker than the \LOCAL one, and then we lift them for the randomized and deterministic \LOCAL model. 

The PN model is the same as the \LOCAL one (synchronous message passing model, unbounded computational power, unbounded size of messages), with the difference that nodes do not have unique identifiers, and instead each node has an internal ordering of its incident hyperedges. More precisely, incident hyperedges of a node $v$ have a pairwise distinct (port) number in $\{1,\ldots,\deg(v)\}$ (assigned arbitrarily). For technical reasons, we use a slightly modified version of the PN model, where in addition we require that incident nodes of a hyperedge have a pairwise distinct number in $\{1, \ldots, \rank(e)\}$ (assigned arbitrarily).

\subsection{The Automatic Round Elimination Framework}

Round elimination is a technique that can be used to prove lower bounds in the distributed setting. It has been first used to prove lower bounds for sinkless orientation and $\Delta$-coloring \cite{Brandt2016}. In its current form, called \emph{automatic} round elimination, it has been introduced by Brandt~\cite{Brandt2019}, and since then it has been proved to be extremely useful for proving lower bounds in the \LOCAL model.

In this section, we describe the automatic round elimination framework. We start by describing how to encode a problem in this framework (introducing some notation as well), and what should satisfy an output in order to be correct---we illustrate these concepts by showing the concrete example of the encoding of the MIS problem on hypergraphs with maximum rank $2$ (that is, on standard graphs).

\paragraph{Encoding of a Problem.}
For the purpose of showing our lower bounds, it is enough to consider only regular linear hypertrees, that is, hypergraphs such that their incidence graph is a bipartite $2$-colored tree, where each white non-leaf node has degree $\Delta$ and each black node has degree $r$. Hence, we show how to encode problems in such case. In this formalism, a problem $\Pi_{\Delta, r}$ is described by a triple $(\Sigma_\Pi,\nodeconst_\Pi, \edgeconst_\Pi)$, where:

\begin{itemize}
	\item the \emph{alphabet set} $\Sigma_{\Pi}$ is a set of allowed labels,
	\item the \emph{node constraint} $\nodeconst_\Pi$ is a set of multisets of size $\Delta$ over the alphabet $\Sigma_\Pi$,
	\item the \emph{hyperedge constraint} $\edgeconst_\Pi$ is a set of multisets of size $r$ over the alphabet $\Sigma_\Pi$.
\end{itemize}
Sometimes, instead of \emph{multisets}, we will use \emph{words}. It is just a matter of convenience: writing a word is shorter than writing a multiset. But, the word will still represent a multiset, in the sense that a label can appear many times in a word and the order of appearance of the labels does not matter. 

A word of length $\Delta$ (resp.\ $r$) over the alphabet $\Sigma_{\Pi}$ is called \emph{node configuration} (resp.\ \emph{hyperedge configuration}), and it is \emph{valid} or \emph{allowed} or \emph{it satisfies the constraint} if it is contained in $\nodeconst_\Pi$ (resp.\ $\edgeconst_\Pi$).

\paragraph{The Output.} Informally, in this formalism, in order to solve a problem $\Pi$ on a regular linear hypertree $H=(V,E)$, each node must output a label from $\Sigma_{\Pi}$ on each incident hyperedge. More precisely, each (node, incident hyperedge) pair of the set $\{(v,e)~|~ v\in V, e\in E, v\in e \}$ must be labeled with an element from the alphabet $\Sigma_{\Pi}$. We then say that an output is correct if it satisfies $\nodeconst_\Pi$ and $\edgeconst_\Pi$. More precisely, let $v\in V$ be a node and let $I_v$ be the set of all hyperedges $e\in E$ such that $v\in e$. Then the word of size $\Delta$ described by all the labels given to $(v,e)$ must be in $\nodeconst_\Pi$. Similarly, let $e\in E$ be a hyperedge and let $I_e$ be the set of all nodes $v$ such that $v\in e$. Then the word of size $r$ described by all the labels given to $(v,e)$ must be in $\edgeconst_\Pi$.

\paragraph{Example: Encoding of MIS.} As an example, let us see how to encode the MIS problem on a $\Delta$-regular graph $G=(V,E)$ in the round elimination framework. We show an example with MIS because, in the round elimination framework, MIS behaves similarly as hypergraph maximal matching. In other words, we want to define $\Sigma$, $\nodeconst$, and $\edgeconst$, such that any labeling that satisfies the node and edge constraint results in a maximal independent set, and any maximal independent set can be used to produce such a labeling. The alphabet set is $\Sigma=\{\M,\P,\O\}$, where intuitively $\M$ is used to say that a node is in the MIS, $\P$ is used to point to a neighbor that is in the MIS, while $\O$ stands for ``other'' and is used to express that somehow it does not matter what happens in that part of the graph (for example, from the perspective of a node $v$, the label $\O$ on an incident half-edge $(v,e)$ indicates that it does not matter whether the neighbor reached through $e$ is in the MIS or not). The node and edge constraints are defined as follows.

\begin{equation*}
	\begin{aligned}
		\begin{split}
			\nodeconst\text{:} \\ 
			& \quad\M^{\Delta} \\
			& \quad\P\s\O^{\Delta-1} 
		\end{split}
		\qquad
		\begin{split}
			\edgeconst\text{:} \\
			& \quad \P\s\M \\
			& \quad \O\s \M\\
			& \quad \O\s \O
		\end{split}
	\end{aligned}
\end{equation*}

Nodes in the MIS output the configuration $\M^\Delta$. Nodes not in the MIS output the configuration $\P\s\O^{\Delta-1}$, where $\P$ is used to guarantee the maximality of the solution by pointing to one neighbor that is in the MIS, while $\O$ is given on the other incident half-edges that are connect to nodes that may or may not be in the MIS. Since we do not allow two neighboring nodes to be in the MIS, then $\M\s\M\notin\edgeconst$. Since $\P$ must only point to a neighbor in the MIS then we have $\P\s\M\in\edgeconst$, while $\P\s\P,\P\s\O\notin\edgeconst$. Also, since a node not in the MIS may have more than one neighbor in the MIS, or other neighbors not in the MIS, then we get $\O\s\O\in\edgeconst$ and $\O\s\M\in\edgeconst$.

Please observe that the node constraint only specifies what needs to be satisfied by nodes of degree exectly $\Delta$, and this means that leaves are unconstrained. Hence, these constraint do not describe the exact MIS problem, but a similar problem, where only non-leaf nodes need to actually solve MIS. Since we prove lower bounds, this is not an issue: a lower bound for this simplified variant of MIS would imply a lower bound for the standard MIS problem.

\subsection{Automatic Round Elimination Technique}
We now present the automatic round elimination technique, and we dive into some details that are necessary and sufficient for understanding the technical parts of our lower bound proofs. The automatic version of the round elimination technique was introduced by Brandt \cite{Brandt2019}, and later actually implemented by Olivetti \cite{Olivetti2019}. On a high level, if we apply the round elimination technique on a problem $\Pi$ with complexity $T$ we obtain a (possibly) new problem $\Pi'$ having complexity $\max\{0, T-1\}$. Let $\Pi_1$ be our problem of interest for which we want to show a lower bound. By applying automatic round elimination iteratively we get a sequence of problems $\Pi_2, \Pi_3,\dotsc$ such that each problem $\Pi_i$ is at least one round easier than $\Pi_{i -1}$, assuming that $\Pi_{i -1}$ is not already $0$-rounds solvable. Hence, if we can apply round elimination for $T$ times before reaching a $0$-rounds solvable problem, we get a lower bound of $T$ rounds for our problem of interest $\Pi_1$. 

If we take a more fine-grained look at this technique, what actually happens is that, when we apply round elimination on a problem $\Pi$, this procedure first gives an \emph{intermediate} problem $\Pi'$, and only after applying round elimination on $\Pi'$ we then get the problem $\Pi''$ that is at least one round easier than $\Pi$. Hence, let $\re(\cdot)$ be the function that takes in input a problem $\Pi$ and outputs the intermediate problem $\Pi'$, and let $\rere(\cdot)$ be the function that takes in input an intermediate problem $\Pi'$ and outputs the problem $\Pi''$. We get that $\Pi'=\re(\Pi)$ and $\Pi''=\rere(\Pi')=\rere(\re(\Pi))$. 

Applying round elimination means to first compute $\Pi' = \re(\Pi)$ and then compute $\Pi'' = \rere(\Pi')$, and $\Pi''$ is the problem that is guaranteed to have complexity $\max\{0, T-1\}$ if $\Pi$ has complexity $T$. On a high level, the labels that are used to define $\Pi'$ are \emph{sets} of labels of $\Pi$, and the labels used to define $\Pi''$ are \emph{sets} of labels of $\Pi'$. Observe that the configurations allowed by the constraints of $\Pi'$ are still just multisets of labels, but the elements of these multisets are hence now \emph{sets} of labels of $\Pi$. 

We now present the definition of the function $\re$. 
Starting from a problem $\Pi=(\Sigma_{\Pi},\nodeconst_\Pi,\edgeconst_\Pi)$, the problem $\re(\Pi=(\Sigma_{\Pi},\nodeconst_\Pi,\edgeconst_\Pi))=\Pi'=(\Sigma_{\Pi'},\nodeconst_{\Pi'},\edgeconst_{\Pi'})$  is defined as follows.

\begin{itemize}
	\item The edge constraint $\edgeconst_{\Pi'}$ is defined as follows.
	Consider a configuration $\L_1\s\dotsc\s\L_r$ where for all $1\le i\le r$ it holds that $\L_i\in 2^{\Sigma_{\Pi}}\setminus\{\emptyset\}$, such that \emph{for any} $(\ell_1,\dotsc,\ell_r)\in \L_1\times\dotsc\times\L_r$ it holds that a permutation of $\ell_1\s\dotsc\s\ell_r$ is in $\edgeconst_\Pi$ (note that, by definition of the above configuration, $\ell_i\in\Sigma_{\Pi}$, for all $1\le i\le r$). Let $\mathcal{S}$ be the set that contains all and only such configurations. We say that any configuration in $\mathcal{S}$ \emph{satisfies the universal quantifier}. We call a configuration $\L_1\s\dotsc\s\L_r\in\mathcal{S}$ \emph{non-maximal} if there exists a permutation $\L'_1\s\dotsc\s\L'_r$ of another configuration in $\mathcal{S}$ such that, $L_i\subseteq L'_i$ for all $i$, and there exists at least an index $i$ such that $L'_i\subsetneq L_i$. The hyperedge constraint $\edgeconst_{\Pi'}$ is defined as the set $\mathcal{S}$ where we remove all non-maximal configurations (in other words, only \emph{maximal} configurations are kept). We refer to computing $\mathcal{S}$ as \emph{applying the universal quantifier}.
	
	\item $\Sigma_{\Pi'}$ is defined as all subsets of $\Sigma_{\Pi}$ that appear at least once in $\edgeconst_{\Pi'}$.
	
	\item Consider a configuration $\L_1\s\dotsc\s\L_\Delta$ of labels in $\Sigma_{\Pi'}$, such that \emph{there exists} $(\ell_1,\dotsc,\ell_\Delta)\in \L_1\times\dotsc\times\L_\Delta$ for which it holds that a permutation of $\ell_1\s\dotsc\s\ell_\Delta$ is in $\nodeconst_\Pi$. Let $\mathcal{S}$ be the set of all and only such configurations. We say that any configuration in $\mathcal{S}$ \emph{satisfies the existential quantifier}. The node constraint $\nodeconst_{\Pi'}$ is defined as the set $\mathcal{S}$. We refer to computing $\mathcal{S}$ as \emph{applying the existential quantifier}.
\end{itemize}

The problem $\rere(\re(\Pi(\Sigma_{\Pi},\nodeconst_\Pi,\edgeconst_\Pi)))=\rere(\Pi'(\Sigma_{\Pi'},\nodeconst_{\Pi'},\edgeconst_{\Pi'}))=\Pi''(\Sigma_{\Pi''},\nodeconst_{\Pi''},\edgeconst_{\Pi''})$ is defined similarly. The constraint $\nodeconst_{\Pi''}$ is obtained by applying the universal quantifier on $\nodeconst_{\Pi'}$, the set of labels $\Sigma_{\Pi''}$ is defined as all subsets of $\Sigma_{\Pi'}$ that appear at least once in $\nodeconst_{\Pi''}$, and  $\edgeconst_{\Pi''}$ is obtained by applying the existential quantifier on $\edgeconst_{\Pi'}$.

\paragraph{Example: MIS.}
Let us see now a concrete example that illustrates the rules described above. 
Our starting problem is $\Pi=\mbox{MIS}$ on $\Delta$-regular trees, which we already showed how to encode in this formalism, and we now show how $\Pi'=\re(\Pi)$ looks like.

\begin{itemize}
	\item We start by computing the edge constraint $\edgeconst_{\Pi'}$. In order to do so, it is useful to first compute the set $2^{\Sigma_{\Pi}}\setminus\{\emptyset\}=\{\{\M\}, \{\P\}, \{\O\}, \{\M,\P\}, \{\M,\O\}, \{\P,\O\}, \{\M,\P,\O\} \}$. All configurations with labels in the above-defined set that satisfy the universal quantifier are the following: $\{\P\}\s \{\M\}$, $\{\O\}\s \{\M\}$, $\{\O\}\s \{\O\}$, $\{\O\}\s \{\M,\O\}$, $\{\M\}\s \{\P,\O\}$. Let $\mathcal{S}$ be the set that contains all the above configurations. The set $\edgeconst_{\Pi'}$ is then defined as the set $\mathcal{S}$ where we remove all non-maximal configurations. The configuration $\{\P\}\s \{\M\}$ is non-maximal because there exists $\{\M\}\s \{\P,\O\}$; the configurations $\{\O\}\s \{\M\}$ and $\{\O\}\s \{\O\}$ are non-maximal because there exists $\{\O\}\s \{\M,\O\}$. Hence, $\edgeconst_{\Pi'}=\{ \{\M\}\s \{\P,\O\}, \{\O\}\s \{\M,\O\}\}$.
	\begin{itemize}
		\item For the sake of readability we rename the sets of labels as follows: $\{\M\} \mapsto \bM$, $\{\O\} \mapsto \bO$, $\{\P,\O\} \mapsto \bP$, $\{\M,\O\} \mapsto \bX$. Hence $\edgeconst_{\Pi'}=\{\bP\s\bM, \bO\s\bX\}$.
	\end{itemize}

	\item $\Sigma_{\Pi'}=\{\bM,\bO,\bP,\bX\}$.
	
	\item It is possible to check that $\nodeconst_{\Pi'}$ is given by all the words generated by the following regular expression: $[\bM\bX]^\Delta  ~|~ \bP[\bO\bP\bX]^{\Delta-1}$.
\end{itemize}
In order to fully apply round elimination, we would then apply $\rere$ on $\Pi'$, and hence perform the following operations:
\begin{itemize}
	\item Compute the node constraint $\nodeconst_{\Pi''}$ by applying the universal quantifier on $\nodeconst_{\Pi'}$.
	\item Define $\Sigma_{\Pi''}$ as the set of labels appearing in $\nodeconst_{\Pi''}$.
	\item Compute the edge constraint  $\edgeconst_{\Pi''}$ by applying the existential quantifier on $\edgeconst_{\Pi'}$. 
	\item As before, we would then perform a renaming to replace sets with plain labels. This is not necessary, but it improves the readability of the result.
\end{itemize}
By a suitable renaming of the obtained sets of labels, we would obtain the following problem, which, by round elimination, is exactly one round easier than MIS. The node constraint is given by all the words generated by the regular expression $\M^{\Delta} ~|~ \P\s\O^{\Delta-1} ~|~ \X\s\A^{\Delta-1} ~|~ \C^{\Delta}$, and the edge constraint is given by all the words generated by the regular expression $[\M\A] \s [\P\C\A\U] ~|~  [\X\M\C\A\U] \s \U$.

\paragraph{The Automatic Round Elimination Theorem.}
In order to prove our lower bounds we will use the following theorem proved in \cite{Brandt2019}.

\begin{theorem}[\cite{Brandt2019}, Theorem 4.3 (rephrased)]\label{thm:rethm}
	Let $T > 0$. Consider a class $\mathcal{G}$ of hypergraphs with girth at least $2T+2$, and some locally checkable problem $\Pi$. Then, there exists an algorithm that solves problem $\Pi$ in $\mathcal{G}$ in $T$ rounds if and only if there exists an algorithm that solves $\rere(\re(\Pi))$ in $T-1$ rounds.
\end{theorem}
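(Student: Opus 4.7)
The plan is to prove the two directions of the equivalence separately, both relying on the bipartite representation and a ``view-set'' argument; the reverse direction is essentially a single round of simulation, while the forward direction contains all the conceptual content.

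For the forward direction, suppose $\mathcal{A}$ is a $T$-round algorithm for $\Pi$ on hypergraphs in $\mathcal{G}$. After $T$ rounds, each node $v$ has collected its $T$-hop neighborhood and produces a label in $\Sigma_\Pi$ for each incident half-edge $(v,e)$. I would first construct a ``half-step'' algorithm for $\re(\Pi)$ operating from the \emph{hyperedge} side: after $T-1$ rounds, each hyperedge $e$ knows its $(T-1)$-hop view, and for every incident node $v$ can compute the set
\[
L(v,e)\;=\;\bigl\{\,\ell\in\Sigma_\Pi \;:\; \mathcal{A} \text{ outputs } \ell \text{ on } (v,e) \text{ for some completion of $v$'s $T$-hop view consistent with $e$'s view}\,\bigr\}.
\]
Because $\mathcal{A}$ is correct on every realizable completion, the multiset $(L(v_1,e),\ldots,L(v_r,e))$ satisfies exactly the universal quantifier defining $\edgeconst_{\re(\Pi)}$; replacing each $L(v,e)$ by a maximal allowed superset (permitted by the definition of $\re$) only relaxes the guarantee. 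Correspondingly, the node constraint $\nodeconst_{\re(\Pi)}$ is witnessed existentially by the actual label that $\mathcal{A}$ would output. Applying this same construction once more, but now eliminating a round from the \emph{node} side of $\re(\Pi)$, produces a $(T-1)$-round algorithm for $\rere(\re(\Pi))$.

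For the backward direction, I would reverse the simulation: given a $(T-1)$-round algorithm for $\rere(\re(\Pi))$, invest one additional round so that every node and every hyperedge learns the set-valued labels of its immediate neighbourhood. The definitions of the existential and universal quantifiers then guarantee that a consistent choice of elements in $\Sigma_\Pi$ from every set-label exists and satisfies both $\nodeconst_\Pi$ and $\edgeconst_\Pi$; such a choice can be made deterministically from the common view using port numbers (or identifiers) for tie-breaking. Unrolling the two applications of $\re$ and $\rere$ yields the desired $T$-round algorithm for $\Pi$.

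The main obstacle is justifying that the set $L(v,e)$ defined above really is the \emph{right} label of $\re(\Pi)$: I must show that for every tuple $(\ell_1,\ldots,\ell_r)\in L(v_1,e)\times\cdots\times L(v_r,e)$ there exists an actual hypergraph in $\mathcal{G}$ and an assignment of identifiers/ports that makes $\mathcal{A}$ output precisely $\ell_i$ on $(v_i,e)$ simultaneously for all $i$. Only then does the universal quantifier in $\edgeconst_{\re(\Pi)}$ faithfully mirror the behaviour of $\mathcal{A}$. This is exactly where the girth hypothesis $\geq 2T+2$ enters: it ensures that the relevant $T$-hop views at the different endpoints of $e$ overlap only in $e$ itself and its immediate neighbours, so the extensions witnessing the individual $\ell_i$ can be glued together into a single global instance in $\mathcal{G}$ without interference. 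Carrying out this gluing argument carefully, and handling the analogous step when $\rere$ is applied to the result, is the technically heavy part; everything else is bookkeeping.
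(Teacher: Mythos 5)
The paper does not actually prove this statement: it is imported verbatim (``rephrased'') from Brandt's round elimination paper \cite{Brandt2019}, Theorem~4.3, and is used as a black box throughout. So there is no ``paper's own proof'' to compare against line by line.

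Your sketch is the standard speedup argument and is structurally correct: you define $L(v,e)$ as the set of outputs $\mathcal{A}$ could produce on $(v,e)$ over all completions of the unknown part of $v$'s view, argue that the universal quantifier in $\edgeconst_{\re(\Pi)}$ holds because any tuple from $L(v_1,e)\times\dots\times L(v_r,e)$ can be realised by gluing the witnessing completions (this is exactly where girth $\ge 2T+2$ is needed so that the completions live in disjoint parts of the instance), and note that the existential quantifier in $\nodeconst_{\re(\Pi)}$ is witnessed by $\mathcal{A}$'s actual output. Applying the same step again from the node side gives $\rere(\re(\Pi))$. Two things are stated loosely. First, the timing: the intermediate algorithm for $\re(\Pi)$ is a ``black'' (hyperedge-side) algorithm whose radius sits strictly between $T-1$ and $T$ hypergraph rounds; saying the hyperedge ``knows its $(T-1)$-hop view'' undercounts by a bipartite half-step, and it is only after the second application (back on the node side) that one lands cleanly on $T-1$ hypergraph rounds. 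Second, in the backward direction you write as if one can directly ``choose elements in $\Sigma_\Pi$ from every set-label,'' but labels of $\rere(\re(\Pi))$ are sets of sets of $\Sigma_\Pi$, so the unpacking must happen in two stages with opposite sides making the existential choice: hyperedges first choose a member of each outer set to drop into $\Sigma_{\re(\Pi)}$ (using $\edgeconst_{\rere(\re(\Pi))}$'s existential and $\nodeconst_{\rere(\re(\Pi))}$'s universal), then nodes choose from the resulting sets to drop into $\Sigma_\Pi$. You do say ``unrolling the two applications,'' so you are aware of this, but the one-shot phrasing would not survive being written out. Neither issue is a conceptual gap; both fall under the bookkeeping you flag at the end, and the gluing argument you single out as the crux is indeed the crux.
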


We notice that the more challenging part of applying the round elimination technique is computing the set of configurations that must satisfy the for all quantifier. For example, in computing the intermediate problem $\Pi'$, the challenging part is computing $\edgeconst_{\Pi'}$. In fact, an alternative and easy way to compute $\nodeconst_{\Pi'}$ is to simply define it as the set that contains all configurations given by regular expressions obtained by taking a configuration in $\nodeconst_\Pi$ and replacing each label $\ell$ with the disjunction of all label sets in $\Sigma_{\Pi'}$ that contain $\ell$. One thing that may help in proving lower bounds is that, it is not necessary to show that $\edgeconst_{\Pi'}$ contains  all \emph{and only} those configurations that satisfy the universal quantifier, but it is enough to show that there are no maximal configurations that satisfy the universal quantifier that are not in $\edgeconst_{\Pi'}$. In other words, we can add more allowed node and/or edge configurations to our lower bound sequence of problems and make these problems potentially easier, and if we are able to show a lower bound on these problems, then this lower bound applies to harder problems as well. Hence, we now define the notion of \emph{relaxation of a configuration}, which was already introduced in \cite{balliurules}.

\begin{definition}[Relaxation]\label{def:relaxing}
	Let $\fB = \B_1 \s \dots \s \B_j$ and $\fB' = \B'_1 \s \dots \s \B'_k$ be configurations consisting of subsets of some label space $\Sigma$.
	We say that $\fB$ \emph{can be relaxed to} $\fB'$ if $j = k$ and there exists a permutation $\rho \colon \{ 1, \dots, j \} \rightarrow \{ 1, \dots, k \}$ such that for any $1 \leq i \leq j$ we have $\B_i \subseteq \B'_{\rho(i)}$.
	
	Let $C$ and $C'$ be two constraints, i.e., two collections of configurations.
	Then we call $C'$ a \emph{relaxation} of $C$ if every configuration in $C$ can be relaxed to some configuration in $C'$.
	Moreover, for two problems $\Pi_1 = (\Sigma_1, \nodeconst_1, \edgeconst_1)$ and $\Pi_2 = (\Sigma_2, \nodeconst_2, \edgeconst_2)$, we call $\Pi_2$ a \emph{relaxation} of $\Pi_1$ if $\nodeconst_2$ is a relaxation of $\nodeconst_1$ and $\edgeconst_2$ is a relaxation of $\edgeconst_1$.		
\end{definition}

\paragraph{Relation Between Labels.} 
In order to argue about desired properties of our lower bound family of problems, it will be useful to determine a relation between labels. Given a problem $\Pi$, let $C$ be either the node or the hyperedge constraint, and let $\A$ and $\B$ be two labels of $\Pi$. We say that $\A$ is \emph{at least as strong as} $\B$ \emph{according to} $C$ if, for any configuration $\ell_1\s\dotsc\s\ell_\delta\in C$ ($\delta=\Delta$ or $\delta=r$) that contains $\B$, there is also a configuration in $C$ that is a permutation of $\ell_1\s\dotsc\s\ell_\delta$ where $\B$ is replaced with $\A$. In other words, consider any configuration $\C\in C$ that contains $\B$; Let $\C'$ be the configuration obtained by $\C$ where $\B$ is replaced with $\A$; a permutation of $\C'$ must be in $C$. If a label $\A$ is at least as strong as another label $\B$, then we say that $\B$ is \emph{at least as weak} as $\A$, denoted with $\B\le\A$. If $\A$ is at least as strong as $\B$ and $\B$ is not at least as strong as $\A$, then we say that $\A$ is \emph{stronger} than $\B$ and $\B$ is \emph{weaker} than $\A$, denoted with $\B<\A$. If $\A \le \B$ and $\B \le \A$, then $\A$ and $\B$ are \emph{equally strong}.

In our lower bound proof we make use of a \emph{diagram} in order to show the relation between labels according to some constraint $C$: we call it \emph{node diagram} if it shows the relation between labels according to the node constraint $\nodeconst$, and \emph{hyperedge diagram} (or \emph{edge diagram}, if we are on graphs) if instead the relations are according to $\edgeconst$. In this diagram, there is a directed edge from a label $\B$ to a label $\A$ if $\A$ is at least as strong as $\B$ and there is no label $\D\not\in \{\A,\B\}$ such that $\D$ is stronger than $\B$ and weaker than $\A$. See \Cref{fig:mis} for an example of the diagram of the MIS problem according to its edge constraint. All labels in the diagram that are reachable from  a label $\ell$ are called the \emph{successors} of $\ell$. In other words, the successors of $\ell$, according to some constraint $C$, are all those labels that are at least as strong as $\ell$ according to $C$. 

\begin{figure}[h]
	\centering
	\includegraphics[width=0.17\textwidth]{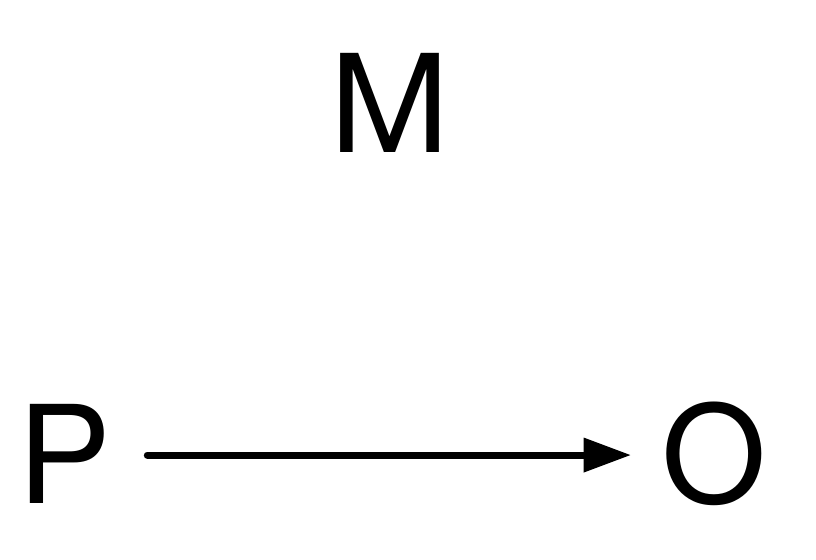}
	\caption{The edge diagram of the MIS problem. There is an edge from $\P$ to $\O$ because, in all edge configurations where $\P$ appears, that is $\P\s\M$, by replacing $\P$ with $\O$ we get the constraint $\O\M$ which is also in $\edgeconst$. There is no strength relation between $\M$ and labels $\P$ or $\O$.}
	\label{fig:mis}
\end{figure}

\paragraph{Additional Notation.}
Lastly, we introduce some additional notation that is essential and heavily used for describing in a concise way the problem family of our lower bounds. Consider a problem $\Pi=(\Sigma_{\Pi},\nodeconst_{\Pi},\edgeconst_\Pi)$, where $\Sigma_{\Pi} = \{\ell_1,\dotsc,\ell_\delta\}$. We denote by $\gen{\ell_1,\dotsc,\ell_\delta}$ the set of all labels that are at least as strong as at least one label in $\{\ell_1,\dotsc,\ell_\delta\}$, according to either $\nodeconst_\Pi$ or $\edgeconst_\Pi$. In other words, $\gen{\ell_1,\dotsc,\ell_\delta}$ is the set containing the labels $\ell_1,\dotsc,\ell_\delta$ along with all the successors of the labels $\ell_i$, $1\le i \le \delta$, according to either $\nodeconst_\Pi$ or $\edgeconst_\Pi$. Let us be a bit more precise regarding when it is that, by using $\gen{}$, we refer to successors of labels according to $\nodeconst_\Pi$ and when instead according to $\edgeconst_\Pi$. Consider a problem $\Pi=(\Sigma_{\Pi},\nodeconst_{\Pi},\edgeconst_\Pi)$ and let the argument of $\gen{}$ contains labels from $\Sigma_{\Pi}$. Then, when we compute $\re(\Pi)=\Pi'=(\Sigma_{\Pi'},\nodeconst_{\Pi'},\edgeconst_\Pi')$, the set $\gen{}$ contains successors according to $\edgeconst_\Pi$. Suppose now that the argument of $\gen{}$ contains labels from $\Sigma_{\Pi'}$. Then, when we compute $\Pi''=\rere(\Pi')=\rere(\re(\Pi))$, the set $\gen{}$ contains successors according to $\nodeconst_\Pi'$. Hence, the relation between labels is always considered according to the constraint on which we apply the universal quantifier. Actually, when we compute $\rere(\Pi)$ in our lower bound proofs, we will often use the expression of the form $\gen{\gen{\ell}}$, where $\ell\in\Sigma_\Pi$, that represents a set of sets of labels in $\Sigma_{\Pi}$, where $\gen{\ell}$ is generated according to $\edgeconst_\Pi$ and $\gen{\gen{\ell}}$ according to $\nodeconst_{\Pi'}$. 

We say that a set $S=\{\ell_1,\dotsc,\ell_\delta\}\subseteq\Sigma_\Pi$ is \emph{right-closed} if it contains also all successors of the labels $\ell_1,\dotsc,\ell_\delta$ according to $\edgeconst_{\Pi}$, that is, if $S=\gen{\ell_1,\dotsc,\ell_\delta}$. For a label $\ell$, we denote with \emph{right-closed subset generated by $\ell$} the set $\gen{\ell}$. Similarly, for a set of labels $\{\ell_1,\dotsc,\ell_\delta\}$, the right-closed subset generated by that set is $\gen{\ell_1,\dotsc,\ell_\delta}$. Observe that it is equal to $\bigcup_i \gen{\ell_i}$.
 In our proofs, we will make use of the following simple observation from~\cite{balliurules}.

	\begin{observation}[\cite{balliurules}]\label{obs:rcs}
		Consider an arbitrary collection of labels $\A_1, \dots, \A_p \in \Sigma_{\Pi}$.
		If $\{ \A_1, \dots, \A_p \} \in \Sigma_{\re(\Pi)}$, then the set $\{ \A_1, \dots, \A_p \}$ is right-closed (according to $\edgeconst_\Pi$).
		If $\{ \A_1, \dots, \A_p \} \in \Sigma_{\rere(\Pi)}$, then the set $\{ \A_1, \dots, \A_p \}$ is right-closed (according to $\nodeconst_\Pi$).
	\end{observation}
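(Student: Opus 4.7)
The plan is to prove both parts by the same enlargement argument; the two cases differ only in whether we work with the edge constraint (for $\re(\Pi)$) or the node constraint (for $\rere(\Pi)$). I will spell out the first bullet; the second is completely symmetric, with $r$ replaced by $\Delta$ and $\edgeconst_\Pi$ replaced by $\nodeconst_\Pi$, using that $\nodeconst_{\rere(\Pi)}$ is produced from $\nodeconst_\Pi$ by the analogous universal-quantifier-plus-maximality procedure.

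Suppose $\{\A_1,\dots,\A_p\} \in \Sigma_{\re(\Pi)}$. By definition of $\Sigma_{\re(\Pi)}$, this set must appear in some \emph{maximal} configuration $\L_1 \s \dotsc \s \L_r \in \edgeconst_{\re(\Pi)}$; assume without loss of generality that $\L_1 = \{\A_1,\dots,\A_p\}$. To establish right-closedness according to $\edgeconst_\Pi$, I argue by contradiction: suppose there is some $\B \in \Sigma_\Pi \setminus \{\A_1,\dots,\A_p\}$ that is at least as strong as some $\A_i$ under $\edgeconst_\Pi$. Then I will show that the strictly larger configuration $(\L_1 \cup \{\B\}) \s \L_2 \s \dotsc \s \L_r$ still satisfies the universal quantifier, contradicting the maximality of $\L_1 \s \dotsc \s \L_r$ in $\edgeconst_{\re(\Pi)}$.

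The verification is immediate from the definition of ``at least as strong''. Pick any tuple $(\ell_1, \ell_2, \dots, \ell_r) \in (\L_1 \cup \{\B\}) \times \L_2 \times \dotsc \times \L_r$. If $\ell_1 \in \L_1$, the tuple already lies in the original product $\L_1 \times \dotsc \times \L_r$, so a permutation of its entries is in $\edgeconst_\Pi$ by the hypothesis that $\L_1 \s \dotsc \s \L_r$ satisfies the universal quantifier. If $\ell_1 = \B$, then $(\A_i, \ell_2, \dots, \ell_r) \in \L_1 \times \dotsc \times \L_r$ has a permutation in $\edgeconst_\Pi$, and replacing $\A_i$ by $\B$ in that permutation keeps it in $\edgeconst_\Pi$ by the strength assumption on $\B$, yielding a valid permutation of $(\B, \ell_2, \dots, \ell_r)$. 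This closes the enlargement argument.

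I do not anticipate a substantive obstacle: the ``stronger'' relation is defined precisely so that adjoining a stronger label to a set preserves the universal quantifier, and right-closedness then follows from the maximality that is built into both $\re$ and $\rere$. The only care point is to track which constraint the strength relation is measured against---$\edgeconst_\Pi$ in the first bullet and $\nodeconst_\Pi$ in the second---which matches the convention behind the $\gen{\cdot}$ notation discussed just before the observation.
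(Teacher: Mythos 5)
Your proof is correct. Note that the paper itself does not prove this observation---it is cited from \cite{balliurules}---so there is no in-paper proof to compare against. Your argument captures the standard reasoning: the only non-trivial direction is showing that adjoining a label $\B$ with $\B \geq \A_i$ to the set $\L_1$ preserves membership in the universally-quantified collection $\mathcal{S}$, which you do correctly by splitting on whether the component drawn from $\L_1 \cup \{\B\}$ is in $\L_1$ or equals $\B$, and in the latter case swapping $\A_i$ for $\B$ using precisely the definition of ``at least as strong as.'' This contradicts the maximality that is built into the definitions of $\edgeconst_{\re(\Pi)}$ and $\nodeconst_{\rere(\Pi)}$, so the set must already contain all successors of each of its elements, i.e., it equals $\gen{\A_1,\dots,\A_p}$ and is right-closed. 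Your remark about which constraint the strength relation is measured against in each bullet is exactly the point to keep straight, and you handle it correctly.
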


In order to get our lower bound, we will make use of the following theorem, which has been already proved in \cite{hideandseek}, and allows us to transform a lower bound obtained through round elimination into a lower bound for the \LOCAL model.
\begin{theorem}[Theorem 7.1 of \cite{hideandseek}, rephrased]\label{thm:lifting}
	Let $\Pi_0 \rightarrow \Pi_1 \rightarrow \ldots \rightarrow \Pi_t$ be a sequence of problems.
	Assume that, for all $0 \le i < t$, and for some function $f$, the following holds:
	\begin{itemize}
		\item There exists a problem $\Pi'_i$ that is a relaxation of $\re(\Pi_i)$;
		\item $\Pi_{i+1}$ is a relaxation of $\rere(\Pi'_i)$;
		\item The number of labels of $\Pi_i$, and the ones of $\Pi'_i$, are upper bounded by $f(\Delta \cdot r)$.
	\end{itemize}
	Also, assume that $\Pi_t$ has at most $f(\Delta \cdot r)$ labels and is not $0$-round solvable in the deterministic port numbering model. Then, $\Pi_0$ requires $\Omega(\min\{t, \log_{\Delta r} n - \log_{\Delta r} \log f(\Delta r)\})$ rounds in the deterministic \LOCAL model and $\Omega(\min\{t, \log_{\Delta r} \log n - \log_{\Delta r} \log f(\Delta r)\})$ rounds in the randomized \LOCAL model.
\end{theorem}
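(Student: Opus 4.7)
The plan is to argue in two phases. First, use the hypothesized relaxation sequence together with the Automatic Round Elimination Theorem (\Cref{thm:rethm}) to obtain a lower bound of $\Omega(t)$ rounds in the port numbering (PN) model on hypergraphs of sufficient girth. Second, lift this PN bound to the deterministic and randomized \LOCAL model by standard Ramsey-style indistinguishability arguments whose reach is controlled by the alphabet-size cap $f(\Delta r)$.

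For the first phase, I would prove by induction on $i$ that whenever $\Pi_0$ admits a $T$-round PN algorithm on $\Delta$-regular $r$-uniform linear hypertrees of girth at least $2T+2$, each $\Pi_i$ admits such an algorithm in at most $T - i$ rounds. The inductive step combines two ingredients: by \Cref{thm:rethm}, a $T$-round algorithm for $\Pi_i$ yields a $(T-1)$-round algorithm for $\rere(\re(\Pi_i))$; and since enlarging the allowed-configuration lists of a problem can only make it easier, any algorithm for $\re(\Pi_i)$ also solves its relaxation $\Pi'_i$, and any algorithm for $\rere(\Pi'_i)$ also solves its relaxation $\Pi_{i+1}$. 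Instantiating the induction at $i = t$, together with the hypothesis that $\Pi_t$ is not $0$-round solvable in PN, forces $T \ge t$, giving the PN lower bound. The only subtle bookkeeping is that the girth condition $2T + 2$ must remain valid at every step, which is immediate because we stay within the same hypertree class.

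For the deterministic \LOCAL phase, I would invoke the standard ID-removal argument in the style of Linial~\cite{Linial1992} and Naor~\cite{Naor1991}. In a hypergraph of maximum degree $\Delta$ and rank $r$, the $T$-hop neighborhood of any node in the bipartite representation has size at most $(\Delta r)^{O(T)}$, so a $T$-round deterministic algorithm is a function from ID-labelled views into an output alphabet of size at most $f(\Delta r)$. A Ramsey-type pigeonhole on the ID universe $\{1,\ldots,n^c\}$, performed exactly as in Section~7 of~\cite{hideandseek}, extracts a sub-assignment on which the algorithm is order-invariant and therefore induces a PN algorithm with the same round complexity; the extraction succeeds as long as $n$ exceeds a quantity of the rough form $f(\Delta r)^{O((\Delta r)^{T})}$, which upon taking logarithms twice rearranges to $T = O\bigl(\log_{\Delta r} n - \log_{\Delta r} \log f(\Delta r)\bigr)$. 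Composing this with the PN bound from the first phase yields the deterministic statement.

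For the randomized \LOCAL bound I would apply the standard exponential-gap reduction: a $T$-round randomized algorithm that succeeds with probability $1 - 1/n$ on $n$-node instances can be turned into a $T$-round deterministic algorithm on instances whose vertex count is exponentially larger, by fixing a seed that works simultaneously on all sufficiently separated local neighborhoods. Applying the deterministic bound to the enlarged instance replaces $\log n$ by $\log \log n$ and gives the randomized claim. The main obstacle throughout is the careful control of the alphabet size: without the cap $|\Sigma_{\Pi_i}|, |\Sigma_{\Pi'_i}| \le f(\Delta r)$ the operators $\re$ and $\rere$ can inflate alphabets doubly exponentially, which would blow up the view space and destroy the Ramsey argument. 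The theorem's strength lies precisely in balancing a long round-elimination chain (to keep $\Pi_t$ non-trivial) against sufficiently tight relaxations (to keep $f$ bounded).
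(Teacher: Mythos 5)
This theorem is not proved in the paper you are reading: it is explicitly imported as ``Theorem 7.1 of~\cite{hideandseek}, rephrased,'' so there is no internal proof to compare against. Evaluating your reconstruction on its own terms, the two-phase plan (round elimination to get an $\Omega(t)$ PN lower bound on high-girth hypertrees, then Ramsey-style ID removal and an exponential gap to lift to deterministic and randomized \LOCAL) is the correct overall structure, matching what \cite{hideandseek} actually does.

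There is, however, a real gap in your inductive step for the first phase. You invoke \Cref{thm:rethm}, which as stated in this paper gives a single combined guarantee: a $T$-round algorithm for $\Pi_i$ yields a $(T-1)$-round algorithm for $\rere(\re(\Pi_i))$. But the lifting hypothesis inserts the relaxation $\Pi'_i$ \emph{between} the two halves of that step --- $\Pi'_i$ relaxes $\re(\Pi_i)$, and only then does $\rere$ get applied to $\Pi'_i$. Your sentence ``any algorithm for $\re(\Pi_i)$ also solves its relaxation $\Pi'_i$'' is the right instinct, but you never establish that an algorithm for $\Pi_i$ in $T$ rounds gives an algorithm for $\re(\Pi_i)$ at all, nor that an algorithm for $\Pi'_i$ at the appropriate ``half-step'' budget yields a $(T-1)$-round algorithm for $\rere(\Pi'_i)$. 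Closing this requires the two-sided (half-step) version of the round elimination theorem --- the statement that $\re(\Pi)$ admits an edge-centered algorithm at cost $T - 1/2$ and that applying $\rere$ costs another half round --- which is what Brandt's theorem actually proves and what \cite{hideandseek} uses, but which is not what \Cref{thm:rethm} literally says in this paper. Without that decomposition, the fact that $\Pi'_i$ is a relaxation of $\re(\Pi_i)$ does not by itself let you conclude that $\rere(\Pi'_i)$ is as easy as $\rere(\re(\Pi_i))$, since $\rere$ (a universal quantifier on node configurations followed by an existential one on hyperedge configurations) is not obviously monotone under relaxation.

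Your second and third phases are sketched at the right level: the pigeonhole on the ID universe, the bound $(\Delta r)^{O(T)}$ on view sizes, and the resulting threshold $n \gtrsim f(\Delta r)^{(\Delta r)^{O(T)}}$ giving $T = O(\log_{\Delta r} n - \log_{\Delta r}\log f(\Delta r))$ all line up with the way the deterministic lifting is done, and the exponential-gap reduction from randomized to deterministic is the standard route to trade $\log n$ for $\log\log n$. So the proposal is correct in shape, but you should either cite or restate the half-step round elimination lemma explicitly; relying only on \Cref{thm:rethm} as written leaves the inductive step unsupported.
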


For any subset $W$ of nodes in some graph, let $N(W)$ denote the set of nodes that are adjacent to at least one member of $W$. In our lower bound proofs, we will make use of Hall's marriage theorem, which can be phrased as follows.
\begin{theorem}[Hall]\label{thm:hall}
	Let $G = (V \cup W, E)$ be a bipartite graph.
	If $|N(W')| \geq |W'|$ for any $W' \subseteq W$, then there exists a matching saturating $W$, i.e., a function $f \colon W \rightarrow V$ such that $f(w) \neq f(w')$ for any $w \neq w'$, and $\{ w, f(w) \} \in E$ for any $w \in W$.
\end{theorem}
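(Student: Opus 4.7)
The plan is to prove Hall's theorem by induction on $|W|$. The base case $|W| = 1$ is immediate: Hall's condition applied to $W' = W$ yields $|N(W)| \geq 1$, so we may map the unique element of $W$ to any of its neighbors.

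For the inductive step, assume the theorem holds whenever the ``right side'' has size strictly less than $|W|$, and suppose $G = (V \cup W, E)$ satisfies Hall's condition. I would split into two cases according to whether Hall's condition holds with slack or tightly on some proper subset. In Case~1, assume that for every non-empty proper subset $W' \subsetneq W$ we have $|N(W')| \geq |W'| + 1$. Pick an arbitrary $w \in W$; Hall's condition applied to $\{w\}$ gives a neighbor $v \in N(\{w\})$. Consider the graph obtained by removing $w$ and $v$ (and all their incident edges). For any $W' \subseteq W \setminus \{w\}$, its neighborhood in the reduced graph has size at least $|N_G(W')| - 1 \geq (|W'|+1) - 1 = |W'|$, so Hall's condition is preserved and induction supplies a matching saturating $W \setminus \{w\}$, which together with $\{w, v\}$ saturates $W$.

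In Case~2, there exists a non-empty proper subset $W^* \subsetneq W$ with $|N(W^*)| = |W^*|$. The induced bipartite subgraph on $W^* \cup N(W^*)$ satisfies Hall's condition (because Hall's condition in $G$ restricted to subsets of $W^*$ is the same condition here), so by induction there is a matching $f_1$ saturating $W^*$. Now consider the induced bipartite subgraph on $(W \setminus W^*) \cup (V \setminus N(W^*))$. The key verification is that this subgraph satisfies Hall's condition: for any $W'' \subseteq W \setminus W^*$, let $N'(W'')$ be its neighborhood within $V \setminus N(W^*)$. Then $N_G(W'' \cup W^*) = N'(W'') \cup N(W^*)$ (disjoint union), so Hall's condition applied to $W'' \cup W^*$ in $G$ yields $|N'(W'')| + |W^*| \geq |W''| + |W^*|$, i.e., $|N'(W'')| \geq |W''|$. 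Induction therefore produces a matching $f_2$ saturating $W \setminus W^*$ using only vertices outside $N(W^*)$; combining $f_1$ and $f_2$ gives a matching saturating $W$.

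The main obstacle is the verification in Case~2 that Hall's condition is inherited by the reduced bipartite graph, since this is where the tightness $|N(W^*)| = |W^*|$ is used crucially to cancel the $|W^*|$ term on both sides. Everything else is a straightforward case split and application of the inductive hypothesis.
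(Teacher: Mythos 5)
Your proof is correct; it is the classical Halmos--Vaughan argument, splitting on whether Hall's condition holds with slack on all proper non-empty subsets or is tight somewhere, and the key cancellation $|N'(W'')| + |W^*| \geq |W''| + |W^*|$ in Case~2 is handled properly. Note, however, that the paper does not prove this statement at all---it simply cites Hall's marriage theorem as a known classical result and uses it as a tool (in \Cref{lem:colornode} and the hypergraph MM analysis), so there is no ``paper's proof'' to compare against; your self-contained inductive proof would be a perfectly standard way to fill that in.
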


\section{Lower Bounds for Hypergraph Colorings}\label{sec:fixedpoint}
In this section, we prove lower bounds for different variants of hypergraph coloring. We first consider the standard definition of coloring, where on each hyperedge it must hold that at least two incident nodes have different colors. We show how to encode the $\Delta$-hypergraph coloring problem in the round elimination framework, and then we show which additional allowed configurations we have to allow in order to transform it into a fixed point. A problem is a fixed point if, modulo some renaming, $\rere(\re(\Pi)) = \Pi$, and by prior work we know that if a fixed point problem cannot be solved in $0$ rounds in the PN model, then it means that it requires $\Omega(\log_{\Delta r} n)$ rounds for deterministic algorithms in the \LOCAL model, and of $\Omega(\log_{\Delta r} \log n)$ rounds for randomized ones.
We then show that there is a $0$ round algorithm that is able to convert a $\Delta(r-1)$-colorful coloring into a $\Delta$-coloring, implying that the lower bound that we show for $\Delta$-hypergraph coloring holds for the colorful variant as well.

\subsection{The $\Delta$-Hypergraph Coloring Problem.}
We now show how to define the hypergraph coloring problem in the round elimination framework. The label set $\Sigma_{\Pi}$ contains one label for each color, that is, $\Sigma_{\Pi} = \{1,\ldots,\Delta\}$. The node constraint $\nodeconst_{\Pi}$ contains one configuration for each possible color, that is,  $\nodeconst_{\Pi} = \{ c^\Delta ~|~ 1 \le c \le \Delta\}$. The hyperedge constraint $\edgeconst_{\Pi}$ contains all words of size $r$ where at least two different colors appear, that is, $\edgeconst_{\Pi} = \{ c_1 \ldots c_r ~|~ c_i \in \{1,\ldots,\Delta\} \land c_1 \neq c_2\}$. 

We now provide the high level idea on how this problem can be relaxed in order to make it a fixed point. The idea is to allow nodes to output not just colors, but sets of colors. For example, a node could use color $1$ and $2$ at the same time, but then, on each hyperedge incident to this node, there should be another incident node that is neither of color $1$ \emph{nor} of color $2$. Additionally, nodes are \emph{rewarded} for using more than one color: if a node uses $x$ colors, then it is allowed to mark up to $x-1$ incident hyperedges, and on those marked hyperedges the node counts as being uncolored. Observe that if a hyperedge $e$ is marked by at least one node, then the hyperedge constraint is always satisfied on $e$, because for each color it holds that there is at least one incident node that is \emph{not} colored with that color.

There is no clear intuition why this problem relaxation would be a fixed point, and in fact our proof is essentially a tedious case analysis. But, on a very high level, when applying the universal quantifier, the functions $\re$ and $\rere$ essentially try to combine the existing configurations in all possible ways that make the forall quantifier satisfied. If we look at $\Pi' = \rere(\re(\Pi))$, where $\Pi$ is the original $\Delta$-hypergraph coloring problem, we actually see something very similar to the problem that we just described. The main difference is that the allowed configurations of $\Pi'$ are a bit more restrictive: in the fixed point relaxations, if a hyperedge is marked by some incident node, then it is always happy, while in $\Pi'$ this is not always the case. Relaxing $\Pi'$ to make hyperedges always happy when marked is essentially what gives our fixed point relaxation.

\subsection{The $\Delta$-Hypergraph Coloring Fixed Point.}
We now define a problem $\Pi$ that is a relaxation of the $\Delta$-hypergraph coloring problem and that we will later show to be a fixed point problem that cannot be solved in $0$ rounds.

Let $\ccs := \{ 1, \dots, \Delta \}$.
We call the elements of $\ccs$ \emph{colors} from the \emph{color space} $\ccs$.
In the following, we define $\Pi = (\spi, \npi, \epi)$ by specifying $\spi$, $\npi$, and $\epi$.

\paragraph{The label set \spi.}
The label set $\spi$ of $\Pi$ contains one label $\ell(\ccc)$ for each (possibly empty) color set $\ccc \subseteq \ccs$, i.e., $\spi := \{ \ell(\ccc) \mid \ccc \subseteq \ccs \}$.  In other words, the labels are now sets of colors. 

\paragraph{The node constraint \npi.}
The node constraint $\npi$ of $\Pi$ consists of all configurations of the form $\ell(\ccc)^{\Delta - |\ccc| + 1} \s \ell(\emptyset)^{|\ccc| - 1}$ for some $\emptyset \neq \ccc \subseteq \ccs$, i.e., $\npi := \{ \ell(\ccc)^{\Delta - |\ccc| + 1} \s \ell(\emptyset)^{|\ccc| - 1} \mid \emptyset \neq \ccc \subseteq \ccs \}$.
In other words, each node outputs a set of colors, and on some of the incident hyperedges the empty set can be used. The intuition is that the empty set corresponds to the marking previously described.

\paragraph{The hyperedge constraint \epi.}
The hyperedge constraint $\epi$ of $\Pi$ consists of all configurations $\ell(\ccc_1) \s \dots \s \ell(\ccc_r)$ of labels from $\spi$ such that for each color $i \in \ccs$, there is at least one index $1 \leq j \leq r$ satisfying $i \notin \ccc_j$.

\paragraph{A Fixed Point.}
We will prove that $\Pi$ is a fixed point, that is, under some renaming, $\rere(\re(\Pi)) = \Pi$.
\begin{restatable}{lemma}{fponestep}\label{lem:fponestep}
	The problem $\rere(\re(\Pi))$ is equivalent to $\Pi$.
\end{restatable}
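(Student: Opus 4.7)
The plan is to compute $\re(\Pi)$ and $\rere(\re(\Pi))$ explicitly, identify a natural renaming between label sets, and verify that the resulting node and hyperedge constraints coincide with $\npi$ and $\epi$. The key structural observation is that, since $\epi$ admits $\ell(\ccc_1) \s \dots \s \ell(\ccc_r)$ exactly when $\bigcap_j \ccc_j = \emptyset$, the label $\ell(\ccc)$ is at least as strong as $\ell(\ccc')$ with respect to $\epi$ precisely when $\ccc \subseteq \ccc'$. Hence, by \Cref{obs:rcs}, labels of $\re(\Pi)$ correspond to downward-closed families of color subsets of $\ccs$.

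First I would compute $\re(\Pi)$. Applying the universal quantifier to $\epi$, a configuration of right-closed subsets $S_1 \s \dots \s S_r$ is valid iff for every choice of maximum $\ccc_j \in S_j$, $\bigcap_j \ccc_j = \emptyset$. I would argue that non-principal right-closed subsets never appear in maximal hyperedge configurations: if $S_1$ has two incomparable maxima $\ccc^{(1)}, \ccc^{(2)}$, then the principal $\gen{\ell(\ccc^{(1)} \cup \ccc^{(2)})} = 2^{\ccc^{(1)} \cup \ccc^{(2)}}$ strictly contains $S_1$ and still satisfies the universal quantifier, since $(\ccc^{(1)} \cup \ccc^{(2)}) \cap \bigcap_{j \geq 2} \ccc_j = \emptyset$ whenever both $\ccc^{(k)} \cap \bigcap_{j \geq 2} \ccc_j = \emptyset$. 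Thus $\Sigma_{\re(\Pi)} = \{\, m(\ccc) \mid \ccc \subseteq \ccs \,\}$, where $m(\ccc) := \gen{\ell(\ccc)} = 2^\ccc$, and the maximal hyperedge configurations are $m(\ccc_1) \s \dots \s m(\ccc_r)$ with $\bigcap_j \ccc_j = \emptyset$ and $\bigcup_j \ccc_j = \ccs$ (the union condition is forced by maximality: otherwise some $m(\ccc_j)$ could be enlarged by an unused color). The existential quantifier on $\npi$ then yields: a configuration $m(\ccc_1) \s \dots \s m(\ccc_\Delta)$ belongs to $\nodeconst_{\re(\Pi)}$ iff some $\emptyset \neq \ccc^* \subseteq \ccs$ admits at least $\Delta - |\ccc^*| + 1$ indices $i$ with $\ccc^* \subseteq \ccc_i$ (the other positions can always pick $\ell(\emptyset) \in m(\ccc_i)$).

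Next I would compute $\rere(\re(\Pi))$. Under $\nodeconst_{\re(\Pi)}$ the strength order reverses: $m(\ccc)$ is at least as strong as $m(\ccc')$ iff $\ccc \supseteq \ccc'$, since enlarging a $\ccc_i$ only eases the test $\ccc^* \subseteq \ccc_i$. By an analogous argument (replacing two incomparable minima by their intersection), only principal right-closed subsets $T(\ccc) := \gen{m(\ccc)} = \{\, m(\ccc') \mid \ccc' \supseteq \ccc \,\}$ appear in maximal node configurations. The universal quantifier on $\nodeconst_{\re(\Pi)}$ reduces to the hardest inner choice $m(\ccc_i)$, so $T(\ccc_1) \s \dots \s T(\ccc_\Delta)$ is valid iff some $\ccc^* \neq \emptyset$ satisfies $\ccc^* \subseteq \ccc_i$ for at least $\Delta - |\ccc^*| + 1$ indices. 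A counting argument then forces the maximal configurations to have the form $T(\ccc^*)^{\Delta - |\ccc^*| + 1} \s T(\emptyset)^{|\ccc^*| - 1}$ for $\emptyset \neq \ccc^* \subseteq \ccs$: any attempt to shrink a $\ccc^*$-indexed $\ccc_i$ breaks the count under every possible certificate, while any non-$\ccc^*$ position can be reduced all the way to $\emptyset$. For the existential quantifier on $\edgeconst_{\re(\Pi)}$, a configuration $T(\ccc_1) \s \dots \s T(\ccc_r)$ lies in $\edgeconst_{\rere(\re(\Pi))}$ iff there exist $\ccc'_i \supseteq \ccc_i$ with $\bigcap_i \ccc'_i = \emptyset$ and $\bigcup_i \ccc'_i = \ccs$; the forward direction is automatic, and the backward direction works by padding $\ccc_1$ with $\ccs \setminus \bigcup_j \ccc_j$, which preserves the empty intersection since the added colors lie outside every $\ccc_{j \geq 2}$.

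The renaming $T(\ccc) \mapsto \ell(\ccc)$ then identifies $\nodeconst_{\rere(\re(\Pi))}$ with $\npi$ and $\edgeconst_{\rere(\re(\Pi))}$ with $\epi$, establishing the equivalence. The main obstacle is the maximality analysis in the second step: ruling out non-canonical maximal node configurations of $\rere(\re(\Pi))$ requires a careful argument that the certificate $\ccc^*$ essentially pins down the shape of the configuration, because a priori there could be configurations whose certificate differs from the indexing color sets $\ccc_i$. The other parts follow fairly directly once the principal-only reductions and the strength-order characterizations are established.
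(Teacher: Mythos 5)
Your high-level strategy (compute $\re(\Pi)$ and $\rere(\re(\Pi))$, identify the canonical configurations, rename) is the same as the paper's, and your strength-order characterizations, the label set, the existential-quantifier computations, and the final renaming are all correct. However, there are two genuine problems.

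First, your characterization of the maximal hyperedge configurations of $\re(\Pi)$ is wrong. You claim maximality is equivalent to ``$\bigcap_j \ccc_j = \emptyset$ and $\bigcup_j \ccc_j = \ccs$,'' but these conditions are necessary, not sufficient. Concretely, take $\Delta = 2$, $\ccs = \{1,2\}$, $r = 3$: the configuration $m(\{1\}) \s m(\{1\}) \s m(\{2\})$ satisfies both conditions yet relaxes to $m(\{1,2\}) \s m(\{1\}) \s m(\{2\})$, so it is not maximal. The correct characterization (Lemma~\ref{lem:colorrepidef} of the paper) is that for each color $i$ there is \emph{exactly one} index $j$ with $i \notin \ccc_j$. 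Your error is benign for the final answer only because the existential quantifier forgives it and the $T_j$'s are right-closed, so any valid (even non-maximal) choice can be upgraded to a maximal one inside the $T_j$'s; but your padding argument for the backward direction of $\erere$ would need to be redone to actually hit the correct $\ere$, not your over-inclusive version.

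Second, and more seriously, the crux of the proof—showing that the node constraint of $\rere(\re(\Pi))$ consists exactly of the configurations $T(\ccc^*)^{\Delta - |\ccc^*| + 1} \s T(\emptyset)^{|\ccc^*| - 1}$—is not justified. You invoke ``an analogous argument (replacing two incomparable minima by their intersection)'' plus a ``counting argument,'' but the analogy to the edge case does not carry over. In the edge case you use that emptiness of intersection distributes over a union in one coordinate, a single line. In the node case you would need that if the certificate condition holds with $\ccc^{(1)}$ and with $\ccc^{(2)}$ in position $k$, it also holds with $\ccc^{(1)} \cap \ccc^{(2)}$ there; this is not a one-coordinate distributive statement, and the certificate $\ccc^*$ that witnesses validity for the intersection generally must be \emph{strictly larger} than either certificate used for the two minima (a union-of-certificates argument works only when the two certificates are disjoint, which they need not be). The paper resolves exactly this point with Hall's marriage theorem (Lemma~\ref{lem:colornode}, using Theorem~\ref{thm:hall}), by constructing a bipartite graph over the colors $R_i = \ccs\setminus\{i\}$ and the sets $\B_k$ and extracting a system of distinct representatives that simultaneously handles all colors. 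You flag this as ``the main obstacle'' yourself, which is the right instinct, but the proposal does not actually overcome it—neither the ``only principal subsets appear'' step nor the ``counting argument'' pinning down the canonical form is worked out.
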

Since the proof of this statement is a tedious case analysis, we defer its proof to the end of the section. We now use this statement to derive our lower bounds.

\subsection{Proving the Lower Bounds}
We now prove that $\Pi$ cannot be solved in $0$ rounds. We will then show that this fact, combined with \Cref{lem:fponestep}, implies strong lower bounds for different hypergraph coloring variants.

\begin{lemma}\label{lem:fpzerorounds}
	The problem $\Pi$ cannot be solved in $0$ rounds in the deterministic port numbering model.
\end{lemma}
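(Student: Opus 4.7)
The plan is to exploit the rigidity of the PN model: a deterministic $0$-round algorithm must act identically on all nodes of the same degree, depending only on port numbers. Since the port numbering is adversarially chosen, I can force all $r$ nodes incident to some hyperedge to write the ``same non-empty'' label on it, which immediately violates $\epi$.

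First, I would characterize the shape of any putative $0$-round algorithm $\mathcal{A}$. On a degree-$\Delta$ node, $\mathcal{A}$ must output a valid configuration from $\npi$. Every element of $\npi$ has the form $\ell(\ccc^*)^{\Delta-|\ccc^*|+1}\s\ell(\emptyset)^{|\ccc^*|-1}$ for some $\emptyset\neq\ccc^*\subseteq\ccs$. Since $\mathcal{A}$ is deterministic and cannot distinguish any two degree-$\Delta$ nodes in the PN model, it must fix a single non-empty $\ccc^*$ and a single set $S\subseteq\{1,\dots,\Delta\}$ of port indices of size $\Delta-|\ccc^*|+1\geq 1$, so that every degree-$\Delta$ node outputs $\ell(\ccc^*)$ on exactly the ports in $S$ and $\ell(\emptyset)$ on the remaining ports.

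Next, I would construct the bad instance. Pick any port $p\in S$ (possible because $|S|\geq 1$) and take a $\Delta$-regular $r$-uniform linear hypertree that contains at least one internal hyperedge $e$ whose $r$ incident nodes $v_1,\dots,v_r$ all have degree $\Delta$ (any sufficiently deep instance has this property). I then choose a port numbering under which each $v_i$ assigns port number $p$ to the hyperedge $e$; all other port numbers may be assigned arbitrarily (subject to the PN constraints). By the characterization above, each $v_i$ outputs $\ell(\ccc^*)$ on $e$, so the hyperedge configuration of $e$ is $\ell(\ccc^*)\s\ell(\ccc^*)\s\cdots\s\ell(\ccc^*)$.

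Finally, I would derive the contradiction directly from $\epi$: for this hyperedge configuration to be valid, every color $i\in\ccs$ would need to be absent from some $\ccc_j$; but all $\ccc_j$ equal $\ccc^*$, which would force $\ccc^*=\emptyset$, contradicting $\ccc^*\neq\emptyset$. Hence $\mathcal{A}$ fails on this instance, proving the lemma. There is no real ``hard step'' here, only the (easy) bookkeeping observation that in the PN model the algorithm's choice of which ports receive the unique non-empty label is fixed globally, so the adversary needs only to align one such port at each of the $r$ incidents of a single hyperedge.
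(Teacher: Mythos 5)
Your proof is correct and follows essentially the same approach as the paper's: a deterministic $0$-round algorithm fixes a single configuration $\ell(\ccc^*)^{\Delta-|\ccc^*|+1}\s\ell(\emptyset)^{|\ccc^*|-1}$ with $\ccc^*\neq\emptyset$ and a fixed assignment of labels to ports, so the adversary can wire a port carrying $\ell(\ccc^*)$ from $r$ nodes into one hyperedge, producing $\ell(\ccc^*)^r\notin\epi$. The extra bookkeeping you add (the set $S$ of ports and the explicit unpacking of why $\ell(\ccc^*)^r$ violates $\epi$) is just a more verbose rendering of the same argument.
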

\begin{proof}
	All nodes of degree $\Delta$ that run a $0$-round algorithm have the same view (they just know their degree, the size of the graph, and the parameters $\Delta$ and $r$). Hence, any deterministic $0$-round algorithm uses the same configuration in $\npi$ on all nodes. Thus, in order to prove the statement, we consider all possible configurations and we show that they cannot be properly used in a $0$-round algorithm.
	The configurations allowed on the nodes are all those that satisfy  $\ell(\ccc)^{\Delta - |\ccc| + 1} \s \ell(\emptyset)^{|\ccc| - 1}$ for each $\emptyset \neq \ccc \subseteq \ccs$. Hence, all nodes must output one of those configuration for the same set $\ccc$ satisfying $\emptyset \neq \ccc \subseteq \ccs$. Since $|\ccc| \le \Delta$, then this configuration contains $\ell(\ccc)$ at least once. Since the algorithm is deterministic, then all nodes must output $\ell(\ccc)$ on the same port. W.l.o.g., let this port be the port number $1$. If we connect the port $1$ of $r$ nodes to the same hyperedge, then the hyperedge would have the configuration $\ell(\ccc)^{r}$, which is not in $\epi$.
\end{proof}

\begin{theorem}
	The hypergraph $\Delta$-coloring problem requires $\Omega(\log_{\Delta r} n)$ in the \LOCAL model for deterministic algorithms, and $\Omega(\log_{\Delta r} \log n)$ for randomized ones.
\end{theorem}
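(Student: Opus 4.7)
The strategy is to combine the fixed-point property (Lemma~\ref{lem:fponestep}) with the $0$-round hardness (Lemma~\ref{lem:fpzerorounds}) and feed the result into the lifting theorem (Theorem~\ref{thm:lifting}); the resulting lower bound for $\Pi$ then transfers to $\Delta$-hypergraph coloring because $\Pi$ is a relaxation of the latter.

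First I would record the relaxation step. Identifying each color $c \in \ccs$ of the $\Delta$-hypergraph coloring problem $\Pi_{\mathrm{col}}$ with the singleton label $\ell(\{c\}) \in \spi$, each node configuration $c^\Delta$ of $\Pi_{\mathrm{col}}$ is precisely the $\ccc = \{c\}$ instance of the scheme defining $\npi$; each hyperedge configuration of $\Pi_{\mathrm{col}}$, i.e.\ a multiset of $r$ colors with at least two distinct entries, has the property that for every color $i \in \ccs$ at least one component differs from $i$, hence lies in $\epi$. By Definition~\ref{def:relaxing}, $\Pi$ is therefore a relaxation of $\Pi_{\mathrm{col}}$, and any algorithm for $\Pi_{\mathrm{col}}$ also solves $\Pi$; it thus suffices to prove the stated lower bounds for $\Pi$.

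For the lower bound on $\Pi$, I would instantiate Theorem~\ref{thm:lifting} with the constant problem sequence $\Pi_0 = \Pi_1 = \cdots = \Pi_t := \Pi$, where $t := \lceil \log_{\Delta r} n \rceil$ is already enough. Setting $\Pi_i' := \re(\Pi)$ for each $0 \le i < t$, Lemma~\ref{lem:fponestep} gives $\rere(\Pi_i') = \rere(\re(\Pi)) = \Pi = \Pi_{i+1}$ (up to the standard relabeling), so both relaxation conditions of the theorem hold as equalities. The label counts of $\Pi$ and $\re(\Pi)$ are bounded purely in terms of $\Delta$ (say, by $f(\Delta r) := 2^{2^\Delta}$), and Lemma~\ref{lem:fpzerorounds} supplies the $0$-round non-solvability of $\Pi_t = \Pi$ in the deterministic port-numbering model. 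Theorem~\ref{thm:lifting} then delivers the claimed $\Omega(\log_{\Delta r} n)$ deterministic and $\Omega(\log_{\Delta r} \log n)$ randomized lower bounds for $\Pi$, and therefore for $\Pi_{\mathrm{col}}$.

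\textbf{Main obstacle.} There is essentially no new conceptual difficulty left at this stage: the heavy lifting has been concentrated in Lemma~\ref{lem:fponestep} (the fixed-point verification) and Lemma~\ref{lem:fpzerorounds} (the $0$-round hardness), and Theorem~\ref{thm:lifting} is applied in the cleanest possible way through a constant problem sequence. The only mild care required is the bookkeeping ensuring that the additive $\log_{\Delta r} \log f(\Delta r)$ slack in Theorem~\ref{thm:lifting} is swallowed by the dominant $\log_{\Delta r} n$ (resp.\ $\log_{\Delta r} \log n$) term, which is automatic in every regime where the claimed bound is not already trivially $\Omega(1)$.
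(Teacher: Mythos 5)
Your proof is correct and follows essentially the same route as the paper: you instantiate Theorem~\ref{thm:lifting} with a constant sequence of the fixed point $\Pi$, using Lemma~\ref{lem:fponestep} for the one-step relaxation and Lemma~\ref{lem:fpzerorounds} for the $0$-round hardness. You are also slightly more explicit than the paper in verifying (via the identification $c \mapsto \ell(\{c\})$) that $\Pi$ is indeed a relaxation of $\Delta$-hypergraph coloring, a fact the paper asserts in prose when introducing $\Pi$ but does not spell out in the proof of the theorem.
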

\begin{proof}
		By \Cref{lem:fponestep}, $\rerepi = \Pi$, and by \Cref{lem:fpzerorounds} $\Pi$ is not $0$-round solvable. Hence, we can build an arbitrarily long sequence of problems satisfying \Cref{thm:lifting} (or, in other words, we can take $t = \infty$). Also, note that the number of labels of each problem is bounded by $2^{O(\Delta)}$. Thus, since $\log_{\Delta r} \log f(\Delta) = O(1)$, we obtain a lower bound for $\Pi$ of  $\Omega(\log_{\Delta r} n)$ rounds  in the \LOCAL model for deterministic algorithms, and $\Omega(\log_{\Delta r} \log n)$ rounds for randomized ones.
\end{proof}

\begin{theorem}
	The hypergraph $\Delta(r-1)$-colorful coloring problem requires $\Omega(\log_{\Delta r} n)$ in the \LOCAL model for deterministic algorithms, and $\Omega(\log_{\Delta r} \log n)$ for randomized ones.
\end{theorem}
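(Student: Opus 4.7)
The plan is to reduce the $\Delta$-hypergraph coloring problem to the $\Delta(r-1)$-hypergraph colorful coloring problem via a $0$-round local transformation, which, combined with the previous theorem, immediately yields the claimed lower bound. The reduction is exactly the ``$0$-round algorithm'' mentioned in the overview of the section, and conceptually it corresponds to the label merging used to build the fixed point relaxation for the colorful variant from the one for the non-colorful variant.

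Concretely, I would partition the color palette $\{1,\ldots,\Delta(r-1)\}$ into $\Delta$ disjoint blocks $B_1,\ldots,B_\Delta$, each of size exactly $r-1$, and define a map $\varphi\colon\{1,\ldots,\Delta(r-1)\}\to\{1,\ldots,\Delta\}$ by $\varphi(c)=i$ whenever $c\in B_i$. Given any valid $\Delta(r-1)$-colorful coloring, every node locally applies $\varphi$ to its own color and outputs the result; this takes zero rounds of communication. I would then check that the resulting labeling is a valid $\Delta$-hypergraph coloring: on any hyperedge $e$ the $r=|e|$ incident nodes carry pairwise distinct colors from $\{1,\ldots,\Delta(r-1)\}$ by the colorful property, so at most $r-1$ of them can lie in any single block $B_i$ (since $|B_i|=r-1$), and hence at least two incident nodes are sent to different values by $\varphi$. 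This is exactly the $\Delta$-hypergraph coloring constraint.

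With the reduction in hand, the lower bound transfer is routine. If there were a deterministic \LOCAL algorithm for $\Delta(r-1)$-colorful coloring running in $o(\log_{\Delta r} n)$ rounds, one could obtain a $\Delta$-hypergraph coloring algorithm of the same asymptotic complexity by composing with the $0$-round map $\varphi$ applied node-locally, contradicting the previous theorem; the randomized case is identical with $\log n$ replaced by $\log\log n$. Since the reduction preserves the hypergraph and its parameters $\Delta$ and $r$, no additional care is needed about the graph class on which the lower bound holds.

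The only step that is not entirely mechanical is verifying the correctness of the reduction, and the key observation there is the pigeonhole argument above exploiting $|B_i|=r-1<r$. There is no real obstacle; the value of this statement is conceptual, showing that colorful coloring is at least as hard as non-colorful coloring, and that the additional strength (more colors, stricter constraint) does not pay off in terms of round complexity in this regime.
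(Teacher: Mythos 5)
Your reduction is exactly the one the paper uses: partition the $\Delta(r-1)$ colors into $\Delta$ blocks of size $r-1$, map each node's color to its block index in zero rounds, and invoke the pigeonhole argument (on an $r$-uniform hyperedge, $r$ distinct colors cannot all fall in a block of size $r-1$) to establish the $\Delta$-hypergraph coloring property. The only cosmetic difference is that the paper phrases the output in the round-elimination encoding $\ell(\{i\})^{\Delta}$ of the fixed point $\Pi$ rather than in the bare $\Delta$-coloring formulation, but since those configurations are exactly the singleton-color subset of $\Pi$, the two views are equivalent; your argument is correct and essentially identical.
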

\begin{proof}
	We prove that, given a solution for the $\Delta(r-1)$-hypergraph coloring problem, we can solve hypergraph $\Delta$-coloring in $0$ rounds, implying the claim. We group the $\Delta(r-1)$ colors of the hypergraph colorful coloring problem into $\Delta$ groups $G_1,\ldots, G_{\Delta}$ of $r-1$ colors each, in an arbitrary way. Then, each node of color $c$ outputs $\ell(\{i\})^{\Delta}$, for $i$ satisfying $c \in G_i$. Observe that the solution clearly satisfies $\npi$. Also, since in any solution for the  $\Delta(r-1)$-hypergraph coloring problem on each hyperedge there can be at most $1$ incident node for each color, then we obtain that each hypergraph has at most $r-1$ incident nodes with colors from the same group. Hence, the constraint $\epi$ is satisfied.
\end{proof}

\subsection{Relations Between Labels}
In the rest of the section we prove \Cref{lem:fponestep}. We start by proving a relation between the labels.
From the definition of $\epi$, we can infer the following lemma characterizing the strength relations according to $\epi$.

\begin{lemma}\label{lem:colorstrong}
	Consider any two subsets $\ccc, \ccc'$ of $\ccs$.
	Then $\ell(\ccc) \leq \ell(\ccc')$ if and only if $\ccc' \subseteq \ccc$.
\end{lemma}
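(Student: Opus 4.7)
The plan is to prove the two directions of the biconditional separately, working directly from the definition of $\epi$ (a configuration $\ell(\ccc_1)\s\dots\s\ell(\ccc_r)$ lies in $\epi$ iff every color $i \in \ccs$ is missing from at least one of the $\ccc_j$) and from the definition of the strength relation $\leq$ according to $\epi$.

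For the forward direction, I would assume $\ccc' \subseteq \ccc$ and show that $\ell(\ccc')$ can replace $\ell(\ccc)$ in any hyperedge configuration of $\epi$. Take any $\ell(\ccc_1)\s\dots\s\ell(\ccc_r) \in \epi$ with $\ccc_j = \ccc$ at some position $j$, and form the candidate configuration by replacing $\ccc_j$ with $\ccc'$. For an arbitrary color $i \in \ccs$, there is by hypothesis an index $k$ with $i \notin \ccc_k$. If $k \neq j$, the same witness works in the new configuration. If $k = j$, then $i \notin \ccc$, and the containment $\ccc' \subseteq \ccc$ yields $i \notin \ccc'$, so $k = j$ remains a witness. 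Hence the new configuration lies in $\epi$.

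For the backward direction, I would prove the contrapositive: assuming $\ccc' \not\subseteq \ccc$, exhibit a configuration in $\epi$ containing $\ell(\ccc)$ whose $\ell(\ccc) \to \ell(\ccc')$ replacement escapes $\epi$. Pick any color $i \in \ccc' \setminus \ccc$ and consider the hyperedge configuration $\ell(\ccc) \s \ell(\{i\})^{r-1}$ (using $r \geq 2$). To check it is in $\epi$: color $i$ is missing from $\ell(\ccc)$ since $i \notin \ccc$, and every other color $j \neq i$ is missing from $\ell(\{i\})$. After replacing $\ell(\ccc)$ by $\ell(\ccc')$, the configuration $\ell(\ccc') \s \ell(\{i\})^{r-1}$ contains $i$ in every entry (because $i \in \ccc'$ and $i \in \{i\}$), so it is not in $\epi$, witnessing $\ell(\ccc) \not\leq \ell(\ccc')$.

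No real obstacle is expected here; the only subtlety is ensuring the witness configuration $\ell(\ccc)\s\ell(\{i\})^{r-1}$ exists and is valid, which relies on $r \geq 2$ (assumed throughout) and on the fact that $\ccc' \setminus \ccc$ is nonempty. Edge cases such as $\ccc' = \emptyset$ are handled by the forward direction (empty is contained in everything), and $\ccc = \ccs$ cannot arise in the backward direction since $i \notin \ccc$.
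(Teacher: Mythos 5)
Your proof is correct and follows essentially the same structure as the paper's: the forward direction is identical (replacing a label by one with a smaller color set preserves membership in $\epi$), and the backward direction proves the contrapositive by exhibiting a witness configuration that breaks upon replacing $\ell(\ccc)$ by $\ell(\ccc')$. The only difference is that you use the witness $\ell(\ccc) \s \ell(\{i\})^{r-1}$ whereas the paper uses $\ell(\ccc) \s \ell(\ccs \setminus \ccc) \s \ell(\ccs)^{r-2}$; both are valid, and yours is arguably a bit simpler to verify.
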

\begin{proof}
	Consider first the case that $\ccc' \subseteq \ccc$, and let $\ell(\ccc) \s \ell(\ccc_2) \s \dots \s \ell(\ccc_r)$ be an arbitrary configuration from $\epi$.
	Then it follows directly from the definition of $\epi$ that also $\ell(\ccc') \s \ell(\ccc_2) \s \dots \s \ell(\ccc_r)$ is contained in $\epi$.
	Hence, $\ell(\ccc) \leq \ell(\ccc')$, by the definition of strength.

	Now consider the other case, namely that $\ccc' \nsubseteq \ccc$, and let $i \in \ccc'$ be a color that is not contained in $\ccc$.
	Consider the configuration $\ell(\ccc) \s \ell(\ccs \setminus \ccc) \s \ell(\ccs)^{r - 2}$.
	Since, for any $i' \in \ccs$, we have $i' \notin \ccc$ or $i \notin \ccs \setminus \ccc$, we have $\ell(\ccc) \s \ell(\ccs \setminus \ccc) \s \ell(\ccs)^{r - 2} \in \epi$.
	Now consider the configuration $\ell(\ccc') \s \ell(\ccs \setminus \ccc) \s \ell(\ccs)^{r - 2}$ obtained from the above configuration by replacing $\ell(\ccc)$ by $\ell(\ccc')$.
	Since $i \in \ccc'$, $i \in \ccs \setminus \ccc$, and $i \in \ccs$, we have $\ell(\ccc') \s \ell(\ccs \setminus \ccc) \s \ell(\ccs)^{r - 2} \notin \epi$.
	Hence, $\ell(\ccc) \nleq \ell(\ccc')$, by the definition of strength.	
\end{proof}
	
	From Lemma~\ref{lem:colorstrong}, we immediately obtain the following corollary, by collecting, for each label $\ell(\ccc) \in \spi$, all labels $\ell(\overline{\ccc}) \in \spi$ satisfying $\ell(\ccc) \leq \ell(\overline{\ccc})$.

\begin{corollary}\label{cor:colorgen}
	For each $\ccc \subseteq \ccs$, we have $\gen{\ell(\ccc)} = \{ \ell(\ccc') \mid \ccc' \subseteq \ccc \}$.
\end{corollary}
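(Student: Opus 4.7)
The plan is to unfold the definition of $\gen{\cdot}$ and directly apply Lemma~\ref{lem:colorstrong}. Recall from the notation section that $\gen{\ell(\ccc)}$ is defined as the set of all labels that are at least as strong as $\ell(\ccc)$ according to $\epi$, which by definition of the strength relation $\leq$ is exactly $\{ \ell(\ccc') \in \spi \mid \ell(\ccc) \leq \ell(\ccc') \}$.

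First, I would rewrite this set by substituting the characterization from Lemma~\ref{lem:colorstrong}, which states that $\ell(\ccc) \leq \ell(\ccc')$ holds if and only if $\ccc' \subseteq \ccc$. Hence
\[
  \gen{\ell(\ccc)} \;=\; \{ \ell(\ccc') \in \spi \mid \ell(\ccc) \leq \ell(\ccc') \} \;=\; \{ \ell(\ccc') \mid \ccc' \subseteq \ccc \},
\]
where in the last equality I use that $\spi$ contains a label $\ell(\ccc')$ for every $\ccc' \subseteq \ccs$, so in particular for every $\ccc' \subseteq \ccc \subseteq \ccs$. This is exactly the equality claimed in the corollary.

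There is really no main obstacle here: the corollary is a direct reformulation of Lemma~\ref{lem:colorstrong} obtained by collecting, for a fixed $\ccc$, all subsets $\ccc' \subseteq \ccs$ such that $\ell(\ccc) \leq \ell(\ccc')$. The only subtlety worth flagging is that $\ell(\ccc)$ itself is included on the right-hand side (corresponding to $\ccc' = \ccc$), which is consistent with the convention that a label is trivially at least as strong as itself, so $\ell(\ccc) \in \gen{\ell(\ccc)}$.
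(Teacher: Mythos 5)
Your proof is correct and follows exactly the paper's approach: the corollary is obtained from Lemma~\ref{lem:colorstrong} by collecting, for each label $\ell(\ccc)$, all labels $\ell(\ccc')$ satisfying $\ell(\ccc) \leq \ell(\ccc')$, which by the lemma are precisely those with $\ccc' \subseteq \ccc$. Your extra remark that $\ell(\ccc)$ itself is included (via $\ccc' = \ccc$ and reflexivity of $\leq$) is a correct and harmless clarification.
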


\subsection{Computing $\re(\Pi)$}
We now prove that problem $\re(\Pi)$ is defined as in the following lemma.
\begin{lemma}\label{lem:colorrepidef}
	The set $\sre$ of output labels of $\repi$ is given by $\{ \gen{\ell(\ccc)} \mid \ccc \subseteq \ccs \}$.
	
	The hyperedge constraint $\ere$ of $\repi$ consists of all configurations $\gen{\ell(\ccc_1)} \s \dots \s \gen{\ell(\ccc_r)}$ of labels from $\{ \gen{\ell(\ccc)} \mid \ccc \subseteq \ccs \}$ such that for each color $i \in \ccs$, there is exactly one index $1 \leq j \leq r$ satisfying $i \notin \ccc_j$.

	The node constraint $\nre$ of $\repi$ consists of all configurations $\B_1 \s \dots \s \B_{\Delta}$ of labels from $\{ \gen{\ell(\ccc)} \mid \ccc \subseteq \ccs \}$ such that there exists a choice $(\A_1, \dots, \A_{\Delta}) \in \B_1 \times \dots \times \B_{\Delta}$ satisfying $\A_1 \s \dots \s \A_{\Delta} \in \npi$.
\end{lemma}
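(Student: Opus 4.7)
The plan proceeds in three steps, following the three-stage definition of $\re$: first compute $\ere$ by applying the universal quantifier to $\epi$ and keeping maximal configurations, then read off $\sre$ as the labels appearing in $\ere$, and finally compute $\nre$ by applying the existential quantifier to $\npi$.

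For the hyperedge constraint, I would use Observation~\ref{obs:rcs}, which says that every label of $\re(\Pi)$ is a right-closed subset of $\spi$ (with respect to $\epi$). Combined with Corollary~\ref{cor:colorgen}, this means that each candidate label $\L$ has the form $\{ \ell(\ccc) : \ccc \in \fF \}$ for some downward-closed family $\fF \subseteq 2^{\ccs}$. For a configuration $\L_1 \s \dots \s \L_r$ with corresponding families $\fF_1, \dots, \fF_r$, unfolding the universal quantifier against $\epi$ gives the condition: for each color $i \in \ccs$, some index $j$ has no member of $\fF_j$ containing $i$. Since each $\fF_j$ is downward-closed, this is equivalent to $\{i\} \notin \fF_j$, i.e., $\fF_j \subseteq 2^{\ccs \setminus \{i\}}$.

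Next I would argue maximality forces each $\fF_j = 2^{\hat{\ccc}_j}$ for some $\hat{\ccc}_j \subseteq \ccs$, and forces the "exactly one missing" structure. For the former, define $\hat{\ccc}_j := \{ i \in \ccs : \{i\} \in \fF_j \}$; by downward-closedness every $\ccc \in \fF_j$ satisfies $\ccc \subseteq \hat{\ccc}_j$, and conversely one can always enlarge $\fF_j$ to the full power set $2^{\hat{\ccc}_j}$ without affecting the universal condition, which depends only on which singletons sit in each $\fF_j$. Hence $\L_j = \gen{\ell(\hat{\ccc}_j)}$ by Corollary~\ref{cor:colorgen}. For the latter, if two distinct indices $j_1, j_2$ both satisfied $i \notin \hat{\ccc}_{j_1}$ and $i \notin \hat{\ccc}_{j_2}$, one could add $i$ to $\hat{\ccc}_{j_1}$ with $j_2$ still witnessing the universal condition, contradicting maximality; combined with the existence of at least one such index, each color is missing from exactly one $\hat{\ccc}_j$. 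This gives the claimed characterization of $\ere$.

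For $\sre$, every label in $\sre$ appears in some configuration of $\ere$, so by the previous step it has the form $\gen{\ell(\ccc)}$ for some $\ccc \subseteq \ccs$. Conversely, given any $\ccc \subseteq \ccs$, the configuration $\gen{\ell(\ccc)} \s \gen{\ell(\ccs \setminus \ccc)} \s \gen{\ell(\ccs)}^{r-2}$ witnesses that $\gen{\ell(\ccc)} \in \sre$: colors in $\ccc$ are missing only from the second slot and colors outside $\ccc$ only from the first, so the "exactly one missing" condition holds. Finally, $\nre$ is determined by directly unwinding the definition of the existential quantifier applied to $\npi$, which yields the stated description verbatim. The main obstacle in this plan is the maximality analysis in Step 1, since downward-closed families on $\ccs$ are in general richer than principal ones $2^{\hat{\ccc}}$ and one must argue carefully that non-principal families never arise as maximal solutions.
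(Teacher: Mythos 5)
Your reformulation of the universal quantifier as a condition on singletons — namely that a right-closed configuration $\L_1 \s \dots \s \L_r$ (with associated downward-closed families $\fF_1, \dots, \fF_r$) satisfies the for-all quantifier if and only if for each color $i$ there is some index $j$ with $\{i\} \notin \fF_j$ — is a clean observation that the paper does not state explicitly, and it gives a tidier route to the principality step $\fF_j = 2^{\hat{\ccc}_j}$ than the paper's more ad hoc contradiction argument. Your worry at the end about non-principal downward-closed families is in fact already resolved by this observation: enlarging $\fF_j$ to the principal family $2^{\hat{\ccc}_j}$ leaves every singleton membership unchanged, so the universal condition is preserved, and maximality therefore forces principality exactly as you sketch.

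The genuine gap is the converse inclusion. Writing $\edda$ for the set of configurations $\gen{\ell(\ccc_1)} \s \dots \s \gen{\ell(\ccc_r)}$ with the ``exactly one missing'' property, your argument establishes $\ere \subseteq \edda$ (every maximal configuration has this form), but the lemma asserts $\ere = \edda$, and you never show $\edda \subseteq \ere$, i.e., that every configuration of this form is actually maximal. Your treatment of $\sre$ silently relies on exactly this: you exhibit a configuration with the ``exactly one missing'' property and conclude it lies in $\ere$, which only follows once the reverse inclusion is known. To close the gap in your own framework: given such a configuration $\gen{\ell(\ccc_1)} \s \dots \s \gen{\ell(\ccc_r)}$, suppose it could be strictly relaxed to some right-closed $\B_1 \s \dots \s \B_r$ still satisfying the universal quantifier. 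Strictness and Corollary~\ref{cor:colorgen} give an index $j_0$ and a color $i$ with $i \notin \ccc_{j_0}$ but $\ell(\{i\}) \in \B_{j_0}$; since $j_0$ was the unique index whose set omits $i$, now every $\B_j$ contains $\ell(\{i\})$, so the choice $\ell(\{i\})^r$ is admissible yet violates $\epi$, a contradiction. (The paper closes the same gap slightly differently: having shown $\ere \subseteq \edda$, it observes that no configuration in $\edda$ strictly dominates another in $\edda$, so every element of $\edda$ must already be maximal.)
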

\begin{proof}
	We start by showing that the hyperedge constraint $\ere$ is as given in the lemma.
	Let $\edda$ denote the set of all configurations $\gen{\ell(\ccc_1)} \s \dots \s \gen{\ell(\ccc_r)}$ of labels from $\sre$ such that for each color $i \in \ccs$, there is exactly one index $1 \leq j \leq r$ satisfying $i \notin \ccc_j$.
	We need to show that $\ere = \edda$.

	We first show that $\ere \subseteq \edda$.
	Let $\B_1 \s \dots \s \B_r$ be an arbitrary configuration from $\ere$.
	Our first step is to show that each $\B_j$ is of the form $\gen{\ell(\ccc)}$ for some $\ccc \subseteq \ccs$.
	Assume for a contradiction that there exists some index $1 \leq j \leq r$ such that $\B_j \neq \gen{\ell(\ccc)}$ for each $\ccc \in \ccs$.
	Let $\ccc$ be the union of all color sets $\ccc'$ such that $\ell(\ccc') \in \B_j$.
	By Corollary~\ref{cor:colorgen}, it follows that $\B_j \subseteq \gen{\ell(\ccc)}$, and since $\B_j \neq \gen{\ell(\ccc)}$, we know that $\gen{\ell(\ccc)} \setminus \B_j$ is nonempty.
	Observe that, by Observation~\ref{obs:rcs}, $\B_j$ is right-closed (according to $\epi$) as $\B_j \in \sre$ (by the definition of $\B_j$).
	Hence, if $\ell(\ccc) \in \B_j$, then $\gen{\ell(\ccc)} \subseteq \B_j$, yielding a contradiction to the nonemptiness of $\gen{\ell(\ccc)} \setminus \B_j$.
	Thus, $\ell(\ccc) \notin \B_j$.

	Now consider the configuration $\B_1 \s \dots \s \B_{j-1} \s \gen{\ell(\ccc)} \s \B_{j+1} \s \dots \s \B_r$.
	Since $\B_j \subsetneq \gen{\ell(\ccc)}$, we know that $\B_1 \s \dots \s \B_{j-1} \s \gen{\ell(\ccc)} \s \B_{j+1} \s \dots \s \B_r$ and all configurations that $\B_1 \s \dots \s \B_{j-1} \s \gen{\ell(\ccc)} \s \B_{j+1} \s \dots \s \B_r$ can be relaxed to are not contained in $\ere$, by (the maximality condition in) the definition of $\ere$ (and the fact that $\B_1 \s \dots \s \B_r \in \ere$).
	It follows, by the definitions of $\ere$ and $\epi$, that there exist some choice $(\ell(\ccc'_1), \dots, \ell(\ccc'_r)) \in \B_1 \times \dots \times \B_{j-1} \times \gen{\ell(\ccc)} \times \B_{j+1} \times \dots \times \B_r$ and some color $i \in \ccs$ such that $i \in \ccc'_{j'}$ for each $1 \leq j' \leq r$.
	By Corollary~\ref{cor:colorgen}, the fact that $\ell(\ccc'_j) \in \gen{\ell(\ccc)}$ implies that $\ccc'_j \subseteq \ccc$, which, combined with $i \in \ccc'_j$, yields $i \in \ccc$.
	Let $\ccc^* \subseteq \ccs$ be a color set such that $i \in \ccc^*$ and $\ell(\ccc^*) \in \B_j$.
	Such a color set $\ccc^*$ exists due to the definition of $\ccc$ and the fact that $i \in \ccc$.
	Since $i \in \ccc^*$ and $i \in \ccc'_{j'}$ for each $1 \leq j' \leq r$ satisfying $j' \neq j$, we obtain $\ell(\ccc'_1) \s \dots \s \ell(\ccc'_{j - 1}) \s \ell(\ccc^*) \s \ell(\ccc'_{j + 1}) \s \dots \s \ell(\ccc'_r) \notin \epi$, by the definition of $\epi$.
	But, since $(\ell(\ccc'_1), \dots, \ell(\ccc'_{j - 1}), \ell(\ccc^*), \ell(\ccc'_{j + 1}), \dots, \ell(\ccc'_r)) \in \B_1 \times \dots \times \B_r$, this implies that $\B_1 \s \dots \s \B_r \notin \ere$, by the definition of $\ere$, yielding a contradiction.
	Hence, each $\B_j$ is of the form $\gen{\ell(\ccc)}$ for some $\ccc \subseteq \ccs$.
	
	Now, let $\gen{\ell(\ccc_1)} \s \dots \s \gen{\ell(\ccc_r)}$ be an arbitrary configuration from $\ere$ (justified by the above discussion), and consider an arbitrary color $i \in \ccs$.
	If for each index $1 \leq j \leq r$ we have $i \in \ccc_j$, then we know that $\ell(\ccc_1) \s \dots \s \ell(\ccc_r) \notin \epi$, by the definition of $\epi$; however, since, by Corollary~\ref{cor:colorgen}, we have $\ell(\ccc_j) \in \gen{\ell(\ccc_j)}$ for each $1 \leq j \leq r$, it follows that $\gen{\ell(\ccc_1)} \s \dots \s \gen{\ell(\ccc_r)} \notin \ere$ (by the definition of $\ere$), yielding a contradiction.
	Hence, we can conclude that there exists at least one index $1 \leq j \leq r$ such that $i \notin \ccc_j$.
	
	It remains to show that there do not exist two distinct indices $j, j' \in \{ 1, \dots, r \}$ such that $i \notin \ccc_j$ and $i \notin \ccc_{j'}$.
	For a contradiction, assume that such two indices $j, j'$ exist.
	Consider the configuration $Y = \gen{\ell(\ccc_1)} \s \dots \s \gen{\ell(\ccc_{j - 1})} \s \gen{\ell(\ccc_j \cup \{ i \})} \s \gen{\ell(\ccc_{j + 1})} \s \dots \s \gen{\ell(\ccc_r)}$.
	Since $i \notin \ccc_j$, we have $\gen{\ell(\ccc_j \cup \{ i \})} \supsetneq \gen{\ell(\ccc_j)}$, by Corollary~\ref{cor:colorgen}, and it follows by (the maximality condition in) the definition of $\ere$ (and the fact that $\gen{\ell(\ccc_1)} \s \dots \s \gen{\ell(\ccc_r)} \in \ere$) that $Y \notin \ere$.

	Now consider an arbitrary choice $(\ell(\ccc'_1), \dots, \ell(\ccc'_r)) \in \gen{\ell(\ccc_1)} \times \dots \gen{\ell(\ccc_{j - 1})} \times \gen{\ell(\ccc_j \cup \{ i \})} \times \gen{\ell(\ccc_{j + 1})} \times \dots \times \gen{\ell(\ccc_r)}$.
	Observe that $\ell(\ccc'_j) \in \gen{\ell(\ccc_j \cup \{ i \})}$ implies that $\ell(\ccc'_j \setminus \{ i \}) \in \gen{\ell(\ccc_j)}$, by Corollary~\ref{cor:colorgen}.
	Thus, $\ell(\ccc'_1) \s \dots \s \ell(\ccc'_{j - 1}) \s \ell(\ccc'_j \setminus \{ i \}) \s \ell(\ccc'_{j + 1}) \s \dots \s \ell(\ccc'_r) \in \epi$, by the definition of $\ere$ (and the fact that $\gen{\ell(\ccc_1)} \s \dots \s \gen{\ell(\ccc_r)} \in \ere$).
	It follows, by the definition of $\epi$, that for each $i' \in \ccs$ satisfying $i' \neq i$, there is at least one index $1 \leq j'' \leq r$ satisfying $i' \notin \C'_{j''}$.
	Observe that there also is at least one index $1 \leq j'' \leq r$ satisfying $j'' \neq j$ and $i \notin \ccc'_{j''}$, due to the assumptions that $j' \neq j$ and $i \notin \ccc_{j'}$.
	Hence, we can conclude that also $\ell(\ccc'_1) \s \dots \s \ell(\ccc'_r) \in \epi$, by the definition of $\epi$.
	Since $(\ell(\ccc'_1), \dots, \ell(\ccc'_r))$ was chosen arbitrarily from $\gen{\ell(\ccc_1)} \times \dots \gen{\ell(\ccc_{j - 1})} \times \gen{\ell(\ccc_j \cup \{ i \})} \times \gen{\ell(\ccc_{j + 1})} \times \dots \times \gen{\ell(\ccc_r)}$, it follows that $Y \in \ere$, by the definition of $\ere$.
	This yields a contradiction to the already established fact that $Y \notin \ere$.
	Thus, there is exactly one index $1 \leq j \leq r$ such that $i \notin \ccc_j$.
	Since $i$ was chosen arbitrarily from $\ccs$, it follows that $\gen{\ell(\ccc_1)} \s \dots \s \gen{\ell(\ccc_r)} \in \edda$, by the definition of $\edda$.
	Since $\gen{\ell(\ccc_1)} \s \dots \s \gen{\ell(\ccc_r)}$ was chosen arbitrarily from $\ere$, this implies in turn that $\ere \subseteq \edda$, as desired.

	Now, we show that $\edda \subseteq \ere$.
	Let $\gen{\ell(\ccc_1)} \s \dots \s \gen{\ell(\ccc_r)}$ be an arbitrary configuration from $\edda$, and consider some arbitrary choice $(\ell(\ccc'_1), \dots, \ell(\ccc'_r)) \in \gen{\ell(\ccc_1)} \times \dots \times \gen{\ell(\ccc_r)}$.
	From the definition of $\edda$ and Corollary~\ref{cor:colorgen}, it follows that for each color $i \in \ccs$, there is at least one index $1 \leq j \leq r$ satisfying $i \notin \ccc'_j$.
	Hence $\ell(\ccc'_1) \s \dots \s \ell(\ccc'_r) \in \epi$, by the definition of $\epi$.
	Since $(\ell(\ccc'_1), \dots, \ell(\ccc'_r))$ was arbitrarily chosen from $\gen{\ell(\ccc_1)} \times \dots \times \gen{\ell(\ccc_r)}$, this implies, by the definition of $\ere$, that $\gen{\ell(\ccc_1)} \s \dots \s \gen{\ell(\ccc_r)} \in \ere$ or there is some configuration $\B_1 \s \dots \s \B_r \in \ere$ such that $\gen{\ell(\ccc_j)} \subseteq \B_j$ for all $j \in \{ 1, \dots, r \}$ and $\gen{\ell(\ccc_j)} \subsetneq \B_j$ for at least one $j \in \{ 1, \dots, r \}$.
	Assume that such a configuration $\B_1 \s \dots \s \B_r \in \ere$ exists.
	By the definition of $\B_1 \s \dots \s \B_r$, the definition of $\edda$, and the fact that $\gen{\ell(\ccc_1)} \s \dots \s \gen{\ell(\ccc_r)} \in \edda$, there must be some color $i \in \ccs$ such that $i \in \B_j$ for each $1 \leq j \leq r$.
	Hence, $\B_1 \s \dots \s \B_r \notin \edda$ (by the definition of $\edda$), which contradicts $\B_1 \s \dots \s \B_r \in \ere$ since $\ere \subseteq \edda$ (as established above).
	Thus, no such configuration $\B_1 \s \dots \s \B_r \in \ere$ exists, which implies that $\gen{\ell(\ccc_1)} \s \dots \s \gen{\ell(\ccc_r)} \in \ere$.
	It follows that $\edda \subseteq \ere$, as desired.
	Hence, we can conclude that $\ere = \edda$, and it follows that the hyperedge constraint $\ere$ is as given in the lemma.
	
	Next, we show that $\sre$ is as given in the lemma.
	By definition, $\sre$ consists of those labels that appear in some configuration from $\ere$.
	Observe that, for any $\ccc \subseteq \ccs$, the configuration $\gen{\ell(\ccc)} \s \gen{\ell(\ccs \setminus \ccc)} \s \gen{\ell(\ccs)}^{r - 2}$ is contained in $\edda$, by the definition of $\edda$.
	Since $\ere = \edda$ (and all labels appearing in some configuration from $\ere$ are contained in $\{ \gen{\ell(\ccc)} \mid \ccc \subseteq \ccs \}$), it follows that $\sre$ is as given in the lemma.
	
	From the definition of $\re(\cdot)$, it immediately follows that $\nre$ is as given in the lemma.
\end{proof}

	Similar to before, it will be useful to collect information about the strength of the labels in $\sre$ (according to $\nre$) and compute the right-closed subsets generated by each label.
	We will do so in the following with Lemma~\ref{lem:colorrestrong} and Corollary~\ref{cor:colorregen}, which are analogues of Lemma~\ref{lem:colorstrong} and Corollary~\ref{cor:colorgen} for $\repi$ instead of $\Pi$.

\begin{lemma}\label{lem:colorrestrong}
	Consider any two subsets $\ccc, \ccc'$ of $\ccs$.
	Then $\gen{\ell(\ccc)} \leq \gen{\ell(\ccc')}$ if and only if $\ccc \subseteq \ccc'$.
\end{lemma}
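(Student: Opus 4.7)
The plan is to prove the two directions directly from the definitions in Lemma~\ref{lem:colorrepidef}, using Corollary~\ref{cor:colorgen} to translate between subset relations on $\ccs$ and subset relations between the right-closed sets $\gen{\ell(\cdot)}$. Specifically, Corollary~\ref{cor:colorgen} gives the equivalence $\ccc \subseteq \ccc' \Leftrightarrow \gen{\ell(\ccc)} \subseteq \gen{\ell(\ccc')}$, which is the key tool.

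For the ``if'' direction, I assume $\ccc \subseteq \ccc'$, so $\gen{\ell(\ccc)} \subseteq \gen{\ell(\ccc')}$. Take any configuration of $\nre$ containing $\gen{\ell(\ccc)}$. By the existential definition of $\nre$, there exists a choice of representatives in the listed right-closed sets that forms a configuration in $\npi$; since the representative picked from $\gen{\ell(\ccc)}$ also lies in $\gen{\ell(\ccc')}$, the very same choice witnesses membership in $\nre$ of the configuration obtained by replacing $\gen{\ell(\ccc)}$ with $\gen{\ell(\ccc')}$. Hence $\gen{\ell(\ccc)} \leq \gen{\ell(\ccc')}$.

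For the ``only if'' direction, I assume $\ccc \not\subseteq \ccc'$; in particular $\ccc \neq \emptyset$. The idea is to exhibit a single node configuration witnessing non-strength. Take
\[
F \;=\; \gen{\ell(\ccc)}^{\Delta - |\ccc| + 1} \s \gen{\ell(\emptyset)}^{|\ccc| - 1},
\]
which lies in $\nre$ because $\ell(\ccc)^{\Delta - |\ccc| + 1} \s \ell(\emptyset)^{|\ccc| - 1} \in \npi$ by definition, and the chosen representatives lie in the respective right-closed sets by Corollary~\ref{cor:colorgen}. Replacing one copy of $\gen{\ell(\ccc)}$ with $\gen{\ell(\ccc')}$ yields
\[
F' \;=\; \gen{\ell(\ccc')} \s \gen{\ell(\ccc)}^{\Delta - |\ccc|} \s \gen{\ell(\emptyset)}^{|\ccc| - 1}.
\]
I would argue that $F' \notin \nre$: any hypothetical witness assigns a label $\ell(\ccc'_0)$ with $\ccc'_0 \subseteq \ccc'$ to the first slot, labels $\ell(\ccc_i)$ with $\ccc_i \subseteq \ccc$ to the middle slots, and $\ell(\emptyset)$ to the last $|\ccc|-1$ slots (using $\gen{\ell(\emptyset)} = \{\ell(\emptyset)\}$, a special case of Corollary~\ref{cor:colorgen}). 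Matching this multiset against $\npi$ forces a nonempty $\ccd \subseteq \ccs$ so that $\ell(\ccd)$ appears $\Delta - |\ccd| + 1$ times and $\ell(\emptyset)$ appears $|\ccd| - 1$ times.

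The crux is then a short counting argument. If the first slot is $\ell(\emptyset)$, then all $\Delta - |\ccd|+1$ copies of $\ell(\ccd)$ must come from the $\Delta - |\ccc|$ middle slots, forcing $|\ccd| \geq |\ccc|+1$, which contradicts $\ccd \subseteq \ccc$. Hence the first slot is $\ell(\ccd)$ with $\ccd \subseteq \ccc'$, and every other occurrence of $\ell(\ccd)$ comes from the middle slots, so $\ccd \subseteq \ccc$. Counting the $\ell(\emptyset)$'s gives $|\ccd| \geq |\ccc|$, and together with $\ccd \subseteq \ccc$ this forces $\ccd = \ccc$, whence $\ccc \subseteq \ccc'$, contradicting our assumption. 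I expect this counting case-split, already the main ingredient, to be the only step requiring any care; the rest is a mechanical unpacking of Lemma~\ref{lem:colorrepidef} and Corollary~\ref{cor:colorgen}.
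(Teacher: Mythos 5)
Your argument follows essentially the same route as the paper's: the ``if'' direction is the same mechanical unpacking of $\nre$ via Corollary~\ref{cor:colorgen}, and for the ``only if'' direction you use the same witness configuration $F$, relax the same slot to obtain $F'$, and reduce to matching a hypothetical witness against the shape $\ell(\ccc^*)^{\Delta - |\ccc^*| + 1} \s \ell(\emptyset)^{|\ccc^*| - 1}$ of configurations in $\npi$. The paper organizes its case split around the size of $\ccc^*$ relative to $\ccc$ (first $|\ccc^*| < |\ccc|$, then $\ccc^* \neq \ccc$, then $\ccc^* = \ccc$) rather than around what occupies the first slot, but the content is the same.

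One subcase in your ``only if'' direction is left unjustified. In your second case (first slot equals $\ell(\ccc^*)$), you infer $\ccc^* \subseteq \ccc$ from ``every other occurrence of $\ell(\ccc^*)$ comes from the middle slots.'' If $\ccc^* = \ccs$, so $|\ccc^*| = \Delta$, then $\ell(\ccc^*)$ occurs exactly $\Delta - |\ccc^*| + 1 = 1$ time; there are no other occurrences, and the inference $\ccc^* \subseteq \ccc$ does not follow. The subcase still collapses, but by a different mechanism: $\ccc^* = \ccs \subseteq \ccc'$ forces $\ccc' = \ccs$, which immediately gives $\ccc \subseteq \ccc'$ and contradicts the standing assumption $\ccc \nsubseteq \ccc'$. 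The paper isolates the possibility $\ccc^* = \ccs$ explicitly and derives a contradiction there; your write-up should do the same rather than rely on $\ccc^* \subseteq \ccc$ in a case where it has not been established.
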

\begin{proof}
		Consider first the case that $\ccc \subseteq \ccc'$.
		From the definition of $\nre$, it directly follows that replacing some set in a configuration from $\nre$ by a superset will result in a configuration that is contained in $\nre$ as well.
		Since, by Corollary~\ref{cor:colorgen}, $\ccc \subseteq \ccc'$ implies $\gen{\ell(\ccc)} \subseteq \gen{\ell(\ccc')}$, it follows that $\gen{\ell(\ccc)} \leq \gen{\ell(\ccc')}$ (by the definition of strength), as desired.

	Now consider the other case, namely that $\ccc \nsubseteq \ccc'$, and let $i \in \ccc$ be a color that is not contained in $\ccc'$.
	In particular, we have $\ccc \neq \emptyset$.
	Consider the configuration $\gen{\ell(\ccc)}^{\Delta - |\ccc| + 1} \s \gen{\ell(\emptyset)}^{|\ccc| - 1}$.
	Since $\ell(\ccc)^{\Delta - |\ccc| + 1} \s \ell(\emptyset)^{|\ccc| - 1} \in \npi$, we have $\gen{\ell(\ccc)}^{\Delta - |\ccc| + 1} \s \gen{\ell(\emptyset)}^{|\ccc| - 1} \in \nre$, by the definition of $\nre$.
	Now consider the configuration $\gen{\ell(\ccc')} \s \gen{\ell(\ccc)}^{\Delta - |\ccc|} \s \gen{\ell(\emptyset)}^{|\ccc| - 1}$ obtained from the above configuration by replacing (one) $\gen{\ell(\ccc)}$ by $\gen{\ell(\ccc')}$.
	Consider an arbitrary choice $(\ell(\ccc_1), \dots, \ell(\ccc_{\Delta})) \in \gen{\ell(\ccc')} \times \gen{\ell(\ccc)}^{\Delta - |\ccc|} \times \gen{\ell(\emptyset)}^{|\ccc| - 1}$.
	Set $Y := \ell(\ccc_1) \s \dots \s \ell(\ccc_{\Delta})$.
	We want to show that $Y \notin \npi$.

	For a contradiction, assume that $Y \in \npi$, which implies that there is some color set $\emptyset \neq \ccc^* \subseteq \ccs$ such that $Y = \ell(\ccc^*)^{\Delta - |\ccc^*| + 1} \s \ell(\emptyset)^{|\ccc^*| - 1}$.
	If $|\ccc^*| < |\ccc|$, then $Y$ contains strictly more than $\Delta - |\ccc| + 1$ labels that are not contained in $\gen{\ell(\emptyset)}$ (by Corollary~\ref{cor:colorgen}), which yields a contradiction to the definition of $Y$.
	Hence, $|\ccc^*| \geq |\ccc|$.
	Consider the case that $\ccc^* \neq \ccc$.
	Then it follows that $\ccc^* \nsubseteq \ccc$, which implies $\ell(\ccc^*) \notin \gen{\ell(\ccc)}$, by Corollary~\ref{cor:colorgen}; it follows that $\ell(\ccc^*)$ is contained in at most one set from $\gen{\ell(\ccc')} \s \gen{\ell(\ccc)}^{\Delta - |\ccc|} \s \gen{\ell(\emptyset)}^{|\ccc| - 1}$, which in turn implies $\Delta - |\ccc^*| + 1 \leq 1$, by the definition of $Y$, and therefore $|\ccc^*| = \Delta$.
	As $|\ccc^*| = \Delta$ implies $\ccc^* = \ccs$, but $\ell(\ccs) \notin \gen{\ell(\ccc')} \cup \gen{\ell(\ccc)} \cup \gen{\ell(\emptyset)}$ (due to the fact that $i \notin \ccc'$, the fact that $\ccs = \ccc^* \nsubseteq \ccc$, and Corollary~\ref{cor:colorgen}), we obtain a contradiction to the definition of $Y$ (given the fact that $Y = \ell(\ccc^*)^{\Delta - |\ccc^*| + 1} \s \ell(\emptyset)^{|\ccc^*| - 1}$).
	Hence, we know that $\ccc^* = \ccc$.
	Since $\ccc \nsubseteq \ccc'$, it follows, by Corollary~\ref{cor:colorgen}, that $\ell(\ccc^*) \notin \gen{\ell(\ccc')}$.
	Thus, $\ell(\ccc^*)$ is contained in at most $\Delta - |\ccc| = \Delta - |\ccc^*|$ sets from $\gen{\ell(\ccc')} \s \gen{\ell(\ccc)}^{\Delta - |\ccc|} \s \gen{\ell(\emptyset)}^{|\ccc| - 1}$, which yields a contradiction to the definition of $Y$ (given the fact that $Y = \ell(\ccc^*)^{\Delta - |\ccc^*| + 1} \s \ell(\emptyset)^{|\ccc^*| - 1}$).
	Hence, our assumption was false, and we have $Y \notin \npi$.
	It follows that $\gen{\ell(\ccc')} \s \gen{\ell(\ccc)}^{\Delta - |\ccc|} \s \gen{\ell(\emptyset)}^{|\ccc| - 1} \notin \nre$, which, by the definition of strength, implies that $\gen{\ell(\ccc)} \nleq \gen{\ell(\ccc')}$, as desired.
\end{proof}

From Lemma~\ref{lem:colorrestrong} we obtain Corollary~\ref{cor:colorregen} by simply collecting for any label $\L \in \sre$ the set of all labels $\L'$ satisfying $\L \leq \L'$.

\begin{corollary}\label{cor:colorregen}
	For each $\ccc \subseteq \ccs$, we have $\gen{\gen{\ell(\ccc)}} = \{ \gen{\ell(\ccc')} \mid \ccc \subseteq \ccc' \}$.
\end{corollary}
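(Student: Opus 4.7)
My plan is to observe that the corollary is essentially a bookkeeping consequence of the preceding lemma together with the characterization of $\sre$, so the proof reduces to unfolding a definition and applying what we already have.

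First, I would recall the definition of the $\gen{\cdot}$ operator as applied to a label of $\repi$: for any $\L \in \sre$, the set $\gen{\L}$ consists of all labels $\L' \in \sre$ that are at least as strong as $\L$ with respect to the node constraint $\nre$ of $\repi$ (note the convention in the paper that the relation used inside $\gen{\cdot}$ is the one according to the constraint on which $\rere$ applies the universal quantifier, i.e., $\nre$ in this step). Thus, by definition,
\[
\gen{\gen{\ell(\ccc)}} \;=\; \{\, \L' \in \sre \;:\; \gen{\ell(\ccc)} \leq \L' \,\}.
\]

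Next, I would invoke Lemma~\ref{lem:colorrepidef}, which tells us that every label of $\sre$ has the form $\gen{\ell(\ccc')}$ for some $\ccc' \subseteq \ccs$. Consequently, the set above can be rewritten as
\[
\gen{\gen{\ell(\ccc)}} \;=\; \{\, \gen{\ell(\ccc')} \;:\; \ccc' \subseteq \ccs \text{ and } \gen{\ell(\ccc)} \leq \gen{\ell(\ccc')} \,\}.
\]

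Finally, I would apply Lemma~\ref{lem:colorrestrong}, which states that $\gen{\ell(\ccc)} \leq \gen{\ell(\ccc')}$ holds if and only if $\ccc \subseteq \ccc'$. Substituting this equivalence into the display above yields exactly
\[
\gen{\gen{\ell(\ccc)}} \;=\; \{\, \gen{\ell(\ccc')} \;:\; \ccc \subseteq \ccc' \,\},
\]
which is the claim. Since every inclusion $\ccc \subseteq \ccc'$ already forces $\ccc' \subseteq \ccs$, no side condition on $\ccc'$ needs to be carried along. There is no real obstacle here: the work was already done in Lemma~\ref{lem:colorrestrong}, and the corollary is just the rephrasing of that strength characterization at the level of generated sets. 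The only subtlety worth flagging in the write-up is which constraint is implicitly used when forming $\gen{\gen{\ell(\ccc)}}$, so that the reader sees why Lemma~\ref{lem:colorrestrong} (which is about strength with respect to $\nre$) is the right tool.
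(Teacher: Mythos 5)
Your proposal is correct and takes essentially the same route as the paper, which also derives the corollary directly from Lemma~\ref{lem:colorrestrong} by collecting, for $\L = \gen{\ell(\ccc)}$, all labels $\L' \in \sre$ with $\L \leq \L'$. You make explicit the small step of invoking Lemma~\ref{lem:colorrepidef} to know that every label of $\sre$ has the form $\gen{\ell(\ccc')}$, which the paper leaves implicit but is the same argument.
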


\subsection{Computing $\rere(\re(\Pi))$}
We now proceed by computing $\rere(\re(\Pi))$. We start with its node constraint.
\begin{lemma}\label{lem:colornode}
	The node constraint $\nrere$ of $\rerepi$ consists of all configurations of the form $\gen{\gen{\ell(\ccc)}}^{\Delta - |\ccc| + 1} \s \gen{\gen{\ell(\emptyset)}}^{|\ccc| - 1}$ for some $\emptyset \neq \ccc \subseteq \ccs$.
\end{lemma}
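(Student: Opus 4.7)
The plan is to mirror the structure of the proof of Lemma~\ref{lem:colorrepidef}: prove both inclusions separately, using Corollaries~\ref{cor:colorgen} and~\ref{cor:colorregen} as the dictionaries that translate between color sets and right-closed sets of labels. By the definition of $\rere(\cdot)$, $\nrere$ consists precisely of the inclusion-maximal configurations $\B_1 \s \dots \s \B_\Delta$ of right-closed labels such that for every choice $(\A_1, \dots, \A_\Delta) \in \B_1 \times \dots \times \B_\Delta$ we have $\A_1 \s \dots \s \A_\Delta \in \nre$.

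For the forward inclusion, I would fix any $\emptyset \neq \ccc \subseteq \ccs$ and verify that $X := \gen{\gen{\ell(\ccc)}}^{\Delta - |\ccc| + 1} \s \gen{\gen{\ell(\emptyset)}}^{|\ccc| - 1}$ lies in $\nrere$. Given an arbitrary choice $(\gen{\ell(\ccc'_1)}, \dots, \gen{\ell(\ccc'_\Delta)})$ of labels from the entries of $X$, Corollary~\ref{cor:colorregen} guarantees $\ccc \subseteq \ccc'_i$ at the first $\Delta - |\ccc| + 1$ positions, so by Corollary~\ref{cor:colorgen} the label $\ell(\ccc)$ belongs to each $\gen{\ell(\ccc'_i)}$ there, while $\ell(\emptyset)$ belongs to every $\gen{\ell(\ccc'_j)}$. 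Selecting these labels yields the $\npi$-configuration $\ell(\ccc)^{\Delta - |\ccc| + 1} \s \ell(\emptyset)^{|\ccc| - 1}$, so the universal quantifier is satisfied. For maximality, I would assume some entry of $X$ can be strictly enlarged while preserving the universal quantifier; right-closedness plus Corollary~\ref{cor:colorregen} force the new label to be some $\gen{\ell(\ccc^*)}$ with $\ccc^* \not\supseteq \ccc$ at a position of the first block (the second block already equals $\sre$). Choosing $\gen{\ell(\ccc^*)}$ there and $\gen{\ell(\ccc)}$ or $\gen{\ell(\emptyset)}$ elsewhere leaves no admissible $\npi$-witness, which I would rule out via a short case analysis on the possible distinguished color set $\ccc^{**}$ of such a witness.

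For the reverse inclusion, I would take an arbitrary $\fB := \B_1 \s \dots \s \B_\Delta \in \nrere$. By Observation~\ref{obs:rcs}, each $\B_i$ is right-closed according to $\nre$, so by Corollary~\ref{cor:colorregen} it can be written as a union $\bigcup_{\ccc \in S_i} \gen{\gen{\ell(\ccc)}}$ of principal upward-closed sets, with $S_i$ the inclusion-minimal elements of $\{\ccc \subseteq \ccs \mid \gen{\ell(\ccc)} \in \B_i\}$. The crux is to show each $S_i$ is a singleton: assuming two incomparable $\ccc, \ccc' \in S_i$, I would select $\gen{\ell(\ccc)}$ and $\gen{\ell(\ccc')}$ alternately at position $i$ while fixing minimal generators at the other positions, then derive a contradiction by exploiting that any $\npi$-witness must use a single nonempty color set $\ccc^{**}$ at all its nonempty positions, which cannot simultaneously sit inside two incomparable minimal generators while remaining compatible with the structure at the other positions. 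Once each $\B_i = \gen{\gen{\ell(\ccc_i)}}$ is established, applying the universal quantifier to the choice $(\gen{\ell(\ccc_1)}, \dots, \gen{\ell(\ccc_\Delta)})$ produces a witness $\ell(\ccc^*)^{\Delta - |\ccc^*| + 1} \s \ell(\emptyset)^{|\ccc^*| - 1} \in \npi$; combining the shape of this witness with the maximality of $\fB$ (enlarging the $\ccc^*$-positions to $\gen{\gen{\ell(\ccc^*)}}$ still preserves the quantifier) pins down $\ccc_i = \ccc^*$ at $\Delta - |\ccc^*| + 1$ positions and $\ccc_i = \emptyset$ elsewhere, giving the claimed form with $\ccc := \ccc^*$.

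The main obstacle will be proving uniqueness of the minimal generator for each $\B_i$: this step has to simultaneously control the choices at the other $\Delta - 1$ positions and exploit the rigid ``one distinguished nonempty color set plus empty-set filler'' structure of $\npi$, where the multiplicity of the distinguished set depends on its own cardinality. Once this step is in place, the remainder of the reverse direction reduces to bookkeeping with Corollaries~\ref{cor:colorgen} and~\ref{cor:colorregen} together with the maximality clause in the definition of $\rere(\cdot)$.
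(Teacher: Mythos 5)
Your forward inclusion is a valid, arguably more direct route than the paper's: the paper never checks maximality of the $\neda$-configurations head-on, but instead first shows each $\neda$-configuration can be relaxed to \emph{some} $\nrere$-configuration, then proves $\nrere \subseteq \neda$, and finally closes the loop by showing no $\neda$-configuration can be relaxed to a different $\neda$-configuration. Your direct maximality check (enlarging one first-block entry to a right-closed set containing some $\gen{\ell(\ccc^*)}$ with $\ccc^* \nsupseteq \ccc$, then choosing $\gen{\ell(\ccc^*)}$ there, $\gen{\ell(\ccc)}$ on the remaining first-block positions and $\gen{\ell(\emptyset)}$ on the second block, and verifying no $\npi$-witness exists) does work out, and replaces that three-part bookkeeping with one case analysis.

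The reverse inclusion, however, has a genuine gap exactly at the step you flag as the obstacle. The step ``each $S_i$ is a singleton'' is not just delicate; as sketched, the argument does not produce a contradiction. You propose to fix minimal generators at the other $\Delta - 1$ positions and alternate $\gen{\ell(\ccc)}$ versus $\gen{\ell(\ccc')}$ at position $i$. But the universal quantifier only yields an $\npi$-witness \emph{per choice}; the witness for the $\gen{\ell(\ccc)}$-choice and the witness for the $\gen{\ell(\ccc')}$-choice are independent configurations and need not agree on the distinguished color set $\ccc^{**}$ or on which positions carry $\ell(\emptyset)$. So there is no inconsistency to exploit unless the other $\Delta-1$ labels are chosen in a coordinated, adversarial way across \emph{all} positions simultaneously --- and which choice to make is itself the hard combinatorial question. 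Moreover, the other $\B_j$ may themselves have several incomparable minimal generators, so ``fixing the minimal generator'' at those positions is not well-defined; your argument silently assumes what it is trying to prove. The paper resolves precisely this simultaneity problem via Hall's marriage theorem (\Cref{thm:hall}): assuming $\fB$ cannot be relaxed to any $\neda$-configuration, it shows that for every $\emptyset\neq\ccc\subseteq\ccs$ at least $|\ccc|$ positions contain $\gen{\ell(R_i)}$ for some $i\in\ccc$ (where $R_i := \ccs\setminus\{i\}$), verifies Hall's condition on a bipartite graph between positions and the sets $R_i$, and extracts a single choice that is a permutation of $\gen{\ell(R_1)}\s\dots\s\gen{\ell(R_\Delta)}$, which admits no $\npi$-witness. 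That device is doing the work you ascribe to a ``short case analysis,'' and nothing in your sketch substitutes for it.

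Your final bookkeeping step also has a problem: you claim that ``enlarging the $\ccc^*$-positions to $\gen{\gen{\ell(\ccc^*)}}$ still preserves the quantifier.'' This is not justified --- enlarging a set strictly adds choices, hence strengthens the universal quantifier, so preservation is the nontrivial direction and cannot simply be asserted. If you already knew the enlarged configuration satisfied the quantifier you would in effect already know the lemma's conclusion, so the step is circular as written. The cleanest repair is to adopt the paper's structure for this half: show $\fB$ can be relaxed to a $\neda$-configuration (via Hall), then combine with the forward direction and the observation that no $\neda$-configuration relaxes to a distinct $\neda$-configuration.
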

\begin{proof}
	Let $\neda$ denote the set of all configurations of the form $\gen{\gen{\ell(\ccc)}}^{\Delta - |\ccc| + 1} \s \gen{\gen{\ell(\emptyset)}}^{|\ccc| - 1}$ for some $\emptyset \neq \ccc \subseteq \ccs$.
	We need to show that $\nrere = \neda$.

	We start by showing that each configuration from $\neda$ can be relaxed to some configuration from $\nrere$.
	Let $\gen{\gen{\ell(\ccc)}}^{\Delta - |\ccc| + 1} \s \gen{\gen{\ell(\emptyset)}}^{|\ccc| - 1}$ be an arbitrary configuration from $\neda$, and consider some arbitrary choice $(\gen{\ell(\ccc'_1)}, \dots, \gen{\ell(\ccc'_{\Delta})}) \in \gen{\gen{\ell(\ccc)}}^{\Delta - |\ccc| + 1} \times \gen{\gen{\ell(\emptyset)}}^{|\ccc| - 1}$.
	By Corollary~\ref{cor:colorregen}, we know that, for each $1 \leq k \leq \Delta - |\ccc| + 1$, we have $\ccc \subseteq \ccc'_k$.
	Hence, by Corollary~\ref{cor:colorgen}, we can pick one label from each of the sets in $(\gen{\ell(\ccc'_1)}, \dots, \gen{\ell(\ccc'_{\Delta})})$ such that the resulting configuration is $\ell(\ccc)^{\Delta - |\ccc| + 1} \s \ell(\emptyset)^{|\ccc| - 1}$.
	Since $\ell(\ccc)^{\Delta - |\ccc| + 1} \s \ell(\emptyset)^{|\ccc| - 1} \in \npi$, it follows, by the definition of $\nrere$ (and $\nre$), that $\gen{\gen{\ell(\ccc)}}^{\Delta - |\ccc| + 1} \s \gen{\gen{\ell(\emptyset)}}^{|\ccc| - 1}$ can be relaxed to some configuration from $\nrere$.
	We conclude that each configuration from $\neda$ can be relaxed to some configuration from $\nrere$.
	Note that this does not show yet that $\neda \subseteq \nrere$; we will come back to this goal later.

	First, we show that $\nrere \subseteq \neda$.
	Let $\fB = \B_1 \s \dots \s \B_{\Delta}$ be an arbitrary configuration from $\nrere$.
	We claim that $\fB$ can be relaxed to some configuration from $\neda$.
	For a contradiction, assume that the claim is false, i.e., $\fB$ cannot be relaxed to any configuration from $\neda$.
	Observe that $\B_k$ is right-closed for each $1 \leq k \leq \Delta$, by Observation~\ref{obs:rcs}.
	
	For each $i \in \ccs$, define $R_i := \ccs \setminus \{ i \}$.
	Moreover, we will make use of the bipartite graph $\overline{G} = (\overline{V} \cup \overline{W}, \overline{E})$ obtained by defining $\overline{V} := \{ \B_1, \dots, \B_{\Delta} \}$ and $\overline{W} := \{ R_i \mid i \in \ccs \}$, and setting $\overline{E}$ to be the set of all edges $\{ \B_k, R_i \}$ satisfying $\gen{\ell(R_i)} \in \B_k$.
		Note that, for any distinct $k, k'$, we consider $\B_k$ and $\B_{k'}$ to be different vertices in $\overline{V}$ even if $\B_k = \B_{k'}$.
		
	Consider any arbitrary subset $\emptyset \neq \ccc \subseteq \ccs$.
	By our assumption, configuration $\fB$ cannot be relaxed to the configuration $\gen{\gen{\ell(\ccc)}}^{\Delta - |\ccc| + 1} \s \gen{\gen{\ell(\emptyset)}}^{|\ccc| - 1}$, which implies that there are at least $|\ccc|$ indices $k \in \{ 1, \dots, \Delta \}$ satisfying $\B_k \nsubseteq \gen{\gen{\ell(\ccc)}}$, since $\gen{\gen{\ell(\emptyset)}} = \sre$ (by Corollary~\ref{cor:colorregen}).
	By Corollary~\ref{cor:colorregen}, it follows that there are at least $|\ccc|$ indices $k \in \{ 1, \dots, \Delta \}$ such that there exists some $\ccc' \nsupseteq \ccc$ satisfying $\gen{\ell(\ccc')} \in \B_k$.
	Since, for each $\ccc' \nsupseteq \ccc$, there exists some $i \in \ccc$ satisfying $i \notin \ccc'$, it follows, by the right-closedness of the $\B_k$ (and Lemma~\ref{lem:colorrestrong}), that there are at least $|\ccc|$ indices $k \in \{ 1, \dots, \Delta \}$ such that there exists some $i \in \ccc$ satisfying $\gen{\ell(R_i)} \in \B_k$.
	
	Recall the definition of $\overline{G}$.
	By the above discussion, we conclude that for any arbitrary subset $\emptyset \neq \ccc \subseteq \ccs$, the vertex set $\{ R_i \mid i \in \ccc \} \subseteq \overline{W}$ has at least $|\ccc|$ neighbors in $\overline{V}$.
	Hence, we can apply Theorem~\ref{thm:hall} (i.e., Hall's marriage theorem) to $\overline{G}$ and obtain a function $f \colon \overline{W} \rightarrow \overline{V}$ such that $f(R_i) \neq f(R_{i'})$ for any $i \neq i'$ and $\gen{\ell(R_i)} \in f(R_i)$ for any $i \in \ccs$.
	This implies that there is some choice $(\A_1, \dots, \A_{\Delta}) \in \B_1 \times \dots \times \B_{\Delta}$ such that the configuration $\A_1 \s \dots \s \A_{\Delta}$ is a permutation of the configuration $\gen{\ell(R_1)} \s \dots \s \gen{\ell(R_{\Delta})}$.
	Therefore, by the definition of $\nrere$, there exists some choice $(\L_1, \dots, \L_{\Delta}) \in \gen{\ell(R_1)} \times \dots \times \gen{\ell(R_{\Delta})}$ such that $\L_1 \s \dots \s \L_{\Delta} \in \npi$.
	Observe that, for each $\emptyset \neq \ccc \subseteq \ccs$, there are at least $|\ccc|$ indices $i \in \ccs$ (namely all colors $i$ contained in $\ccc$) such that $\ell(\ccc) \notin \gen{\ell(R_i)}$, by Corollary~\ref{cor:colorgen}.
	It follows that, for each $\emptyset \neq \ccc \subseteq \ccs$, the configuration $\ell(\ccc)^{\Delta - |\ccc| + 1} \s \ell(\emptyset)^{|\ccc| - 1}$ is not a permutation of $\L_1 \s \dots \s \L_{\Delta}$.
	By the definition of $\npi$, this yields a contradiction to the fact that $\L_1 \s \dots \s \L_{\Delta} \in \npi$, and proves the claim.

	Hence, each configuration from $\nrere$ can be relaxed to some configuration from $\neda$.
	Recall that, as shown before, each configuration from $\neda$ can be relaxed to some configuration from $\nrere$.
	Combining these two insights, we obtain that any arbitrary configuration $\fB_1 \in \nrere$ can be relaxed to some configuration $\fB_2 \in \neda$, which in turn can be relaxed to some configuration $\fB_3 \in \nrere$.
	From (the maximality condition in) the definition of $\nrere$ it follows that $\fB_1 = \fB_3$, which implies that also $\fB_1 = \fB_2$. 
	Hence, $\fB_1 \in \neda$, which implies that $\nrere \subseteq \neda$, as desired.

	We conclude the proof by showing that also $\neda \subseteq \nrere$.
	We start by observing that for any two nonempty subsets $\ccc, \ccc' \subseteq \ccs$ satisfying $\ccc \neq \ccc'$, the configuration $\gen{\gen{\ell(\ccc)}}^{\Delta - |\ccc| + 1} \s \gen{\gen{\ell(\emptyset)}}^{|\ccc| - 1}$ cannot be relaxed to $\gen{\gen{\ell(\ccc')}}^{\Delta - |\ccc'| + 1} \s \gen{\gen{\ell(\emptyset)}}^{|\ccc'| - 1}$:
	If $\ccc' \subseteq \ccc$, then $|\ccc'| < |\ccc|$, which implies that there are strictly fewer than $|\ccc| - 1$ sets in $\gen{\gen{\ell(\ccc')}}^{\Delta - |\ccc'| + 1} \s \gen{\gen{\ell(\emptyset)}}^{|\ccc'| - 1}$ that are a superset of $\gen{\gen{\ell(\emptyset)}}$, since $\gen{\gen{\ell(\ccc')}} \nsupseteq \gen{\gen{\ell(\emptyset)}}$, by Corollary~\ref{cor:colorregen}.
	If $\ccc' \nsubseteq \ccc$, then $\gen{\gen{\ell(\ccc)}} \nsubseteq \gen{\gen{\ell(\ccc')}}$ (by Corollary~\ref{cor:colorregen}), which implies that no set in $\gen{\gen{\ell(\ccc)}}^{\Delta - |\ccc| + 1} \s \gen{\gen{\ell(\emptyset)}}^{|\ccc| - 1}$ is a subset of $\gen{\gen{\ell(\ccc')}}$, as also $\gen{\gen{\ell(\emptyset)}} \nsubseteq \gen{\gen{\ell(\ccc')}}$ (as seen above).
	
	Hence, no configuration from $\neda$ can be relaxed to a different configuration from $\neda$.
	Recall (again) that each configuration from $\neda$ can be relaxed to some configuration from $\nrere$ and each configuration from $\nrere$ can be relaxed to some configuration from $\neda$.
	It follows that if there is some configuration from $\neda$ that is not contained in $\nrere$, then it can be relaxed to a different configuration from $\neda$, by first relaxing it to a (necessarily different) configuration from $\nrere$ and then relaxing the obtained configuration to some configuration from $\neda$.
	As no configuration from $\neda$ can be relaxed to a different configuration from $\neda$ (as shown above), we conclude that each configuration from $\neda$ is also contained in $\nrere$.
	Thus, $\neda \subseteq \nrere$, as desired.
	It follows that $\nrere = \neda$.
\end{proof}
We now describe what is the label set $\srere$ of the problem $\rerepi$, and we then compute $\erere$.
\begin{lemma}\label{lem:coloredge}
	The set $\srere$ of output labels of $\rerepi$ is given by $\{ \gen{\gen{\ell(\ccc)}} \mid \ccc \in \ccs \}$
	
	Moreover, the hyperedge constraint $\erere$ of $\rerepi$ consists of all configurations $\gen{\gen{\ell(\ccc_1)}} \s \dots \s \gen{\gen{\ell(\ccc_r)}}$ of labels from $\srere$ such that for each color $i \in \ccs$, there is at least one index $1 \leq j \leq r$ satisfying $i \notin \ccc_j$.
\end{lemma}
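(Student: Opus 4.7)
The plan is to unfold the definition of $\rere(\cdot)$: by definition, $\srere$ consists of exactly those labels appearing in some configuration of $\nrere$, and $\erere$ is obtained by applying the existential quantifier to $\ere$. I will first pin down $\srere$ using Lemma~\ref{lem:colornode}, and then characterize $\erere$ by verifying both inclusions with respect to the set of configurations described in the statement.

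For the label set, every configuration of $\nrere$ given by Lemma~\ref{lem:colornode} has the form $\gen{\gen{\ell(\ccc)}}^{\Delta - |\ccc| + 1} \s \gen{\gen{\ell(\emptyset)}}^{|\ccc| - 1}$ for some $\emptyset \neq \ccc \subseteq \ccs$. As $\ccc$ varies over the nonempty subsets of $\ccs$, the label $\gen{\gen{\ell(\ccc)}}$ appears in this configuration; and for any $\ccc$ with $|\ccc| \geq 2$ the label $\gen{\gen{\ell(\emptyset)}}$ also appears. Hence $\srere = \{ \gen{\gen{\ell(\ccc)}} \mid \ccc \subseteq \ccs \}$, giving the first half of the lemma.

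For the hyperedge constraint, the forward direction is the more constructive one. Given a configuration $\gen{\gen{\ell(\ccc_1)}} \s \dots \s \gen{\gen{\ell(\ccc_r)}}$ of labels from $\srere$ satisfying the stated ``at least one missing index per color'' condition, I will build a witness for $\erere$ by choosing supersets $\ccc'_j \supseteq \ccc_j$ such that each color is missing from \emph{exactly} one $\ccc'_j$. Concretely, for each color $i$ let $S_i := \{ j : i \notin \ccc_j \}$, which is nonempty by assumption; pick an arbitrary $j(i) \in S_i$ and set $\ccc'_j := \ccs \setminus \{ i : j(i) = j \}$. A short check shows $\ccc_j \subseteq \ccc'_j$: if $i \in \ccc_j$ then $j \notin S_i$, so $j(i) \neq j$, and hence $i \in \ccc'_j$. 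Therefore by Corollary~\ref{cor:colorregen} we have $\gen{\ell(\ccc'_j)} \in \gen{\gen{\ell(\ccc_j)}}$ for each $j$, and since exactly one $j$ (namely $j(i)$) fails to contain color $i$, Lemma~\ref{lem:colorrepidef} gives $\gen{\ell(\ccc'_1)} \s \dots \s \gen{\ell(\ccc'_r)} \in \ere$, which is the required witness.

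The converse direction is a short contrapositive: if some color $i$ lies in every $\ccc_j$, then Corollary~\ref{cor:colorregen} forces $i$ to lie in every $\ccc'_j$ of any candidate choice $(\gen{\ell(\ccc'_1)}, \dots, \gen{\ell(\ccc'_r)}) \in \gen{\gen{\ell(\ccc_1)}} \times \dots \times \gen{\gen{\ell(\ccc_r)}}$, and Lemma~\ref{lem:colorrepidef} (which demands \emph{exactly} one missing index per color) then rules out every such witness, so the configuration is not in $\erere$. The only subtle point worth flagging is the mismatch between ``exactly one'' in the condition defining $\ere$ and ``at least one'' in the condition defining $\erere$: the freedom granted by the existential quantifier in $\rere$ is precisely the freedom to promote each $\ccc_j$ to a superset $\ccc'_j$, which is what lets us absorb the ``extra'' missing indices and reduce the looser $\erere$-condition to the stricter $\ere$-condition via the $j(i)$ assignment above.
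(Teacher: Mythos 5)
Your proof is correct and takes essentially the same route as the paper: you derive $\srere$ from Lemma~\ref{lem:colornode}, show the forward inclusion by promoting each $\ccc_j$ to a superset $\ccc'_j$ for which each color is missing exactly once so that Lemma~\ref{lem:colorrepidef} and Corollary~\ref{cor:colorregen} give a witness in $\ere$, and handle the converse via Corollary~\ref{cor:colorregen} forcing $\ccc_j \subseteq \ccc'_j$. The only cosmetic differences are that you phrase the reverse inclusion as a contrapositive and that you spell out the witness $\ccc'_j := \ccs \setminus \{ i : j(i) = j \}$ explicitly where the paper merely asserts its existence.
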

\begin{proof}
	From Lemma~\ref{lem:colornode} (and the fact that $|\ccs| = \Delta \geq 2$), it follows directly that $\srere$ is as given in the lemma, by the definition of $\srere$.
	Hence, what remains is to show that the hyperedge constraint $\erere$ is as given in the lemma.
	Let $\edda$ denote the set of all configurations $\gen{\gen{\ell(\ccc_1)}} \s \dots \s \gen{\gen{\ell(\ccc_r)}}$ of labels from $\srere$ such that for each color $i \in \ccs$, there is at least one index $1 \leq j \leq r$ satisfying $i \notin \ccc_j$.
	We need to show that $\erere = \edda$.

	We first show that $\edda \subseteq \erere$.
	Let $\gen{\gen{\ell(\ccc_1)}} \s \dots \s \gen{\gen{\ell(\ccc_r)}}$ be an arbitrary configuration from $\edda$.
	Since for each color $i \in \ccs$, there is at least one index $1 \leq j \leq r$ satisfying $i \notin \ccc_j$ (by the definition of $\edda$), there also exists a collection $\ccc'_1, \dots, \ccc'_r$ of subsets of $\ccs$ such that, for each $1 \leq j \leq r$, we have $\ccc_j \subseteq \ccc'_j$ and, for each $i \in \ccs$, there is exactly one index $1 \leq j \leq r$ satisfying $i \notin \ccc'_j$.
	By Corollary~\ref{cor:colorregen}, we know that $(\gen{\ell(\ccc'_1)}, \dots, \gen{\ell(\ccc'_r)}) \in \gen{\gen{\ell(\ccc_1)}} \times \dots \times \gen{\gen{\ell(\ccc_r)}}$; since $\gen{\ell(\ccc'_1)} \s \dots \s \gen{\ell(\ccc'_r)} \in \ere$ (by the definition of the $\ccc'_j$), it follows that $\gen{\gen{\ell(\ccc_1)}} \s \dots \s \gen{\gen{\ell(\ccc_r)}} \in \erere$, by the definition of $\erere$.
	Hence, $\edda \subseteq \erere$, as desired.
	
	Now, we show that $\erere \subseteq \edda$.
	Let $\gen{\gen{\ell(\ccc_1)}} \s \dots \s \gen{\gen{\ell(\ccc_r)}}$ be an arbitrary configuration from $\erere$.
	By the definitions of $\erere$ and $\ere$, we know that there is some choice $(\gen{\ell(\ccc'_1)}, \dots, \gen{\ell(\ccc'_r)}) \in \gen{\gen{\ell(\ccc_1)}} \times \dots \times \gen{\gen{\ell(\ccc_r)}}$ such that, for each color $i \in \ccs$, there is exactly one index $1 \leq j \leq r$ satisfying $i \notin \ccc'_j$.
	Since, for each $1 \leq j \leq r$, $\gen{\ell(\ccc'_j)} \in \gen{\gen{\ell(\ccc_j)}}$ implies $\ccc_j \subseteq \ccc'_j$ (by Corollary~\ref{cor:colorregen}), it follows that, for each color $i \in \ccs$, there is at least one index $1 \leq j \leq r$ satisfying $i \notin \ccc_j$.
	This implies that $\gen{\gen{\ell(\ccc_1)}} \s \dots \s \gen{\gen{\ell(\ccc_r)}} \in \edda$, by the definition of $\edda$.
	Hence, $\erere \subseteq \edda$, and we obtain $\erere = \edda$, as desired.
\end{proof}

\subsection{Renaming}
We now show that, if we rename the labels of $\rerepi$ correctly, we obtain that $\rerepi = \Pi$.
\fponestep*
\begin{proof}
	Consider the following renaming:
	\begin{align*}
	\gen{\gen{\ell(\ccc)}} \text{ for each } \emptyset \neq \ccc \subseteq \ccs &\hspace{0.5cm}\rightarrow\hspace{0.5cm} \ell(\ccc)
	\end{align*}
	Observe that, under this renaming, $\nrere$ becomes equal to $\npi$, and $\erere$ becomes equal to $\epi$.
\end{proof}

\section{Lower Bound for Hypergraph MM}\label{sec:lbmm}
In this section, we prove an $\Omega(\min\{ \Delta r, \log_{\Delta r} n \})$-round deterministic and an $\Omega(\min\{ \Delta r, \log_{\Delta r} \log n \})$-round randomized lower bound for the problem of computing a maximal matching on hypergraphs.

\paragraph{A Sequence of Problems.}
In order to prove a lower bound for hypergraph coloring, it was enough to provide a single problem $\Pi$ that is a relaxation of hypergraph coloring and prove that $\rere(\re(\Pi)) = \Pi$, that is, that $\Pi$ is a fixed point. This approach can only work for ``hard'' problems, that is, problems that cannot be solved in $O(f(\Delta,r) + \log^* n)$ for any function $f$. Instead, for problems solvable with this runtime, like hypergraph maximal matching, we have to follow a different approach. 

The idea is to design a sequence of problems $\Pi_0, \Pi_1, \dots$ such that $\Pi_0$ is essentially a relaxed version of hypergraph MM, and any other problem $\Pi_i$ in the sequence is a relaxation of $\rere(\re(\Pi_{i -1}))$.
Roughly speaking, this implies that each problem in the sequence can be solved at least one round faster than the previous problem---now, all we have to do is to prove that it takes $\Omega(\Delta r)$ steps in the sequence to reach a $0$-round solvable problem.
While this is a simplified outline that disregards certain technicalities (we, e.g., have to ensure that the sizes of the label sets of the problems in the sequence do not grow too fast), there are known techniques in the round elimination framework that take care of these technicalities, i.e., the real challenge lies in designing the aforementioned sequence and proving that it satisfies the desired properties.

\paragraph{A Parametrized Family of Problems.}
In order to be able to prove these properties, it is useful if the problems in the sequence come from some parameterized problem family.
In Section~\ref{sec:probfam}, we describe such a problem family that we will use to prove the lower bound for hypergraph MM---all problems in the desired problem sequence will come from this family.
For technical reasons, we will assume throughout Section~\ref{sec:lbmm} that $\Delta \geq 3$ and $r \geq 3$.
Note that for $\Delta = 2$, hypergraph MM is essentially\footnote{Note that $\Delta = 2$ allows for nodes of degree $1$. Hence by interpreting each node in the hypergraph MM problem as a (hyper)edge and each hyperedge as a node, in the case of $\Delta = 2$ we have a problem that is at least as hard as MM on graphs. It is also easy to see that these additional hyperedges (= nodes in the hypergraph MM problem) of rank $1$ do not really make the MM problem harder as they can simply be taken care of in one round of computation after executing a proper MM algorithm on all rank-$2$ hyperedges. Analogously, in the case of $r = 2$, hypergraph MM formally allows hyperedges of rank $1$, but these do not make the problem (asymptotically) harder as they can be taken care of in one round of computation after executing an MIS algorithm.} the same as MM (on graphs), while for $r = 2$, hypergraph MM is essentially the same as MIS (on graphs).
In particular, the lower bounds from~\cite{Balliu2019} directly imply an $\Omega(\min\{ r, \log_{r} n \})$-round deterministic and an $\Omega(\min\{ r, \log_{r} \log n \})$-round randomized lower bound for hypergraph MM with $\Delta = 2$, and an $\Omega(\min\{ \Delta, \log_{\Delta} n \})$-round deterministic and an $\Omega(\min\{ \Delta, \log_{\Delta} \log n \})$-round randomized lower bound for hypergraph MM with $r = 2$.\footnote{To obtain these lower bounds already on trees, we additionally require~\cite{hideandseek} for the case $r = 2$.}

\paragraph{A Summary of our Goals.}
We proceed as follows. The proof presented for hypergraph coloring can be seen as a simplified version of the proof presented in this section. Hence, we start in \Cref{sec:extension} by highlighting the analogies, and the differences, between the hypergraph maximal matching lower bound proof and the hypergraph coloring lower bound proof.

Then, in \Cref{sec:probfam}, we describe the problem family, characterized by two parameters. We will prove that, for each problem $\Pi$ in the family satisfying some conditions on these parameters, a relaxation of $\rere(\re(\Pi))$ is also in the family. The proof of this fact is going to be a tedious case analysis, and we will defer it to the end of the section.

In \Cref{sec:sublb}, we will then show that, for certain values of the parameters, $\Pi$ cannot be solved in $0$ rounds in the port numbering model. We will then put things together, by showing that we can construct a sequence of problems $\Pi_0,\ldots,\Pi_t$, satisfying that:
\begin{itemize}
	\item All problems are in the family and not $0$-round solvable;
	\item Each problem $\Pi_{i+1}$ is a relaxation of $\rere(\re(\Pi_i))$;
	\item The length of the sequence $t$ is in $\Omega(\Delta r)$;
	\item The first problem is a relaxation of the hypergraph MM problem.
\end{itemize}
We will then get our claimed lower bound by applying \Cref{thm:lifting}.

\subsection{Analogies and Differences with the Hypergraph Coloring Lower Bound}\label{sec:extension}
\paragraph{Recap of the Coloring Relaxations.}
As already mentioned while discussing the hypergraph coloring lower bound, there is some intuition on why the relaxations that we applied on hypergraph coloring are able to give a fixed point. The idea, there, was that, by computing $\rere(\re(\Pi))$, for $\Pi$ equal to the hypergraph coloring problem, we obtain a problem where nodes are allowed to use multiple colors at once, and they are rewarded for using more colors. The reward is that they can mark some amount of incident hyperedges, where the amount of marked hyperedges depends on the amount of colors, and marked hyperedges need to satisfy more relaxed constraints. In $\rere(\re(\Pi))$ (which we did not even present, as it would result in a very unnatural and hard to describe problem) there are many different ways to mark the edges, that give different guarantees. In the relaxation that we presented, we essentially relaxed the constraints so that there is a single possible way to mark the hyperedges.

\paragraph{The Behavior of Hypergraph Maximal Matchng.}
Informally, by applying round elimination on hypergraph maximal matching for $k$ times, what we obtain is essentially a problem that can be decomposed into three parts: the original problem, a hypergraph $k$-colorful coloring, and some unnatural part of much larger size. This can be confirmed experimentally, but since the unnatural part seems to have a size equal to a power tower of height $k$, it makes unfeasible to actually give an explicit form of it.

\paragraph{Our Problem Sequence.}
What we would like to do, is to relax the coloring part of the problem, in order to make the unnatural part disappear. This is essentially the approach used in \cite{hideandseek}. Unfortunately, as explained in \Cref{sec:approach}, if we try to do this, we cannot prevent the colors from growing by $1$ at each step, and hence, while we would like to obtain a sequence of problems of length $\Omega(\Delta r)$, after only $\Delta+1$ steps of round elimination we would have $\Delta+1$ colors, but the fixed point only tolerates up to $\Delta$ colors, and in fact we would obtain a problem that is $0$ round solvable.

Hence, this suggests that, in our problem sequence, we cannot let the colors just grow to $\Omega(\Delta r)$, and that we have to keep them bounded to at most $\Delta$ (for technical reasons, we will actually restrict them to $\Delta-1$). How can we perform $\Omega(\Delta r)$ steps of round elimination and keep the colors bounded to at most $\Delta$, if at each step the obtained problem allows the node to use one additional color? The idea is to perform simplifications that \emph{reduce} the amount of colors: this seems contradictory, how can we make the problem easier if nodes are allowed to use less colors? In order to achieve this, we remove the colors at the cost of relaxing the requirements of the remaining colors from the point of view of the hyperedge constraint. In particular, a problem in our family is described as a vector of length at most $\Delta -1$, where each position represents a color, and for each color we specify how many nodes incident on the same hyperedge are allowed to use that color. The idea is that we can get rid of a color at the cost of increasing some values in this vector.

\paragraph{Comparison with Hypergraph Coloring.}
Summarizing, in order to obtain a fixed point for hypergraph coloring, we had to relax the problem to allow nodes to use sets of colors, and reward nodes for using more colors by allowing nodes to mark hyperedges, such that marked hyperedges are always happy.

In the case of hypergraph maximal matching, we do something very similar. A problem in our family is essentially defined very similarly as hypergraph coloring, and the differences are the following:
\begin{itemize}
	\item We have at most $\Delta-1$ colors;
	\item We also allow the original configurations allowed by hypergraph maximal matching;
	\item There is a vector describing how hard it is to use a color on a hyperedge.
\end{itemize}
Unfortunately, since now the labels of the coloring part and the matching part can mix in the allowed configurations in nontrivial ways, it is not possible to give proper intuition behind some configurations allowed by the hyperedge constraint.

What we are going to prove is that if we take a problem $\Pi$ in this family, we get that $\rere(\re(\Pi))$ can be relaxed to a different problem of the family, where some values in the vector increase. For technical reasons, this is not the full description of the problems in our family: in the problem sequence that we define, at each step, we have to take one of the $\Delta-1$ colors and treat it differently (and the unfamiliar configuration $\D^{\Delta-1} \s \X$ present in the node constraint presented later is related to it).

\subsection{The Problem Family}\label{sec:probfam}
	Each problem $\Pi(z,s)$ in the family is characterized by two parameters $z, s$, where $z = (z_1, \dots, z_k)$ is a vector of $2 \leq k \leq \Delta - 1$ nonnegative integers $z_i \leq r - 1$ and $s \in \{1, \dots, k\}$. 
	We call $k$ the \emph{length} of $z$, denoted by $\len(z)$.
	Intuitively, some labels of $\Pi(z,s)$ can be seen as sets of colors, and the length of $z$ tells us the number of colors, and for each color $i$, parameter $z_i$ describes how often $i$ (or a color set containing $i$) can appear in the same hyperedge configuration.
	Parameter $s$ singles out one of the $\len(z)$ colors that behaves a bit differently than the other colors.
	In the following, we describe $\Pi(z,s)$ formally.

\paragraph{The Label Set.}
	To describe the space of colors discussed above, define $\ccs(z) := \{1, \dots, \len(z)\}$.
	The set $\Sigma(z,s)$ of output labels of $\Pi(z,s)$ is given by $\Sigma(z,s) = \{ \D, \M, \P, \U, \X \} \cup \{ \ell(\ccc) \mid \emptyset \neq \ccc \subseteq \ccs(z)\}$ (where we use the expression $\ell(\ccc)$ instead of $\ccc$ to clearly distinguish between color sets and labels).

\paragraph{The Node Constraint.}
	We denote the node constraint of $\Pi(z,s)$ by $\nodeconst(z,s)$. It is given by the following configurations.
	\begin{itemize}
		\item $\M^{\Delta}$
		\item $\P \s \U^{\Delta - 1}$
		\item $\D^{\Delta - 1} \s \X$
		\item $\ell(\ccc)^{\Delta - |\ccc| + 1} \s \U^{|\ccc| - 1}$ for each $\emptyset \neq \ccc \subseteq \ccs(z)$
	\end{itemize}

\paragraph{The Hyperedge Constraint.}
	We denote the hyperedge constraint of $\Pi(z,s)$ by $\edgeconst(z,s)$. It is given by all configurations $\L_1 \s \dots \s \L_r$ satisfying at least one of the following three conditions.
	\begin{enumerate}
		\item\label{cond:1} There is some index $1 \leq j \leq r$ such that $\L_j = \M$ and $\L_{j'} \notin \{ \D, \M\}$ for each $j' \neq j$.
		\item\label{cond:2} There are two distinct indices $j, j' \in \{ 1, \dots, r \}$ such that $\L_j = \X$, $\L_{j'}$ is arbitrary, and $\L_{j''} \notin \{ \D, \M\}$ for each $j'' \in \{ 1, \dots, r \} \setminus \{ j, j' \}$.
		\item\label{cond:3} All of the following properties hold.
		\begin{enumerate}
			\item\label{prop:a} $\L_j \neq \P$ for each $1 \leq j \leq r$.
			\item\label{prop:b} There is at most one index $j \in \{1, \dots, r\}$ such that $\L_j \in \{ \D, \M\}$.
			\item\label{prop:c} There are at most $z_s$ indices $j \in \{1, \dots, r\}$ such that $\L_j = \D$ or $\L_j = \ell(\ccc)$ for some color set $\ccc$ containing color $s$.
			\item\label{prop:d} For each $i \in \ccs(z)$ satisfying $i \neq s$, there are at most $z_i$ indices $j \in \{1, \dots, r\}$ such that $\L_j = \ell(\ccc)$ for some color set $\ccc$ containing color $i$.
		\end{enumerate}
	\end{enumerate}

	\paragraph{The Relation between the Problems.}
	We will prove that the problems in the defined family are related in the following way.
	\begin{restatable}{lemma}{onestep}\label{lem:onestep}
		Let $z = (z_1,\ldots,z_k)$ be a vector of $2 \le k \le \Delta-1$ nonnegative integers $z_i \le r-1$ and let $s \in \{1,\ldots,k\}$. Let $q$ be an integer satisfying $q \neq s$, $q\in \{1,\ldots,k\}$, and $z_q \le r-2$. Then, the problem  $\rere(\re(\Pi(z,s)))$ can be relaxed to $\Pi(z',q)$, for $z' = (z'_1,\ldots,z'_k)$, where $k = \len(z)$, $z'_i = z_i + 1$ if $i = q$, and $z'_i = z_i$ otherwise.
	\end{restatable}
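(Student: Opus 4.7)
The plan is to mirror the structural template of Section~\ref{sec:fixedpoint} (Lemmas~\ref{lem:colorstrong}--\ref{lem:coloredge}), but adapted to the considerably richer label structure of $\Pi(z,s)$. I would proceed in four stages, each analogous to a corresponding stage of the coloring fixed-point proof, and then conclude with a renaming step that absorbs the swap of the special color from $s$ to $q$.

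The first stage is to characterize the strength relations of labels in $\Pi(z,s)$ according to $\edgeconst(z,s)$. I expect to show $\ell(\ccc) \leq \ell(\ccc')$ iff $\ccc' \subseteq \ccc$ (directly extending Lemma~\ref{lem:colorstrong}), that $\U$ is the maximum among the ``other/pointer'' labels under appropriate relations, and that $\X$ and $\M$ are mutually incomparable with the color labels in a way dictated by Conditions (\ref{cond:1})--(\ref{cond:3}). From this I can read off every right-closed subset $\gen{\ell}$ and, in particular, observe that the sets $\gen{\ell(\ccs(z) \setminus \{i\})}$ play the role of ``test labels'' just as the sets $\gen{\ell(R_i)}$ did in the coloring proof. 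The second stage is to compute $\re(\Pi(z,s))$ explicitly. The edge constraint $\ere$ is obtained by identifying the maximal configurations $\L_1 \s \dots \s \L_r$ of right-closed subsets for which every pointwise choice in $\L_1 \times \dots \times \L_r$ lies in $\edgeconst(z,s)$; this requires a case split on which of Conditions (\ref{cond:1})--(\ref{cond:3}) is witnessed, and I expect the resulting $\sre$ to consist of specific right-closed subsets (namely $\gen{\M}$, $\gen{\X}$, $\gen{\P}$, $\gen{\U}$, $\gen{\D}$, and the family $\{\gen{\ell(\ccc)} \mid \emptyset \neq \ccc \subseteq \ccs(z)\}$), possibly together with hybrid unions forced by the interplay of Conditions (\ref{cond:2}) and (\ref{cond:3}). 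The node constraint $\nre$ then follows directly from the existential quantifier applied to $\nodeconst(z,s)$.

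The third stage is to determine strength relations in $\re(\Pi(z,s))$ according to $\nre$ (analogous to Lemma~\ref{lem:colorrestrong}) and compute the double-generated sets $\gen{\gen{\ell}}$. The fourth stage is to compute $\rere(\re(\Pi(z,s)))$ in enough detail to describe it up to relaxation. For the node constraint, I plan to re-use the Hall's-theorem argument from the proof of Lemma~\ref{lem:colornode}: given $\B_1 \s \dots \s \B_\Delta \in \nrere$ that cannot be relaxed to any configuration of the form $\gen{\gen{\ell(\ccc)}}^{\Delta-|\ccc|+1} \s \gen{\gen{\U}}^{|\ccc|-1}$ (or of the other node-shapes $\M^\Delta$, $\P\s\U^{\Delta-1}$, $\D^{\Delta-1}\s\X$), I build a bipartite graph between the positions of $\B_1,\ldots,\B_\Delta$ and the ``test labels'' associated with colors of $\ccs(z)$, verify Hall's condition via the hypothesis that the configuration is not a relaxation of any of the target shapes, and then extract a choice in $\B_1 \times \dots \times \B_\Delta$ that violates every configuration in $\nodeconst(z,s)$, contradicting membership in $\nrere$. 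For the edge constraint, I expect the resulting $\erere$ to be described by the same three conditions as $\edgeconst(z,s)$, except that Properties (\ref{prop:c}) and (\ref{prop:d}) get loosened by exactly one in the slot corresponding to color $q$, which is precisely what produces $z'$. Finally, the renaming sends $\gen{\gen{\ell(\ccc)}}$ with $\ccc \neq \{q\}$ to $\ell(\ccc)$ (suitably re-indexed so that what was color $q$ now plays the special role in the target), sends $\gen{\gen{\ell(\{q\})}}$ to $\D$, and keeps $\M$, $\P$, $\U$, $\X$ (appropriately identified with their double-generated images) fixed; this is where the assumption $z_q \leq r-2$ is used, since it guarantees that the new $z'_q = z_q + 1 \leq r-1$ stays within the family, and the assumption $q \neq s$ is used to ensure the special color actually changes.

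\textbf{Expected main obstacle.} The technical crux is the computation of $\ere$ and $\erere$, because the three clauses of $\edgeconst(z,s)$ interact in subtle ways: a single configuration of right-closed subsets may satisfy the universal quantifier by combining choices from different clauses in different positions, and the special color $s$ is handled asymmetrically by Property (\ref{prop:c}). Proving maximality and ruling out spurious configurations will require a careful case analysis that keeps track, for each position, of which clauses of $\edgeconst(z,s)$ its chosen label can participate in. A closely related subtlety is matching the ``asymmetry slot'' of the target problem $\Pi(z',q)$: the renaming must arrange that $\D$ and the $q$-containing color sets in $\Pi(z,s)$ both land on labels whose combined count in the target is governed by $z'_q = z_q+1$, which is exactly why we can only increment the coordinate of a color distinct from $s$. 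I anticipate that these are the only new ingredients beyond the coloring argument, and that once the right relaxation/renaming is pinned down, the Hall's-theorem step and the maximality arguments will transfer with mostly bookkeeping changes.
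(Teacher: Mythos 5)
Your four-stage plan mirrors the paper's structure (strength relations, compute $\re$, strength relations again, then $\rere$ via Hall's theorem and a final renaming), and your Hall's-theorem idea for the node constraint is exactly the paper's mechanism. However, there is a genuine gap in the key structural step: you have not identified the \emph{primed-label} structure of $\Sigma_{\re(\Pi(z,s))}$. When you apply the universal quantifier to $\edgeconst(z,s)$, every maximal configuration has exactly one position carrying a label that is augmented by $\D$ or $\M$, and the resulting label set is \emph{not} $\{ \gen{\M}, \gen{\X}, \gen{\P}, \gen{\U}, \gen{\D} \} \cup \{ \gen{\ell(\ccc)} \}$ as you propose. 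It is $\{ \gen{\L}, \gen{\L}' \mid \L \in \Sigma(z,s) \setminus \{\D,\M\} \}$, where $\gen{\L}' := \gen{\L,\D}$ if $\L \leq \ell(\{s\})$ and $\gen{\L}' := \gen{\L,\M}$ otherwise. This doubling of the label set, driven by the asymmetric treatment of the special color $s$, is the central new ingredient that distinguishes the matching proof from the coloring proof, and without pinning it down neither the strength relations in $\re(\Pi(z,s))$ nor the eventual hyperedge constraint can be described.

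A second, downstream consequence of this gap is that your renaming is wrong in two concrete ways. First, the label $\D$ of the target $\Pi(z',q)$ is not the image of $\gen{\gen{\ell(\{q\})}}$; in the paper it is the image of $\gen{\gen{\U}'}$ (and $\M$ is the image of $\gen{\gen{\X}'}$, not a fixed double image of $\M$). Second, for each $\ccc$ containing $q$, the label $\ell(\ccc)$ of the target cannot arise from the simple double-generated set $\gen{\gen{\ell(\ccc)}}$: the paper shows you must use the \emph{hybrid} union $\gen{\gen{\ell(\ccc)}, \gen{\ell(\ccc \setminus \{q\})}'}$, which is precisely what makes the Hall's-condition check for the node constraint succeed and what increments $z_q$ by one in the hyperedge constraint. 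Until you discover the $\gen{\L}'$ structure and these hybrid target labels, your Hall's-theorem argument has the wrong ``test labels'' (you would need to test $\gen{\ell(R_i)}'$ for $i\neq q$ but $\gen{\ell(R_q)}$ for $i=q$) and your renaming does not produce $\Pi(z',q)$.
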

Since the proof of this statement is a tedious case analysis, we defer its proof to the end of the section. We now use this statement to derive our lower bounds.

	\subsection{Proving the Lower Bounds}\label{sec:sublb}
In this section, we prove our lower bounds.
We start by showing that all problems in our problem family cannot be solved in $0$ rounds in the port numbering model.

\begin{lemma}\label{lem:zerorounds}
	Let $z = (z_1,\ldots,z_k)$ be a vector of $2 \le k \le \Delta-1$ nonnegative integers $z_i \le r-1$, and let $s \in \{1,\ldots,k\}$. The problem $\Pi(z,s)$ cannot be solved in $0$ rounds in the deterministic port numbering model.
\end{lemma}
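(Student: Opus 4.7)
The plan is to follow the same template as the $0$-round argument used for the coloring fixed point (Lemma~\ref{lem:fpzerorounds}): exploit the fact that, in the deterministic port numbering model, all degree-$\Delta$ nodes share the same view, then enumerate the (few families of) allowed node configurations and show that each of them contains some label $\L$ for which the hyperedge configuration $\L^r$ is not in $\edgeconst(z,s)$. Finally, construct an instance in which $r$ degree-$\Delta$ nodes all send the corresponding port into the same hyperedge, yielding a violated hyperedge constraint. This last step is possible on a $\Delta$-regular $r$-uniform linear hypertree, because such hypertrees can be extended around a single ``bad'' hyperedge so that its $r$ incident nodes all have degree $\Delta$ and use the prescribed port toward it.

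For the enumeration, since a deterministic $0$-round PN algorithm amounts to a fixed function $f\colon \{1,\dots,\Delta\}\to\Sigma(z,s)$ whose image forms a configuration in $\nodeconst(z,s)$, I would handle each of the four families in turn. For $\M^\Delta$, choose the $\M$ port; the resulting hyperedge $\M^r$ violates property~(\ref{prop:b}) of condition~(\ref{cond:3}) (and trivially fails conditions (\ref{cond:1}) and (\ref{cond:2})) since $r\geq 3>1$. For $\P\s\U^{\Delta-1}$, pick the $\P$ port; then $\P^r$ violates property~(\ref{prop:a}). For $\D^{\Delta-1}\s\X$, pick the $\D$ port (not the $\X$ port, which I expect to be the subtle point noted below); then $\D^r$ again violates property~(\ref{prop:b}). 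For $\ell(\ccc)^{\Delta-|\ccc|+1}\s\U^{|\ccc|-1}$ with $\emptyset\neq\ccc\subseteq\ccs(z)$, pick an $\ell(\ccc)$ port; then $\ell(\ccc)^r$ fails condition~(\ref{cond:3}) because, fixing any color $i\in\ccc$, all $r$ positions contain a label $\ell(\ccc')$ with $i\in\ccc'$, but properties~(\ref{prop:c}) and~(\ref{prop:d}) cap this count at $z_i\leq r-1<r$ (using (\ref{prop:c}) when $i=s$ and (\ref{prop:d}) otherwise).

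The one place requiring a bit of care is the family $\D^{\Delta-1}\s\X$: choosing the $\X$ port instead of the $\D$ port would \emph{not} work, because $\X^r$ actually satisfies condition~(\ref{cond:2}) (take any two indices as $j$ and $j'$; every remaining index carries $\X\notin\{\D,\M\}$). This is the main (mild) obstacle in the case analysis, and it is what forces the argument to single out the $\D$ coordinate of this configuration. After verifying this, the contradiction on the constructed hypertree completes the proof, giving the desired statement that $\Pi(z,s)$ is not $0$-round solvable in the deterministic PN model for any admissible $z$ and $s$.
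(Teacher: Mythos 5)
Your proposal is correct and takes essentially the same approach as the paper: enumerate the four families of node configurations, pick the appropriate port (one carrying $\M$, $\P$, $\D$, or $\ell(\ccc)$, respectively), and show that the monochromatic hyperedge configuration $\L^r$ fails all three conditions of $\edgeconst(z,s)$. Your added observation that the $\X$ port must be avoided in the $\D^{\Delta-1}\s\X$ case (since $\X^r$ does satisfy Condition~\ref{cond:2}) is accurate and makes explicit a subtlety that the paper glosses over with ``as before.''
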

\begin{proof}
	All nodes that run a $0$-round algorithm have the same view (they just know their degree, the size of the graph, and the parameters $\Delta$ and $r$). Hence, any deterministic $0$-round algorithm uses the same configuration in $\nodeconst(z,s)$ on all nodes. Thus, in order to prove the statement, we consider all possible configurations and we show that they cannot be properly used in a $0$-round algorithm.
	\begin{itemize}
		\item $\M^{\Delta}$: consider $r$ nodes connected to the same hyperedge. They would all output $\M$ on such hyperedge, that would hence have the configuration $\M^{r} \not\in \edgeconst(z,s)$.
		\item $\P \s \U^{\Delta - 1}$: since the algorithm is deterministic, then all nodes must output $\P$ on the same port. W.l.o.g., let this port be the port number $1$. If we connect the port $1$ of $r$ nodes to the same hyperedge, then the hyperedge would have the configuration $\P^{r} \not\in \edgeconst(z,s)$.
		\item $\D^{\Delta - 1} \s \X$: as before, we can obtain a hyperedge labeled with the configuration  $\D^{r} \not\in \edgeconst(z,s)$.
		\item $\ell(\ccc)^{\Delta - |\ccc| + 1} \s \U^{|\ccc| - 1}$ for each $\emptyset \neq \ccc \subseteq \ccs(z)$: since $|\ccc| \le \Delta-1$, then this configuration contains $\ell(\ccc)$ at least twice. Hence, we can obtain a hyperedge labeled with the configuration  $\ell(\ccc)^{r}$, which is not in $\edgeconst(z,s)$ by the assumption that $z_i \le r-1$ for all $i \in \ccc$.
	\end{itemize}
\end{proof}

We are now ready to prove our main result of this section.
\matchinglower*
\begin{proof}
	We show that we can build a problem sequence satisfying \Cref{thm:lifting}, such that the asymptotic complexity of $\Pi_0$ is not higher than the one of the hypergraph maximal matching problem.
	
	Consider a problem sequence $\Pi_0, \Pi_1, \dots, \Pi_{(\Delta - 1)(r - 2)}$ such that $\Pi_0 = \Pi((1, 1, \dots, 1), \Delta - 1)$, $\Pi_{(\Delta - 1)(r - 2)} = \Pi((r - 1, r - 1, \dots, r - 1), \Delta - 1)$, and for any two subsequent problems $\Pi_j, \Pi_{j + 1}$, there exist vectors $z, z'$ of length $\Delta -  1$ and colors $s, s' \in \{ 1, \dots, \Delta - 1 \}$ such that
	\begin{itemize}
		\item $\Pi_j = \Pi(z, s)$ and $\Pi_{j + 1} = \Pi(z', s')$,
		\item $s \neq s'$,
		\item $z'_{s'} = z_{s'} + 1$, and
		\item $z'_i = z_i$ for all $1 \leq i \leq \Delta - 1$ satisfying $i \neq s'$. 
	\end{itemize}
	Clearly, such a problem sequence exists as we can go from the length-($\Delta - 1$) vector $(1, 1, \dots, 1)$ to the length-($\Delta - 1$) vector $(r - 1, r - 1, \dots, r - 1)$ in $(\Delta - 1)(r - 2)$ steps of increasing a single entry (that is a different entry than in the previous step) by $1$ (such that we choose to increase the $(\Delta - 1)$th entry in the last but not the first step).
	
	Observe that, each problem of the sequence is not $0$-round solvable in the port numbering model by \Cref{lem:zerorounds}. Also, the number of labels of each problem is bounded by $2^{O(\Delta r)}$. Also, by \Cref{lem:onestep}, each problem $\Pi_{i+1}$ is a relaxation of $\rere(\re(\Pi_i))$, and also the intermediate problem, $\re(\Pi_i)$, has a number of labels bounded by $2^{O(\Delta r)}$. Hence, \Cref{thm:lifting} applies, and since $\log_{\Delta r} \log f(\Delta r) = O(1)$, we obtain a lower bound for $\Pi_0$ of  $\Omega(\min\{\Delta \cdot r,\log_{\Delta r} n\})$ in the \LOCAL model for deterministic algorithms, and $\Omega(\min\{\Delta \cdot r,\log_{\Delta r} \log n\})$ for randomized ones.
	
	We now prove that, given an algorithm $A$ with complexity $T$ for solving hypergraph maximal matching, we can turn it into an algorithm with complexity $O(T)$ for solving $\Pi_0$, implying the claim.
	In order to solve $\Pi_0$ on some hypergraph $H$, we start by simulating $A$ in the hypergraph $H'$ obtained by reversing the role of nodes and hyperedges of $H$. This can be performed in $O(T)$ rounds. Then, given a solution for the hypergraph maximal matching problem on $H'$, we solve $\Pi_0$ as follows: 
	\begin{itemize}
		\item Nodes of $H$ (that is, hyperedges of $H'$) that are in the matching, output $\M^\Delta$.
		\item Each node $v$ not in the matching must have at least one neighboring hyperedge $h$ that is incident to some other node $u$ in the matching. Node $v$ outputs $\P$ on $h$ and $\U$ on all the other incident hyperedges.
	\end{itemize}
	Observe that, on the nodes, we only use configurations $\M^\Delta$ and $\P \U^{\Delta-1}$, which are allowed by $\Pi_0$. Then, on each hyperedge, we obtain that there is at most one $\M$, and that a $\P$ is present only if there is also an $\M$, while all the other labels are $\U$. Hence, also on the hyperedges we obtain only configurations allowed by $\Pi_0$.
\end{proof}

\subsection{Relations Between Labels}
In the rest of the section we prove \Cref{lem:onestep}. We start by proving a relation between the labels.
	From the description of the hyperedge constraint, we can extract the strength relations w.r.t.\ $\edgeconst(z,s)$ by simply checking for any two labels $\L, \L' \in \Sigma(z,s)$ whether every hyperedge configuration containing $\L$ remains a configuration in $\edgeconst(z,s)$ if we replace $\L$ by $\L'$. An example is shown in \Cref{fig:diag1}.
	
\begin{figure}[h]
	\centering
	\includegraphics[width=0.9\textwidth]{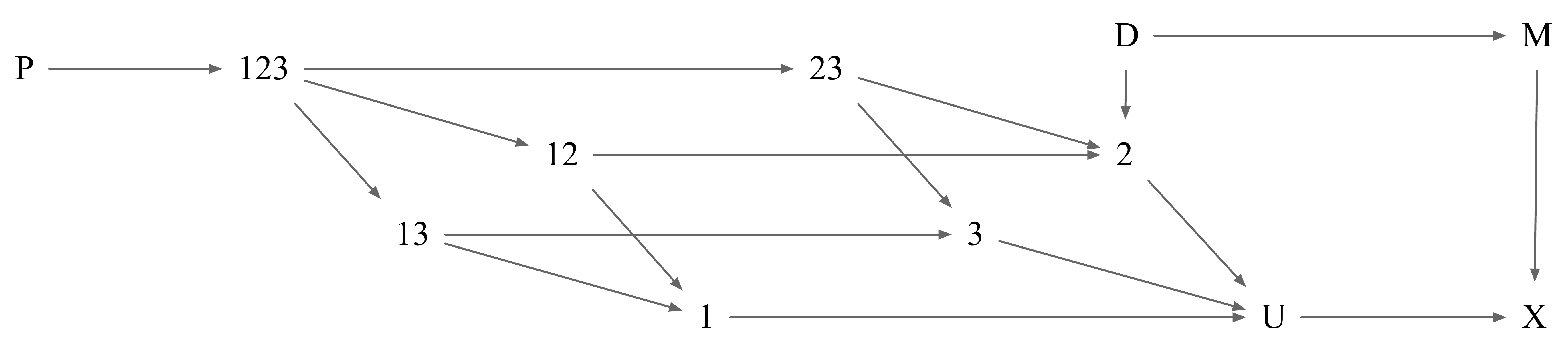}
	\caption{The edge diagram of the problems in the family satisfying $s = 2$ and $\len(z) = 3$. Numbers correspond to color sets, and $\ell(\cdot)$ is omitted for clarity.}
	\label{fig:diag1}
\end{figure} 

	\begin{lemma}\label{lem:pistrong}
		The following collection lists all strength relations (according to $\edgeconst(z,s)$) between distinct labels in $\Sigma(z,s)$.
		\begin{itemize}
			\item $\P < \ell(\ccc)$ for each $\emptyset \neq \ccc \subseteq \ccs(z)$
			\item $\P < \U$
			\item $\ell(\ccc) < \ell(\ccc')$ for any two $\ccc, \ccc' \in 2^{\ccs(z)} \setminus \{ \emptyset \}$ satisfying $\ccc' \subsetneq \ccc$
			\item $\ell(\ccc) < \U$ for each $\emptyset \neq \ccc \subseteq \ccs(z)$
			\item $\D < \M$
			\item $\D < \ell(\{ s \})$
			\item $\D < \U$
			\item $\L < \X$ for each $\L \in \Sigma(z,s) \setminus \{ \X \}$
		\end{itemize}
	\end{lemma}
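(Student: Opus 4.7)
The plan is to prove the lemma through a systematic case analysis based on the three conditions (\ref{cond:1})--(\ref{cond:3}) defining $\edgeconst(z,s)$. The argument splits into two parts: (A) verify that each of the eight listed relations $\L < \L'$ holds, and (B) show that no other pair of distinct labels in $\Sigma(z,s)$ is related by $\leq$. For part (A), given a claimed relation $\L < \L'$, I would first argue that replacing $\L$ by $\L'$ in an arbitrary configuration of $\edgeconst(z,s)$ yields another configuration of $\edgeconst(z,s)$, and then exhibit a witness configuration containing $\L'$ that turns invalid when $\L'$ is replaced by $\L$.

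Several uniform observations streamline part (A). First, $\X$ can play the role of the ``arbitrary'' slot in condition (\ref{cond:2}), so replacing any label by $\X$ upgrades a condition-(\ref{cond:1}) or condition-(\ref{cond:3}) configuration into a valid condition-(\ref{cond:2}) one, proving $\L < \X$. Second, $\P$ is forbidden under condition (\ref{cond:3}) by property (\ref{prop:a}), so any configuration containing $\P$ satisfies condition (\ref{cond:1}) or (\ref{cond:2}), in which $\P$ occupies a non-$\{\D,\M,\X\}$ slot and can be freely replaced by $\U$ or by any $\ell(\ccc)$. Third, for $\ccc' \subsetneq \ccc$, the color-counting constraints (\ref{prop:c}) and (\ref{prop:d}) count $\ell(\ccc')$ for a strict subset of the colors counted for $\ell(\ccc)$, so the replacement only relaxes condition (\ref{cond:3}); similarly, $\ell(\ccc) < \U$ is obtained since $\U$ does not contribute to any count. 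Fourth, $\D$ appears only as the arbitrary slot of condition (\ref{cond:2}) or under condition (\ref{cond:3}) with at most one occurrence (by (\ref{prop:b})); replacing $\D$ by $\M$ upgrades a condition-(\ref{cond:3}) configuration to condition (\ref{cond:1}), by $\ell(\{s\})$ keeps the $s$-count in (\ref{prop:c}) unchanged while removing the (\ref{prop:b}) cap, and by $\U$ removes both contributions. For strictness, compact witnesses suffice: for instance, $\M \s \ell(\{s\})^{r-1}$ lies in $\edgeconst(z,s)$ by condition (\ref{cond:1}) but its $\D$-replacement $\D \s \ell(\{s\})^{r-1}$ fails (\ref{prop:c}) since $r > z_s$, and $\ell(\ccc) \s \U^{r-1}$ (for $\ccc$ with all $z_i \geq 1$) witnesses $\P < \ell(\ccc)$ because the $\P$-replacement satisfies no condition at all.

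For part (B), I would walk through the remaining pairs of distinct labels and display witnesses proving incomparability in both directions. Typical witnesses are $\M \s \U^{r-1} \in \edgeconst(z,s)$ versus $\M^2 \s \U^{r-2} \notin \edgeconst(z,s)$ (showing $\U \nleq \M$), analogous $\M$-vs-$\ell(\ccc)$ swaps, two-$\ell$ configurations for pairs $\ell(\ccc), \ell(\ccc')$ with incomparable color sets (exploiting the fact that (\ref{prop:d}) depends on which specific colors appear), and $\M$-vs-$\D$ comparisons using that $\D$ lacks the $\M$-role in condition (\ref{cond:1}). The main obstacle is purely bookkeeping rather than conceptual depth: there are roughly $\binom{|\Sigma(z,s)|}{2}$ label pairs, each potentially requiring checks across all three conditions, and several edge cases (degenerate $z_s = 0$, singleton color sets, or color sets containing some $i$ with $z_i = 0$) need to be handled individually. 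No single case is technically hard, but managing the case split consistently and keeping track of which witnesses remain valid under the assumed parameter ranges is where the bulk of the proof effort will lie.
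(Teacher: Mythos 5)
Your proposal follows essentially the same route as the paper's proof: a systematic case analysis over the three edge-constraint conditions, proving each listed $\L<\L'$ by showing the replacement $\L\mapsto\L'$ preserves membership in $\edgeconst(z,s)$, and showing strictness / incomparability by exhibiting a configuration that becomes invalid under the reverse replacement. The specific witnesses differ slightly (e.g.\ you use $\M\s\ell(\{s\})^{r-1}$ where the paper uses $\M\s\P^{r-1}$ to show $\M\nleq\D$), but these are interchangeable and both correct.

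One point worth flagging, which you already half-noticed: the lemma quantifies over all vectors $z$ of \emph{nonnegative} integers, but if $z_i=0$ for some $i\in\ccc$ then the label $\ell(\ccc)$ can never appear in a Condition~\ref{cond:3} configuration and so occupies exactly the same set of ``non-$\D$-non-$\M$'' slots as $\P$ in Conditions~\ref{cond:1} and \ref{cond:2}; in that degenerate case $\P$ and $\ell(\ccc)$ are \emph{equally} strong and the listed relation $\P<\ell(\ccc)$ fails. The paper's own witnesses (such as $\ell(\ccc)\s\U^{r-1}$ and $\ell(\ccc')\s\ell(\{i\})^{z_i}\s\U^{r-z_i-1}$) silently assume $z_i\geq 1$ in the same places you do, so this is a latent imprecision in the paper rather than an error you introduced; it causes no harm downstream because the problem sequence actually used in Theorem~\ref{thm:matchinglower} starts from $(1,\dots,1)$ and only increments entries, so every instance analyzed has $z_i\geq 1$. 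If you write out the proof, it would be cleanest to simply state this assumption explicitly at the outset.
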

	
	\begin{proof}
		We start by comparing the strength of $\X$ with the strength of all other labels.
		The definition of $\edgeconst(z,s)$ implies that in any configuration $\L_1 \s \dots \s \L_r \in \edgeconst(z,s)$ there is at most one $\L_j$ with $\L_j \in \{ \D, \M \}$.
		Hence, replacing an arbitrary $\L_{j'}$ with $\X$ will yield a configuration from $\edgeconst(z,s)$, by Condition~\ref{cond:2}, and we obtain $\L \leq \X$ for each $\L \in \Sigma(z,s) \setminus \{ \X \}$.
		To show that $\X \nleq \L$ for each $\L \in \Sigma(z,s) \setminus \{ \X \}$, it suffices to observe that $\X \s \M \s \P^{r - 2} \in \ezs$, which implies $\X \nleq \M$ (as replacing $\X$ with $\M$ does not yield a configuration from $\ezs$), and $\X \s \P^{r - 1} \in \ezs$, which implies $\X \nleq \L$ for each $\L \in \Sigma(z,s) \setminus \{ \M, \X \}$.

		Next, we compare the strength of $\P$ with the strength of the labels from $\szs \setminus \{ \P, \X \}$.
		From the definition of $\edgeconst(z,s)$, we can see that the only configurations containing $\P$ are of the form $\P \s \M \s \L_1 \s \dots \s \L_{r - 2}$ or $\P \s \X \s \L_1 \s \dots \s \L_{r - 2}$ where, in the former case, $\L_j \notin \{ \D, \M \}$ for each $1 \leq j \leq r - 2$, and, in the latter case, $\L_j \notin \{ \D, \M \}$ for each $2 \leq j \leq r - 2$.
		In either case, replacing $\P$ with some arbitrary label from $\szs \setminus \{ \D, \M \}$ yields a configuration contained in $\ezs$, by Conditions~\ref{cond:1} and \ref{cond:2}.
		It follows that $\P \leq \L$ for each $\L \in \szs \setminus \{ \D, \M \}$.
		Moreover, the configuration $\M \s \P^{r - 1} \in \ezs$ certifies that $\P \nleq \D$ and $\P \nleq \M$ since neither of the two configurations $\M \s \D \s \P^{r - 2}$ and $\M^2 \s \P^{r - 2}$ is contained in $\ezs$, by Conditions~\ref{cond:1}, \ref{cond:2} and \ref{prop:b}.
		In order to show that $\L \nleq \P$ for each $\L \in \szs \setminus \{ \P, \X \}$, it suffices to observe that $\L \s \U^{r - 1} \in \ezs$ for each $\L \in \szs \setminus \{ \P, \X \}$ (by Condition~\ref{cond:3}), while $\P \s \U^{r - 1} \notin \ezs$ (by Conditions~\ref{cond:1}, \ref{cond:2} and \ref{prop:a}).

		We continue by comparing the strength of $\D$ and the strength of $\M$ with each other and with the strength of the labels from $\szs \setminus \{ \D, \M, \P, \X \}$.
		As already observed above, in any configuration $\L_1 \s \dots \s \L_r \in \ezs$ there is at most one $\L_j$ with $\L_j \in \{ \D, \M \}$.
		Hence, we obtain $\D \s \M \s \U^{r - 2} \notin \ezs$ and $\M^2 \s \U^{r - 2} \notin \ezs$.
		Since, by Condition~\ref{cond:1}, $\L \s \M \s \U^{r - 2} \in \ezs$ for each $\L \in \szs \setminus \{ \D, \M\}$, it follows that $\L \nleq \L'$ for each $\L \in \szs \setminus \{ \D, \M\}$ and $\L' \in \{ \D, \M \}$.
		Moreover, since $\M \s \P^{r - 1} \in \ezs$ by Condition~\ref{cond:1}, but $\L \s \P^{r - 1} \notin \ezs$ for each $\L \in \szs \setminus \{ \M, \P, \X \}$ by Conditions~\ref{cond:1}, \ref{cond:2} and \ref{prop:a}, we obtain $\M \nleq \L$ for each $\L \in \szs \setminus \{ \M, \P, \X \}$.

		To show that $\D \nleq \L$ for each $\L \in \szs \setminus \{ \D, \M, \P, \U, \X, \ell(\{ s \}) \}$, assume that $\szs \setminus \{ \D, \M, \P, \U, \X, \ell(\{ s \}) \} \neq \emptyset$ (otherwise we are done) and consider some arbitrary $\L \in \szs \setminus \{ \D, \M, \P, \U, \X, \ell(\{ s \}) \}$.
		By the definition of $\szs$, it follows that $\L = \ell(\ccc)$ for some $\ccc \in 2^{\ccs(z)} \setminus \{ \emptyset, \{ s \} \}$.
		Hence, there exists some $i \in \ccs(z)$ satisfying $i \neq s$ such that $i \in \ccc$; consider such an $i$.
		Consider the configuration $\D \s \ell(\{ i \})^{z_i} \s \U^{r - z_i - 1}$ (which is well-defined as the definition of problem $\Pi(z,s)$ specifies that $z_i \leq r - 1$).
		By Condition~\ref{cond:3}, it is contained in $\ezs$ whereas the configuration $\ell(\ccc) \s \ell(\{ i \})^{z_i} \s \U^{r - z_i - 1}$ is not contained in $\ezs$, by Conditions~\ref{cond:1}, \ref{cond:2} and \ref{prop:d}.
		Therefore, $\D \nleq \ell(\ccc)$, and we obtain that $\D \nleq \L$ for each $\L \in \szs \setminus \{ \D, \M, \P, \U, \X, \ell(\{ s \}) \}$.

		To show that $\D \leq \L$ for each $\L \in \{ \M, \U, \ell(\{ s \}) \}$, it suffices to observe that if some configuration $\L_1 \s \dots \s \L_r$ that contains $\D$ satisfies one of the six conditions (\ref{cond:1}, \ref{cond:2}, \ref{prop:a}, \ref{prop:b}, \ref{prop:c}, \ref{prop:d}), then the same configuration where $\D$ is replaced by $\L$ satisfies the same condition, for each $\L \in \{ \M, \U, \ell(\{ s \}) \}$.

		It remains to compare the strength of the labels in $\szs \setminus \{ \D, \M, \P, \U, \X \}$ with each other and with the strength of $\U$.
		Similarly to before, it is straightforward to check that if some configuration $\L_1 \s \dots \s \L_r$ that contains some label $\ell(\ccc)$ satisfies one of the six conditions (\ref{cond:1}, \ref{cond:2}, \ref{prop:a}, \ref{prop:b}, \ref{prop:c}, \ref{prop:d}), then the same configuration where $\ccc$ is replaced by $\U$ satisfies the same condition (irrespective of the choice of $\ccc$).
		Hence, $\L \leq \U$ for each $\L \in \szs \setminus \{ \D, \M, \P, \U, \X \}$.

		In order to show that $\U \nleq \L$ for each $\L \in \szs \setminus \{ \D, \M, \P, \U, \X \}$, consider an arbitrary label $\L \in \szs \setminus \{ \D, \M, \P, \U, \X \}$.
		By the definition of $\szs$, we have $\L = \ell(\ccc)$ for some $\emptyset \neq \ccc \subseteq \ccs(z)$.
		Let $i$ be some arbitrary color in $\ccc$, and consider the configuration $\ell(\{ i \})^{z_i} \s \U^{r - z_i}$ (which contains at least one $\U$ since $z_i \leq r - 1$).
		By Condition~\ref{cond:3}, this configuration is contained in $\ezs$ whereas the configuration $\ell(\ccc) \s \ell(\{ i \})^{z_i} \s \U^{r - z_i - 1}$ (obtained by replacing one $\U$ by $\ell(\ccc)$) is not contained in $\ezs$, by Conditions~\ref{cond:1}, \ref{cond:2} and one of \ref{prop:c}, \ref{prop:d} (depending on whether $i = s$ or $i \neq s$).
		Hence, $\U \nleq \L$ for each $\L \in \szs \setminus \{ \D, \M, \P, \U, \X \}$.

		Finally, consider two distinct labels $\L, \L' \in \szs \setminus \{ \D, \M, \P, \U, \X \}$.
		As above, we know that $\L = \ell(\ccc)$ and $\L' = \ell(\ccc')$ for some $\ccc, \ccc' \in 2^{\ccs(z)} \setminus \{ \emptyset \}$ satisfying $\ccc \neq \ccc'$.
		Consider first the case that $\ccc \nsubseteq \ccc'$, which implies that there exists some color $i \in \ccc \setminus \ccc'$.
		Similarly to before, we can observe that $\ell(\ccc') \s \ell(\{ i \})^{z_i} \s \U^{r - z_i - 1} \in \ezs$ by Condition~\ref{cond:3} and the fact that $i \notin \ccc'$, whereas $\ell(\ccc) \s \ell(\{ i \})^{z_i} \s \U^{r - z_i - 1} \notin \ezs$ by Condition~\ref{prop:c} or \ref{prop:d} (depending on whether $i = s$ or $i \neq s$).
		Hence, $\ell(\ccc') \nleq \ell(\ccc)$.

		Now, consider the other case, namely that $\ccc \subsetneq \ccc'$.
		Again, we observe that if some configuration $\L_1 \s \dots \s \L_r$ that contains $\ccc'$ satisfies one of the six conditions (\ref{cond:1}, \ref{cond:2}, \ref{prop:a}, \ref{prop:b}, \ref{prop:c}, \ref{prop:d}), then the same configuration where the $\ccc'$ is replaced by $\ccc$ satisfies the same condition.
		Thus, $\ell(\ccc') \leq \ell(\ccc)$.
		By symmetry, we obtain $\ell(\ccc) \nleq \ell(\ccc')$ if $\ccc' \nsubseteq \ccc$, and $\ell(\ccc) \leq \ell(\ccc')$ if $\ccc' \subsetneq \ccc$.

		The above discussion shows that the strength relations according to $\ezs$ are exactly those listed in the lemma.
	\end{proof}

	Using Lemma~\ref{lem:pistrong} we can compute the set $\gen{\L}$ for each label $\L \in \szs$.

	\begin{corollary}\label{cor:pigen}
		We have
		\begin{itemize}
			\item $\gen{\X} = \{ \X \}$,
			\item $\gen{\M} = \{ \M, \X \}$,
			\item $\gen{\U} = \{ \U, \X \}$,
			\item $\gen{\D} = \{ \D, \M, \U, \X, \ell(\{ s \}) \}$,
			\item $\gen{\P} = \{ \P, \U, \X \} \cup \{ \ell(\ccc) \mid \emptyset \neq \ccc \subseteq \ccs(z) \}$, and
			\item $\gen{\ell(\ccc)} = \{ \U, \X \} \cup \{ \ell(\ccc') \mid \emptyset \neq \ccc' \subseteq \ccc \}$ for each $\emptyset \neq \ccc \subseteq \ccs(z)$.
		\end{itemize}
	\end{corollary}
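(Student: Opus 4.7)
The plan is to read the set $\gen{\L}$ for each label $\L \in \szs$ directly off of Lemma~\ref{lem:pistrong}, since by definition $\gen{\L}$ consists of $\L$ together with every label that is at least as strong as $\L$ according to $\ezs$. The only care needed is to take the transitive closure of the strictly-stronger relations listed in Lemma~\ref{lem:pistrong}: if $\L < \L'$ and $\L' < \L''$ appear in the list, then $\L < \L''$ must be included too (transitivity of the strength relation is immediate from its definition by substitution into configurations).

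Proceeding label by label, I would first dispose of the easy cases. For $\X$, the lemma shows $\L < \X$ for every other $\L$, so no label is strictly stronger than $\X$; thus $\gen{\X}=\{\X\}$. For $\M$ and $\U$, the only strict relation involving them on the ``stronger'' side is $\M<\X$ and $\U<\X$ respectively, giving $\gen{\M}=\{\M,\X\}$ and $\gen{\U}=\{\U,\X\}$. For $\D$, the lemma lists $\D<\M$, $\D<\ell(\{s\})$, $\D<\U$, and $\D<\X$; taking transitive closure adds nothing new (since each of $\M,\U,\ell(\{s\})$ only gains $\X$ as a further successor, which is already present), yielding $\gen{\D}=\{\D,\M,\U,\X,\ell(\{s\})\}$.

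The remaining two cases are the substantive ones but still purely bookkeeping. For $\P$, the lemma gives $\P<\ell(\ccc)$ for every $\emptyset\neq\ccc\subseteq\ccs(z)$, as well as $\P<\U$ and $\P<\X$; each $\ell(\ccc)$ is in turn strictly weaker than $\U$ and $\X$, so the transitive closure adds nothing beyond the labels already listed, giving $\gen{\P}=\{\P,\U,\X\}\cup\{\ell(\ccc)\mid\emptyset\neq\ccc\subseteq\ccs(z)\}$. For a label $\ell(\ccc)$ with $\emptyset\neq\ccc\subseteq\ccs(z)$, the lemma provides $\ell(\ccc)<\ell(\ccc')$ exactly when $\emptyset\neq\ccc'\subsetneq\ccc$, together with $\ell(\ccc)<\U$ and $\ell(\ccc)<\X$; no other label is strictly stronger than $\ell(\ccc)$ (in particular $\P,\M,\D$ are not, as they are either incomparable with or weaker than $\ell(\ccc)$ by Lemma~\ref{lem:pistrong}). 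Including $\ell(\ccc)$ itself by reflexivity produces $\gen{\ell(\ccc)}=\{\U,\X\}\cup\{\ell(\ccc')\mid\emptyset\neq\ccc'\subseteq\ccc\}$, as claimed.

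There is no real obstacle here: the only thing to guard against is omitting a transitive consequence or mistakenly including a label that is merely incomparable. A brief sanity check that the lemma's list is exhaustive (so in particular $\M$ is not stronger than any $\ell(\ccc)$, and labels $\ell(\ccc),\ell(\ccc')$ with $\ccc,\ccc'$ incomparable are mutually incomparable) suffices to confirm that each displayed set is neither too big nor too small.
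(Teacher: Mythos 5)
Your proposal is correct and takes essentially the same approach as the paper, which derives the corollary by reading off the right-closed sets from the strength relations in Lemma~\ref{lem:pistrong} (the paper states this in one line without a formal proof). Your more explicit label-by-label bookkeeping, including the transitive-closure check, is a faithful expansion of that argument.
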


\subsection{Computing $\re(\Pi(z,s))$}
	After defining our problem family and collecting some basic facts about it, the next step is to examine how the problems from this family behave under applying $\re(\cdot)$.
	Specifically, in Lemma~\ref{lem:repidef}, we compute $\re(\Pi(z,s))$.
	To this end, for any $\L \in \szs \setminus \{ \D, \M \}$, define
	\[
		\gen{\L}' :=
		\begin{cases}
			\gen{\L, \D} &\text{ if } \L \leq \ell(\{ s \}), \\
			\gen{\L, \M} &\text{ if } \L \nleq \ell(\{ s \}).
		\end{cases}
	\]

	\begin{lemma}\label{lem:repidef}
		The set $\sre$ of output labels of $\repizs$ is given by
		\[
			\{ \gen{\L}, \gen{\L}' \mid \L \in \szs \setminus \{ \D, \M \} \}.
		\]
		Moreover, $\gen{\L} \neq \gen{\L'}'$ for any $\L, \L' \in \szs \setminus \{ \D, \M \}$.

		The hyperedge constraint $\ere$ of $\repizs$ consists of all configurations $\gen{\L_1}' \s \gen{\L_2} \s \gen{\L_3} \dots \gen{\L_r}$ satisfying $\L_j \in \szs \setminus \{ \D, \M \}$ for all $1 \leq j \leq r$ and at least one of the following two conditions.
		\begin{enumerate}
			\item\label{cond:easy} There exists some index $1 \leq j \leq r$ such that $\L_j = \X$ and $\L_{j'} = \P$ for each $j' \neq j$.
			\item\label{cond:hard} Both of the following properties hold.
			\begin{enumerate}
				\item\label{prop:p} $\L_j \notin \{ \P, \X \}$ for each $1 \leq j \leq r$.
				\item\label{prop:i} For each $i \in \ccs(z)$, there are exactly $z_i$ indices $j \in \{1, \dots, r\}$ such that $\L_j = \ell(\ccc)$ for some color set $\ccc$ containing color $i$.
			\end{enumerate}
		\end{enumerate}
		The node constraint $\nre$ of $\repizs$ consists of all configurations $\B_1 \s \dots \s \B_{\Delta}$ of labels from $\{ \gen{\L}, \gen{\L}' \mid \L \in \szs \setminus \{ \D, \M \} \}$ such that there exists a choice $(\A_1, \dots, \A_{\Delta}) \in \B_1 \times \dots \times \B_{\Delta}$ satisfying $\A_1 \s \dots \s \A_{\Delta} \in \nzs$.
	\end{lemma}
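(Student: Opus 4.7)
The plan is to first compute the hyperedge constraint $\ere$, then read off $\sre$ from the labels appearing in $\ere$, and finally obtain $\nre$ directly from the definition of $\re(\cdot)$ applied to $\nzs$. The main technical work is characterizing $\ere$.

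For the forward inclusion, I would verify both listed cases directly. In the easy case, any choice from $\gen{\X}'$ and the $\gen{\P}$'s produces at most one label from $\{\D,\M\}$ accompanied by an $\X$, satisfying Condition~2 of $\ezs$. In the hard case, using Corollary~\ref{cor:pigen}, any choice respects Condition~3 of $\ezs$: $\P$-freeness holds since $\P \in \gen{\L}$ only when $\L = \P$; the $\{\D,\M\}$-bound holds since only the primed slot can supply such a label; and the color-count bounds follow because $\ell(\ccc') \in \gen{\ell(\ccc)}$ forces $\ccc' \subseteq \ccc$, so every color appearing in a chosen label already appears in the base label. Maximality is then checked by showing that enlarging any slot either introduces a second $\P$ or overshoots a color bound.

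The converse inclusion is the main obstacle. Fix $\B_1 \s \dots \s \B_r \in \ere$; by Observation~\ref{obs:rcs}, each $\B_j$ is right-closed w.r.t.\ $\ezs$ and hence a union of the basic sets listed in Corollary~\ref{cor:pigen}. I would split on whether some $\B_j$ contains $\P$: if so, then $\B_j \supseteq \gen{\P}$, and by picking $\P$ at slot $j$ and exploring which labels remain compatible with $\ezs$ at the other slots, the configuration is forced into the easy form. Otherwise, every chosen tuple must satisfy Condition~3 of $\ezs$, so at most one slot may carry a label from $\{\D,\M\}$. The distinguished color $s$ enters precisely here: since $\gen{\D}$ contains $\ell(\{s\})$ but no $\ell(\ccc)$ with $s \notin \ccc$, the label $\D$ is compatible only with a base $\L$ satisfying $\L \leq \ell(\{s\})$, yielding $\gen{\L}' = \gen{\L,\D}$; otherwise only $\M$ can be added, yielding $\gen{\L}' = \gen{\L,\M}$. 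The color-count equalities are then pinned down in two directions: too few color-$i$ slots would let us pick more than $z_i$ copies of $\ell(\{i\})$ via a Hall-type selection analogous to the one used in Section~\ref{sec:fixedpoint}, violating $\ezs$; too many color-$i$ slots contradict maximality, since some slot could be strictly enlarged without breaking validity.

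Once $\ere$ is determined, $\sre$ is exactly the collection of labels appearing in $\ere$, which a short enumeration shows equals $\{\gen{\L}, \gen{\L}' \mid \L \in \szs \setminus \{\D,\M\}\}$. The distinctness $\gen{\L} \neq \gen{\L'}'$ follows immediately because every $\gen{\L}'$ contains $\D$ or $\M$ by construction, while no $\gen{\L}$ for $\L \in \szs \setminus \{\D,\M\}$ does, again by Corollary~\ref{cor:pigen}. Finally, $\nre$ is obtained by unpacking the definition of $\re(\cdot)$ applied to $\nzs$, which is literally the form stated in the lemma. I expect the Hall-type tightness argument and the identification of exactly which primed sets $\gen{\L}'$ can appear---mediated by the distinguished color $s$---to be the two most delicate ingredients.
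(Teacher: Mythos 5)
Your high‑level plan — compute $\ere$ first, read off $\sre$, and obtain $\nre$ directly from the definition of $\re$ — matches the paper, as does the reliance on right‑closedness (Observation~\ref{obs:rcs}) and Corollary~\ref{cor:pigen} to structure the converse direction. The forward‑inclusion checks and the role of the distinguished color $s$ in selecting between $\gen{\L,\D}$ and $\gen{\L,\M}$ are correctly identified.

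However, the key step where you ``pin down'' the exact color counts has its two directions reversed, and that is not a cosmetic slip. You claim that too few color‑$i$ slots would let one pick more than $z_i$ copies of $\ell(\{i\})$ and hence violate $\ezs$; but with fewer than $z_i$ slots carrying color $i$ one can pick at most that many copies, so no violation arises. Symmetrically, you attribute ``too many'' slots to a maximality failure, whereas having more than $z_i$ slots carrying color $i$ is precisely what produces a choice $(\A_1,\dots,\A_r)$ with $z_i+1$ copies of $\ell(\{i\})$, violating Condition~3 of $\ezs$. The correct pairing is: an excess over $z_i$ violates the universal quantifier, while a deficit below $z_i$ contradicts maximality (one can enlarge a slot to include $\{i\}$, which is possible because $z_i \le r-1$). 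The paper proves this via a direct extension argument (take the sets $S_j$ of colors present in each $\B_j$, bound them from above by $z_i$, then extend to sets $S'_j$ meeting the exact count), and no Hall‑type argument is needed here at all — Hall's theorem enters only in the later computation of the node constraint of $\rerepizs$, where the bipartite pairing is genuinely non‑trivial. Invoking Hall here both overcomplicates the step and signals that the mechanism has been pattern‑matched from the wrong lemma. The rest of your sketch — the case split on whether some $\B_j$ contains $\P$, and the derivation of the distinctness $\gen{\L}\neq\gen{\L'}'$ from the presence of $\D$ or $\M$ — is consistent with the paper, though the paper organizes the $\P$ case by first eliminating it (showing $\P\notin\B_j$ after ruling out the easy relaxations) rather than analyzing it forward.
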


	\begin{proof}
		Let $\edda$ denote the set of all configurations $\gen{\L_1}' \s \gen{\L_2} \s \gen{\L_3} \dots \gen{\L_r}$ with labels from $\szs \setminus \{ \D, \M \}$ that satisfy the two conditions given in the lemma.
		Recall Definition~\ref{def:relaxing}.
		We start by showing that the hyperedge constraint $\ere$ of $\repizs$ is equal to $\edda$.
		To this end, by the definition of $\ere$, it suffices to show that the following three conditions hold.
		\begin{description}
			\item[(Z1)] For each configuration $\B_1 \s \dots \s \B_r \in \edda$ and each choice $(\A_1, \dots, \A_r) \in \B_1 \times \dots \times \B_r$, we have $\A_1 \s \dots \s \A_r \in \ezs$.
			\item[(Z2)] For each configuration $\B_1 \s \dots \s \B_r$ with labels from $2^{\szs} \setminus \{ \emptyset \}$ that cannot be relaxed to any configuration from $\edda$, there exists a choice $(\A_1, \dots, \A_r) \in \B_1 \times \dots \times \B_r$ satisfying $\A_1 \s \dots \s \A_r \notin \ezs$.
			\item[(Z3)] If some configuration $\B_1 \s \dots \s \B_r \in \edda$ can be relaxed to some configuration $\overline{\B_1} \s \dots \s \overline{\B_r} \in \edda$, then $\B_1 \s \dots \s \B_r$ and $\overline{\B_1} \s \dots \s \overline{\B_r}$ are identical (up to permutation).
		\end{description}
		
		We first show that (Z1) holds.
		Consider an arbitrary configuration $Y = \gen{\L_1}' \s \gen{\L_2} \s \gen{\L_3} \dots \gen{\L_r}$ from $\edda$.
		If $Y$ satisfies Condition~\ref{cond:easy} in Lemma~\ref{lem:repidef}, then $Y = \gen{\X, \M} \s \gen{\P}^{r - 1} = \gen{\M} \s \gen{\P}^{r - 1}$ or $Y = \gen{\P, \D} \s \gen{\X} \s \gen{\P}^{r - 2}$ (where we use\footnote{Recall also that all permutations of a configuration are considered to be the same configuration since, formally, they are multisets.} Corollary~\ref{cor:pigen} and the fact that $\X \nleq \ell(\{ s \})$ and $\P \leq \ell(\{ s \})$, by Lemma~\ref{lem:pistrong}).
		By the definition of $\gen{\cdot}$, it follows that $Y$ can be obtained from at least one member of $\{ \M \s \P^{r - 1}, \X \s \P^{r - 1}, \D \s \X \s \P^{r - 2} \}$ by repeatedly choosing some label in the configuration and replacing it by a stronger one.
		Since, by Conditions~\ref{cond:1} and \ref{cond:2} in the definition of $\ezs$, we know that $\M \s \P^{r - 1}$, $\X \s \P^{r - 1}$ and $\D \s \X \s \P^{r - 2}$ are all contained in $\ezs$, we can conclude, by the definition of strength, that (Z1) is satisfied for each configuration $Y$ satisfying Condition~\ref{cond:easy} in Lemma~\ref{lem:repidef}.
		
		If, on the other hand, $Y$ satisfies Condition~\ref{cond:hard} in Lemma~\ref{lem:repidef}, then, in order to show that (Z1) holds, it suffices to show that $\D \s \L_2 \s \dots \s \L_r \in \ezs$ if $\L_1 \leq \ell(\{ s \})$, $\M \s \L_2 \s \dots \s \L_r \in \ezs$ if $\L_1 \nleq \ell(\{ s \})$, and $\L_1 \s \dots \L_r \in \ezs$, by the same argumentation as in the previous case.
		Each of these three configurations (under the respective condition) satisfies Conditions~\ref{prop:a}, \ref{prop:b}, and \ref{prop:d} in the definition of $\ezs$, by Condition~\ref{prop:p} in Lemma~\ref{lem:repidef}, the fact that $\L_j \in \szs \setminus \{ \D, \M \}$ for all $2 \leq j \leq r$, and Condition~\ref{prop:i} in Lemma~\ref{lem:repidef}, respectively.
		Moreover, we claim that it follows from Condition~\ref{prop:i} (and Condition~\ref{prop:p}) in Lemma~\ref{lem:repidef} that each of these three configurations (under the respective condition) satisfies also Condition~\ref{prop:c} in the definition of $\ezs$.
		For $\M \s \L_2 \s \dots \s \L_r$ and $\L_1 \s \dots \L_r$ this is immediate; for $\D \s \L_2 \s \dots \s \L_r$ under the condition $\L_1 \leq \ell(\{ s \})$, the argumentation is a bit more involved:
		Observe that $\L_1 \leq \ell(\{ s \})$, together with Condition~\ref{prop:p} and Lemma~\ref{lem:pistrong}, implies that $\L_1 = \ell(\ccc)$ for some $\ccc$ containing color $s$.
		By Condition~\ref{prop:i} in Lemma~\ref{lem:repidef}, it follows that there are exactly $z_s - 1$ indices $j \in \{ 2, \dots, r \}$ such that $\L_j = \ell(\ccc')$ for some $\ccc'$ containing color $s$, which in turn implies that $\D \s \L_2 \s \dots \s \L_r$ satisfies Condition~\ref{prop:c} in the definition of $\ezs$, proving the claim.
		Hence, all of $\D \s \L_2 \s \dots \s \L_r$, $\M \s \L_2 \s \dots \s \L_r$, and $\L_1 \s \dots \L_r$ (under the mentioned respective conditions) satisfy Condition~\ref{cond:3} in the definition of $\ezs$, which implies that they are contained in $\ezs$.
		As shown, (Z1) follows.

		Next, we prove (Z2).
		Let $Y = \B_1 \s \dots \s \B_r$ be a configuration with labels from $2^{\szs} \setminus \{ \emptyset \}$ that cannot be relaxed to any configuration from $\edda$.
		For a contradiction, assume that (Z2) does not hold, which implies that for each choice $(\A_1, \dots, \A_r) \in \B_1 \times \dots \times \B_r$ we have $\A_1 \s \dots \s \A_r \in \ezs$.
		Observe that adding to some set $\B_j$ all labels $\L \in \szs \setminus \B_j$ that satisfy $\L' \leq \L$ for some $\L' \in \B_j$ does not change whether there exists a choice $(\A_1, \dots, \A_r) \in \B_1 \times \dots \times \B_r$ satisfying $\A_1 \s \dots \s \A_r \notin \ezs$, by the definition of strength.
		As adding labels to the sets $\B_j$ also cannot make $Y$ relaxable to some configuration it could not be relaxed to before the addition, we can (and will) therefore assume that $\B_j$ is right-closed for each $1 \leq j \leq r$.
		We will now collect some properties of $Y$ that we derive from the above knowledge and assumptions regarding $Y$.
		
		From the definition of $\ezs$ and the assumption that each choice $(\A_1, \dots, \A_r) \in \B_1 \times \dots \times \B_r$ satisfies $\A_1 \s \dots \s \A_r \in \ezs$, it follows directly that there is at most one index $1 \leq j \leq r$ such that $\M \in \B_j$.
		W.l.o.g., let $1$ be this index (if it exists), i.e., we have
		\begin{equation}\label{eq:nom}
			\M \notin \B_j \text{ for all } 2 \leq j \leq r,
		\end{equation}
		which, by Corollary~\ref{cor:pigen} and the right-closedness of the $\B_j$, implies
		\begin{equation}\label{eq:pgen}
			\B_j \subseteq \gen{\P} \text{ for all } 2 \leq j \leq r.
		\end{equation} 
		Now, consider the configuration $\gen{\M} \s \gen{\P}^{r - 1} = \gen{\X, \M} \s \gen{\P}^{r - 1} = \gen{\X}' \s \gen{\P}^{r - 1}$, which is contained in $\edda$, by Condition~\ref{cond:easy} in Lemma~\ref{lem:repidef}.
		Since $Y$ cannot be relaxed to any configuration from $\edda$, we obtain $\B_1 \nsubseteq \gen{\M}$, by (\ref{eq:pgen}).

		Moreover, if $\B_j = \gen{\X}$ for some $2 \leq j \leq r$, then $\B_j \subseteq \gen{\X}$, which, combined with $\B_1 \subseteq \szs = \gen{\P}'$ (which follows from Lemma~\ref{lem:pistrong} and Corollary~\ref{cor:pigen}) and $\B_{j'} \subseteq \gen{\P}$ for all $j' \in \{ 2, \dots, r \} \setminus \{ j \}$ (which follows from (\ref{eq:pgen})), would imply that $Y$ can be relaxed to $\gen{\P}' \s \gen{\X} \s \gen{\P}^{r - 2} \in \edda$, yielding a contradiction.
		Hence, $\B_j \neq \gen{\X}$ for all $2 \leq j \leq r$, which, together with (\ref{eq:nom}), Corollary~\ref{cor:pigen}, and the already established fact that $\B_1 \nsubseteq \gen{\M}$, implies
		\begin{equation}\label{eq:outm}
			\B_j \nsubseteq \{ \M, \X \} \text{ for all } 1 \leq j \leq r.
		\end{equation} 
		It follows that if $\P \in \B_j$ for some $1 \leq j \leq r$, then there exists a choice $(\A_1, \dots, \A_r) \in \B_1 \times \dots \times \B_r$ such that $\A_j = \P$ and $\A_{j'} \notin \{ \M, \X \}$ for all $j' \neq j$, which would imply that $\A_1 \s \dots \s \A_r \notin \ezs$ (by Conditions~\ref{cond:1}, \ref{cond:2} and \ref{prop:a} in the definition of $\ezs$), yielding a contradiction.
		Hence,
		\begin{equation}\label{eq:nop}
			\P \notin \B_j \text{ for all } 1 \leq j \leq r.
		\end{equation}
		While, so far, we only made use of Conditions~\ref{cond:1}, \ref{cond:2} and \ref{prop:a} in the definition of $\ezs$ and the fact that $Y$ cannot be relaxed to any configuration satisfying Condition~\ref{cond:easy} in Lemma~\ref{lem:repidef}, we will now also take advantage of the other conditions in the definition of $\ezs$ and Lemma~\ref{lem:repidef}.
		
		For each $1 \leq j \leq r$ let $S_j$ be the set of all colors $i \in \ccs(z)$ satisfying $\ell(\{ i \}) \in \B_j$.
		We claim that
		\begin{equation}\label{eq:atmost}
			\text{ for each $i \in \ccs(z)$ there are at most $z_i$ indices $j \in \{ 1, \dots, r \}$ such that $i \in S_j$.}
		\end{equation}
		If this was not true for some $i \in \ccs(z)$, then, by (\ref{eq:outm}) and the definition of the $S_j$, it would be possible to pick one label from each $\B_j$ such that the resulting configuration is $\ell(\{ i \})^{z_i + 1} \s \L_{z_i + 2} \s \dots \s \L_r$ where $\L_{j'} \notin \{ \M, \X \}$ for each $z_i + 2 \leq j' \leq r$.
		By Conditions ~\ref{cond:1}, \ref{cond:2}, \ref{prop:c} and \ref{prop:d} in the definition of $\ezs$, this configuration is not contained in $\ezs$, yielding a contradiction and proving the claim.

		Consider $\B_j$ for some arbitrary index $2 \leq j \leq r$.
		By the right-closedness of $\B_j$, Corollary~\ref{cor:pigen}, and the definition of $S_j$, we know that $\ell(\ccc) \notin \B_j$ for each $\ccc \nsubseteq S_j$.
		By (\ref{eq:pgen}), (\ref{eq:nop}) and Corollary~\ref{cor:pigen}, it follows that
		\begin{equation}\label{eq:insj2}
			\B_j \subseteq \gen{\ell(S_j)} \text{ for all } 2 \leq j \leq r,
		\end{equation}
		where we set $\ell(\emptyset) := \U$.
		Now consider $\B_1$.
		Using an analogous argumentation we obtain that $\ell(\ccc) \notin \B_1$ for each $\ccc \nsubseteq S_1$ and, if $s \notin S_1$, additionally that $\D \notin \B_1$.
		By (\ref{eq:nop}), Corollary~\ref{cor:pigen}, and the observation that $s \notin S_1$ if and only if $\ell(S_1) \nleq \ell(\{ s \})$, it follows that
		\begin{equation}\label{eq:insj1}
			\B_1 \subseteq \gen{\ell(S_1)}'.
		\end{equation}
		
		Let $S'_1, \dots, S'_r$ be a collection of subsets of $\ccs(z)$ such that $S_j \subseteq S'_j$ for all $1 \leq j \leq r$ and for each $i \in \ccs(z)$ there are exactly $z_i$ indices $j \in \{ 1, \dots, r \}$ such that $i \in S'_j$.
		Such a collection exists by (\ref{eq:atmost}) and the fact that $z_i \leq r - 1$.
		By Corollary~\ref{cor:pigen} and the fact that $S_j \subseteq S'_j$ for each $1 \leq j \leq r$, we know that $\gen{\ell(S_j)} \subseteq \gen{\ell(S'_j)}$ and $\gen{\ell(S_j)}' \subseteq \gen{\ell(S'_j)}'$, for each $1 \leq j \leq r$.
		By (\ref{eq:insj2}) and (\ref{eq:insj1}), it follows that $\B_1 \subseteq \gen{\ell(S'_1)}'$ and $\B_j \subseteq \gen{\ell(S'_j)}$ for each $2 \leq j \leq r$, which in turn implies that $Y$ can be relaxed to the configuration $\gen{\ell(S'_1)}' \s \gen{\ell(S'_2)} \s \gen{\ell(S'_3)} \s \dots \s \gen{\ell(S'_r)}$.
		As $i \in S'_j$ if and only if $\ell(S'_j) = \ell(\ccc)$ for some color set $\ccc$ containing color $i$, the definition of the $S'_j$ implies that $\gen{\ell(S'_1)}' \s \gen{\ell(S'_2)} \s \gen{\ell(S'_3)} \s \dots \s \gen{\ell(S'_r)}$ satisfies Condition~\ref{cond:hard} in Lemma~\ref{lem:repidef}.
		Since also $\ell(S'_j) \in \szs \setminus \{ \D, \M \}$, we obtain $\gen{\ell(S'_1)}' \s \gen{\ell(S'_2)} \s \gen{\ell(S'_3)} \s \dots \s \gen{\ell(S'_r)} \in \ere$, yielding a contradiction to the fact that $Y$ cannot be relaxed to any configuration from $\edda$.
		Hence, (Z2) follows.

		Now, we prove (Z3).
		For a contradiction, assume that (Z3) does not hold.
		Then, there exist two configurations $Y = \B_1 \s \dots \s \B_r \in \edda$ and $\overline{Y} = \overline{\B_1} \s \dots \s \overline{\B_r} \in \edda$ such that $\B_j \subseteq \overline{\B_j}$ for each $1 \leq j \leq r$ and there exists some index $j$ satisfying $\B_j \subsetneq \overline{\B_j}$.
		W.l.o.g., assume that $1$ is such an index, i.e., we have $\B_1 \subsetneq \overline{\B_1}$.
		Observe that $\overline{\B_1}$ is right-closed, by the definition of $\edda$.
		
		Consider first the case that $Y$ satisfies Condition~\ref{cond:easy} in Lemma~\ref{lem:repidef}.
		Since $\gen{\P}$ and $\gen{\P}'$ both contain $\P$, and $\P \notin \gen{\L}$ as well as $\P \notin \gen{\L}'$ for each $\L \in \szs \setminus \{ \P \}$, it follows that also $\overline{Y}$ satisfies Condition~\ref{cond:easy}, by Condition~\ref{prop:p}.
		Observe further that $\gen{\X} \subsetneq \gen{\X}' \subsetneq \gen{\P}'$, $\gen{\X} \subsetneq \gen{\P} \subsetneq \gen{\P}'$, $\gen{\P} \nsubseteq \gen{\X}'$, and $\gen{\X}' \nsubseteq \gen{\P}$, by Corollary~\ref{cor:pigen}.
		Hence, it holds for all $1 \leq j \leq r$ that if $\B_j \in \{ \gen{\P}, \gen{\P}' \}$, then $\overline{\B_j} \in \{ \gen{\P}, \gen{\P}' \}$, and if $\B_j \in \{ \gen{\X}', \gen{\P}' \}$, then $\overline{\B_j} \in \{ \gen{\X}', \gen{\P}' \}$.
		As, by Condition~\ref{cond:easy}, both $Y$ and $\overline{Y}$ contain some element from $\{ \gen{\X}', \gen{\P}' \}$ in exactly one position and some element from $\{ \gen{\P}, \gen{\P}' \}$ in exactly $r - 1$ positions, it follows that $\B_j = \overline{\B_j}$ for each $1 \leq j \leq r$, yielding a contradiction to $\B_1 \subsetneq \overline{\B_1}$.

		Consider now the other case, namely that $Y$ satisfies Condition~\ref{cond:hard} in Lemma~\ref{lem:repidef}.
		We start by observing that sets of the form $\gen{\L}$ for some $\L \in \szs \setminus \{ \D, \M \}$ do not contain the label $\M$ whereas sets of the form $\gen{\L}'$ for some $\L \in \szs \setminus \{ \D, \M \}$ contain the label $\M$.
		In particular, $Y$ and $\overline{Y}$ contain a set of the form $\gen{\L}'$ for some $\L \in \szs \setminus \{ \D, \M \}$ in exactly one position, and the index specifying the position is the same for $Y$ and $\overline{Y}$.
		
		For each $1 \leq j \leq r$, let $\L_j \in\szs$ be the (unique) label satisfying $\B_j = \gen{\L_j}$ or $\B_j = \gen{\L_j}'$, and $\overline{\L_j} \in \szs$ the (unique) label satisfying $\overline{\B_j} = \gen{\overline{\L_j}}$ or $\overline{\B_j} = \gen{\overline{\L_j}}'$.
		Since $\B_1 \subsetneq \overline{\B_1}$, we have $\overline{\L_1} < \L_1$, by Lemma~\ref{lem:pistrong}, Corollary~\ref{cor:pigen} and the definition of $\gen{\cdot}'$.
		Analogously, from $\B_j \subseteq \overline{\B_j}$ we obtain $\overline{\L_j} \leq \L_j$ for each $2 \leq j \leq r$.
		Observe that $\overline{Y}$ cannot satisfy Condition~\ref{cond:easy} in Lemma~\ref{lem:repidef} as otherwise $\gen{\X}$ or $\gen{\X}'$ is contained in $\overline{Y}$ whereas no subset of $\gen{\X}$ or $\gen{\X}'$ is contained in $Y$ (by Condition~\ref{prop:p} and Corollary~\ref{cor:pigen}).
		Hence, also $\overline{Y}$ satisfies Condition~\ref{cond:hard}.
		From Condition~\ref{prop:p} it follows that both $\L_1$ and $\overline{\L_1}$ are contained in $\szs \setminus \{ \D, \M, \P, \X \}$, which, together with $\overline{\L_1} < \L_1$ and Lemma~\ref{lem:pistrong}, implies that $\overline{\L_1} = \ell(\overline{\ccc})$ for some $\emptyset \neq \overline{\ccc} \subseteq \ccs(z)$ while $\L_1 = \U$ or $\L_1 = \ell(\ccc)$ for some $\emptyset \neq \ccc \subsetneq \overline{\ccc}$.
		In particular, there is some color $i \in \overline{\ccc}$ such that there is no $\ccc' \subseteq \ccs(z)$ containing $i$ and satisfying $\L_1 = \ell(\ccc')$.
		Moreover, for each index $2 \leq j \leq r$ such that $\L_j = \ell(\ccc_j)$ for some color set $\ccc_j$ containing color $i$, we also have that $\overline{\L_j} = \ell(\overline{\ccc_j})$ for some color set $\overline{\ccc_j}$ containing color $i$, by $\overline{\L_j} \leq \L_j$, Lemma~\ref{lem:pistrong}, and the fact that $\L_j$ and $\overline{\L_j}$ are contained in $\szs \setminus \{ \D, \M, \P, \X \}$.
		It follows that the number of indices $1 \leq j \leq r$ such that $\overline{\L_j} = \ell(\overline{\ccc_j})$ for some color set $\overline{\ccc_j}$ containing color $i$ is strictly larger than the number of indices $1 \leq j' \leq r$ such that $\L_{j'} = \ell(\ccc_{j'})$ for some color set $\ccc_{j'}$ containing color $i$.
		This yields a contradiction to Condition~\ref{prop:i} and proves that (Z3) holds.
		It follows that $\ere = \edda$, i.e., $\ere$ is indeed as specified in the lemma.

		The set $\sre$ of output labels of $\repizs$ is precisely the set of all labels appearing in at least one configuration in $\ere$.
		From Condition~\ref{cond:easy} in Lemma~\ref{lem:repidef}, it follows that $\{ \gen{\L}, \gen{\L}' \mid \L \in \{ \P, \X \} \} \subseteq \sre$.
		Now, consider some (not necessarily nonempty) arbitrary color set $\ccc \subseteq \ccs(z)$ and, for each color $i \in \ccs(z)$, define $z'_i := z_i - 1$ if $i \in \ccc$, and $z'_i := z_i$ if $i \notin \ccc$.
		Consider the configuration $Y = \gen{\ell(\ccc_1)}' \s \gen{\ell(\ccc_2)} \s \gen{\ell(\ccc_3)} \dots \gen{\ell(\ccc_{r - 1})} \s \gen{\ell(\ccc)}$ defined by $\ccc_j := \{ i \in \ccs(z) \mid j \leq z'_i \}$ for each $1 \leq j \leq r - 1$, where we set $\ell(\emptyset) := \U$.
		By construction (and the fact that $z_i \leq r - 1$ for each $i \in \ccs(z)$), configuration $Y$ satisfies Condition~\ref{cond:hard}, which implies that $\{ \gen{\L}, \gen{\L}' \mid \L \in \szs \setminus \{ \D, \M, \P, \X \} \}$.
		It follows that $\sre$ is as specified in the lemma.
		The claimed property that $\gen{\L} \neq \gen{\L'}'$ for any $\L, \L' \in \szs \setminus \{ \D, \M \}$ follows from the fact that $\gen{\L'}'$ contains $\M$ for any $\L' \in \szs \setminus \{ \D, \M \}$ while $\gen{\L}$ does not contain $\M$ for any $\L \in \szs \setminus \{ \D, \M \}$, by Lemma~\ref{lem:pistrong}.

	The provided characterization of $\nre$ directly follows from the definition of $\re(\cdot)$.
	\end{proof}

	Set $\sre^0 := \{ \gen{\L} \mid \L \in \szs \setminus \{ \D, \M \} \}$ and $\sre^1 := \{ \gen{\L}' \mid \L \in \szs \setminus \{ \D, \M \} \}$.
	By Lemma~\ref{lem:repidef}, it follows that $\sre^0$ and $\sre^1$ are disjoint and $\sre^0 \cup \sre^1 = \sre$. 

	Similar to before, it will be useful to collect information about the strength of the labels in $\sre$ and compute the right-closed subsets generated by each label.
	We will do so in the following with Lemma~\ref{lem:restrong} and Corollary~\ref{cor:regen}, which are analogues of Lemma~\ref{lem:pistrong} and Corollary~\ref{cor:pigen} for $\repizs$ instead of $\Pi(z,s)$.
	
	\begin{figure}[h]
		\centering
		\includegraphics[width=0.9\textwidth]{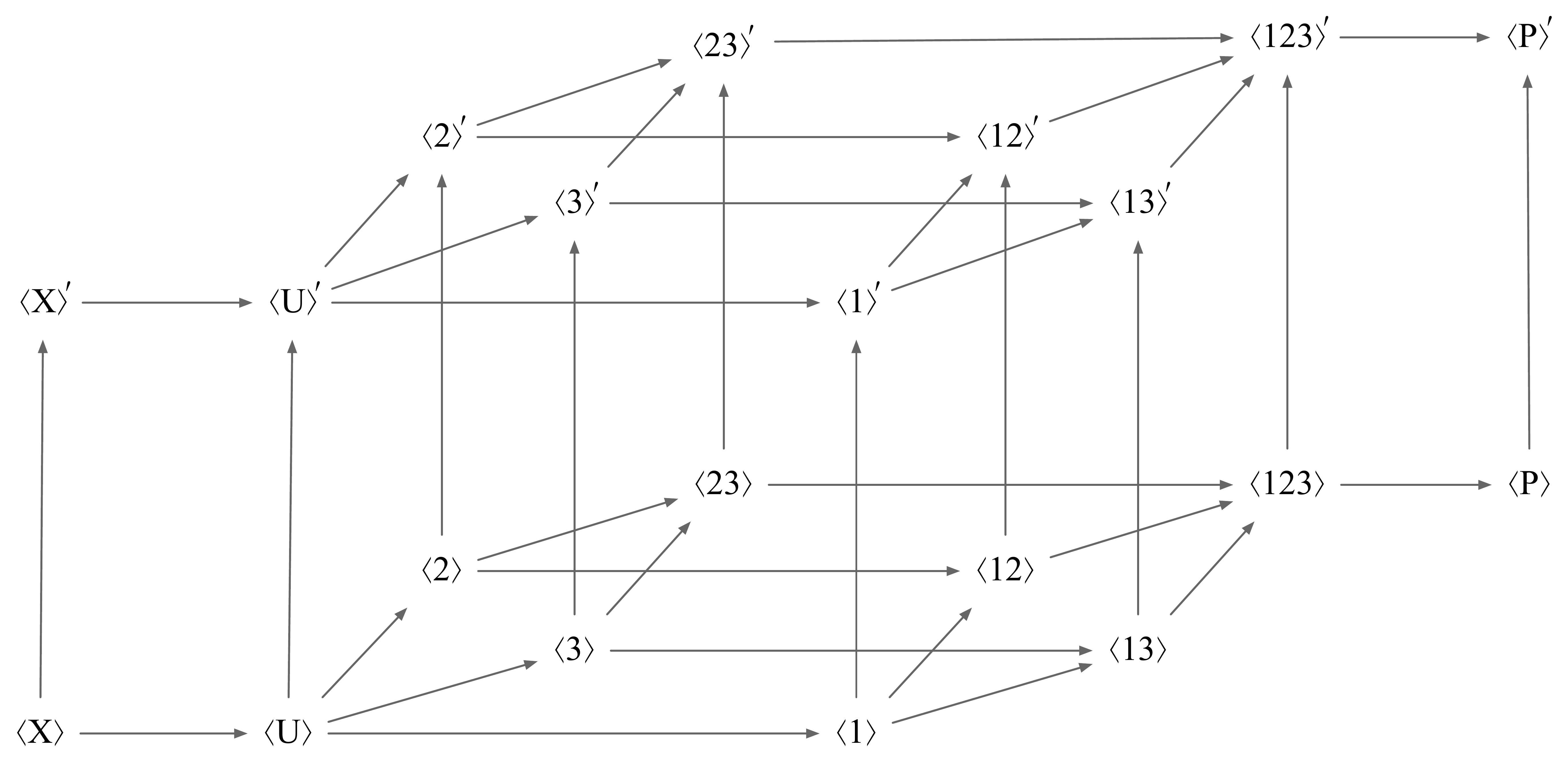}
		\caption{The node diagram of $\repizs$ for $\len(z) = 3$. Numbers correspond to color sets, and $\ell(\cdot)$ is omitted for clarity.}
		\label{fig:diag2}
	\end{figure} 
	
	\begin{lemma}\label{lem:restrong}
		For any $f(\cdot), \overline{f}(\cdot) \in \{ \gen{\cdot}, \gen{\cdot}' \}$ and any $\L, \overline{\L} \in \szs \setminus \{ \D, \M \}$, we have $f(\L) \leq \overline{f}(\overline{\L})$ if and only if the following two conditions are satisfied.
		\begin{enumerate}
			\item \label{cond:newone} $f(\cdot) = \gen{\cdot}$ or $\overline{f}(\cdot) = \gen{\cdot}'$ (or both).
			\item \label{cond:newtwo} At least one of the following conditions holds.
			\begin{enumerate}
				\item \label{prop:newa} $\L = \X$
				\item \label{prop:newb} $\overline{\L} = \P$
				\item \label{prop:newc} $\L = \U$ and $\overline{\L} \neq \X$
				\item \label{prop:newd} $\L = \ell(\ccc)$ and $\overline{\L} = \ell(\overline{\ccc})$ for some $\ccc, \overline{\ccc} \in 2^{\ccs(z)} \setminus \{ \emptyset \}$ satisfying $\ccc \subseteq \overline{\ccc}$
			\end{enumerate}
		\end{enumerate}
		Moreover, two labels from $\sre$ are equally strong if and only if they are identical.
	\end{lemma}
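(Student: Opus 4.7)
The strategy is to reduce the strength relation to plain set containment: since a configuration $\B_1 \s \dots \s \B_{\Delta} \in \nre$ is certified by the existence of some witness in $\nzs$ drawn from the $\B_i$, any such witness persists when one $\B_i$ is replaced by a superset.  Hence $f(\L) \subseteq \overline{f}(\overline{\L})$ already implies $f(\L) \leq \overline{f}(\overline{\L})$, and for the forward direction it suffices to check that the stated conditions are exactly what is needed for this containment.  Condition~\ref{cond:newone} is forced by the observation that $\M \in f'(\L')$ iff $f'(\cdot) = \gen{\cdot}'$ (from the defining formula $\gen{\L}' = \gen{\L, \M}$ or $\gen{\L, \D}$ together with Corollary~\ref{cor:pigen}), so whenever $f(\cdot) = \gen{\cdot}'$ we need $\overline{f}(\cdot) = \gen{\cdot}'$ to preserve $\M$.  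Each subcase of Condition~\ref{cond:newtwo} then captures containment of the remaining labels via Corollary~\ref{cor:pigen}: \ref{prop:newa} uses $\gen{\X} = \{\X\} \subseteq \overline{f}(\overline{\L})$ always; \ref{prop:newb} uses $\gen{\P} = \szs \setminus \{\D, \M\}$ and $\gen{\P}' = \szs$; \ref{prop:newc} uses that $\U \in \gen{\overline{\L}}$ precisely for $\overline{\L} \in \szs \setminus \{\D, \M, \X\}$; and \ref{prop:newd} reduces to $\gen{\ell(\ccc)} \subseteq \gen{\ell(\overline{\ccc})} \iff \ccc \subseteq \overline{\ccc}$, with the added $\gen{\D}$ or $\gen{\M}$ pieces aligning by a small case analysis on whether $s$ lies in $\ccc$ or $\overline{\ccc}$ (the only mildly subtle point being that the $\gen{\M}$-on-the-left vs.\ $\gen{\D}$-on-the-right case still works because $\M \in \gen{\D}$).

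\textbf{The Converse.} If the conditions fail I would exhibit an explicit configuration in $\nre$ that loses its witness after the substitution.  If Condition~\ref{cond:newone} fails ($f(\cdot) = \gen{\cdot}'$, $\overline{f}(\cdot) = \gen{\cdot}$), the configuration $f(\L) \s \gen{\M}^{\Delta-1}$ lies in $\nre$ via witness $\M^{\Delta}$; after substituting $\overline{f}(\overline{\L}) \not\ni \M$ at the first position, none of $\M^{\Delta}$, $\P \s \U^{\Delta-1}$, $\ell(\ccc)^{\Delta-|\ccc|+1} \s \U^{|\ccc|-1}$, or $\D^{\Delta-1} \s \X$ can be supported since each requires at a neighbor position a label absent from $\gen{\M} = \{\M, \X\}$.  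If Condition~\ref{cond:newtwo} fails while Condition~\ref{cond:newone} holds, I would use a tailored construction per case: $f(\P) \s \gen{\U}^{\Delta-1}$ (witness $\P \s \U^{\Delta-1}$) for $\L = \P$, $\overline{\L} \neq \P$, which breaks after substitution because $\P \notin \overline{f}(\overline{\L})$ and no alternative $\nzs$-witness is supported by $\gen{\U}$ at the neighbors; $f(\U) \s \gen{\P}^{\Delta-1}$ for $\L = \U$, $\overline{\L} = \X$; $f(\L) \s \gen{\ell(\ccc)}^{\Delta-|\ccc|} \s \gen{\U}^{|\ccc|-1}$ (witness $\ell(\ccc)^{\Delta-|\ccc|+1} \s \U^{|\ccc|-1}$) for $\L = \ell(\ccc)$, $\overline{\L} \in \{\U, \X\}$, which breaks because $\overline{f}(\overline{\L})$ has no $\ell(\cdot)$-label and the only remaining $\nzs$-forms require $\P$, $\D$, or additional $\ell(\ccc')$-slots, none of which are available; and for $\L = \ell(\ccc)$, $\overline{\L} = \ell(\overline{\ccc})$ with $\ccc \not\subseteq \overline{\ccc}$, picking $i \in \ccc \setminus \overline{\ccc}$ and using $f(\L) \s \gen{\ell(\{i\})}^{\Delta-1}$ with witness $\ell(\{i\})^{\Delta}$, which collapses because $\ell(\{i\}) \notin \overline{f}(\overline{\L})$ (as $i \notin \overline{\ccc}$).

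\textbf{Equally Strong Labels and Main Obstacle.} For the moreover claim, applying Condition~\ref{cond:newone} in both directions yields ``$f(\cdot) = \gen{\cdot}$ or $\overline{f}(\cdot) = \gen{\cdot}'$'' together with its mirror, which is compatible only with the diagonal cases $f = \overline{f} \in \{\gen{\cdot}, \gen{\cdot}'\}$.  A pairwise inspection of the four subcases of Condition~\ref{cond:newtwo} in both directions then forces $\L = \overline{\L}$: mixed pairings such as $(\L = \X, \overline{\L} = \P)$ are outright contradictory, and $\ccc \subseteq \overline{\ccc}$ combined with $\overline{\ccc} \subseteq \ccc$ forces $\ccc = \overline{\ccc}$; together with the distinctness of $\gen{\L}$ from $\gen{\L'}'$ established in Lemma~\ref{lem:repidef}, this gives $f(\L) = \overline{f}(\overline{\L})$.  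I expect the main obstacle to be the final subcase of the converse, where ruling out witnesses after substitution requires handling not only the coloring-style $\ell(\ccc')^{\cdots} \s \U^{\cdots}$ configurations for every relevant $\ccc'$ but also the $\D^{\Delta-1} \s \X$ configuration, which becomes enabled when $\gen{\cdot}'$-labels happen to carry $\D$ (precisely when $s$ lies in the underlying color set); one must verify that the chosen neighbor structure $\gen{\ell(\{i\})}$ does not accidentally support it, which is where the specific choice $i \in \ccc \setminus \overline{\ccc}$ and the absence of $\D$ from $\gen{\ell(\{i\})}$ crucially come in.
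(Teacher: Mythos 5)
Your proposal is correct and follows essentially the same approach as the paper: the forward direction reduces the strength relation to set containment (since $\nre$ is closed under replacing a label-set with a superset), and the converse exhibits explicit configurations in $\nre$ that lose their only $\nzs$-witness after the substitution. The case constructions you give all work; two of them differ slightly from the paper's choices --- for $\L=\U,\overline{\L}=\X$ you use $f(\U)\s\gen{\P}^{\Delta-1}$ where the paper uses $f(\U)\s\gen{\P}\s\gen{\U}^{\Delta-2}$, and for $\L=\ell(\ccc),\overline{\L}=\ell(\overline{\ccc})$ with $\ccc\nsubseteq\overline{\ccc}$ you use $f(\L)\s\gen{\ell(\{i\})}^{\Delta-1}$ with $i\in\ccc\setminus\overline{\ccc}$ where the paper reuses $f(\L)\s\gen{\ell(\ccc)}^{\Delta-|\ccc|}\s\gen{\U}^{|\ccc|-1}$ for both subcases --- but both variants are verifiably valid and arguably the $\gen{\ell(\{i\})}^{\Delta-1}$ choice gives a shorter witness-elimination for that subcase (since the only candidate color set collapses to $\{i\}$ immediately). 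A small notational slip: you write $\gen{\M}^{\Delta-1}$ in the Condition~1 counterexample, but $\gen{\M}$ is not formally a label of $\sre$; what you mean is $(\gen{\X}')^{\Delta-1}$, which happens to denote the same set $\{\M,\X\}$, so the argument is unaffected.
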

	\begin{proof}
		Observe that, by the definition of $\nre$, it holds for any $\B, \overline{\B} \in \sre$ with $\B \subseteq \overline{\B}$ that for any configuration $\B \s \B_2 \s \dots \s \B_{\Delta} \in \nre$, we also have $\overline{\B} \s \B_2 \s \dots \s \B_{\Delta} \in \nre$.
		Hence, for any $\B, \overline{\B} \in \sre$ with $\B \subseteq \overline{\B}$ we have $\B \leq \overline{\B}$, by the definition of strength.
		By Corollary~\ref{cor:pigen}, it follows that $\gen{\L} \leq \gen{\overline{\L}}$ for any $\L, \overline{\L}$ satisfying Condition~\ref{cond:newtwo} in the lemma.
		Moreover, since $\gen{\L}' \in \{ \gen{\L} \cup \{ \M \}, \gen{\L} \cup \{ \D, \M \} \}$ for any $\L \in \szs \setminus \{ \D, \M \}$, we also obtain that $\gen{\L} \leq \gen{\overline{\L}}'$ for any $\L, \overline{\L}$ satisfying Condition~\ref{cond:newtwo}.
		Furthermore, note that for any $\L, \overline{\L} \in \szs \setminus \{ \D, \M \}$, it holds that if $\gen{\L}' = \gen{\L} \cup \{ \D, \M \}$ and $\gen{\overline{\L}}' = \gen{\overline{\L}} \cup \{ \M \}$, then $\L \leq \ell(\{ s \})$ and $\overline{\L} \nleq \ell(\{ s \})$ (by the definition of $\gen{\cdot}'$), which implies $\gen{\L} \nsubseteq \gen{\overline{\L}}$ (by the definition of $\gen{\cdot}$).
		Since, by Corollary~\ref{cor:pigen}, any $\L, \overline{\L}$ satisfying Condition~\ref{cond:newtwo} satisfy $\gen{\L} \leq \gen{\overline{\L}}$, we also obtain that $\gen{\L}' \leq \gen{\overline{\L}}'$ for any $\L, \overline{\L}$ satisfying Condition~\ref{cond:newtwo}, analogously to above.
		Hence, we conclude that if Conditions~\ref{cond:newone} and~\ref{cond:newtwo} hold, then we have $f(\L) \leq \overline{f}(\overline{\L})$, as desired.
		What remains is to show that $f(\L) \leq \overline{f}(\overline{\L})$ does not hold if at least one of the two conditions is violated.

		Consider first the case that Condition~\ref{cond:newone} is violated, i.e., consider $\gen{\L}', \gen{\overline{\L}}$ for some arbitrary $\L, \overline{\L} \in \setminus \{ \D, \M \}$.
		We need to show that $\gen{\L}' \nleq \gen{\overline{\L}}$.
		Consider the configuration $\gen{\L}' \s \left(\gen{\X}'\right)^{\Delta - 1}$.
		Since $M^{\Delta} \in \nzs$ and, by the definition of $\gen{\cdot}'$, we have $\M \in \gen{\L}'$ and $\M \in \gen{\X}'$, it holds that $\gen{\L}' \s \left(\gen{\X}'\right)^{\Delta - 1} \in \nre$, by the definition of $\nre$.
		Now, consider the configuration $\gen{\overline{\L}} \s \left(\gen{\X}'\right)^{\Delta - 1}$ obtained from the above configuration by replacing $\gen{\L}'$ by $\gen{\overline{\L}}$.
		Observe that we have $\gen{\X} = \{ \M, \X \}$ and the only configurations in $\nzs$ containing $\M$ or $\X$ are $\M^{\Delta}$ and $\D^{\Delta - 1} \s \X$.
		Since, by Corollary~\ref{cor:pigen}, neither $\M$ nor $\D$ is contained in $\gen{\overline{\L}}$, it follows that $\gen{\overline{\L}} \s \left(\gen{\X}'\right)^{\Delta - 1} \notin \nre$, by the definition of $\nre$.
		By the definition of strength, it follows that $\gen{\L}' \nleq \gen{\overline{\L}}$, as desired.
		
		Now, consider the other case, i.e., consider some arbitrary $f(\cdot), \overline{f}(\cdot) \in \{ \gen{\cdot}, \gen{\cdot}' \}$ and $\L, \overline{\L} \in \szs \setminus \{ \D, \M \}$ violating Condition~\ref{cond:newtwo}.
		We need to show that $f(\L) \nleq \overline{f}(\overline{\L})$.
		Note that Condition~\ref{cond:newtwo} is violated if and only if all of Conditions~\ref{prop:newa}, \ref{prop:newb}, \ref{prop:newc}, and \ref{prop:newd} are violated.
		We now go through the different cases for $\L$.
		Note that $\L \neq \X$, due to the violation of Condition~\ref{prop:newa}.

		If $\L = \U$, consider the configuration $f(\L) \s \gen{\P} \s \gen{\U}^{\Delta - 2}$.
		Observe that, by Corollary~\ref{cor:pigen}, $\U$ is contained in $\gen{\U}$ and $f(\L)$, and $\P$ is contained in $\gen{\P}$.
		Since $\P \s \U^{\Delta - 1} \in \nzs$, it follows that $f(\L) \s \gen{\P} \s \gen{\U}^{\Delta - 2} \in \nre$.
		Further, observe that the violation of Condition~\ref{prop:newc} implies that $\overline{\L} = \X$, and consider the configuration $\overline{f}(\X) \s \gen{\P} \s \gen{\U}^{\Delta - 2}$ obtained from the above configuration by replacing $f(\L)$ by $\overline{f}(\overline{\L})$.
		We have $\overline{f}(\X) \in \{ \{ \X \}, \{ \M, \X \} \}$, the only configurations in $\nzs$ containing $\M$ or $\X$ are $\M^{\Delta}$ and $\D^{\Delta - 1} \s \X$, and both $\M$ and $\D$ are not contained in $\gen{\P}$, by Corollary~\ref{cor:pigen}.
		It follows that $\overline{f}(\X) \s \gen{\P} \s \gen{\U}^{\Delta - 2} \notin \nre$.
		Hence, $f(\L) \nleq \overline{f}(\overline{\L})$, as desired.

		If $\L = \ell(\ccc)$ for some $\emptyset \neq \ccc \subseteq \ccs(z)$, consider the configuration $f(\L) \s \gen{\L}^{\Delta - |\ccc|} \s \gen{\U}^{|\ccc| - 1}$.
		Observe that, by Corollary~\ref{cor:pigen}, $\ell(\ccc)$ is contained in $\gen{\L}$ and $f(\L)$, and $\U$ is contained in $\gen{\U}$.
		Since $\ell(\ccc)^{\Delta - |\ccc| + 1} \s \U^{\ccc - 1} \in \nzs$, it follows that $f(\L) \s \gen{\L}^{\Delta - |\ccc|} \s \gen{\U}^{|\ccc| - 1} \in \nre$.
		Further, observe that the violation of Conditions~\ref{prop:newb} and~\ref{prop:newd} implies that $\overline{\L} \in \{ \U, \X \}$ or $\overline{\L} = \ell(\overline{\ccc})$ for some $\emptyset \neq \overline{\ccc} \subseteq \ccs(z)$ satisfying $\ccc \nsubseteq \overline{\ccc}$.
		Consider the configuration $Y := \overline{f}(\overline{\L}) \s \gen{\ell(\ccc)}^{\Delta - |\ccc|} \s \gen{\U}^{|\ccc| - 1}$ obtained from the above configuration by replacing $f(\L)$ by $\overline{f}(\overline{\L})$, and assume for a contradiction that $Y \in \nre$.
		Since $\len(z) \leq \Delta - 1$, we have $|\ccc| \leq |\ccs(z)| \leq \Delta - 1$, which implies that $\Delta - |\ccc| \geq 1$, i.e., $Y$ contains $\gen{\ell(\ccc)}$.
		Observe also that our assumption for this section, $\Delta \geq 3$, implies that there are at least two positions in $Y$ with a label from $\{ \gen{\ell(\ccc)}, \gen{\U} \}$.
		Since, by Corollary~\ref{cor:pigen} and our insights about $\L$, we have $\M \notin \gen{\ell(\ccc)}$, $\P \notin \overline{f}(\overline{\L}) \cup \gen{\ell(\ccc)} \cup \gen{\U}$, and $\D \notin \gen{\ell(\ccc)} \cup \gen{\U}$, it follows that there is no choice $(\A_1, \dots, \A_{\Delta}) \in \overline{f}(\overline{\L}) \times \gen{\ell(\ccc)}^{\Delta - |\ccc|} \times \gen{\U}^{|\ccc| - 1}$ such that $\A_1 \s \dots \s \A_{\Delta} \in \{ \M^{\Delta}, \P \s \U^{\Delta - 1}, \D^{\Delta - 1} \s \X \}$.
		By the definition of $\nzs$ and the fact that $Y \in \nre$, this implies that there is some $\emptyset \neq \ccc^* \subseteq \ccs(z)$ such that $\ell(\ccc^*)$ is contained in at least $\Delta - |\ccc^*| + 1$ of the $\Delta$ sets in $Y$.
		Since $\ell(\ccc^*) \notin \gen{\U}$ (by Corollary~\ref{cor:pigen}), we have $\Delta - |\ccc^*| + 1 \leq \Delta - |\ccc| + 1$, which implies $|\ccc^*| \geq |\ccc|$.
		If $\ccc^* \neq \ccc$, then it follows that $\ccc^* \nsubseteq \ccc$, which implies $\ell(\ccc^*) \notin \gen{\ell(\ccc)}$, by Corollary~\ref{cor:pigen}; it follows that $\ell(\ccc^*)$ is contained in at most one of the $\Delta$ sets in $Y$, which implies $|\ccc^*| \geq \Delta$, yielding a contradiction to the fact that $|\ccc^*| \leq |\ccs(z)| \leq \Delta - 1$.
		Hence, we know that $\ccc^* = \ccc$.
		Recall that we have $\overline{\L} \in \{ \U, \X \}$ or $\overline{\L} = \ell(\overline{\ccc})$ for some $\emptyset \neq \overline{\ccc} \subseteq \ccs(z)$ satisfying $\ccc \nsubseteq \overline{\ccc}$, and observe that in the latter case we have $\ell(\ccc) \notin \gen{\ell(\ccc^*)}$ and $\ell(\ccc) \notin \gen{\ell(\ccc^*)}'$, by Corollary~\ref{cor:pigen} and the definition of $\gen{\cdot}'$.
		Since $\ell(\ccc)$ is also not contained in any of $\gen{\U}$, $\gen{\U}'$, $\gen{\X}$, and $\gen{\X}'$, it follows that $\ell(\ccc) \notin \overline{f}(\overline{\L})$.
		Moreover, since $\ccc^* = \ccc$, we have $\ell(\ccc^*) \notin \overline{f}(\overline{\L})$.
		By the definition of $Y$, it follows that $\ell(\ccc^*)$ is contained in at most $\Delta - |\ccc| = \Delta - |\ccc^*|$ of the $\Delta$ sets in $Y$.
		This contradicts the fact established above that $\ell(\ccc^*)$ is contained in at least $\Delta - |\ccc^*| + 1$ of the $\Delta$ sets in $Y$.
		Hence, we have $Y \notin \nre$, which implies that $f(\L) \nleq \overline{f}(\overline{\L})$, as desired.
		
		If $\L = \P$, consider the configuration $f(\L) \s \gen{\U}^{\Delta - 1}$.
		Observe that, by Corollary~\ref{cor:pigen}, $\P$ is contained in $f(\L)$ and $\U$ is contained in $\gen{\U}$.
		 Since $\P \s \U^{\Delta - 1} \in \nzs$, it follows that $f(\L) \s \gen{\U}^{\Delta - 1} \in \nre$.
		Further, observe that the violation of Condition~\ref{prop:newb} implies that $\overline{\L} \neq \P$, and consider the configuration $\overline{f}(\overline{\L}) \s \gen{\U}^{\Delta - 1}$ obtained from the above configuration by replacing $f(\L)$ by $\overline{f}(\overline{\L})$.
		We have $\gen{\U} = \{ \U, \X \}$, the only configurations in $\nzs$ containing some element from $\{ \U, \X \}$ in at least $\Delta - 1$ positions is $\P \s \U^{\Delta - 1}$ (due to $\Delta \geq 3$ and $|\ccs(z)| = \len(z) \leq \Delta - 1$), and $\P$ is contained neither in $\overline{f}(\overline{\L})$ nor in $\gen{\U}$, by Corollary~\ref{cor:pigen}.
		It follows that $\overline{f}(\overline{\L}) \s \gen{\U}^{\Delta - 1} \notin \nre$.
		Hence, $f(\L) \nleq \overline{f}(\overline{\L})$, as desired.

		This covers all possible cases for $\L \in \szs \setminus \{ \D, \M \}$ and concludes the second case.
		Hence, the lemma statement follows.
		Note that the fact that two labels from $\sre$ are equally strong if and only if they are identical follows from the fact that for any two distinct labels $\B, \B' \in \sre$ we have $\B \nleq \B'$ or $\B' \nleq \B$.

	\end{proof}

	From Lemma~\ref{lem:restrong} we obtain Corollary~\ref{cor:regen} by simply collecting for any label $\L \in \sre$ the set of all labels $\L'$ satisfying $\L \leq \L'$.

	\begin{corollary}\label{cor:regen}
		We have
		\begin{itemize}
			\item $\gen{\gen{\P}} = \{ \gen{\P}, \gen{\P}' \}$,
			\item $\gen{\gen{\U}} = \{ \gen{\L}, \gen{\L}' \mid \L \in \{ \P, \U \} \cup \{ \ell(\ccc) \mid \emptyset \neq \ccc \subseteq \ccs(z) \} \}$,
			\item $\gen{\gen{\X}} = \sre$,
			\item $\gen{\gen{\ell(\ccc)}} = \{ \gen{\L}, \gen{\L}' \mid \L \in \{ \P \} \cup \{ \ell(\ccc') \mid \ccc \subseteq \ccc' \subseteq \ccs(z) \} \}$ for each $\emptyset \neq \ccc \subseteq \ccs(z)$, and
			\item $\gen{\gen{\L}'} = \gen{\gen{\L}} \cap \sre^1$, for each $\L \in \szs \setminus \{ \D, \M \}$.
		\end{itemize}
	\end{corollary}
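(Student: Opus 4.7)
The plan is to derive Corollary~\ref{cor:regen} as a direct bookkeeping consequence of Lemma~\ref{lem:restrong}: for each label $\B \in \sre$, we collect those labels $\B' \in \sre$ with $\B \leq \B'$, applying the two conditions in the lemma. I would organize the proof as five short cases, one for each line of the corollary, and handle them in the order they are stated.

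First, I would fix the ``starting shape'' $f(\cdot)$ and the inner label $\L$, then determine which conditions on $\overline{f}(\cdot)$ and $\overline{\L}$ are admissible. For $\gen{\P}$, the inner label $\L = \P$ does not satisfy (\ref{prop:newa}), (\ref{prop:newc}), or (\ref{prop:newd}), so Condition~\ref{cond:newtwo} forces $\overline{\L} = \P$; since $f(\cdot) = \gen{\cdot}$ already fulfils Condition~\ref{cond:newone}, both $\overline{f}(\cdot) = \gen{\cdot}$ and $\overline{f}(\cdot) = \gen{\cdot}'$ are allowed, giving $\{\gen{\P}, \gen{\P}'\}$. For $\gen{\U}$, Condition~\ref{cond:newtwo} reduces to (\ref{prop:newb}) or (\ref{prop:newc}), i.e., $\overline{\L} \in \{\P, \U\} \cup \{\ell(\ccc') \mid \emptyset \neq \ccc' \subseteq \ccs(z)\}$ (the exclusion of $\overline{\L} = \X$ coming from (\ref{prop:newc}), and $\overline{\L} \notin \{\D, \M\}$ from $\sre$ itself); both outer forms remain available. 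For $\gen{\X}$, (\ref{prop:newa}) is trivially true, so no restriction on $\overline{\L}$ arises, and again both forms are allowed, giving all of $\sre$. For $\gen{\ell(\ccc)}$ with $\emptyset \neq \ccc \subseteq \ccs(z)$, (\ref{prop:newa})--(\ref{prop:newc}) are ruled out, so (\ref{prop:newd}) must hold together with (\ref{prop:newb}), producing exactly $\{\gen{\L}, \gen{\L}' \mid \L \in \{\P\} \cup \{\ell(\ccc') \mid \ccc \subseteq \ccc' \subseteq \ccs(z)\}\}$.

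The only slightly different case is $\gen{\gen{\L}'}$ for $\L \in \szs \setminus \{\D, \M\}$. Here $f(\cdot) = \gen{\cdot}'$, which by Condition~\ref{cond:newone} forces $\overline{f}(\cdot) = \gen{\cdot}'$, i.e., $\overline{f}(\overline{\L}) \in \sre^1$. Condition~\ref{cond:newtwo} is unchanged from the computation of $\gen{\gen{\L}}$, so the admissible $\overline{\L}$'s are identical; the net effect is simply to intersect $\gen{\gen{\L}}$ with $\sre^1$, which is the claimed identity $\gen{\gen{\L}'} = \gen{\gen{\L}} \cap \sre^1$.

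I do not expect any real obstacle here; this is mechanical enumeration. The only place one must be careful is to remember that $\sre$ excludes the inner labels $\D$ and $\M$ (by Lemma~\ref{lem:repidef}), so some case options for $\overline{\L}$ that look available from Lemma~\ref{lem:restrong} in isolation are automatically excluded. Also, when writing the case $\gen{\gen{\U}}$, I would explicitly point out that $\overline{\L} = \X$ is forbidden by clause~(\ref{prop:newc}), which is the feature distinguishing $\gen{\gen{\U}}$ from $\gen{\gen{\X}} = \sre$. Once these small observations are in place, the listing matches the statement of the corollary term for term.
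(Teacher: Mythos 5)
Your proposal is correct and follows the paper's own mechanical route---the paper derives Corollary~\ref{cor:regen} in a single line, noting that one simply collects, for each label $\B \in \sre$, the set of all $\B'$ with $\B \leq \B'$ according to Lemma~\ref{lem:restrong}, which is exactly what you do case by case. The only slip is a wording issue in the $\gen{\gen{\ell(\ccc)}}$ case: you write that ``(\ref{prop:newa})--(\ref{prop:newc}) are ruled out,'' but condition~(\ref{prop:newb}) (namely $\overline{\L} = \P$) is \emph{not} ruled out, as your own final listing---which correctly retains $\gen{\P}$ and $\gen{\P}'$---confirms; what you mean is that~(\ref{prop:newa}) and~(\ref{prop:newc}) are ruled out, leaving~(\ref{prop:newb}) or~(\ref{prop:newd}) as the admissible disjuncts.
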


\subsection{Computing $\rere(\re(\Pi(z,s)))$}
	After computing and analyzing $\repizs$, we now turn our attention to the next step, computing $\rerepizs$.
	However, instead of computing $\rerepizs$ exactly, we instead show that $\rerepizs$ can be relaxed to some problem $\pistar$ (without computing $\rerepizs$ explicitly).
	We will define $\pistar$ so that by just renaming the labels in the set $\sstar := \Sigma_{\pistar}$ suitably we obtain some problem of the form $\Pi(z', s')$ with new parameters $z', s'$ (compared to $\Pi(z, s)$).
	As relaxing a problem, by definition, does not increase the complexity of the problem, this approach will yield a proof that $\Pi(z',s')$ can be solved at least one round faster than $\Pi(z,s)$; analyzing how the parameters change from $\Pi(z,s)$ to $\Pi(z',s')$ will then enable us to prove the desired lower bound.
	We start by defining $\pistar$, which depends on some parameter $q$ that can be chosen arbitrarily from $\ccs(z) \setminus \{ s \}$.
	Note that such a parameter $q$ exists since, by definition, $\len(z) \geq 2$.

	\begin{definition}\label{def:pistar}
		Fix some parameter $q \in \ccs(z) \setminus \{ s \}$ and define $\ell(\emptyset) := \U$.
		The label set $\sstar$ of $\pistar$ is defined by
		\[
			\sstar := \{ \gen{\gen{\P}}, \gen{\gen{\U}}, \gen{\gen{\U}'}, \gen{\gen{\X}}, \gen{\gen{\X}'} \} \cup \fminus \cup \fplus
		\]
		where
		\begin{align*}
			\fminus &:= \{ \gen{\gen{\ell(\ccc)}} \mid \emptyset \neq \ccc \subseteq \ccs(z), q \notin \ccc  \} \qquad \text{ and }\\
			\fplus &:= \{ \gen{\gen{\ell(\ccc)}, \gen{\ell(\ccc \setminus \{ q \})}'} \mid \emptyset \neq \ccc \subseteq \ccs(z), q \in \ccc \}.
		\end{align*}
		The node constraint $\nstar$ of $\pistar$ consists of the following configurations.
		\begin{itemize}
			\item $\gen{\gen{\X}'}^{\Delta}$
			\item $\gen{\gen{\P}} \s \gen{\gen{\U}}^{\Delta - 1}$
			\item $\gen{\gen{\U}'}^{\Delta - 1} \s \gen{\gen{\X}}$
			\item $\gen{\gen{\ell(\ccc)}}^{\Delta - |\ccc| + 1} \s \gen{\gen{\U}}^{|\ccc| - 1}$ for each $\emptyset \neq \ccc \subseteq \ccs(z)$ satisfying $q \notin \ccc$
			\item $\gen{\gen{\ell(\ccc)}, \gen{\ell(\ccc \setminus \{ q \})}'}^{\Delta - |\ccc| + 1} \s \gen{\gen{\U}}^{|\ccc| - 1}$ for each $\emptyset \neq \ccc \subseteq \ccs(z)$ satisfying $q \in \ccc$
		\end{itemize}
		The hyperedge constraint $\estar$ of $\pistar$ consists of all configurations $\Q_1 \s \dots \s \Q_r$ with $\Q_j \in \sstar$ for all $1 \leq j \leq r$ such that there exists a choice $(\B_1, \dots, \B_r) \in \Q_1 \times \dots \times \Q_r$ satisfying $\B_1 \s \dots \s \B_r \in \ere$.
	\end{definition}

	Now we are ready to prove the desired relation between $\rerepizs$ and $\pistar$.
	We start by relating the node configurations of the two problems.

	\begin{lemma}
		Each node configuration of $\rerepizs$ can be relaxed to some configuration from $\nstar$.
	\end{lemma}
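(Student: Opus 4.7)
The plan is to emulate, in the more complex setting of this problem family, the Hall's marriage strategy used in the proof of \Cref{lem:fponestep}. Consider an arbitrary $\Q = \Q_1 \s \dots \s \Q_\Delta \in \nrere$; by \Cref{obs:rcs}, each $\Q_j$ is a right-closed subset of $\sre$ with respect to the strength relation computed in \Cref{lem:restrong}. I will assume for contradiction that $\Q$ cannot be relaxed to any configuration in $\nstar$, and aim to produce a choice $(\B_1, \dots, \B_\Delta) \in \Q_1 \times \dots \times \Q_\Delta$ such that no permutation of $\B_1 \s \dots \s \B_\Delta$ lies in $\nre$, contradicting $\Q \in \nrere$.

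The first step is to extract local constraints from the non-relaxability assumption. Since $\gen{\gen{\ell(\ccc)}} \subseteq \gen{\gen{\U}}$ and $\gen{\gen{\ell(\ccc)}, \gen{\ell(\ccc \setminus \{ q \})}'} \subseteq \gen{\gen{\U}}$ by \Cref{cor:regen}, non-relaxability to the $\ccc$-template in $\nstar$ forces, for each $\emptyset \neq \ccc \subseteq \ccs(z)$, at least $|\ccc|$ indices $j$ with $\Q_j \nsubseteq \gen{\gen{\ell(\ccc)}}$ (if $q \notin \ccc$) or $\Q_j \nsubseteq \gen{\gen{\ell(\ccc)}, \gen{\ell(\ccc \setminus \{ q \})}'}$ (if $q \in \ccc$); I will also need to handle separately the case that some $\Q_j$ contains $\gen{\X}$ or $\gen{\X}'$, for which non-relaxability to $\gen{\gen{\X}'}^\Delta$, $\gen{\gen{\P}} \s \gen{\gen{\U}}^{\Delta-1}$, and $\gen{\gen{\U}'}^{\Delta-1} \s \gen{\gen{\X}}$ will come into play. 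Using right-closure together with \Cref{cor:regen} and \Cref{lem:restrong}, each such escape translates into the presence in the corresponding $\Q_j$ of some label $\gen{\ell(\ccc')}$ (or a suitable primed variant) whose parameter $\ccc'$ misses at least one color $i \in \ccc$, i.e., satisfies $\ccc' \subseteq R_i := \ccs(z) \setminus \{ i \}$.

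The second step is the Hall-style matching. I form a bipartite graph between the multiset $\{ \Q_1, \dots, \Q_\Delta \}$ and $\{ R_i \mid i \in \ccs(z) \}$, where $\Q_j$ is adjacent to $R_i$ iff $\Q_j$ contains $\gen{\ell(R_i)}$, and, for $i = q$, instead iff $\Q_j$ contains $\gen{\ell(R_q)}'$, so as to reflect the asymmetric structure of $\fplus$ versus $\fminus$ in \Cref{def:pistar}. The constraints from the preceding paragraph imply that every subset $\{ R_i \mid i \in \ccc \}$ of the color side has at least $|\ccc|$ neighbors, so \Cref{thm:hall} yields a matching saturating the color side, producing, for each $i \in \ccs(z)$, a distinct $\Q_{j(i)}$ containing a designated label of the form $\gen{\ell(R_i)}$ or $\gen{\ell(R_q)}'$. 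The remaining $\Q_j$'s are then filled in using labels extracted from the three special non-relaxabilities, yielding the desired $(\B_1, \dots, \B_\Delta)$.

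The final step is to verify that no permutation of $\B_1 \s \dots \s \B_\Delta$ is in $\nre$ by arguing template-by-template against $\nzs$: the templates $\M^\Delta$, $\P \s \U^{\Delta - 1}$, $\D^{\Delta - 1} \s \X$, and $\ell(\ccc)^{\Delta - |\ccc| + 1} \s \U^{|\ccc| - 1}$ are each ruled out by showing that no pick $(\A_1, \dots, \A_\Delta) \in \B_1 \times \dots \times \B_\Delta$ realizes them, leveraging \Cref{cor:pigen} together with the key fact that $\ell(\ccc) \notin \gen{\ell(R_i)}$ whenever $i \in \ccc$. The main obstacles I anticipate are (i) engineering the bipartite-graph adjacency so that the special color $q$ is recorded via the primed labels, in lockstep with the $\fplus$ vs.\ $\fminus$ dichotomy of \Cref{def:pistar}, and (ii) excluding the template $\D^{\Delta - 1} \s \X$, which---unlike $\M^\Delta$---will demand \emph{two} distinct escape indices from the non-relaxability to $\gen{\gen{\U}'}^{\Delta - 1} \s \gen{\gen{\X}}$ rather than just one, since $\gen{\gen{\X}} = \sre$ places no constraint on its slot at all.
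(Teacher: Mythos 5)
The overall architecture—case-split on whether some $\Q_j$ contains $\gen{\X}'$, a Hall's-marriage argument for the main case, and a template-by-template refutation—matches the paper. However, your Hall-adjacency convention is swapped relative to the one that actually makes the argument go through, and this is not a cosmetic choice. The paper declares $\Q_k$ adjacent to $R_i$ iff $\gen{\ell(R_i)}' \in \Q_k$ for $i \neq q$, and iff $\gen{\ell(R_q)} \in \Q_k$ for $i = q$; you have it backwards (unprimed for $i \neq q$, primed for $q$). The reason the paper's orientation is forced is that when $\fQ$ escapes the template $\gen{\gen{\ell(\ccc)}, \gen{\ell(\ccc\setminus\{q\})}'}^{\Delta - |\ccc| + 1} \s \gen{\gen{\U}}^{|\ccc| - 1}$ with $q \in \ccc$, the escape label can be $\gen{\ell(\overline{\ccc})}'$ for some $\overline{\ccc} \nsupseteq \ccc\setminus\{q\}$ with $q \in \overline{\ccc}$; by \Cref{lem:restrong} this label satisfies $\gen{\ell(\overline{\ccc})}' \leq \gen{\ell(R_{i^*})}'$ for the missing $i^* \in \ccc\setminus\{q\}$, but $\gen{\ell(\overline{\ccc})}' \nleq \gen{\ell(R_{i^*})}$ (Condition~\ref{cond:newone} fails) and $\gen{\ell(\overline{\ccc})}' \nleq \gen{\ell(R_q)}'$ (since $q \in \overline{\ccc}$). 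With your convention this $\Q_k$ contributes no edge to $\{R_i : i \in \ccc\}$ at all, so Hall's condition cannot be established.

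Your swap also breaks the final refutation of the template $\D^{\Delta-1} \s \X$, and your obstacle (ii) misdiagnoses how that template is handled. The two $\D$-free slots come from the Hall matching itself, not from a ``second escape index'' of the non-relaxability to $\gen{\gen{\U}'}^{\Delta-1} \s \gen{\gen{\X}}$ (which is only used in the other case of the dichotomy). Concretely: the $R_q$ slot carries $\gen{\ell(R_q)}$, which is unprimed and hence $\D$-free by \Cref{cor:pigen}; and the $R_s$ slot carries $\gen{\ell(R_s)}'$, which is primed but still $\D$-free because $s \notin R_s$ implies $\ell(R_s) \nleq \ell(\{s\})$, so by the definition of $\gen{\cdot}'$ this set is $\gen{\ell(R_s), \M}$ rather than $\gen{\ell(R_s), \D}$—and this is exactly where the hypothesis $q \neq s$ enters. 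With your convention, the $R_q$ slot is primed and contains $\D$ (since $s \in R_q$ gives $\ell(R_q) \leq \ell(\{s\})$), while the remaining matched slots $\gen{\ell(R_i)}$, $i \neq q$, are the only $\D$-free ones; when $\len(z) = 2$ there is exactly one such slot, and the configuration $\D^{\Delta-1}\s\X \in \nzs$ can be realized, so no contradiction is obtained. To repair your sketch you need to flip the adjacency convention and redo obstacle (ii) accordingly.
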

	\begin{proof}
		For a contradiction, suppose that the lemma does not hold, i.e., there exists some configuration $\fQ = \Q_1 \s \dots \s \Q_{\Delta}$ in the node constraint of $\rerepizs$ that cannot be relaxed to any configuration from $\nstar$.
		By the definition of $\rerepizs$, this implies that for any choice $(\B_1, \dots, \B_{\Delta}) \in \Q_1 \times \dots \times \Q_{\Delta}$ there exists some choice $(\A_1, \dots, \A_{\Delta}) \in \B_1 \times \dots \times \B_{\Delta}$ such that $\A_1 \s \dots \s \A_{\Delta} \in \nzs$.
		Moreover, as $\fQ$ is contained in the node constraint of $\rerepizs$, we know that $\Q_k$ is right-closed for each $1 \leq k \leq \Delta$, by Observation~\ref{obs:rcs}.
		
		We distinguish two cases based on whether there exists some $\Q_k$ containing $\gen{\X'}$.
		Consider first the case that $\gen{\X}' \in \Q_k$ for some $1 \leq k \leq \Delta$.
		W.l.o.g., assume that $k = 1$, i.e., we have $\gen{\X}' \in \Q_1$.
		We claim that this implies $\gen{\P} \notin \Q_k$ for each $2 \leq k \leq \Delta$:
		If there exists some $2 \leq k \leq \Delta$ satisfying $\gen{\P} \in \Q_k$, then there exists some choice $(\B_1, \dots, \B_{\Delta}) \in \Q_1 \times \dots \times \Q_{\Delta}$ with $\B_1 = \gen{\X}'$ and $\B_k = \gen{\P}$.
		It follows by Lemma~\ref{lem:pistrong} and Corollary~\ref{cor:pigen} that for every choice $(\A_1, \dots, \A_{\Delta}) \in \B_1 \times \dots \times \B_{\Delta}$ we have $\A_1 \in \{ \M, \X \}$ and $\A_k \notin \{ \D, \M \}$; now the claim follows by observing that every configuration in $\nzs$ containing $\M$ requires all other labels in the configuration to be $\M$ as well and every configuration in $\nzs$ containing $\X$ requires all other labels in the configuration to be $\D$.

		Combining the claim with the right-closedness of the $\Q_k$, we obtain that $\Q_k \subseteq \gen{\gen{\X}'}$ for each $2 \leq k \leq \Delta$, by Lemma~\ref{lem:restrong} and Corollary~\ref{cor:regen}.
		As, by definition, $\fQ$ cannot be relaxed to $\gen{\gen{\X}'}^{\Delta}$, it follows that $\Q_1 \nsubseteq \gen{\gen{\X}'}$.
		Since, by Lemma~\ref{lem:restrong}, we have $\L \leq \gen{\P}$ for each $\L \in \sre \setminus \gen{\gen{\X}'}$, we conclude that $\gen{\P} \in \Q_1$, by the right-closedness of $\Q_1$.
		This implies that $\gen{\X}' \notin \Q_k$ for each $2 \leq k \leq \Delta$, as otherwise there would (again) exist some choice $(\B_1, \dots, \B_{\Delta}) \in \Q_1 \times \dots \times \Q_{\Delta}$ containing $\gen{\P}$ and $\gen{\X}'$, which leads to a contradiction as already seen above.
		From $\Q_k \subseteq \gen{\gen{\X}'}$ and $\gen{\X}' \notin \Q_k$ for each $2 \leq k \leq \Delta$, we infer that $\Q_k \subseteq \gen{\gen{\U}'}$ for each $2 \leq k \leq \Delta$, by Corollary~\ref{cor:regen}.
		Since, by Corollary~\ref{cor:regen}, $\Q_1 \subseteq \sre = \gen{\gen{\X}}$, it follows that $\fQ$ can be relaxed to the configuration $\gen{\gen{\U}'}^{\Delta - 1} \s \gen{\gen{\X}} \in \nstar$, yielding the desired contradiction.

		Now consider the other case, namely that $\gen{\X}' \notin \Q_k$ for each $1 \leq k \leq \Delta$.
		Observe that this implies
		\begin{equation}\label{eq:ggu}
			\Q_k \subseteq \gen{\gen{\U}} \text{ for each } 1 \leq k \leq \Delta,
		\end{equation}
		by Corollary~\ref{cor:regen} and the fact that $\gen{\gen{\X}} < \gen{\gen{\X}'}$ (which follows from Lemma~\ref{lem:restrong}).
		It follows that
		\begin{equation}\label{eq:allc}
			\gen{\ell(\ccs(z))}' \in \Q_k \text{ for each } 1 \leq k \leq \Delta,
		\end{equation}
		as otherwise there would be some $1 \leq k \leq \Delta$ satisfying $\Q_k \subseteq \gen{\P}$ (by Lemma~\ref{lem:restrong}, Corollary~\ref{cor:regen} and the right-closedness of the $\Q_k$), which in turn would imply that $\fQ$ can be relaxed to $\gen{\gen{\P}} \s \gen{\gen{\U}}^{\Delta - 1} \in \nstar$, yielding a contradiction.
		
		While, so far, we used (only) that $\fQ$ cannot be relaxed to any of the configurations $\gen{\gen{\X}'}^{\Delta}$, $\gen{\gen{\P}} \s \gen{\gen{\U}}^{\Delta - 1}$, and $\gen{\gen{\U}'}^{\Delta - 1} \s \gen{\gen{\X}}$, we will now also take advantage of the fact that $\fQ$ cannot be relaxed to the remaining (conceptually more complex) configurations from $\nstar$.
		For each $i \in \ccs(z)$, define $R_i := \ccs(z) \setminus \{ i \}$.
		Moreover, we will make use of the bipartite graph $\overline{G} = (\overline{V} \cup \overline{W}, \overline{E})$ obtained by defining $\overline{V} := \{ \Q_1, \dots, \Q_\Delta \}$ and $\overline{W} := \{ R_i \mid i \in \ccs(z) \}$, and setting $\overline{E}$ to be the set of all edges $\{ \Q_k, R_i \}$ satisfying
		\begin{itemize}
			\item $\gen{\ell(R_i)}' \in \Q_k$ if $i \neq q$, and
			\item $\gen{\ell(R_i)} \in \Q_k$ if $i = q$.
		\end{itemize}
		Note that, for any distinct $k, k'$, we consider $\Q_k$ and $\Q_{k'}$ to be different vertices in $\overline{V}$ even if $\Q_k = \Q_{k'}$.
		
		Consider any arbitrary subset $\emptyset \neq \ccc \subseteq \ccs(z)$ and recall that $\ell(\emptyset) := \U$.
		We distinguish two cases, based on whether $\ccc$ contains $q$.

		If $q \notin \ccc$, then, by the definition of $\nstar$, configuration $\fQ$ cannot be relaxed to the configuration $\gen{\gen{\ell(\ccc)}}^{\Delta - |\ccc| + 1} \s \gen{\gen{\U}}^{|\ccc| - 1}$, which, by (\ref{eq:ggu}), implies that there are at least $|\ccc|$ indices $k \in \{ 1, \dots, \Delta \}$ satisfying $\Q_k \nsubseteq \gen{\gen{\ell(\ccc)}}$.
		Observe that for each label $\L \in \gen{\gen{\U}} \setminus \gen{\gen{\ell(\ccc)}}$, we have $\L \in \{ \gen{\U}, \gen{\U}' \}$ or $\L \in \{ \gen{\ell(\overline{\ccc})}, \gen{\ell(\overline{\ccc})}' \}$ for some $\overline{\ccc} \nsupseteq \ccc$, by Corollary~\ref{cor:regen}.
		By Lemma~\ref{lem:restrong}, this implies that for each label $\L \in \gen{\gen{\U}} \setminus \gen{\gen{\ell(\ccc)}}$ there exists some $i \in \ccc$ satisfying $\L \leq \gen{\ell(R_i)}'$.
		By (\ref{eq:ggu}) and the right-closedness of the $\Q_k$, it follows that there are at least $|\ccc|$ indices $k \in \{ 1, \dots, \Delta \}$ such that there exists some $i \in \ccc$ satisfying $\gen{\ell(R_i)}' \in \Q_k$.
		
		Similarly, if $q \in \ccc$, then $\fQ$ cannot be relaxed to the configuration $\gen{\gen{\ell(\ccc)}, \gen{\ell(\ccc \setminus \{ q \})}'}^{\Delta - |\ccc| + 1} \s \gen{\gen{\U}}^{|\ccc| - 1}$, which, by (\ref{eq:ggu}), implies that there are at least $|\ccc|$ indices $k \in \{ 1, \dots, \Delta \}$ satisfying $\Q_k \nsubseteq \gen{\gen{\ell(\ccc)}, \gen{\ell(\ccc \setminus \{ q \})}'}$.
		Observe that for each label $\L \in \gen{\gen{\U}} \setminus \gen{\gen{\ell(\ccc)}, \gen{\ell(\ccc \setminus \{ q \})}'}$, we have $\L \in \{ \gen{\U}, \gen{\U}' \}$, $\L = \gen{\ell(\overline{\ccc})}$ for some $\overline{\ccc} \nsupseteq \ccc$, or $\L = \gen{\ell(\overline{\ccc})}'$ for some $\overline{\ccc} \nsupseteq \ccc \setminus \{ q \}$, by Lemma~\ref{lem:restrong} and Corollary~\ref{cor:regen}.
		By Lemma~\ref{lem:restrong}, this implies\footnote{Note that the implication is also correct in the special case $\ccc = \{ q \}$.} that for each label $\L \in \gen{\gen{\U}} \setminus \gen{\gen{\ell(\ccc)}, \gen{\ell(\ccc \setminus \{ q \})}'}$ we have $\L \leq \gen{\ell(R_q)}$ or there exists some $i \in \ccc \setminus \{ q \}$ satisfying $\L \leq \gen{\ell(R_i)}'$.
		By (\ref{eq:ggu}) and the right-closedness of the $\Q_k$, it follows that there are at least $|\ccc|$ indices $k \in \{ 1, \dots, \Delta \}$ such that $\gen{\ell(R_q)} \in \Q_k$ or there exists some $i \in \ccc$ satisfying $\gen{\ell(R_i)}' \in \Q_k$.
		
		Recall the definition of $\overline{G}$.
		By the above discussion, we conclude that for any arbitrary subset $\emptyset \neq \ccc \subseteq \ccs(z)$, the vertex set $\{ R_i \mid i \in\ccc \} \subseteq \overline{W}$ has at least $|\ccc|$ neighbors in $\overline{V}$.
		Hence, we can apply Theorem~\ref{thm:hall} (i.e., Hall's marriage theorem) to $\overline{G}$ and  obtain a function $f \colon \overline{W} \rightarrow \overline{V}$ such that $f(R_i) \neq f(R_{i'})$ for any $i \neq i'$, $\gen{\ell(R_q)} \in f(R_q)$, and $\gen{\ell(R_i)}' \in f(R_i)$ for any $i \neq q$.
		This implies that there is some choice $(\B_1, \dots, \B_{\Delta}) \in \Q_1 \times \dots \times \Q_{\Delta}$ that contains $\gen{\ell(R_q)}$ as well as $\gen{\ell(R_i)}'$ for all $i \in \ccs(z) \setminus \{ q \}$.
		By (\ref{eq:allc}), it follows that there is some choice $(\B_1, \dots, \B_{\Delta}) \in \Q_1 \times \dots \times \Q_{\Delta}$ such that $Y = \B_1 \s \dots \s \B_{\Delta}$ is a permutation of $\gen{\ell(R_1)}' \s \dots \s \gen{\ell(R_{q - 1})}' \s \gen{\ell(R_q)} \s \gen{\ell(R_{q + 1})}' \s \dots \s \gen{\ell(R_{\len(z)})}' \s (\gen{\ell(\ccs(z))}')^{\Delta - \len(z)}$.
		
		Consider an arbitrary choice $(\A_1, \dots, \A_{\Delta}) \in \B_1 \times \dots \times \B_{\Delta}$, and set $\fA := \A_1 \s \dots \s \A_{\Delta}$.
		Using the aforementioned characterization of $Y$ together with Lemma~\ref{lem:pistrong} and Corollary~\ref{cor:pigen}, we collect some properties of $\fA$ in the following.
		Since $\P \notin \B_k$ for each $1 \leq k \leq \Delta$, we have $\fA \neq \P \s \U^{\Delta - 1}$.
		Since $\M \notin \gen{\ell(R_q)}$, we have $\fA \neq \M^{\Delta}$.
		Since $\D \notin \gen{\ell(R_q)}$, $\D \notin \gen{\ell(R_s)}'$, and $s \neq q$ (by the definition of $\Pi^*_q$), we have $\fA \neq \D^{\Delta - 1} \s \X$.
		Finally, for each $\emptyset \neq \ccc \subseteq \ccs(z)$, we have $\ell(\ccc) \notin \gen{\ell(R_i)}'$ for each $i \in \ccc \setminus \{ q \}$ and, if $q \in \ccc$, additionally $\ell(\ccc) \notin \gen{\ell(R_q)}$, which implies that $\fA \neq \ell(\ccc)^{\Delta - |\ccc| + 1} \s \U^{|\ccc| - 1}$.

		By the definition of $\nzs$, it follows that for any choice $(\A_1, \dots, \A_{\Delta}) \in \B_1 \times \dots \times \B_{\Delta}$, we have $\A_1 \s \dots \s \A_{\Delta} \notin \nzs$, yielding a contradiction to the fact that $\fQ$ is contained in the node constraint of $\rerepizs$ and concluding the proof.
	\end{proof}

	Next, we provide a useful characterization of $\estar$.

	\begin{lemma}\label{lem:estarcharacterization}
		The hyperedge constraint $\estar$ of $\pistar$ consists of all configurations $\L_1 \s \dots \s \L_r$ (with labels from $\sstar$) satisfying at least one of the following three conditions.
	\begin{enumerate}
		\item\label{cond:1++} There is some index $1 \leq j \leq r$ such that $\L_j = \gen{\gen{\X}'}$ and $\L_{j'} \notin \{ \gen{\gen{\U}'}, \gen{\gen{\X}'} \}$ for each $j' \neq j$.
		\item\label{cond:2++} There are two distinct indices $j, j' \in \{ 1, \dots, r \}$ such that $\L_j = \gen{\gen{\X}}$, $\L_{j'}$ is arbitrary, and $\L_{j''} \notin \{ \gen{\gen{\U}'}, \gen{\gen{\X}'} \}$ for each $j'' \in \{ 1, \dots, r \} \setminus \{ j, j' \}$.
		\item\label{cond:3++} All of the following properties hold.
		\begin{enumerate}
			\item\label{prop:a++} $\L_j \neq \gen{\gen{\P}}$ for each $1 \leq j \leq r$.
			\item\label{prop:b++} There is at most one index $j \in \{1, \dots, r\}$ such that $\L_j \in \{ \gen{\gen{\U}'}, \gen{\gen{\X}'} \}$.
			\item\label{prop:c++} There are at most $z_q + 1$ indices $j \in \{1, \dots, r\}$ such that $\L_j = \gen{\gen{\U}'}$ or $\L_j = \gen{\gen{\ell(\ccc)}, \gen{\ell(\ccc \setminus \{ q \} )}'}$ for some color set $\ccc$ containing color $q$.
			\item\label{prop:d++} For each $i \in \ccs(z)$ satisfying $i \neq q$, there are at most $z_i$ indices $j \in \{1, \dots, r\}$ such that $\L_j = \gen{\gen{\ell(\ccc)}}$ for some color set $\ccc$ containing color $i$.
		\end{enumerate}
	\end{enumerate}
	\end{lemma}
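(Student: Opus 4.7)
The plan is to prove both inclusions of the characterization, leveraging Lemma~\ref{lem:repidef} (which describes the structure of $\ere$) and Corollary~\ref{cor:regen} (which gives the explicit content of each label in $\sstar$). By Definition~\ref{def:pistar}, a configuration $\L_1 \s \dots \s \L_r$ of $\sstar$-labels belongs to $\estar$ precisely when there exists a choice $\B_j \in \L_j$ with $\B_1 \s \dots \s \B_r \in \ere$, so the task reduces to showing that such a witnessing choice exists if and only if one of the three structural conditions holds. As a useful piece of bookkeeping extracted from Corollary~\ref{cor:regen}, I would first record that $\gen{\P}'$ lies in every label of $\sstar$, that $\gen{\P}$ lies in every label of $\sstar \setminus \{\gen{\gen{\U}'}, \gen{\gen{\X}'}\}$, and that $\gen{\X}$ and $\gen{\X}'$ appear only in $\gen{\gen{\X}}$ or $\gen{\gen{\X}'}$.

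For the $\supseteq$ direction I would exhibit witnesses case by case. Case~\ref{cond:1++} is witnessed by $\gen{\X}' \s \gen{\P}^{r-1}$, matching Condition~\ref{cond:easy} of $\ere$. Case~\ref{cond:2++} is witnessed by $\gen{\P}' \s \gen{\X} \s \gen{\P}^{r-2}$, using the arbitrary slot to carry the prime representative. Case~\ref{cond:3++} invokes Condition~\ref{cond:hard} of $\ere$: assign the unique prime slot to whichever position carries $\gen{\gen{\U}'}$ or $\gen{\gen{\X}'}$ if it exists (justified by~\ref{prop:b++}) and otherwise to any slot via the primed alternative offered by $\sstar$'s labels; at the non-prime slots, select $\gen{\ell(\ccc)}$ whenever the containing $\sstar$-label affords it, and $\gen{\U}$ otherwise. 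Bounds~\ref{prop:c++} and~\ref{prop:d++} ensure the resulting color counts do not exceed $z_i$, and they can be padded up to exactly $z_i$ as demanded by~\ref{prop:i}; the extra unit of slack for color~$q$ is absorbed precisely by the ability of an $\fplus$-label at the prime slot to contribute $\gen{\ell(\ccc \setminus \{q\})}'$ rather than $\gen{\ell(\ccc)}$, which does not count towards~$q$ in~\ref{prop:i}.

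For the $\subseteq$ direction, fix a witness $\B_1 \s \dots \s \B_r \in \ere$ and split on which condition of Lemma~\ref{lem:repidef} it satisfies. If~\ref{cond:easy} holds, the witness consists of exactly one $\X$ or $\X'$ and $r-1$ occurrences of $\P$ or $\P'$, and the bookkeeping above forces the corresponding $\sstar$-labels to match Case~\ref{cond:1++} or~\ref{cond:2++}. If~\ref{cond:hard} holds, then~\ref{prop:a++} follows because $\gen{\gen{\P}} = \{\gen{\P}, \gen{\P}'\}$ by Corollary~\ref{cor:regen} and both elements are forbidden by~\ref{prop:p};~\ref{prop:b++} follows because $\gen{\gen{\U}'}$ and $\gen{\gen{\X}'}$ are the only labels of $\sstar$ contained in $\sre^1$, so only they can contribute the unique prime slot of the $\ere$-witness;~\ref{prop:d++} follows by counting, since any $\sstar$-label whose chosen witness is $\gen{\ell(\ccc)}$ with $i \in \ccc$ must itself be of the form $\gen{\gen{\ell(\ccc')}}$ with $i \in \ccc'$ (again by Corollary~\ref{cor:regen}), so the color-$i$ count of~\ref{prop:i} transfers to the $\sstar$-side; and~\ref{prop:c++} follows analogously, except that an $\fplus$-label can ``hide'' its color-$q$ content by contributing $\gen{\ell(\ccc \setminus \{q\})}'$ instead of $\gen{\ell(\ccc)}$, which is possible only at the unique prime slot, giving at most one hidden occurrence and hence the $+1$ slack. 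The main obstacle is exactly this off-by-one effect: rigorously tracking how the interaction between the prime slot of $\ere$ and the $\fplus$-labels manufactures the jump from $z_q$ to $z_q+1$—which is the mechanism by which the parameter $z_q$ actually increases in Lemma~\ref{lem:onestep}—while everything else is careful but routine bookkeeping built on top of Lemma~\ref{lem:repidef} and Corollary~\ref{cor:regen}.
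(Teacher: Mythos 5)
Your proposal is correct in its overall strategy and in the key technical observations, but the structure of your $\subseteq$ direction genuinely differs from the paper's. The paper proves the $\subseteq$ inclusion by contrapositive: it assumes a configuration with labels from $\sstar$ violates all three conditions and shows it cannot be in $\estar$, first using the violation of Conditions~\ref{cond:1++} and~\ref{cond:2++} to deduce $\Q_j \subseteq \sre \setminus \{\gen{\X}, \gen{\X}'\}$ for all $j$, and then using the violation of one of \ref{prop:a++}--\ref{prop:d++} to rule out every candidate witness. You instead argue directly: take a configuration known to be in $\estar$, fix any witness $\B_1 \s \dots \s \B_r \in \ere$, and split according to whether that witness satisfies Condition~\ref{cond:easy} or Condition~\ref{cond:hard} of Lemma~\ref{lem:repidef}, reading off the corresponding structural condition of the lemma in each case. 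Both are valid, but the decompositions are different: the paper's case split is indexed by \emph{which condition of the lemma is violated}, yours by \emph{which clause of $\ere$ the witness satisfies}. Your direct route arguably leads to a cleaner organization because the bookkeeping (e.g.\ that $\gen{\gen{\U}'}$ and $\gen{\gen{\X}'}$ are precisely the $\sstar$-labels that are subsets of $\sre^1$ and therefore force a prime slot, giving~\ref{prop:b++}) aligns exactly with the structure of $\ere$ rather than with the negation of the target conditions; this is a reasonable tradeoff.

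One imprecision to flag in your $\supseteq$ direction for Case~\ref{cond:3++}: you write that when no position carries $\gen{\gen{\U}'}$ or $\gen{\gen{\X}'}$ the prime slot can go ``to any slot via the primed alternative,'' with all non-prime slots selecting $\gen{\ell(\ccc)}$. If the configuration has exactly $z_q + 1$ positions carrying $\fplus$-labels (which Condition~\ref{prop:c++} permits), then putting the prime slot anywhere \emph{else} leaves all $z_q + 1$ of them contributing color $q$, which already violates the bound required by~\ref{prop:i} before any padding. The prime slot must in that situation be assigned to one of the $\fplus$-positions so that the $\gen{\ell(\ccc \setminus \{q\})}'$ alternative can hide one occurrence of~$q$ — which is exactly the mechanism you describe in your next sentence, so you clearly understand the point, but your stated rule as written does not enforce it. (For comparison, the paper's construction in Lemma~\ref{lem:estarcharacterization} hard-codes this preference: pick $\gen{\gen{\U}'}$ if present, otherwise pick one $\fplus$-position for the prime slot.) A full write-up should make this assignment explicit rather than leaving it to ``any slot.''
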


	\begin{proof}
		We start by showing that any configuration (with labels from $\sstar$) satisfying at least one of the three conditions stated in the lemma is contained in $\estar$.
		Let $\fQ = \Q_1 \s \dots \s \Q_r$ be an arbitrary configuration satisfying at least one of the conditions.
		If $\fQ$ satisfies Condition~\ref{cond:1++}, then, by Corollary~\ref{cor:regen}, we know that some permutation of $(\gen{\X}', \gen{\P}, \dots, \gen{\P})$ is contained in $\Q_1 \times \dots \times \Q_r$.
		Since Condition~\ref{cond:easy} in Lemma~\ref{lem:repidef} implies that $\gen{\X}' \s \gen{\P}^{r - 1} \in \ere$, we obtain $\fQ \in \estar$, by the definition of $\estar$.
		If $\fQ$ satisfies Condition~\ref{cond:2++}, then, by Corollary~\ref{cor:regen}, we know that some permutation of $(\gen{\X}, \gen{\P}', \dots, \gen{\P})$ is contained in $\Q_1 \times \dots \times \Q_r$.
		Again, Condition~\ref{cond:easy} in Lemma~\ref{lem:repidef} implies that $\gen{\X} \s \gen{\P}' \s \gen{\P}^{r - 2} \in \ere$, and we obtain $\fQ \in \estar$.

		Hence, consider the case that $\fQ$ satisfies Condition~\ref{cond:3++}.
		If there is some $1 \leq j \leq r$ satisfying $\Q_j = \gen{\gen{\X}'}$, then, by Condition~\ref{prop:b++}, $\fQ$ satisfies also Condition~\ref{cond:1++} and we are done.
		Thus, assume in the following that $\Q_j \neq \gen{\gen{\X}'}$ for all $1 \leq j \leq r$.
		Choose some $(\B_1, \dots, \B_r) \in \Q_1 \times \dots \times \Q_r$ as follows.
		If there is some $1 \leq j \leq r$ satisfying $\Q_j = \gen{\gen{\U}'}$, then, for any $1 \leq j' \leq r$,
		\begin{itemize}
			\item if $\Q_{j'} = \gen{\gen{\X}}$ choose $\B_{j'} = \gen{\U}$,
			\item if $\Q_{j'} = \gen{\gen{\L}}$ for some $\L \in \{ \U \} \cup \{ \ell(\ccc) \mid q \notin \ccc \subseteq \ccs(z) \}$, choose $\B_{j'} = \gen{\L}$,
			\item if $\Q_{j'} = \gen{\gen{\U}'}$, choose $\B_{j'} = \gen{\U}'$, and
			\item if $\Q_{j'} = \gen{\gen{\ell(\ccc)}, \gen{\ell(\ccc \setminus \{ q \} )}'}$ for some $\ccc \subseteq \ccs(z)$ containing $q$, choose $\B_{j'} = \gen{\ell(\ccc)}$.
		\end{itemize}
		If there is no $1 \leq j \leq r$ satisfying $\Q_j = \gen{\gen{\U}'}$, then choose the $\B_{j'}$ as above, with the only difference that if there is at least one $j''$ with $\Q_{j''} = \gen{\gen{\ell(\ccc)}, \gen{\ell(\ccc \setminus \{ q \} )}'}$ for some $\ccc \subseteq \ccs(z)$ containing $q$, then choose one such $j''$ arbitrarily, and set $\B_{j''} = \gen{\ell(\ccc \setminus \{ q \} )}'$ for this $j''$.
		By Corollary~\ref{cor:regen} and the definition of $\gen{\cdot}$, each chosen $\B_{j'}$ is indeed contained in $\Q_{j'}$; by the fact that $\Q_{j'} \notin \{ \gen{\gen{\P}}, \gen{\gen{\X}'} \}$ for all $1 \leq j' \leq r$ (due to Condition~\ref{prop:a++} and the above discussion) and the definition of $\sstar$, it is also guaranteed that we chose some $\B_{j'}$ for each $1 \leq j' \leq r$.

		The configuration $\fB := \B_1 \s \dots \s \B_r$ is not necessarily contained in $\estar$; however, we will show that with some additional changes (that preserve containment of the $\B_k$ in the respective $\Q_k$), we can transform $\fB$ into a configuration that is contained in $\estar$.
		Before explaining this transformation, we collect some properties of $\fB$ in the following.
		From the definition of $\fB$, it follows that
		\begin{equation}\label{eq:pp}
			\B_j \notin \{ \gen{\P}, \gen{\P}', \gen{\X}, \gen{\X}' \} \text{ for each } 1 \leq j \leq r.
		\end{equation}
		From Condition~\ref{prop:b++} and the definition of $\sre^1$, it follows that
		\begin{equation}\label{eq:one}
			\text{there is at most one index $1 \leq j \leq r$ such that $\B_j \in \sre^1$.}
		\end{equation}
		From Condition~\ref{prop:c++}, it follows that
		\begin{equation}\label{eq:zq}
			\text{there are at most $z_q$ indices $j \in \{ 1, \dots, r \}$ such that $\B_j \in \{ \gen{\ell(\ccc)}, \gen{\ell(\ccc)}' \mid q \in \ccc \subseteq \ccs(z) \}$.}
		\end{equation}
		From Condition~\ref{prop:d++}, it follows that
		\begin{equation}\label{eq:zi}
			\begin{split}
				\text{for each $1 \leq i \leq z$ satisfying $i \neq q$, there are at most $z_i$ indices}\\ \text{$j \in \{ 1, \dots, r \}$ such that $\B_j \in \{ \gen{\ell(\ccc)}, \gen{\ell(\ccc)}' \mid i \in \ccc \subseteq \ccs(z) \}$.}
			\end{split}
		\end{equation}

		Now, we will perform the aforementioned transformation on $\fB$.
		Observe that, by interpreting $\U$ as $\ell(\emptyset)$, we have $\B_j \in \{ \gen{\ell(\ccc)}, \gen{\ell(\ccc)}' \mid \ccc \subseteq \ccs(z) \}$ for each $1 \leq j \leq r$, by (\ref{eq:pp}) and the definition of $\sstar$.
		Now, iterate through the colors in $\ccs(z)$.
		When processing color $i \in \ccs(z)$, repeat the following until there are exactly $z_i$ indices $j \in \{ 1, \dots, r \}$ such that $\B_j \in \{ \gen{\ell(\ccc)}, \gen{\ell(\ccc)}' \mid i \in \ccc \subseteq \ccs(z) \}$: choose some $\B_j$ satisfying $\B_j = f(\ell(\ccc))$ for some $f \in \{ \gen{\cdot}, \gen{\cdot}' \}$ and some $\ccc$ that does not contain $i$, and replace it with $f(\ell(\ccc \cup \{ i \}))$.
		This is possible due to (\ref{eq:zq}), (\ref{eq:zi}), and the fact that $z_i \leq r - 1$ for each $i \in \ccs(z)$.
		Note that each step in the transformation preserves the correctness of (\ref{eq:pp}), (\ref{eq:one}), (\ref{eq:zq}), and (\ref{eq:zi}).
		Observe also that in each replacement performed during the transformation we replace some label with a stronger label, by Lemma~\ref{lem:restrong}.
		By the right-closedness of all labels in $\sstar$ (which implies that all $\Q_j$ are right-closed), it follows that the transformation preserves that $\B_j \in \Q_j$ for each $1 \leq j \leq r$.
		
		From the above construction and discussion, we conclude that there is some $(\B_1, \dots, \B_r) \in \Q_1 \times \dots \times \Q_r$ such that $\B_1 \s \dots \s \B_r$ is a permutation of some configuration $\gen{\L_1}' \s \gen{\L_2} \s \gen{\L_3} \dots \gen{\L_r}$ satisfying $\L_j \in \szs \setminus \{ \D, \M \}$ for all $1 \leq j \leq r$ and Condition~\ref{cond:hard} in Lemma~\ref{lem:repidef}, i.e., such that $\B_1 \s \dots \s \B_r \in \estar$.
		This concludes the proof that any configuration satisfying at least one of the three conditions stated in the lemma is contained in $\estar$.

		Now we will show the other direction, i.e., that any configuration from $\estar$ satisfies at least one of the three conditions stated in the lemma.
		We will do so by showing the contrapositive, i.e., that any configuration (with labels from $\sstar$) that violates all three conditions is not contained in $\estar$.
		Let $\fQ = \Q_1 \s \dots \s \Q_r$ be an arbitrary configuration violating all three conditions, i.e., $\fQ$ violates Conditions~\ref{cond:1++} and \ref{cond:2++}, and at least one of Conditions~\ref{prop:a++}, \ref{prop:b++}, \ref{prop:c++}, and \ref{prop:d++}.

		We will first make use of the fact that $\fQ$ violates Conditions~\ref{cond:1++} and \ref{cond:2++}.
		If there is some $1 \leq j \leq r$ such that $\Q_j = \gen{\gen{\X}'}$, then the fact that $\fQ$ violates Condition~\ref{cond:1++} implies that there is some $j' \neq j$ satisfying $\Q_{j'} \in \{ \gen{\gen{\U}'}, \gen{\gen{\X}'} \}$, which in turn implies that for any choice $(\B_1, \dots, \B_r) \in \Q_1 \times \dots \times \Q_r$, we have $\B_j, \B_{j'} \in \sre^1$, by Corollary~\ref{cor:regen}.
		Since, by definition, each configuration from $\ere$ contains at most one label from $\sre^1$, it follows, by the definition of $\estar$, that $\fQ \notin \estar$, and we are done.
		Similarly, if there is some $1 \leq j \leq r$ such that $\Q_j = \gen{\gen{\X}}$, then the fact that $\fQ$ violates Condition~\ref{cond:2++} implies that there are two distinct indices $j', j'' \in \{ 1, \dots, r \} \setminus \{ j \}$ such that $\Q_{j'}, \Q_{j''} \in \{ \gen{\gen{\U}'}, \gen{\gen{\X}'} \}$, which analogously to above implies that $\fQ \notin \estar$, and we are done.
		Hence, assume in the following that $\Q_j \in \sstar \setminus \{ \gen{\gen{\X}}, \gen{\gen{\X}'} \}$, which by Corollary~\ref{cor:regen} and the definition of $\gen{\cdot}$ implies that
		\begin{equation}\label{eq:from12}
			 \Q_j \subseteq \sre \setminus \{ \gen{\X}, \gen{\X}' \} \text{ for each } 1 \leq j \leq r.
		\end{equation}

		Now, we will make use of the fact that $\fQ$ violates at least one of Conditions~\ref{prop:a++}, \ref{prop:b++}, \ref{prop:c++}, and \ref{prop:d++}.		
		Consider first the case that $\fQ$ violates Condition~\ref{prop:a++}, i.e., there is some $1 \leq j \leq r$ satisfying $\Q_j = \gen{\gen{\P}}$.
		It follows, by Corollary~\ref{cor:regen} and (\ref{eq:from12}), that for any choice $(\B_1, \dots, \B_r) \in \Q_1 \times \dots \times \Q_r$, we have $\B_j \in \{ \gen{\P}, \gen{\P}' \}$ for some $1 \leq j \leq r$ and $\B_{j'} \notin \{ \gen{\X}, \gen{\X}' \}$ for each $1 \leq j' \leq r$.
		By Conditions~\ref{cond:easy} and \ref{prop:p} in Lemma~\ref{lem:repidef}, it follows that $\B_1 \s \dots \s \B_r \notin \ere$ for all $(\B_1, \dots, \B_r) \in \Q_1 \times \dots \times \Q_r$, which in turn implies $\fQ \notin \estar$, as desired.

		Now, consider the case that $\fQ$ violates Condition~\ref{prop:b++}, i.e., there are at least two indices $j, j' \in \{ 1, \dots, r \}$ such that $\Q_j \in \{ \gen{\gen{\U}'}, \gen{\gen{\X}'} \}$.
		As already seen above, this implies $\fQ \notin \estar$, as desired.

		Next, consider the case that $\fQ$ violates Condition~\ref{prop:c++}, i.e., there are at least $z_q + 2$ indices $j \in \{1, \dots, r\}$ such that $\Q_j = \gen{\gen{\U}'}$ or $\Q_j = \gen{\gen{\ell(\ccc)}, \gen{\ell(\ccc \setminus \{ q \} )}'}$ for some color set $\ccc$ containing color $q$.
		Consider an arbitrary choice $(\B_1, \dots, \B_r) \in \Q_1 \times \dots \times \Q_r$.
		If there exists some index $1 \leq j \leq r$ satisfying $\B_j \in \{ \gen{\P}, \gen{\P}' \}$, then (\ref{eq:from12}) implies that $\B_1 \s \dots \s \B_r \notin \ere$, by Conditions~\ref{cond:easy} and \ref{prop:p} in Lemma~\ref{lem:repidef}; it follows that $\fQ \notin \estar$, and we are done.
		Hence, assume in the following that $\B_j \notin \{ \gen{\P}, \gen{\P}' \}$ for each $1 \leq j \leq r$.
		Since, by definition, $\ere$ contains at most one label from $\sre^1$, it follows, by Lemma~\ref{lem:restrong} and Corollary~\ref{cor:regen}, that there are at least $(z_q + 2) - 1$ indices $j \in \{1, \dots, r\}$ such that $\B_j = \gen{\ell(\ccc)}$ for some color set $\ccc \subseteq \ccs(z)$ that is a superset of a set $\ccc' \subseteq \ccs(z)$ containing color $q$.
		In other words, there are at least $z_q + 1$ indices $j \in \{1, \dots, r\}$ such that $\B_j = \gen{\ell(\ccc)}$ for some color set $\ccc \subseteq \ccs(z)$ containing color $q$.
		Combining this conclusion with the fact that, by (\ref{eq:from12}), $\B_{j'} \notin \{ \gen{\X}, \gen{\X}' \}$ for each $1 \leq j' \leq r$, we obtain that $\B_1 \s \dots \s \B_r \notin \ere$, by Conditions~\ref{cond:easy} and \ref{prop:i} in Lemma~\ref{lem:repidef}.
		It follows that $\fQ \notin \estar$, as desired.

		Finally, consider the case that $\fQ$ violates Condition~\ref{prop:d++}, i.e., there exists some color $i \in \ccs(z)$ satisfying $i \neq q$ such that there are at least $z_i + 1$ indices $j \in \{1, \dots, r\}$ such that $\Q_j = \gen{\gen{\ell(\ccc)}}$ for some color set $\ccc$ containing color $i$.
		Consider an arbitrary choice $(\B_1, \dots, \B_r) \in \Q_1 \times \dots \times \Q_r$.
		Analogously to the previous case, we obtain that the existence of an index $1 \leq j \leq r$ satisfying $\B_j \in \{ \gen{\P}, \gen{\P}' \}$ implies that $\fQ \notin \estar$.
		On the other hand, the nonexistence of such an index implies, by Corollary~\ref{cor:regen}, that there are at least $z_i + 1$ indices $j \in \{1, \dots, r\}$ such that $\B_j \in \{ \gen{\ell(\ccc)}, \gen{\ell(\ccc)}' \}$ for some color set $\ccc \subseteq \ccs(z)$ containing color $i$; by (\ref{eq:from12}) and Conditions~\ref{cond:easy} and \ref{prop:i} in Lemma~\ref{lem:repidef}, it follows that $\B_1 \s \dots \s \B_r \notin \ere$, which in turn implies that $\fQ \notin \estar$, as desired.
	\end{proof}

	\subsection{Renaming}
	We now show that $\pistar$ can be relaxed to  $\Pi(z',s')))$ for some parameters $z'$ and $s'$.
	\onestep*
	\begin{proof}
		We show that $\pistar$, which we proved to be a relaxation of $\rere(\re(\Pi(z,s)))$, is equivalent to $\Pi(z',q)$, if we rename the labels correctly.
		Consider the following renaming:
		\begin{align*}
			\gen{\gen{\X}'} &\hspace{0.5cm}\rightarrow\hspace{0.5cm} \M \\
			\gen{\gen{\P}}  &\hspace{0.5cm}\rightarrow\hspace{0.5cm} \P \\
			\gen{\gen{\U}}  &\hspace{0.5cm}\rightarrow\hspace{0.5cm} \U \\
			\gen{\gen{\U}'} &\hspace{0.5cm}\rightarrow\hspace{0.5cm} \D \\
			\gen{\gen{\X}}  &\hspace{0.5cm}\rightarrow\hspace{0.5cm} \X \\
			\gen{\gen{\ell(\ccc)}} \text{ for each } \emptyset \neq \ccc \subseteq \ccs(z) \text{ satisfying } q \notin \ccc  &\hspace{0.5cm}\rightarrow\hspace{0.5cm} \ell(\ccc) \\
			\gen{\gen{\ell(\ccc)}, \gen{\ell(\ccc \setminus \{ q \})}'} \text{ for each } \emptyset \neq \ccc \subseteq \ccs(z) \text{ satisfying }q \in \ccc  &\hspace{0.5cm}\rightarrow\hspace{0.5cm} \ell(\ccc) \\
		\end{align*}
		Observe that, under this renaming, $\nstar$, as defined in \Cref{def:pistar}, becomes equal to  $\nodeconst(z',q)$, and $\estar$, as characterized in \Cref{lem:estarcharacterization}, becomes equal to $\edgeconst(z',q)$.
	\end{proof}

\section{Deterministic Upper Bounds for Hypergraph MIS}\label{sec:ub}

In this section, we present deterministic algorithms for solving the maximal independent set problem on hypergraphs of maximum degree $\Delta$ and rank $r$. We start with a straightforward algorithm that solves the problem in $O(\Delta r + \log^* n)$ rounds. We then describe an algorithm that has a quadratic dependency on $\Delta$, but only a logarithmic dependency on $r$, achieving a time complexity of roughly $O(\Delta^2 \log r + \log^* n)$ rounds. Finally, we present an algorithm that, perhaps surprisingly, almost does not depend at all on $r$, and that runs in $O(f(\Delta)\cdot\log^*(\Delta r) + \log^* n)$ deterministic rounds, for some function $f$.

\paragraph{Straightforward Algorithm.} We start by describing a simple algorithm that solves the MIS problem on hypergraphs in $O(\Delta r + \log^* n)$ deterministic rounds. The algorithm works as follows: we first compute a coloring of the nodes using $O(\Delta r)$ colors; then, we go through color classes and add a node to the MIS if and only if it does not have any incident hyperedge $e$ with $\rank(e)-1$ incident nodes already in the MIS.

It is easy to see that, for computing the coloring, we can compute a distance-$2$ $(\bar{\Delta} + 1)$-coloring on the incidence graph $G$ of $H$, where $\bar{\Delta}\leq\Delta r$ is the maximum degree in the power graph $G^2$. This can be done in $\widetilde{O}(\sqrt{\Delta r} + \log^* n)$ rounds \cite{MausTonoyan20,fraigniaud16local}. Now we can go through color classes and safely add a node (of $H$) in the set if, by doing so, we still satisfy the desired constraints of the problem, and this requires $O(\Delta r)$ rounds. Hence, in total, we get a time complexity of $O(\Delta  r + \log^* n)$ rounds. Therefore, we obtain the following theorem.

\begin{theorem}\label{thm:mis-straightforward}
	The MIS problem on hypergraphs can be solved in $O(\Delta r + \log^* n)$ deterministic rounds on hypergraphs with maximum degree $\Delta$ and rank $r$.
\end{theorem}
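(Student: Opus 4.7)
The plan is to follow the two-phase strategy already outlined in the paragraph preceding the statement: first compute a proper vertex coloring of $H$ using $O(\Delta r)$ colors, where ``proper'' means that any two nodes contained in a common hyperedge receive different colors; then iterate through the color classes sequentially and greedily extend a partial independent set $S$.

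For the coloring phase, I would reduce to a distance-$2$ coloring on the bipartite incidence graph $B$ of $H$. Two nodes $u, v \in V_H$ must receive different colors exactly when they are at distance $2$ in $B$ through a common hyperedge-vertex, so the constraint is equivalent to a proper coloring of the conflict graph $G_{\mathrm{conf}}$ on $V_H$, whose maximum degree is at most $\Delta(r-1) \leq \Delta r$. An $O(\Delta r)$-coloring of $G_{\mathrm{conf}}$ can be computed in $O(\Delta r + \log^* n)$ rounds by running the $O(\Delta)$-coloring algorithm of \cite{barenboim14distributed} on $G_{\mathrm{conf}}$ and simulating each of its rounds in $O(1)$ rounds on $H$, since conflicting nodes are at hypergraph distance $1$ (they share a hyperedge). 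A faster coloring subroutine is available but unnecessary, as the second phase already costs $\Omega(\Delta r)$ rounds.

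For the iteration phase, I would process the $K = O(\Delta r)$ color classes $C_1, \dots, C_K$ in order. In iteration $i$, every node $v \in C_i$ looks at all of its incident hyperedges in $O(1)$ rounds and joins $S$ if and only if no incident hyperedge $e$ satisfies $e \setminus \{v\} \subseteq S$ already; otherwise $v$ stays out. All nodes of $C_i$ may decide in parallel without risk of conflict, because any two nodes in a common hyperedge received distinct colors in the first phase and therefore never act in the same iteration. The total cost is $O(\Delta r) \cdot O(1) = O(\Delta r)$ rounds, giving the claimed $O(\Delta r + \log^* n)$ overall complexity.

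To finish, I would verify the two MIS conditions. \emph{Independence:} if some hyperedge $e$ were fully contained in the final $S$, let $v$ be the last node of $e$ added; at the moment $v$ was added, $e \setminus \{v\}$ was already in $S$, so the test would have blocked $v$, a contradiction. \emph{Maximality:} if $v \notin S$ at the end, then when $v$ was examined there was an incident hyperedge $e$ with $e \setminus \{v\} \subseteq S$ at that time; since nodes are only added to $S$, the containment still holds at the end, so $v$ cannot be added without violating independence. No step presents a real obstacle; the only point that requires care is verifying that the color classes are independent in the conflict graph, which is precisely what the first phase establishes.
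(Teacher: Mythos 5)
Your proof is correct and follows essentially the same two-phase strategy as the paper: compute a vertex coloring of $H$ with $O(\Delta r)$ colors in which any two nodes sharing a hyperedge receive distinct colors, then greedily process the color classes. The only cosmetic difference is the choice of coloring subroutine (you use \cite{barenboim14distributed} directly on the conflict graph, matching the paper's introductory sketch, while the paper's Section~6 mentions a faster $\widetilde{O}(\sqrt{\Delta r} + \log^* n)$-round coloring that is unnecessary since the greedy phase dominates anyway), and you spell out the independence/maximality verification more explicitly than the paper's brief sketch does.
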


\subsection{Slow-in-$\Delta$ Algorithm}
We now introduce an algorithm that solves the MIS problem on hypergraphs in a time that has a slow dependency on $\Delta$, but only a logarithmic dependency on $r$. More precisely, we prove the following.

\MISalgI*

In this algorithm, each node is either \emph{active} or \emph{inactive}. The algorithm terminates when all nodes are inactive. At the beginning, each node is in the active state. The algorithm is as follows.
\begin{enumerate}
	\item Compute an $O((\Delta r)^2)$ coloring of the nodes of $H$.\label{initial}
	\item Let $H'$ be the hypergraph induced by active nodes. Active nodes construct a virtual graph $G$ by splitting each hyperedge $e$ of $H'$ into $\lceil \rank_{H'}(e)/2\rceil$ virtual hyperedges of rank at most $2$.\footnote{This is a standard procedure, where nodes can easily agree on which nodes are connected to a given virtual edge: for example, there is a virtual edge created from the hyperedge $e$ that connects the two nodes that, among all nodes incident to $e$, are the ones with the smallest ID; then there is another virtual edge that connects the second pair of nodes with smallest ID, and so on.}\label{virtualG}
	
	\item Color nodes of $G$ with $O(\Delta)$ colors.\label{coloring}
	
	\item In the hypergraph $H'$, go through color classes, and while processing a node $v$, do the following.\label{outOrIn}
	
	\begin{enumerate}
		\item If $v$ has at least one incident hyperedge $e$ having $\rank_{H'}(e)-1$ nodes already in the MIS, $v$ considers itself outside the MIS and becomes inactive.\label{outside}
		
		\item If, for each hyperedge $e$ incident to $v$, there exists another node $u$ incident to $e$ such that $\colour(v)\neq \colour(u)$ and $u$ is not in the MIS (either because it got processed and decided to not enter the MIS, or because it has not been processed yet), then $v$ enters the MIS and becomes inactive.\label{inside}
		
		\item Otherwise, $v$ does nothing and remains active.\label{undecided}
	\end{enumerate}
	
	\item Repeat from \Cref{virtualG} until all nodes are inactive.\label{repeat}
\end{enumerate}

\paragraph{Correctness and Time Complexity.}
Nodes can spend $O(\log^* n)$ rounds to compute the initial coloring in \Cref{initial} \cite{Linial1992}.
 
Nodes can create the virtual graph $G$ as in \Cref{virtualG} in constant time. Then, the coloring described in \Cref{coloring} can be done in $\widetilde{O}(\sqrt{\Delta} + \log^* (\Delta r))$ rounds, by exploiting the precomputed initial coloring \cite{MausTonoyan20,fraigniaud16local}. Notice that, while this is a proper coloring of the virtual graph, this results into a defective coloring of the hypergraph $H'$, and this defective coloring has some desired properties. In fact, since in $G$ we have a proper coloring, then every pair of nodes connected to the same virtual edge must have different colors, implying that, for each hyperedge $e$ of $H'$, at most $\lceil \rank_{H'}(e)/2 \rceil$ incident nodes have the same color.
 
 In \Cref{outOrIn} we process nodes going through color classes, and based on some conditions that can be verified in constant time, nodes decide what to do: enter or not the MIS and become inactive, or be still undecided and remain  active. Hence, since we have $O(\Delta)$ many colors, this part can be executed in $O(\Delta)$ rounds. While the coloring in the virtual graph $G$ is a proper one, the coloring in $H'$ is not a proper coloring, and therefore neighboring nodes that have the same color get processed at the same time, but this is not an issue. In fact, \Cref{outside} and \Cref{inside} guarantee that, for each hyperedge $e$, at most $\rank(e) - 1$ incident nodes enter the MIS. Also, \Cref{inside} guarantees that, as long as it is safe to enter the MIS, a node will do so. All nodes that remain still active in \Cref{undecided} are basically nodes that still have to decide whether to join or not the MIS. A node $v$ remains active if both of the following conditions hold:
 \begin{itemize}
 	\item for each incident hyperedge $e$ it holds that at most $\rank_{H'}(e) - 2$ nodes incident to $e$ are in the MIS (otherwise \Cref{outside} would apply);
 	\item there must exist a hyperedge $e$ incident to $v$ such that all active nodes incident to $e$ have the same color as $v$ (otherwise \Cref{inside} would apply).
 \end{itemize} 
 Since, as discussed above, for each hyperedge at most half of its incident nodes have the same color, this means that, in the hypergraph induced by active nodes, for each node it holds that there exists at least one incident hyperedge that in $H'$ has rank $x$ and in the hypergraph induced by active nodes of the next phase has rank at most $\lceil{x/2\rceil}$. Hence, for each node, in each phase, the rank of one incident hyperedge at least halves. This means that, after repeating this process for $O(\Delta\log r)$ many times, we get that all nodes are inactive, achieving a total runtime of $O(\Delta\log r (\Delta + \log^* (\Delta r))+ \log^* n) = O(\Delta\log r (\Delta + \log^* r)+ \log^* n)$, and hence proving \Cref{thm:mis-slowInDelta}.

\subsection{(Almost) Independent-of-$r$ Algorithm}
In this section, we present an algorithm for solving the MIS problem on hypergraphs, the running time of which almost does not depend at all on the rank $r$. We will first start by presenting an algorithm that, assuming we are given an $O((\Delta r)^2)$-coloring of the nodes, computes an MIS in $O(\log^* (\Delta r))$ rounds, in the case where the maximum degree of the nodes is $2$. Then we show how to generalize these ideas and present an algorithm for the general case of $\Delta>2$.

\subsubsection{The Case $\Delta=2$.}
Let $H$ be a hypergraph with maximum degree $2$ and rank $r$, where we are given in input an $O((\Delta r)^2)$-coloring of the nodes (we will later get rid of this assumption and actually compute this coloring, but for now, assume we have it for free). We assume that all nodes have degree exactly $2$. In fact, nodes of degree $1$ can be easily handled: at the beginning we remove them from the hypergraph, then we solve the problem, and then we add all of them in the solution (except for the case in which a hyperedge is only incident to such nodes, in which case we leave one out).

Since nodes in $H$ have degree $2$, we can see them as edges, and create a virtual graph $G$ with maximum degree $r$, where nodes are the hyperedges of $H$, and there is an edge between two nodes in $G$ if and only if the hyperedge that they correspond to share a node in $H$. Note that $G$ may have parallel edges (if $H$ is not a linear hypergraph).
 Observe that any $T$-rounds algorithm that is designed to run in $G$ can be simulated by nodes in $H$ in $O(T)$-rounds. In the following, we show an algorithm that computes a subset $S$ of the edges of $G$ and then show that this set is an MIS for $H$. On a high level, an edge $e$ of $G$ will result outside the set $S$ if and only if there exists one endpoint $v$ of $e$ such that it has already $\deg(v) - 1$ incident edges in $S$. Initially, let $S=\emptyset$. The algorithm works as follows.

\begin{enumerate}
	\item Discard parallel edges. That is, for any pair of nodes $\{u,v\}$ connected by at least one edge, we keep exactly one edge.\label{noparallel}
	
	\item Compute a $(2,2)$-ruling edge set of $G$ (i.e., a $(2,2)$-ruling set of the line graph of $G$).\label{rulingset}

	\item Each node $v$ marks itself with its distance to the nearest edge in the ruling set, resulting with nodes marked with a number in $\{0, 1, 2\}$. More precisely: \label{marking}
	\begin{itemize}
		\item if $v$ is an endpoint of a ruling-set edge, then $v$ is marked with $0$;\label{marked0}
		\item otherwise, if there is a neighbor of $v$ that is an endpoint of a ruling-set edge, then $v$ is marked with $1$;\label{marked1}
		
		\item otherwise, $v$ is marked with $2$ (note that, since we have a $(2,2)$-ruling set of the edges, there is no node at distance larger than $2$ from a ruling-set edge, and nodes marked $2$ form an independent set).\label{marked2}
	\end{itemize}
	
	\item Each node marked with $2$ proposes to an arbitrary neighbor (which is marked with $1$) to add the edge between them in the ruling set.\label{2sPropose}
	
	\item Each node marked with $1$ that receives at least one proposal from a $2$-marked node, accepts exactly one of them and rejects the others (choosing arbitrarily).\label{no2s}
	
	\item Add to the ruling set the edges over which the proposals are accepted (we still obtain a $(2,2)$-ruling set).\label{modifiedRS}
	
	\item Nodes recompute the distances to the nearest ruling-set edge (note that each node now gets marked with either $0$ or $1$).\label{Re-marking}
	
	\item Put back removed parallel edges. None of them is added to the ruling set. Observe that we still have a $(2,2)$-ruling set and the distances of the nodes from edges of the ruling set does not change. \label{addparallel}
	
	\item The set $S$ contains the edges selected according to the following rules.\label{setS}
	
	\begin{enumerate}
		\item Add to $S$ all edges not in the ruling set.\label{notRS-into-S}
		
		\item Each $1$-marked node removes from $S$ one incident edge towards a neighbor marked $0$, breaking ties arbitrarily (note that there must exist at least one such edge).\label{1-0-outside-S}
		
		\item For each edge in the ruling set, add it to $S$ if and only if both endpoints have at least two incident edges not in $S$.\label{rs-in-S}
	\end{enumerate}
	
\end{enumerate}

\paragraph{Correctness and Time Complexity.} We now show that the set $S$ constructed by the above algorithm satisfies the following claim.

\begin{claim}\label{claim:setS}
	Each node $u$ has at most $\deg(u) - 1$ incident edges in $S$, and for each edge not in $S$ it holds that at least one of its endpoints has all other incident edges in $S$.
\end{claim}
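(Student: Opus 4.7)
The plan is to characterize the set $S$ explicitly by tracing through steps 9(a)--(c), then verify both halves of the claim. A key structural fact to establish up front is that after step 8 the ruling edges form a $(2,2)$-ruling edge set, so any two ruling edges are vertex-disjoint. Hence every 0-marked node is the endpoint of a unique ruling edge, no non-ruling edge is incident to two different ruling edges, and by step 7 every node is either 0-marked or 1-marked. Given this, $S$ after step 9(b) consists of exactly the non-ruling edges minus, for each 1-marked node $w$, one edge $(w,u)$ with $u$ a 0-marked neighbor of $w$; step 9(c) then adds each ruling edge $e=(u,u')$ iff both endpoints currently have at least two incident edges outside $S$.

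For the first (independence) part, I would split on the mark of $u$. If $u$ is 1-marked, none of its incident edges are ruling, so step 9(a) places all $\deg(u)$ of them in $S$; by definition of 1-marked, $u$ has a 0-marked neighbor and therefore removes exactly one edge in step 9(b), finishing with $\deg(u)-1$ edges in $S$, since step 9(c) affects no edge incident to $u$. If $u$ is 0-marked, let $e=(u,u')$ be its unique incident ruling edge, and let $k_u$ be the number of 1-marked neighbors that removed their edge to $u$ in step 9(b). After step 9(b), $u$ has $\deg(u)-1-k_u$ incident edges in $S$ and $1+k_u$ outside. If $e$ is added to $S$ in step 9(c), the condition forces $k_u\geq 1$, and $u$ ends with $\deg(u)-k_u\leq \deg(u)-1$ incident edges in $S$; otherwise $u$ ends with $\deg(u)-1-k_u\leq \deg(u)-1$.

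For the second (maximality) part, I would classify edges not in $S$ into two types. For an edge $(w,u)$ removed by a 1-marked node $w$ in step 9(b), all other incident edges of $w$ are non-ruling, were added in step 9(a), and remain in $S$ through step 9(c), so $w$ witnesses the claim. For a ruling edge $e=(u,u')$ not added in step 9(c), the condition failed at one endpoint, say $u$, which had fewer than two incident edges outside $S$ after step 9(b); since $e$ is itself such an edge, $u$ has all other incident edges in $S$. The main technical point to verify is that the decisions in step 9(c) can be made simultaneously without race conditions, which follows precisely from the $(2,2)$-ruling property: distinct ruling edges share no endpoint, so adding one ruling edge never alters the ``outside-$S$'' count at the endpoints of another, and the analysis of $k_u$ and $k_{u'}$ is independent across different ruling edges.
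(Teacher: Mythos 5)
Your proof is correct and follows essentially the same decomposition as the paper's argument (split nodes by mark for the independence half; split edges outside $S$ into those removed in 9(b) vs.\ ruling edges rejected in 9(c) for the maximality half), just spelled out with more explicit bookkeeping via $k_u$ and a welcome remark on why the 9(c) decisions are independent. One inaccuracy worth dropping: the preliminary assertion that ``no non-ruling edge is incident to two different ruling edges'' is false --- a $(2,2)$-ruling edge set only forbids ruling edges from sharing an endpoint, so two ruling edges at line-graph distance exactly $2$ may well be joined by a common non-ruling edge --- but since you never actually invoke this statement, the rest of the argument is unaffected.
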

Firstly, we argue that, if we prove \cref{claim:setS}, then it directly implies that our solution computes an MIS in $H$. Recall that, by construction, nodes of $G$ correspond to the hyperedges of $H$, and edges of $G$ correspond to the nodes of $H$. Hence, if the above claim is true, it means that the algorithm returns a subset of nodes of $H$ such that, for each hyperedge $e$, at most $\rank(e) - 1$ incident nodes are selected, and for each non-selected node $v$, there exists a hyperedge $e$ incident to $v$ that has $\rank(e) - 1$ incident selected nodes. This, by definition, means that the set $S$ in $G$ corresponds to an MIS in $H$. Hence, in order to show the correctness, it is enough to prove the above claim, which we do in the following.

A $(2,2)$-ruling set of the edges (\Cref{rulingset}) can be computed using the algorithm of \cite{KuhnMW18}, which runs in $O(\log^* (\Delta r))$ rounds if we are given in input an $O((\Delta r)^2)$-coloring. Let $X$ be the set of all nodes that are an endpoint of an edge in the ruling set. Nodes can spend $O(1)$ rounds and mark themselves with the distance from the nearest node in $X$ (\Cref{marking}). It is easy to see that a $(2,2)$-ruling set of the edges of $G$ implies that 
\begin{itemize}
	\item each node in $G$ is at distance at most $2$ from a node in $X$, and
	\item nodes marked with $2$ form an independent set (that is, all their neighbors are marked with $1$).
\end{itemize}
In \Cref{2sPropose}, \Cref{no2s}, and \Cref{modifiedRS}, we modify the ruling set obtained in \Cref{rulingset} such that each node is at distance at most $1$ from a node in $X$. For this, nodes marked with $2$ propose to a $1$-marked neighbor to put the edge between them into the ruling set, and $1$-marked nodes accept only one proposal (if they receive any). Note that, since edges between $1$-marked and $2$-marked nodes are not incident to the edges of the ruling set, and since $1$-marked nodes accept only one proposal, then we still obtain a $(2,2)$-ruling set. Let $u$ be the node marked $2$ that performed a proposal to its neighbor $v$ marked with $1$. If the proposal of $u$ gets accepted, then the edge $(u,v)$ enters the ruling set and nodes $u$ and $v$ will be marked with $0$ in \Cref{Re-marking}. Otherwise, if the proposal gets rejected, it means that $v$ accepted another proposal from another $2$-marked neighbor, meaning that, in \Cref{Re-marking}, node $v$ will be marked with $0$ and hence node $u$ will be marked with $1$. All these operations can be done in $O(1)$ rounds. Therefore, we remain with a $(2,2)$-ruling set of the edges where each node is either marked with $0$ or with $1$. 

Now, in \Cref{addparallel}, we put back the parallel edges removed in \Cref{noparallel}, and we put none of them in the ruling set. Observe that we still have a $(2,2)$-ruling set, and the distances of the nodes from the nearest ruling set edge does not change.

Then, we start constructing our set $S$ according to the rules in \Cref{setS}, which can all be accomplished in constant time. Since each $0$-marked node $u$ has exactly one incident edge in the ruling set, \Cref{notRS-into-S} guarantees that at most $\deg(u) - 1$ incident edges are in $S$. On the other hand, since each $1$-marked node $v$ has at least one incident edge towards a $0$-marked neighbor, \Cref{1-0-outside-S} guarantees that exactly $\deg(v) - 1$ incident edges are in $S$. Hence, nodes satisfy \Cref{claim:setS}. Also, edges incident to $1$-marked nodes satisfy \Cref{claim:setS}, since each marked $1$ node has exactly $\deg(v) - 1$ incident edges in $S$. The only remaining edges to be analyzed are the ones in the ruling set, and according to \Cref{rs-in-S}, they enter the set $S$ if they can. Hence, the set $S$ satisfies \Cref{claim:setS}. Regarding the running time, except for the first item, all other items can be performed in $O(1)$ rounds, achieving a total runtime of $O(\log^* (\Delta r))$ rounds. This implies the following lemma.

\begin{lemma}\label{lem:mis-Delta-2}
	The MIS problem can be solved in $O(\log^* (\Delta r))$ deterministic rounds on hypergraphs with maximum degree $2$ and rank $r$, assuming that an $O((\Delta r)^2)$-coloring is given in input.
\end{lemma}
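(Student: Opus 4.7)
The plan is to reduce the hypergraph MIS problem to a problem on the line-like virtual graph $G$ (as defined in the algorithm above), where vertices of $G$ are the hyperedges of $H$ and edges of $G$ are the (degree-$2$) nodes of $H$. Degree-$1$ nodes of $H$ are first stripped off: each can safely be added to the MIS as a post-processing step, except for hyperedges whose incident nodes are all of degree $1$, in which case we deterministically leave out one arbitrary incident node (say the one with the smallest identifier, which uses the input coloring to break ties). After this preprocessing all remaining nodes have degree exactly $2$, and solving MIS in $H$ becomes the task of selecting a set $S$ of edges in $G$ so that (i) no vertex $u$ of $G$ has all its $\deg(u)$ incident edges in $S$ (corresponding to ``no hyperedge is fully contained in the selected set'') and (ii) every edge outside $S$ is blocked, i.e., one of its endpoints already has all other incident edges in $S$ (corresponding to maximality). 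This is exactly Claim~5.3.

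Next, I would implement the steps of the algorithm and verify Claim~5.3. The idea is to use a $(2,2)$-ruling edge set as a combinatorial skeleton: by the result of Kuhn--Maus--Weidner~\cite{KuhnMW18}, given the provided $O((\Delta r)^2)$-coloring, such a ruling set of $G$'s edges can be computed in $O(\log^*(\Delta r))$ rounds. The remaining steps of the algorithm (removing parallel edges, augmenting the ruling set by a single $2{-}1$ matching phase so that every vertex is at distance at most one from the ruling set, restoring parallel edges, and defining $S$ by rules (a)--(c)) are all purely local and run in $O(1)$ rounds. Hence the total complexity is $O(\log^*(\Delta r))$ as required.

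The main obstacle, and therefore the heart of the proof, is verifying Claim~5.3, which I would do by a case analysis on the mark (in $\{0,1\}$) of each vertex of $G$. For a $1$-marked vertex $v$, rule (b) removes exactly one incident edge towards a $0$-marked neighbor, so $v$ has exactly $\deg(v)-1$ incident edges in $S$; this also certifies (ii) for the removed edge, since its $1$-marked endpoint is saturated. For a $0$-marked vertex $u$ (endpoint of exactly one ruling-set edge), rule (a) already guarantees $\deg(u)-1$ non-ruling-set edges in $S$, and rule (c) only allows its ruling-set edge into $S$ when doing so does not violate (i) at either endpoint, so (i) holds. For edges in the ruling set that are left out by rule (c), at least one endpoint has fewer than two incident edges outside $S$, i.e., its remaining edges are all in $S$, yielding (ii). The only subtle case is checking that after augmenting the ruling set via the $2$-to-$1$ proposal phase, we still have a valid $(2,2)$-ruling set and every vertex ends up with mark in $\{0,1\}$; this follows because $2$-marked vertices form an independent set, so the added edges are pairwise non-incident to ruling-set edges and each $1$-marked vertex accepts at most one proposal.

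Putting the two ingredients together gives the lemma: the $O(\log^*(\Delta r))$-round ruling-set computation dominates, and the ensuing $O(1)$-round combinatorial selection yields a set $S$ satisfying Claim~5.3, which via the line-graph correspondence is exactly an MIS of $H$.
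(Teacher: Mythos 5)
Your proposal is correct and takes essentially the same route as the paper's proof: the same reduction to the virtual graph $G$ (vertices of $G$ are hyperedges of $H$, edges are degree-$2$ nodes), the same preprocessing of degree-$1$ nodes, the same use of a $(2,2)$-ruling edge set (via \cite{KuhnMW18}) followed by the constant-round $2$-to-$1$ proposal phase, and the same three-rule definition of $S$ together with the case analysis establishing \Cref{claim:setS}. The resulting runtime bound of $O(\log^*(\Delta r))$ rounds, dominated by the ruling-set computation from the given $O((\Delta r)^2)$-coloring, matches the paper exactly.
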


\subsubsection{Generalization for $\Delta > 2$}\label{subsubsec:Delta>2}
On a high level, we show that, if we have an algorithm that solves MIS on $O((\Delta r)^2)$-colored hypergraphs of maximum degree $\Delta -1$ and rank $r$, then we can use it as a black box for constructing another algorithm that solves MIS on $O((\Delta r)^2)$-colored hypergraphs of maximum degree $\Delta$ and rank $r$. Hence, by starting from our base-case algorithm for hypergraphs of degree at most $2$, and applying this reasoning in an iterative way, we obtain an algorithm for solving MIS on $O((\Delta r)^2)$-colored hypergraphs of any maximum degree $\Delta$ and any rank $r$.

Therefore, let $\mathcal{A}_{\Delta -1}$ be an algorithm that is able to solve MIS on any $O((\Delta r)^2)$-colored hypergraph $H_{\Delta-1}=(V_{\Delta - 1}, E_{\Delta - 1})$ of maximum degree $\Delta-1$ and rank $r$. In the following, we present an algorithm that solves MIS on $O((\Delta r)^2)$-colored hypergraphs $H_{\Delta}=(V_\Delta, E_\Delta)$ with maximum degree $\Delta$ and rank $r$. At the beginning, all nodes are \emph{active}.

\begin{enumerate}
	\item Compute a $(2,\Delta +4)$-ruling set of $H_{\Delta}$.\label{item:hypergraph-rs}
	
	\item Mark hyperedges with their distance from the nearest node in the ruling set, that is, hyperedges get marked with a number in $\{0, 1, \dotsc, \Delta +4\}$. Let $E^i$ be the set of hyperedges marked $i$.\label{item:mark-edges}
	
	\item Initialize $\mbox{MIS}:=\emptyset$.\label{item:initialize-S}
	
	\item For $i=\Delta +4$ to $0$ do:\label{item:for-loop}
	\begin{enumerate}	

		\item Let $S^i$ be the set of active nodes of $V_\Delta$ incident to \emph{at least one} $i$-marked hyperedge, and let $X\subseteq S^i$ be the set of active nodes that have \emph{all} incident hyperedges marked with $i$.\label{item:defsi}
		
		\item Consider the hypergraph $H^i = (S^i\setminus X,\hat{E}^i)$, where $\hat{E}^i = \{ e \cap (S^i \setminus X) ~|~ e \in E^i \}$, that is, the hypergraph induced by nodes in $S^i \setminus X$ and hyperedges in $E^i$.
		This hypergraph has maximum degree $\Delta-1$. On $H^i$, simulate $\mathcal{A}_{\Delta -1}$.\label{item:simulate-algo}
		
		\item Nodes that are part of the MIS of $H^i$ join the MIS. \label{item:output-to-mis}
		
		\item All nodes of $X$ join the MIS.\label{item:x-in-mis}
		
		\item All nodes of $S^i$ become passive. \label{item:becomepassive}
		
		\item All active nodes that are incident to at least one hyperedge $e$ having $\rank_{H_\Delta}(e) - 1$ incident nodes in the MIS become passive.\label{item:out-of-MIS}
	
	\end{enumerate}
\end{enumerate}

\paragraph{Correctness.}
In the algorithm, after performing the ruling set on the hypergraph (\Cref{item:hypergraph-rs}), and after marking the hyperedges with their distance to the nearest ruling-set node (\Cref{item:mark-edges}), we start to construct our solution. 
We go through the marked hyperedges in order, from the largest to the smallest, and \Cref{item:defsi,item:out-of-MIS} guarantee that all nodes incident to at least one hyperedge marked at least $i+1$ are now passive.

Let $H^{\markedi}$ be the hypergraph induced by $E^i$, that is, the hypergraph containing all hyperedges marked $i$ and active nodes incident to at least one of them. This hypergraph has maximum degree $\Delta$, but then, we construct $H^i$ by removing from it all nodes of degree exactly $\Delta$ (these nodes are then part of $X$). Hence, $H^i$ has maximum degree $\Delta-1$.  Therefore we can compute an MIS of $H^i$ using algorithm $\mathcal{A}_{\Delta-1}$ (\Cref{item:simulate-algo}), and obtain that, in $H^i$, each hyperedge $e$ has at most $\rank_{H^i}(e)-1$ incident nodes in the MIS. 

We claim that the operations performed in \Cref{item:output-to-mis,item:x-in-mis} produce an MIS in $H^{\markedi}$. By construction, the set is an MIS in $H^i$, and hence an independent set in $H^{\markedi}$. Since we add all the nodes of $X$, we only need to show that we do not ruin the independence property.
Observe that, for each $i$-marked hyperedge, it holds that there exists at least one incident node connected to an $(i-1)$-marked hyperedge. Hence, each hyperedge $e$ is connected to at most $\rank_{H^{\markedi}}(e) -1$ nodes in $X$, or in other words, it is never the case that a hyperedge $e$, in $H^{\markedi}$, has only neighbors that are part of $X$. Also, for each $i$-marked hyperedge incident to $x>0$ nodes in $X$, it holds that at most $\rank_{H^{\markedi}}(e) - x - 1$ incident nodes are selected. Hence, by adding the nodes in $X$, we still obtain that at most $\rank_{H^{\markedi}}(e) - 1$ incident nodes are selected. 

Now we need to argue that the solution that we get is indeed an MIS of $H_\Delta$, and we show this by induction. As a base case of our inductive argument, as just showed, our algorithm computes an MIS of the hypergraph induced by hyperedges marked with $\Delta + 4$. Hence, suppose that we have an MIS $M^{>i}$ on the hypergraph induced by hyperedges marked at least $i+1$. We show that, after computing an MIS $M^i$ of the hypergraph induced by $i$-marked hyperedges, we get an MIS on the hypergraph induced by hyperedges marked at least $i$. In order to see that this holds it is enough to notice that, at each phase of the for-loop, we consider all nodes that could potentially be part of the MIS, taking into consideration the already computed partial solution. More precisely, when we execute \Cref{item:out-of-MIS} in phase $i+1$, we ensure that, in phase $i$, we construct the MIS set $M^i$ by considering \emph{all} nodes that are incident to at least one $i$-marked hyperedge, and that can potentially be part of the MIS. Therefore, the set $M^{>i}\cup M^{i}$ is an MIS of the hypergraph induced by hyperedges marked at least $i$. Hence, after the $\Delta + 5$ steps performed in the for-loop, we get an MIS of our hypergraph $H_{\Delta}$.

\paragraph{Time Complexity.}
We now provide a recursive formula for the above algorithm and prove the following lemma.

\begin{lemma}\label{lemma:from-Delta-1-to-Delta}
	Let $\mathcal{A}_{\Delta-1}$ be a $T$-rounds algorithm that solves MIS on $O((\Delta r)^2)$-colored hypergraphs of maximum degree $\Delta - 1$ and maximum rank $r$. Then, the algorithm described in \Cref{subsubsec:Delta>2} solves MIS on $O((\Delta r)^2)$-colored hypergraphs of maximum degree $\Delta$ and maximum rank $r$ in time $O(\Delta (1+T) + \log^* (\Delta r))$.
\end{lemma}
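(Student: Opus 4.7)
The plan is to analyze each line of the algorithm and accumulate its round complexity, arguing that the ruling-set computation is the only non-trivial cost outside the main loop, and that each iteration of the loop costs $O(T+1)$.

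First I would handle the preprocessing: computing a $(2,\Delta+4)$-ruling set of $H_\Delta$ (Step~\ref{item:hypergraph-rs}) in $O(\log^*(\Delta r))$ rounds, by appealing to the ruling-set algorithm of~\cite{KuhnMW18} applied to the incidence graph of $H_\Delta$, whose size and maximum degree are polynomial in $\Delta r$; the hypothesized input $O((\Delta r)^2)$-coloring is exactly what that algorithm needs as a starting coloring. Marking every hyperedge with its distance $\leq \Delta+4$ to the nearest ruling-set node (Step~\ref{item:mark-edges}) requires gathering information over $O(\Delta)$ hops, hence takes $O(\Delta)$ rounds.

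Next I would bound the cost of the for-loop (Step~\ref{item:for-loop}), which runs for $\Delta + 5$ iterations. Within one iteration $i$: determining $S^i$ and $X$ (Step~\ref{item:defsi}) takes $O(1)$ rounds, since every node only needs to know the marks of its incident hyperedges. The simulation of $\mathcal{A}_{\Delta-1}$ on $H^i$ (Step~\ref{item:simulate-algo}) takes $T$ rounds; this is sound because $H^i$ has maximum degree at most $\Delta-1$ by construction (we deleted exactly the nodes whose incident hyperedges are all marked $i$), rank at most $r$, and inherits an $O((\Delta r)^2)$-coloring from $H_\Delta$, so $\mathcal{A}_{\Delta-1}$ can be invoked on it with the required precondition satisfied. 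Each of Steps~\ref{item:output-to-mis}--\ref{item:out-of-MIS} is clearly $O(1)$ rounds, as a node only consults its incident hyperedges and their incident nodes. So each iteration costs $O(T+1)$, and the loop totals $O(\Delta(T+1))$ rounds.

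Summing the three contributions gives $O(\log^*(\Delta r)) + O(\Delta) + O(\Delta(T+1)) = O(\Delta(1+T) + \log^*(\Delta r))$, which is the claimed bound. Correctness of the output has already been argued in the preceding paragraphs of the section (inductively over $i$, together with the observation that $\mathcal{A}_{\Delta-1}$ returns an MIS of $H^i$ and that adding $X$ preserves the independence property because each $i$-marked hyperedge has at least one neighbor reaching an $(i-1)$-marked hyperedge), so nothing more is needed there.

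The only place requiring care is justifying that the simulation of $\mathcal{A}_{\Delta-1}$ on the virtual hypergraph $H^i$ really takes only $T$ rounds in $H_\Delta$: since $H^i$ is an induced sub-hypergraph (with hyperedges restricted to $E^i$ and nodes to $S^i\setminus X$), each round of communication in $H^i$ can be executed by one round of communication in $H_\Delta$, so no slowdown is incurred. This is the only subtle point; everything else is bookkeeping.
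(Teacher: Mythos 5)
Your argument is essentially the same as the paper's: account for the ruling-set computation as the only non-trivial preprocessing cost, charge the hyperedge-marking step at $O(\Delta)$ rounds, and then observe that the for-loop makes $\Delta+5$ iterations each of cost $O(T+1)$, giving the total $O(\Delta(1+T)+\log^*(\Delta r))$. You also correctly note the two points that actually need checking: that $H^i$ has maximum degree $\Delta-1$ and inherits the $O((\Delta r)^2)$-coloring (so the precondition of $\mathcal{A}_{\Delta-1}$ is met), and that simulating $\mathcal{A}_{\Delta-1}$ on the sub-hypergraph $H^i$ causes no slowdown.

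One small imprecision: you claim the $(2,\Delta+4)$-ruling set computation of Step~\ref{item:hypergraph-rs} costs $O(\log^*(\Delta r))$ rounds, but the cited result (Corollary~1.6 of~\cite{KuhnMW18}, as invoked in the paper) yields $O(\Delta + \log^*(\Delta r))$ rounds --- with a domination radius of $\Delta+4$, an $\Omega(\Delta)$-round cost is unavoidable. This does not change your final bound because the extra $O(\Delta)$ is absorbed by the terms you already have, but the stated cost for that step should be corrected.
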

\begin{proof}
	\Cref{item:hypergraph-rs} of the algorithm in \Cref{subsubsec:Delta>2} can be done in $O(\Delta + \log^*(\Delta r))$ rounds by using the algorithm of \cite[Corollary 1.6]{KuhnMW18}\footnote{Their algorithm is phrased as a $(2,\ell+4)$-ruling hyperedge set on hypergraphs of rank at most $\ell$. This algorithm can be trivially converted, in the \LOCAL model, into an algorithm that finds a $(2,\ell+4)$-ruling set on hypergraphs of degree at most $\ell$, by reversing the roles of nodes and hyperedges.}. Then, nodes spend $O(\Delta)$ rounds to perform \Cref{item:mark-edges}. In the for-loop, we then execute $O(\Delta)$ times $\mathcal{A}_{\Delta - 1}$, while the other items of the for-loop can be done in constant time. Hence, we get a total runtime of $O(\Delta (1+T) + \log^* (\Delta r))$ rounds.
\end{proof}

\paragraph{Putting Things Together.}
Finally, we show that the recursive formula that expresses the runtime of our algorithm obtained in \Cref{lemma:from-Delta-1-to-Delta} can be upper bounded by $2^{O(\Delta\log\Delta)}\log^* r$. We prove the following lemma.

\begin{lemma}\label{lem:solve-recursive}
	The MIS problem can be solved in $2^{O(\Delta\log\Delta)}\log^*(\Delta r) = 2^{O(\Delta\log\Delta)}\log^* r$ deterministic rounds on $O((\Delta r)^2)$-colored hypergraphs of maximum degree $\Delta$ and maximum rank $r$.
\end{lemma}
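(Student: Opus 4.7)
The plan is to solve the recursion provided by Lemma~\ref{lemma:from-Delta-1-to-Delta}, with base case Lemma~\ref{lem:mis-Delta-2}. Let $T(k)$ denote the round complexity of computing an MIS on $O((\Delta r)^2)$-colored hypergraphs of maximum degree $k$ and rank $r$. Set $L := \log^*(\Delta r)$. From Lemma~\ref{lem:mis-Delta-2} we obtain $T(2) \leq c_1 L$, and Lemma~\ref{lemma:from-Delta-1-to-Delta} gives, for every $k \geq 3$, a bound of the form
\[
T(k) \;\leq\; c_2\, k\, \bigl(1 + T(k-1)\bigr) + c_2 L
\]
for absolute constants $c_1, c_2 \geq 1$.

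Next I would unroll this recursion. The dominant contribution comes from iteratively multiplying by the factor $c_2 k$: by induction on $k$ I would establish
\[
T(k) \;\leq\; \Bigl(\prod_{i=3}^{k} c_2 i\Bigr)\, T(2) \;+\; C L \sum_{j=2}^{k} \prod_{i=j+1}^{k} c_2 i
\]
for a suitable constant $C$ depending on $c_1, c_2$. Both terms are dominated by $\prod_{i=3}^{k} c_2 i = c_2^{k-2}\,\tfrac{k!}{2}$ times $L$, so $T(k) \leq A \cdot c_2^{k}\, k!\, L$ for some constant $A$. Applying Stirling's formula, $k! = 2^{O(k \log k)}$, and absorbing the $c_2^k = 2^{O(k)}$ factor, we get $T(\Delta) \leq 2^{O(\Delta \log \Delta)}\, \log^*(\Delta r)$.

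Finally, to pass from $\log^*(\Delta r)$ to $\log^* r$, I would use the standard inequality $\log^*(xy) \leq \log^* x + \log^* y + O(1)$, so that $\log^*(\Delta r) \leq \log^* \Delta + \log^* r + O(1)$. Since $\log^* \Delta \leq O(\Delta)$, this additive term is absorbed into the multiplicative $2^{O(\Delta \log \Delta)}$ factor (writing $\log^* \Delta = \log^*\Delta \cdot 1 \leq 2^{O(\Delta\log\Delta)}\cdot 1$ and bounding the sum $2^{O(\Delta\log\Delta)}\log^* r + 2^{O(\Delta\log\Delta)}$), yielding the claimed bound $2^{O(\Delta\log\Delta)}\log^* r$.

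There is no real conceptual obstacle here; the proof is a routine recursion solve. The only mild subtlety is to verify that the additive $c_2 L$ term in Lemma~\ref{lemma:from-Delta-1-to-Delta} does not blow up across the $\Delta-2$ unrolling steps, which is handled by the geometric-style summation above (the running sum of products is of the same order as the largest summand because the multiplicative factor $c_2 k$ is always at least $2$), and to verify that the base-case value $T(2) = O(L)$ multiplied by $k!$ dominates, so that the final closed form is clean.
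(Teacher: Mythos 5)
Your proof is correct and follows essentially the same route as the paper: both solve the recursion from Lemma~\ref{lemma:from-Delta-1-to-Delta} with base case Lemma~\ref{lem:mis-Delta-2}, the paper by a direct inductive verification of the bound $T_\Delta \le 2^{c\Delta\log\Delta}\log^*(\Delta r)$, you by unrolling to the closed form $O(c_2^k\, k!\, L)$ and then applying Stirling, which are equivalent. A minor bonus of your write-up is that you explicitly justify the final step $2^{O(\Delta\log\Delta)}\log^*(\Delta r) = 2^{O(\Delta\log\Delta)}\log^* r$, which the paper states in the lemma but does not spell out in the proof.
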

\begin{proof}
	Let $T_d$ be the runtime of an algorithm that solves MIS on hypergraphs of maximum degree $d$ and maximum rank $r$, given an $O((\Delta r)^2)$-coloring in input. By \Cref{lem:mis-Delta-2}, we know that $T_2 \le c_1\log^*(\Delta r)$, and by \Cref{lemma:from-Delta-1-to-Delta}, $T_{d+1}\le c_1\Delta (1+T_d) + c_1 \log^*(\Delta r)$, for some constant $c_1$. We show, by induction, that this results in the runtime stated in this lemma.
	
	As a base case we consider hypergraphs with maximum degree $2$, that clearly holds by \Cref{lem:mis-Delta-2}. Hence, assuming that $T_\Delta \le 2^{c\Delta\log\Delta}\log^*(\Delta r)$, we show that $T_{\Delta+1} \le 2^{c(\Delta+1)\log(\Delta+1)}\log^*(\Delta r)$.
	\begin{align*}
		T_{\Delta + 1} &\le c_1\Delta (1+T_\Delta) + c_1\log^*(\Delta r) \mbox{ (by \Cref{lemma:from-Delta-1-to-Delta})}\\
		&\le c_1\Delta (1+2^{c\Delta\log\Delta}\log^*(\Delta r)) + c_1\log^*(\Delta r) \mbox{ (by inductive hypothesis)}\\
		&\le c_1 (\Delta 2^{c\Delta\log\Delta + 1} + 1)\log^*(\Delta r)\\
		&\le 2^{c\Delta\log\Delta + \log c_1 + \log\Delta + 2}\log^*(\Delta r)\\
		&\le 2^{c\Delta\log\Delta + c_1\log\Delta}\log^*(\Delta r)\\
		&\le 2^{c(\Delta+1)\log(\Delta+1)}\log^*(\Delta r), \mbox{ for a large enough } c.
	\end{align*}
\end{proof} 
Up until now we have assumed an $O((\Delta r)^2)$-coloring of the nodes in input. We can compute this coloring in $O(\log^* n)$ rounds using the algorithm of \cite{Linial1992}, and together with \Cref{lem:solve-recursive}, we obtain \Cref{thm:MIS-independent-of-r}, which is restated here for completeness.

\MISalgII*

\section{Open Questions}\label{sec:open}
In this work, we showed that, for hypergraph maximal matching, the $O(\Delta r + \log^* n)$ algorithm is optimal, in the sense that in order to improve the $\Delta r$ dependency it is required to spend much more as a function of $n$. Also, we showed that the same does not hold for hypergraph MIS, by providing two algorithms, one with complexity $O(\Delta^2 \log r  + \Delta \log r \log^* r + \log^* n)$, and the other with complexity   $2^{O(\Delta\log\Delta)}\log^* r + O(\log^* n)$. Hence, if $\Delta$ is constant but $r$ very large, we can still solve hypergraph MIS in just $O(\log^* n)$ rounds. Unfortunately, these algorithms do not match the lower bound, since the only known lower bound comes from standard MIS, and says that this problem requires $\Omega(\min\{\Delta,\log_\Delta n\})$ rounds \cite{Balliu2019,hideandseek}. Observe that this lower bound is tight for MIS on graphs, that can in fact be solved in $O(\Delta + \log^* n)$ rounds, and there is the possibility that also hypergraph MIS could be solved in this time (i.e., there is no lower bound that prevents this).

We now show a possible direction that could lead to solving hypergraph MIS in just $O(\Delta + \log^* n)$ rounds. Consider the following variant of hypergraph coloring, for which our lower bounds do \emph{not} apply. 
\begin{definition}
	A $c$-unique-maximum coloring is a labeling of the nodes of the hypergraph, such that each node has a label in $\{1,\ldots,c\}$ and each hyperedge $e$ satisfies that, if $C$ is the set of colors used by the nodes incident to $e$, then the maximum element of $C$ is used by only one node incident to $e$.
\end{definition}
We informally state a curios fact about hypergraph MIS. By applying round elimination on it for $k$ times, it seems that we obtain a problem that can be decomposed into three parts:
\begin{itemize}
	\item The \emph{original} problem;
	\item A \emph{natural} part, the $k$-unique-maximum coloring;
	\item An \emph{unnatural} part, of size roughly equal to a power tower of height $k$, that cannot be easily understood.
\end{itemize}
We do not formally prove that this is indeed the result that we get by applying round elimination on hypergraph MIS, but we instead use this observation as a suggestion for a possible algorithm. In fact, we now prove that we can solve hypergraph MIS fast if we are given a suitable unique maximum coloring.

\begin{theorem}
	If the $O(\Delta)$-unique-maximum coloring problem can be solved in $O(\Delta + \log^* n)$ rounds, then also hypergraph MIS can be solved in $O(\Delta + \log^* n)$ rounds. 
\end{theorem}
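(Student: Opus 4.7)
The plan is to use the assumed $O(\Delta)$-unique-maximum coloring algorithm as a schedule and then run a simple parallel greedy construction of an MIS of $H$. Let $\phi \colon V_H \to \{1,\dots,c\}$ with $c = O(\Delta)$ be the coloring, computable in $O(\Delta + \log^* n)$ rounds by hypothesis. I would maintain a set $S \subseteq V_H$, initially empty, and process the color classes sequentially in the order $i = 1, 2, \dots, c$. At step $i$, every node $v$ with $\phi(v) = i$ decides, in parallel, whether to join $S$: it joins if and only if no hyperedge $e \ni v$ already satisfies $e \setminus \{v\} \subseteq S_{i-1}$, where $S_{i-1}$ is the value of $S$ at the beginning of step $i$. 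This check is a single round of communication with nodes in the incident hyperedges, so a step costs $O(1)$ rounds and all $c$ steps together cost $O(\Delta)$ rounds.

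I would then show that the resulting $S$ is an MIS. For independence, I argue by induction on $i$ that $S_i$ is independent. Assume $S_{i-1}$ is independent and suppose, for contradiction, that some hyperedge $e$ satisfies $e \subseteq S_i$. Then every node of $e$ has color at most $i$. The UM property forbids $e$ from containing two color-$i$ nodes, since otherwise the maximum color in $e$ (which is at least $i$) would be attained at least twice, forcing it to be strictly larger than $i$ and contradicting the fact that every node of $e$ has color at most $i$. Hence $e$ contains at most one color-$i$ node. If it contains none, then $e \subseteq S_{i-1}$, contradicting the induction hypothesis. If it contains exactly one such node $v$, then $v$ joined during step $i$, so the greedy check for $v$ in particular required $e \setminus \{v\} \nsubseteq S_{i-1}$; but every node of $e \setminus \{v\}$ has color strictly less than $i$, and $S_i \setminus S_{i-1}$ contains only color-$i$ nodes, so $e \setminus \{v\} \subseteq S_i$ forces $e \setminus \{v\} \subseteq S_{i-1}$, a contradiction.

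For maximality, any $v \notin S$ with $\phi(v) = i$ failed its greedy check at step $i$, so there is a witness hyperedge $e \ni v$ with $e \setminus \{v\} \subseteq S_{i-1} \subseteq S$; thus $e \subseteq S \cup \{v\}$ and $S \cup \{v\}$ is not independent. The overall round complexity is $O(\Delta + \log^* n) + O(\Delta) = O(\Delta + \log^* n)$, as claimed.

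The main conceptual obstacle, fully neutralized by the UM property, is the possibility that two same-color nodes in a common hyperedge both join in parallel and together saturate that hyperedge; the UM condition rules this out by forcing any such hyperedge to contain a strictly higher-colored node, which, having not yet been processed, keeps the hyperedge from being fully contained in the partial solution at the moment of decision. The remaining verification is routine bookkeeping, and the construction is essentially a clean reduction from hypergraph MIS to $O(\Delta)$-unique-maximum coloring.
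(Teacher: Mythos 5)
Your proposal is correct and takes essentially the same approach as the paper: use the unique-maximum coloring as a schedule, process color classes from smallest to largest, and greedily add a node to the MIS if no incident hyperedge is already one short of being saturated. Your write-up is a bit more careful than the paper's about why two same-color nodes sharing a hyperedge cannot both join and together saturate it, but the underlying argument is identical.
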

\begin{proof}
	We show that, after spending $O(\Delta + \log^* n)$ rounds to compute an $O(\Delta)$-unique-maximum coloring, then we can spend only $O(\Delta)$ rounds to solve hypergraph MIS.
	
	The algorithm is exactly the same as the trivial algorithm: we process nodes by color classes, from the smallest to the largest, and we add the nodes to the MIS, if possible. This clearly requires $O(\Delta)$ rounds. The unique-maximum property of the coloring guarantees that, for each hyperedge $e$, when processing the node $v$ of the largest color incident to it, all other nodes in $e\setminus v$ have been already processed, and hence there could be two scenarios:
	\begin{itemize}
		\item either there are some nodes incident to $e$ that did not join the MIS, and in that case $v$ joins the MIS if allowed by the other hyperedges incident to $v$, or
		\item all the other incident nodes to $e$ have already been processed and joined the MIS, and hence $v$ decides not to join.
	\end{itemize}
	The independence property is guaranteed by the fact that, for each hyperedge $e$, among all nodes in $e$, the last one to be processed is $v$, which does not enter the MIS if all nodes in $e\setminus v$ are in the MIS. The maximality property is obtained for the following reason. If a node does not have the largest color in any of its incident hyperedges, then it always enters the MIS, since, for each incident hyperedge $e$, there is at least one node that is not (potentially yet) in the MIS, and that is the node with the largest color in $e$. This means that, if there is a node $v$ that decides not to join the MIS, then there must exist a hyperedge $e$ such that $v$ is the node with the unique maximum color among all nodes in $e$, and all nodes in $e\setminus v$ already joined the MIS, hence guaranteeing maximality.
\end{proof}

We find very fascinating that a problem that has been studied in very different contexts (see, e.g., \cite{uniquemaximum, CFKLMMPSSWW07, Har-PeledS05}) appears as a natural subproblem when applying round elimination on hypergraph MIS. Unfortunately, we do not know what is the complexity of $O(\Delta)$-unique-maximum coloring, and we leave it as an open question to determine its complexity. Note that a $(\Delta + 1)$-unique-maximum coloring always exists and that it can be computed distributedly in $O(\Delta \cdot T_{\mathrm{hmis}})$ rounds, where $T_{\mathrm{hmis}}$ is the time required to compute a hypergraph MIS. In fact, it can be computed by doing the following for $\Delta+1$ times: at step $i$ compute a hypergraph MIS, color the nodes in the obtained set with color $i$, remove the colored nodes and the hyperedges of rank $1$, and continue in the residual graph. The reason why it takes at most $\Delta+1$ iterations is that for every node $v$ that remains it must hold that for at least one incident hyperedge all incident nodes except $v$ joined the set, and hence the degree of $v$ decreases by at least $1$.

\begin{oq}
	What is the distributed complexity of $O(\Delta)$-unique-maximum coloring?
\end{oq}

It could be that our algorithms are actually optimal, and that the current lower bound is not tight. Another possibility is that there is a fast algorithm, that is not based on first computing a unique-maximum coloring. Hence, we leave as open question determining the exact complexity of hypergraph MIS.
\begin{oq}
	Is it possible to solve hypergraph MIS in $O(\Delta + \log^* n)$? What are the possible tradeoffs between $r$ and $\Delta$ in the complexity of hypergraph MIS, when restricting the $n$ dependency to be $O(\log^* n)$?
\end{oq}

Finally, proofs based on round elimination are getting harder and harder. We believe that there is still a lot to be understood about the round elimination technique. Specifically, it would be nice to find a way to prove round elimination statements without writing a tedious case analysis. We leave as an open question finding simpler ways to apply round elimination.
\begin{oq}
	Is there a way to apply round elimination that does not require to write complicated proofs?
\end{oq}



\urlstyle{same}
\bibliographystyle{alpha}
\bibliography{references}

\end{document}